\declaretheorem[numberwithin=section,refname={Theorem,Theorems},Refname={Theorem,Theorems}]{theorem}
\declaretheorem[numberlike=theorem]{lemma}
\declaretheorem[numberlike=theorem]{proposition}
\declaretheorem[numberlike=theorem]{corollary}
\declaretheorem[numberlike=theorem]{definition}
\declaretheorem[numberlike=theorem]{claim}
\declaretheorem[numberlike=theorem]{algorithm}
\declaretheorem[numberlike=theorem,style=remark]{remark}
\declaretheorem[numberlike=theorem, refname={Observation,Observations},Refname={Observation,Observations},name={Observation}]{observation}
\declaretheorem[numberlike=theorem,refname={Technique,Techniques},Refname={Technique,Techniques},name={Technique}]{technique}
\DeclarePairedDelimiter{\ceil}{\lceil}{\rceil}
\newcommand{\eps}{\varepsilon}
\newcommand{\mindeg}{\mathrm{mindeg}}
\newcommand{\nbh}{\mathrm{nbh}}
\newcommand{\spf}{\mathrm{spf}}
\newcommand{\direct}{MDCP }
\newcommand{\best}{\mathsf{best}}
\newcommand{\polylog}{\mathrm{polylog}}
\newcommand{\outdeg}{\mathrm{outdeg}}
\newcommand{\ones}{\mathrm{ones}}
\newcommand{\euler}{\mathrm{e}}
\newcommand{\var}{\mathrm{Var}}
\newcommand{\set}[2][ ]{\{#2 \ifthenelse{\equal{#1}{ }}{ }{~|~#1}\}}
\newcommand{\bbE}{{\mathbb E}}
\newcommand{\Pcal}{\mathcal{P}}
\newcommand{\Acal}{\mathcal{A}}
\newcommand{\Scal}{\mathcal{S}}
\newcommand{\pflag}{\mathrm{pflag}}
\newcommand{\R}{\mathbb{R}}
\newcommand{\one}{\mathbf{1}}
\newcommand{\inact}{\mathsf{Inactive}}
\newcommand{\act}{\mathsf{Active}}
\newcommand{\cut}{\mathrm{cut}}
\newcommand{\cutd}{\overrightarrow{\mathrm{cut}}}
\newcommand{\Ed}{\overrightarrow{E}}
\newcommand{\mincut}{\lambda}
\newcommand{\tOh}{\widetilde{O}}
\newcommand{\ith}{i^{\scriptsize \mbox{{\rm th}}}}
\newcommand{\jth}{j^{\scriptsize \mbox{{\rm th}}}}
\title{Cut query algorithms with star contraction}
\author{Simon Apers\thanks{Université de Paris, CNRS, IRIF, F-75013, Paris, France, \texttt{smgapers@gmail.com}}\and Yuval Efron\thanks{Columbia University, USA, \texttt{ye2210@columbia.edu}. Work done while affiliated with University of Toronto.} \and Pawe\l{} Gawrychowski\thanks{University of Wroc\l{}aw, Poland, \texttt{gawry@cs.uni.wroc.pl}} \and Troy Lee\thanks{Centre for Quantum Software and Information, University of Technology Sydney, \texttt{troyjlee@gmail.com}} \and Sagnik Mukhopadhyay\thanks{University of Sheffield, UK, \texttt{s.mukhopadhyay@sheffield.ac.uk}. Work done while affiliated to University of Copenhagen and KTH.} \and Danupon Nanongkai\thanks{University of Copenhagen and KTH, \texttt{danupon@gmail.com}}}
\date{}
\begin{document}

\begin{titlepage}
	\maketitle \pagenumbering{roman}
	
\begin{abstract}
We study the complexity of determining the edge connectivity of a simple graph with cut queries.
We show that {\bf (i)} there is a bounded-error randomized  algorithm that computes edge connectivity with $O(n)$ cut queries, and {\bf (ii)} there is a bounded-error quantum algorithm that computes edge connectivity with $\tilde O(\sqrt{n})$ cut queries.
To prove these results we introduce a new technique, called \emph{star contraction}, to randomly contract edges of a graph while preserving non-trivial minimum cuts.  In star contraction vertices randomly contract an edge incident on a small set of randomly chosen ``center'' vertices. 
In contrast to the related 2-out contraction technique of Ghaffari, Nowicki, and Thorup [SODA'20], star contraction only contracts vertex-disjoint star subgraphs, which allows it to be efficiently implemented via cut queries.

The $O(n)$ bound from item~(i) was not known even for the simpler problem of connectivity, and it improves the $O(n \log^3 n)$ upper bound by Rubinstein, Schramm, and Weinberg [ITCS'18].
The bound is tight under the reasonable conjecture that the randomized communication complexity of connectivity is $\Omega(n \log n)$, an open question since the seminal work of Babai, Frankl, and Simon [FOCS'86]. 
The bound also excludes using edge connectivity on simple graphs to prove a superlinear randomized query lower bound for minimizing a symmetric submodular function.
The quantum algorithm from item~(ii) gives a nearly-quadratic separation with the randomized complexity, and addresses an open question of Lee, Santha, and Zhang [SODA'21].
The algorithm can alternatively be viewed as computing the edge connectivity of a simple graph with $\tilde O(\sqrt{n})$ matrix-vector multiplication queries to its adjacency matrix.

Finally, we demonstrate the use of star contraction outside of the cut query setting by designing a one-pass semi-streaming algorithm for computing edge connectivity in the complete vertex arrival setting. 
This contrasts with the edge arrival setting where two passes are required.
\end{abstract}

 	\setcounter{tocdepth}{2}
 	\newpage
 	\tableofcontents
	\newpage
\end{titlepage}

\newpage
\pagenumbering{arabic}

\section{Introduction and contribution}
\label{sec:intro} 
The minimization of a submodular function is a classic problem in combinatorial optimization.  Over a universe $V$, a submodular 
function $f: 2^V \rightarrow \R$ is a function that satisfies $f(S) + f(T) \ge f(S \cap T) + f(S \cup T)$ for all subsets $S,T \subseteq V$.  
The submodular function minimization (SFM) problem is the task of computing $\min_{S \subseteq V} f(S)$.  The SFM problem generalizes several 
well known combinatorial optimization problems such as computing the minimum weight of an $st$-cut in a directed graph and the matroid intersection 
problem.  The SFM problem comes in another flavor when the submodular function $f$ is \emph{symmetric}, i.e.\ also satisfies $f(S) = f(V \setminus S)$ 
for all $S$.  In this case, $\emptyset$ and $V$ are trivial minimizers, so the interesting problem is to compute $\min_{\emptyset \subsetneq S \subsetneq V} f(S)$, 
the non-trivial minimum.  The global minimum cut problem on an undirected graph is an instance of non-trivial symmetric submodular function minimization, which we denote by sym-SFM.

The size of the truth table of a submodular function is exponential in the size of $V$, so (sym-)SFM is typically 
studied in the setting where we have access to an evaluation oracle for $f$, that is, we can query any $S \subseteq V$ and receive the 
answer $f(S)$.  When $|V| = n$, Gr\"{o}tschel, Lov\'{a}sz, and Schrijver \cite{GLS1988} showed that the ellipsoid method can be used to solve SFM with $\tOh(n^5)$ 
oracle queries and overall running time $\tOh(n^7)$ \cite[Theorem 2.8]{McCormick}.  Since then a long line of work has developed faster and simpler (combinatorial)
algorithms for SFM \cite{Schrijver00, IwataFF01,Orlin09, LSW15, DadushVZ21, Jiang21}.  The work of \cite{Jiang21} shows that SFM can be solved by a 
deterministic algorithm making $O(n^2 \log n)$ queries to an evaluation oracle.  By the isolating cut lemma of \cite{LiP20-isocut} this immediately also gives an $\tOh(n^2)$ query randomized 
algorithm for sym-SFM \cite{ChekuriQ21a-isocut, MukhopadhyayN21-isocut}.  In the deterministic case, the best upper bound on the number of queries to solve sym-SFM remains the $O(n^3)$ algorithm of Queyranne \cite{Queyranne98}.

While sym-SFM is a much more general problem, it has a close relationship with one of its simplest instantiations: the global minimum cut problem.  In this problem we are given a weighted and undirected graph $G = (V,E,w)$ 
and the task is to find the minimum weight of set of edges whose removal disconnects $G$.  For a subset $S \subseteq V$ let $\cut_G(S)$ be the set of edges of $G$ with exactly one endpoint in $S$.  The cut function 
$f: 2^V \rightarrow \R$, where $f(S) = w(\cut_G(S))$ is the total weight of edges in $\cut_G(S)$, is a symmetric submodular function.  Evaluation queries in this case are called \emph{cut queries} and the goal is to compute 
$\mincut(G):=\min_{\emptyset \subsetneq S\subsetneq V} w(\cut_G(S))$ with as few cut queries as possible.

Both the best known deterministic and randomized algorithms for sym-SFM use ideas that originated in the study of minimum cuts: Queyranne's algorithm is based on the Stoer-Wagner minimum cut algorithm \cite{StoerW97}, and 
the best randomized algorithm makes use of the isolating cut lemma originally developed in the context of a deterministic minimum cut algorithm \cite{LiP20-isocut}.  
On the lower bound side, the best known bounds on the query complexity of sym-SFM are $\Omega(n)$ in the deterministic case \cite{HajnalMT88, HarveyThesis} and $\Omega(n/\log n)$ in the randomized case \cite{BabaiFS86} (see \cref{table:bounds}).   Both 
of these bounds can be shown for the cut query complexity of determining the weight of a minimum cut in a simple graph.\footnote{We say that a graph is {\em simple} if it is undirected and unweighted and contains at most one edge between any pair of vertices.}
The weight of a minimum cut in simple graph $G$ is known as the \emph{edge connectivity} of $G$, and is the minimum number of edges whose removal disconnects the graph.
The aforementioned lower bounds even hold for the more special case of determining if the edge connectivity is positive, i.e.\ if the graph is connected or not.  

Recent work has given randomized algorithms that can compute $\mincut(G)$ with $O(n \log^3 n)$ cut queries in the case of simple graphs \cite{RubinsteinSW18} and $n \log^{O(1)}(n)$ cut queries in the 
case of weighted graphs \cite{MukhopadhyayN20}.  For the deterministic case, however, the best upper bound remains $O(n^2/\log n)$ \cite{GK00} and proceeds by learning the entire graph.  Researchers continue to study the minimum cut problem 
as a candidate to show superlinear query lower bounds on sym-SFM.  Graur, Pollner, Ramaswamy and Weinberg \cite{GraurPRW20} introduced a linear-algebraic lower bound technique known as the 
\emph{cut dimension} to show a deterministic cut query lower bound of $3n/2-2$ for minimum cut on weighted graphs.  Lee, Li, Santha, and Zhang \cite{LeeLSZ21} show that the cut dimension cannot show lower bounds larger
than $2n-3$, but use a generalization of the cut dimension to show a cut query lower bound of $2n-2$, the current best lower bound known 
on sym-SFM in general.  

Despite this work, showing a superlinear lower bound on the query complexity of sym-SFM remains elusive.  In this paper, we show that for randomized algorithms and the special case of edge connectivity 
there is actually a \emph{linear upper bound}. 
\begin{theorem}\label{thm:intro:classical result}
There is a randomized algorithm that makes $O(n)$ cut queries and outputs the edge connectivity of a simple input graph $G$ with probability at least $2/3$.
\end{theorem}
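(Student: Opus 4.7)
The plan is to pair a new star-contraction primitive with a geometric reduction in the number of vertices. I first spend $n$ cut queries evaluating $f(\{v\})$ for every $v\in V$, which recovers the full degree sequence and in particular the minimum degree $\delta$. Since $\mincut(G)\le\delta$ in any simple graph and all singleton cuts are now known explicitly, it remains to decide whether there is a \emph{non-trivial} cut of value smaller than $\delta$. The regime where $\delta$ is below a small constant (including the disconnected case) is handled directly by a local neighborhood sweep within the remaining $O(n)$ query budget, and the substantive work is in the regime where $\delta$ is large enough that a meaningful contraction is possible.

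For large $\delta$ I would apply star contraction. Sample a random center set $C\subseteq V$ of expected size roughly $n/\delta$ (with at most a polylogarithmic overhead). For each non-center $v$ with at least one neighbor in $C$, contract an edge from $v$ to a uniformly random such neighbor. Because distinct non-centers merge into their chosen centers and no two centers are merged together, the contraction topology is literally a vertex-disjoint union of stars; consequently any cut query on the contracted graph $G'$ is simulated by a single cut query on $G$, and $G'$ has only $\tOh(n/\delta)$ vertices. The central probabilistic claim I must establish is that, with constant probability, every non-trivial minimum cut of $G$ is preserved in $G'$. The argument parallels the 2-out contraction analysis of Ghaffari, Nowicki, and Thorup but exploits the simpler star structure: for a fixed non-trivial minimum cut $(S,V\setminus S)$, the probability that any specific star edge crosses it is small (each vertex has at most $\mincut(G)$ out of $\ge\delta$ neighbors across the cut, and the star uses only one of them), and a union bound over Karger's $O(n^2)$ minimum cuts yields preservation with constant probability. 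Once $G'$ is available I either recurse on it or learn it outright in $\tOh((n/\delta)^2)$ additional queries, and the recursion $T(n)=O(n)+T(\tOh(n/\delta))$ telescopes to $O(n)$ for $\delta\ge 2$.

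The step I expect to be the main obstacle is implementing the star-sampling step in $O(n)$ cut queries rather than the $O(n\log n)$ of the naive approach. The obvious scheme is to binary search for each non-center's random center neighbor using the identity $|E(v,T)|=\tfrac{1}{2}(f(\{v\})+f(T)-f(\{v\}\cup T))$, but this costs $O(\log|C|)$ queries per vertex and introduces a stray logarithmic factor. To eliminate it, I plan to calibrate $|C|$ so that the expected number of center neighbors of each non-center is a small constant, allowing rejection sampling via the per-edge test $f(\{v,c\})=\deg(v)+\deg(c)-2\cdot\mathbf{1}[v\sim c]$ for random $c\in C$ to identify a center neighbor in $O(1)$ queries per vertex in expectation. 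Marrying this sampling scheme with the cut-preservation analysis — and ensuring that both requirements pick mutually consistent parameters for $|C|$ — is the technical heart of the theorem.
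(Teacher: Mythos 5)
Your skeleton (compute $\delta$ with $n$ queries, star-contract onto a small random center set, then solve the contracted instance) matches the paper's, but three of the steps you flag as routine are exactly where the proof lives, and as written two of them fail. First, the sampling implementation: if each non-center has only a constant expected number of neighbors in $C$, then the probability that a \emph{uniformly random} $c \in C$ is a neighbor of $v$ is $\Theta(1/|C|) = \Theta(\delta/n)$, so your per-edge rejection test needs $\Theta(n/\delta)$ queries per vertex in expectation, i.e.\ $\Theta(n^2/\delta)$ total — you have conflated "constant expected number of center neighbors" with "constant hit probability per trial." No per-vertex search (binary or rejection) gets below $\Omega(\log n)$ amortized here; the paper instead exploits that a cut query returns $\Theta(\log n)$ bits and uses the Grebinski--Kucherov separating-matrix machinery (the Recover-$k$-From-All primitive) to learn a constant number of center-neighbors of \emph{all} vertices simultaneously with $O(n)$ queries. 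That tool is absent from your plan and is also what handles your "small $\delta$" case: even deciding connectivity in $O(n)$ cut queries was open before this work, so a "local neighborhood sweep" is not a free $O(n)$ subroutine.

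Second, the preservation analysis is wrong as stated. For a single fixed non-trivial minimum cut the probability that a 1-out/star sample avoids it is only a \emph{constant} (it is exactly $1/16$ for the cycle), because a vertex can have up to half its neighbors across the cut; the bound comes from $\sum_v c(v)/d(v) \le 2$ together with $\max_v c(v)/d(v) \le 1/2$, not from each crossing probability being small. A union bound over Karger's $O(n^2)$ minimum cuts therefore cannot work — and it is also unnecessary, since one only needs to preserve \emph{one} fixed non-trivial minimum cut with constant probability and then repeat. The genuinely hard point, which your proposal skips, is that when each vertex only sees a \emph{constant-size} random sample of its center-neighbors, the ratio $c_{R'}(v)/d_{R'}(v)$ can deviate badly from $c(v)/d(v)$ for individual vertices, and a Chernoff bound only gives constant failure probability per vertex — too weak to union-bound over the up to $\Omega(\delta)$ vertices incident on the cut. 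The paper's key lemma bounds this deviation probability by $O(1/k)\cdot c(v)/d(v)$ via a variance/Chebyshev argument, so the union bound telescopes against $\sum_v c(v)/d(v)\le 2$. Finally, your endgame does not close: learning $G'$ costs $\tOh((n/\delta)^2)$, which exceeds $O(n)$ for all $\delta \in o(\sqrt n)$, and recursing is invalid because $G'$ is a multigraph (the inequalities $c(v)/d(v)\le 1/2$ and $|C|\le\delta$, and the Boolean separating matrices, all rely on simplicity). The paper instead extracts a sparse $\delta$-edge connectivity certificate of $G'$ with a parallelized Bor\r{u}vka construction running on Boolean submatrices of the original adjacency matrix, plus an extra 2-out contraction inside the center set to shrink $G'$ from $\tOh(n\log\delta/\delta)$ to $O(n/\delta)$ vertices.
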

In particular one cannot hope to prove superlinear lower bounds on the randomized query complexity of sym-SFM via the edge connectivity problem.  
It remains open if \cref{thm:intro:classical result} is tight.  The best known lower bound is $\Omega(n \log \log(n)/\log(n))$ which follows from the $\Omega(n\log\log n)$ randomized communication complexity lower bound for edge connectivity 
by Assadi and Dudeja \cite{AssadiD21}.  
An $o(n)$ randomized cut query upper bound on edge connectivity would in particular imply a randomized communication complexity protocol for determining if a graph is connected with $o(n \log n)$ bits,\footnote{With shared randomness the parties can simulate a randomized cut query algorithm with an $O(\log n)$ multiplicative overhead: whenever the algorithm makes a cut query, the parties communicate the number of cut edges in their part of the graph with $O(\log n)$ bits. By Newman's theorem, this protocol can be simulated without shared randomness (and with only an additive $O(\log n)$ overhead).} resolving one of the longest standing open problems in 
communication complexity.  Graph connectivity was a focus of many early works on communication complexity \cite{HajnalMT88, BabaiFS86,RazS95}, and while a deterministic lower bound of $\Omega(n \log n)$ was 
established early on \cite{HajnalMT88}, to this day the randomized communication complexity is only known to be between $\Omega(n)$ and $O(n \log n)$.  

\cref{thm:intro:classical result} even improves the previous best cut query upper bound for deciding if a graph is connected.  Harvey \cite[Theorem 5.10]{HarveyThesis} gave a deterministic $O(n\log n)$ cut query upper 
bound for connectivity, and we are not aware of any better upper bound in the randomized case.  For the case of connectivity we can give a linear upper bound even for zero-error algorithms.
\begin{restatable}{theorem}{spanning}
\label{thm:spanning}
Let $G = (V,E)$ be a simple $n$-vertex graph.
There is a zero-error randomized algorithm that makes $O(n)$ cut queries in expectation and outputs a spanning forest of $G$. 
\end{restatable}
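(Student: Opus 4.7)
My plan is a Boruvka-style algorithm driven by star contraction. We maintain a partition of $V$ into \emph{clusters} and a forest $F$ of confirmed edges, initialized as singletons and $\emptyset$. A cluster $u$ is \emph{active} while $|\cut(u)| \ge 1$, checkable with one cut query. The algorithm iterates rounds until every cluster is inactive and then outputs $F$.

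In a round, let $V'$ denote the set of active clusters, $n' := |V'|$, and sample a center set $C \subseteq V'$ by placing each cluster in $C$ independently with probability $1/2$. Query $|\cut(C)|$. For each non-center $u \in V' \setminus C$, use $O(1)$ cut queries to compute $e(u,C) := |E_G(u,C)|$ via the identity $|\cut(u \cup C)| = |\cut(u)| + |\cut(C)| - 2 e(u,C)$. If $e(u,C) \ge 1$, binary-search inside $C$ (maintaining a candidate subset known to contain a $u$-edge) to isolate an explicit edge $(u,c) \in E(G)$, add it to $F$, and merge $u$ into $c$'s cluster. Since the stars centered at $C$ are vertex-disjoint, the merges within a round commute. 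Zero-error correctness is immediate: every edge added to $F$ is cut-query-certified to lie in $G$ and joins two distinct clusters, so $F$ is always a forest in $G$; on termination every cluster $u$ satisfies $|\cut(u)|=0$, so clusters coincide with the connected components of $G$ and $F$ spans them.

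For the expected query count, the star-contraction argument gives a constant-fraction decrease per round: every active $u$ has some $G$-neighbor $u'$ in another active cluster, and independently $u \notin C$ and $u' \in C$ each hold with probability $1/2$, so $u$ is absorbed with probability $\ge 1/4$. Hence $\bbE[n'_{\text{new}}] \le (3/4) n'$, yielding $O(\log n)$ expected rounds. A naive binary search costs $O(n' \log n')$ per round for a total of $O(n \log n)$ by geometric summation. To sharpen to $O(n)$, I would use a size-adaptive randomized search: a cluster $u$ with $k_u$ active neighbors has $\Theta(k_u)$ edges into a random half $C$, so an edge can be isolated in $O(1 + \log(n'/\max(1,k_u)))$ expected queries; the sums $\sum_u \log(n'/k_u)$ are then bounded across rounds by charging low-degree ``straggler'' clusters to the rounds in which they are absorbed or become inactive.

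The main obstacle is exactly this $O(n)$ rather than $O(n \log n)$ sharpening. A cluster with only one outgoing edge (say, a degree-$1$ vertex of $G$) appears to force a full $\log n$-depth binary search just to find its unique neighbor, so no single-round analysis will suffice. The $\log$ factor must be saved globally: one shows that such low-degree clusters are either absorbed or become isolated (and removed from $V'$) within a bounded number of subsequent rounds in expectation, so that their total $\log$-cost integrates to $O(n)$ across the entire execution. Making this charging argument precise, handling correlations between rounds, and carefully maintaining the cut-value bookkeeping across cluster merges, is the main technical content I expect.
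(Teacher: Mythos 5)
There is a genuine gap, and it sits exactly where you flag it: the sharpening from $O(n\log n)$ to $O(n)$. Your Bor\r{u}vka skeleton, the red/blue (center/non-center) sampling, the constant-fraction decrease in the number of active clusters, and the zero-error correctness argument all match the paper's algorithm. But the per-vertex (randomized) binary search cannot be rescued by the charging argument you sketch. Consider a graph with $n/2$ degree-one vertices, each attached to a distinct vertex of some connected core (e.g.\ a ``broom''). In the very first round, a constant fraction of these leaves are non-centers whose unique neighbor must be isolated from among $\Theta(n)$ centers; each such search costs $\Theta(\log n)$ queries, for $\Theta(n\log n)$ queries already in round one. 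Charging these vertices to the round in which they are later absorbed does not help, because the cost has already been paid before absorption; and the quantity $\sum_u \log(n'/\max(1,k_u))$ is genuinely $\Theta(n\log n)$ in that single round. More fundamentally, binary search extracts only $O(1)$ useful bits per cut query, so no rearrangement of when the searches happen can beat the $n\log n$ barrier --- this is essentially why Harvey's deterministic algorithm (which the paper notes is optimal among algorithms using one-bit bipartite-independent-set information) stops at $O(n\log n)$.

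The missing idea is to exploit the full $\Omega(\log n)$ bits returned by a cut query by recovering neighbors for \emph{many} vertices simultaneously rather than one at a time. The paper does this with the separating-matrix machinery of Grebinski--Kucherov (\cref{lem:learn}): if $S$ is a set of active representatives and each has at most $\ell$ neighbors in a (subsampled) center set $T$ with $|S|,|T|$ polynomially related, the entire bipartite adjacency submatrix can be learned with $O(\ell|S|\log(n)/\log(|S|))\in O(\ell|S|)$ cut queries. Combined with degree bucketing and subsampling so that every vertex has $\Theta(1)$ neighbors in the subsample (the Recover-$k$-From-All primitive, \cref{cor:learn}), this finds an outgoing edge for a constant fraction of active clusters at $O(1)$ amortized queries each, including the degree-one stragglers that defeat binary search. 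If you want to complete your proof along the paper's lines, you should replace the per-vertex search in step two of each round with this parallel recovery step; the rest of your argument then goes through with the same geometric summation you describe.
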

The best lower bound we are aware of in this case is $\Omega(n \log \log(n)/\log(n))$ which follows from the non-deterministic communication complexity lower bound for connectivity of 
$\Omega(n \log \log(n))$ by Raz and Spieker \cite{RazS95}.

\begin{table}
\centering
\renewcommand{\arraystretch}{1.2}
\begin{tabular}{| l | c | c | c | c |}
\hline
& \multicolumn{2}{|c|}{Connectivity} & \multicolumn{2}{|c|}{Edge Connectivity} \\
\hline
& Lower & Upper & Lower & Upper \\
\hline
&&&&\\[-1.3em]
Deterministic & \makecell{$\Omega(n)$ \\ \cite{HajnalMT88}} & \makecell{$O(n \log n)$ \\ \cite{HarveyThesis}}& \makecell{$\Omega(n)$ \\ \cite{HajnalMT88}} & \makecell{$O\left(\frac{n^2}{\log n}\right)$ \\ \cite{GK00}} \\
\hline
&&&&\\[-1.3em]
Zero-error & \makecell{$\Omega\left(\frac{n \log \log(n)}{\log(n)}\right)$ \\ \cite{RazS95}} & \makecell{$O(n)$ \\ (\cref{thm:spanning})} & \makecell{$\Omega(n)$ \\ \cite{LeeS21-private}} & \makecell{$O\left(\frac{n^2}{\log n}\right)$ \\ \cite{GK00}} \\
\hline
&&&&\\[-1.3em]
Bounded-error & \makecell{$\Omega\left(\frac{n}{\log n}\right)$ \\ \cite{BabaiFS86}} & \makecell{$O(n)$ \\ (\cref{thm:spanning})} & \makecell{$\Omega\left(\frac{n \log \log(n)}{\log(n)}\right)$ \\ \cite{AssadiD21}} & \makecell{$O(n)$ \\ (\cref{thm:intro:classical result})}\\
\hline
&&&&\\[-1.3em]
Quantum & $\Omega(1)$ & \makecell{$O(\log^5(n))$ \\ \cite{AuzaL21}} & $\Omega(1)$ & \makecell{$\tOh(\sqrt{n})$ \\ (\cref{thm:intro:quantum result})} \\ 
\hline
\end{tabular}
\caption{The cut query complexity of connectivity and edge connectivity on simple graphs in various models. The upper bounds on edge connectivity in the deterministic and zero-error models follow from using \cite[Section 4.1]{GK00} to learn the full graph with $O(n^2/\log(n))$ cut queries.  The bound in \cite{GK00} is stated for additive queries, but the same argument holds for cut queries: with $O(n/\log(n))$ cut queries we can learn the neighborhood of a vertex by \cref{lem:learn}.}
\label{table:bounds}
\end{table}

A key to both \cref{thm:intro:classical result} and \cref{thm:spanning} is to think in terms of matrix-vector multiplication queries.  If $A$ is the adjacency matrix of an $n$-vertex simple graph $G$, 
in a matrix-vector multiplication query we can query any vector $x \in \{0,1\}^n$ and receive the answer $Ax$.  If $G$ has maximum degree $d$, and so $A$ has at most $d$ ones in every row, we 
can learn the entire graph $G$ with only $O(d \log n)$ matrix-vector multiplication queries---this is one of the key ideas behind compressed sensing.  As a single
matrix-vector multiplication query can be simulated with $O(n)$ cut queries, this shows that we can learn $G$ with $O(nd \log n)$ cut queries.  Grebinski and Kucherov \cite{GK00} show the surprising fact 
that if $G$ is bipartite with maximum degree $d$, and the left and right hand sides are roughly the same size, then one can actually learn $G$ with only $O(nd)$ cut queries.  This savings of a $\log n$ factor over the trivial simulation is key to our improved algorithms.

We use this idea to design a primitive called Recover-$k$-From-All.  Given two disjoint subsets $S,T$ of vertices, with the promise that all vertices in $S$ have 
at least $k$ neighbors in $T$, Recover-$k$-From-All makes $O(kn)$ cut queries and learns at least $k$ neighbors in $T$ of 
every vertex in $S$.  This routine is the heart of our algorithm for \cref{thm:spanning}, which uses it to implement Bor\r{u}vka's spanning forest algorithm. 

It is less obvious how such a primitive is useful to compute edge connectivity as it only gives us local snapshots of sparse bipartite induced subgraphs.  To this end we develop a new 
technique for edge connectivity called \emph{star contraction}.  Star contraction is inspired by the randomized 2-out contraction algorithm of Ghaffari, Nowicki and Thorup \cite{GNT20}.  
In that algorithm, each vertex independently and uniformly at random selects two incident edges.  Ghaffari et al.\ show that when the selected edges are contracted the resulting 
graph $G'$ has only $O(n/\delta(G))$ vertices with high probability, where $\delta(G)$ is the minimum degree of $G$, and further with constant probability no edge of a non-trivial minimum cut\footnote{We call a cut {\em trivial} if it isolates a single vertex.} is 
contracted.  When these good 
things happen the edge connectivity of $G$ is the minimum of $\delta(G)$ and the edge connectivity of $G'$.  

2-out contraction is not very compatible with our primitive Recover-$k$-From-All because of the combination of requiring independent sampling and choosing an 
edge incident to a vertex uniformly at random.  Instead, in star contraction we first randomly choose a subset $R$ of size $\Theta(n \log(n)/\delta(G))$.  
With high probability every vertex in $V \setminus R$ will have a neighbor in $R$, and in star contraction we only contract edges with an endpoint in $R$.  The fact that 
the edges that we want to contract are incident on a small number of vertices is a key to the savings of star contraction over 2-out contraction in the cut query model.
Further, if for every vertex in $ v \in V \setminus R$ we contract an edge connecting it to $R$ then the contracted graph $G'$ will \emph{automatically} have its size bounded 
by $|R| = \Theta(n \log(n)/\delta(G))$.  While proving the contracted graph has few vertices is the most difficult part of the argument in 2-out contraction, for star contraction 
it is trivial (although the bound we get is larger by a $\log n$ factor).

The tricky part remaining is \emph{how} to choose a neighbor in $R$ for each $v \in V \setminus R$ without having too high a probability of choosing a neighbor on the 
other side of a non-trivial minimum cut.  In our main technical contribution, we show that each $v \in V \setminus R$ can learn just a \emph{constant} number of
neighbors in $R$ without too high a fraction of them being on the opposite side of a non-trivial minimum cut.  Moreover, we can allow correlations between the neighbors 
learned for different vertices which allows us to efficiently learn these constant number of neighbors using Recover-$k$-From-All with constant $k$.

Not surprisingly, the aforementioned matrix-vector multiplication perspective also leads to efficient randomized algorithms for edge connectivity in the matrix-vector multiplication query model.  
While this model has been used previously in the study of sequential graph algorithms \cite{OrecchiaSV12} and (implicitly) in streaming algorithms for graph problems \cite{AGM12}, it began to 
be studied in and of itself relatively recently in the work of \cite{SWYZ21}, and has since seen several several follow-ups \cite{CHL21, AuzaL21}.  
More surprisingly, it turns out that the study of \emph{quantum} algorithms 
with cut query access to a graph is also closely related to the matrix-vector multiplication model.  This is because with $O(\log n)$ cut queries a quantum algorithm can simulate a 
restricted form of a matrix-vector multiplication query, namely it can compute $Ax$ in the entries where $x$ is zero (this is implicit in \cite{LeeSZ21} and made explicit in \cite[Corollary 11]{AuzaL21}).
Lee, Santha and Zhang \cite{LeeSZ21} used this to show that a quantum algorithm making only $O(\log^6 n)$ cut queries can decide if an $n$-vertex simple graph is connected \cite[Theorem 44]{LeeSZ21}, a nearly exponential speedup over the best possible randomized algorithm.
They left it as an open question whether any quantum speedup is possible for the problem of edge connectivity.
This problem is particularly interesting because not much is known about the complexity of (sym)-SFM with respect to quantum algorithms, either in terms 
of upper or lower bounds (some work has been done on approximation algorithms for SFM, see \cite{HamoudiRRS19}).  The best classical algorithms for (sym)-SFM tend to be highly sequential, a feature which is typically hard to speed up quantumly.
Our matrix-vector multiplication perspective leads to a quantum improvement for the cut query complexity of edge connectivity, as shown by the following theorem.
\begin{theorem}\label{thm:intro:quantum result}
There is a quantum algorithm that makes $\tOh(\sqrt{n})$ cut queries and outputs the edge connectivity of the input simple graph $G$ correctly with high probability.  Similarly, there is a randomized 
algorithm making $\tOh(\sqrt{n})$ matrix-vector multiplication queries to the adjacency matrix of $G$ that outputs the edge connectivity of the input simple graph $G$ correctly with high probability.
\end{theorem}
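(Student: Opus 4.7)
The plan is to derive both claims from a single randomized matrix–vector algorithm and port it to the quantum cut-query model via the $O(\log n)$ quantum simulation of restricted mat-vec queries $x\mapsto (Ax)_{V\setminus\mathrm{supp}(x)}$, which is implicit in \cite{LeeSZ21} and explicit in \cite{AuzaL21}. The algorithm branches on the minimum degree $\delta=\delta(G)$: one mat-vec query on the all-ones vector returns every degree, while in the quantum cut-query model $\delta$ is obtained via Grover minimum finding over $n$ single-vertex cut queries at cost $\tOh(\sqrt n)$.

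High-degree branch, $\delta\gtrsim\sqrt n$. I apply star contraction with a uniformly random $R\subseteq V$ of size $|R|=\Theta(n\log n/\delta)=\tOh(\sqrt n)$, so that every $v\in V\setminus R$ has a neighbor in $R$ w.h.p. The main star-contraction guarantee only requires a constant number of $R$-neighbors per non-$R$ vertex, chosen so that with constant probability none cross a fixed non-trivial minimum cut. I extract these neighbors by batched binary search on $R$: a restricted mat-vec query on $\mathbf{1}_{R'}$ with $R'\subseteq R$ returns $|N(v)\cap R'|$ for every $v\notin R$ simultaneously, so $O(\log n)$ such queries recover one neighbor per $v$, and $O(1)$ repetitions give all required neighbors. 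Contracting yields $G'$ on $\tOh(\sqrt n)$ vertices; I learn $G'$ fully with $|V(G')|$ standard-basis restricted mat-vec queries (one contracted row per query), then compute $\mincut(G')$ offline and output $\min(\delta,\mincut(G'))$. Total cost is $\tOh(\sqrt n)$ mat-vec queries, with an $O(\log n)$ multiplicative overhead for quantum cut queries.

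Low-degree branch, $\delta\lesssim\sqrt n$. Here $\mincut(G)\le\delta$ is itself small, so I build a sparse certificate $H\subseteq G$ with $\mincut(H)=\mincut(G)$ by iteratively extracting spanning forests $F_1,\dots,F_\delta$ of $G\setminus(F_1\cup\cdots\cup F_{i-1})$, each computed in $\polylog(n)$ mat-vec queries via AGM-style graph sketches adapted to integer mat-vec, or in $\polylog(n)$ quantum cut queries via \cite{AuzaL21}. A standard spanning-tree argument shows that every cut of $G$ of size $k\le\delta$ has all $k$ of its edges captured in $H$ (because a leftover cut edge would create a cycle in some $F_i$'s component), while every larger cut contributes at least $\delta$ edges to $H$; hence $\mincut(H)\ge\min(\mincut(G),\delta)=\mincut(G)$, and equality follows from $H\subseteq G$. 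Since $H$ has only $\tOh(n\delta)$ edges and I have already observed them, $\mincut(H)$ is computed offline. Total cost $\tOh(\delta)=\tOh(\sqrt n)$.

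The principal obstacle is the batched neighbor-selection step in the high-degree branch: a per-vertex Grover search for one $R$-neighbor of each $v$ would aggregate to roughly $n\sqrt{|R|/n}\gg\sqrt n$ queries. Star contraction resolves this by restricting all contracted edges to a single small center set $R$, so one restricted mat-vec query on $R'\subseteq R$ simultaneously probes every $v\notin R$; the correlations this introduces between the neighbors chosen for different vertices are precisely what the constant-neighbor star-contraction analysis tolerates. Combined with the $O(\log n)$ quantum simulation of each restricted mat-vec query, both bounds in \cref{thm:intro:quantum result} follow.
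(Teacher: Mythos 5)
Your overall architecture is the same as the paper's: branch on $\delta(G)$ (computed by one mat-vec query or by quantum minimum finding), handle $\delta\lesssim\sqrt n$ with a Nagamochi--Ibaraki certificate built from $\delta$ spanning forests at $\polylog(n)$ queries each, and handle $\delta\gtrsim\sqrt n$ with uniform star contraction onto a center set of size $\tOh(\sqrt n)$. One genuine (and fine) deviation: after contracting, you learn $G'$ entirely with one restricted mat-vec query per supervertex and compute $\mincut(G')$ offline, whereas the paper simulates cut queries to $G'$ and runs the $\tOh(N)$ cut-query minimum-cut algorithm of \cref{thm:MN20}; both cost $\tOh(\sqrt n)$, and yours is arguably simpler.

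The flawed step is your ``batched binary search'': the claim that $O(\log n)$ restricted mat-vec queries on subsets $R'\subseteq R$ recover a neighbor in $R$ for \emph{every} $v\notin R$ simultaneously. A query on $\mathbf{1}_{R'}$ does return $|N(v)\cap R'|$ for all $v$ at once, but after the first split different vertices must recurse into \emph{different} halves of $R$, and querying the union of the halves that various vertices want contaminates each vertex's count with neighbors lying in the other vertices' intervals. Serving all vertices therefore requires querying, at level $i$, up to $2^i$ distinct dyadic sub-intervals, for $\Theta(|R|)$ queries in total rather than $O(\log n)$; the bit-indexing trick that genuinely batches in $O(\log n)$ queries only identifies a neighbor when $|N(v)\cap R'|=1$, which fails here since $v$ can have up to $|R|$ neighbors in $R$. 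The repair is immediate and is exactly what the paper does: spend $O(|R|)=\tOh(\sqrt n)$ queries learning all of $\cut(R)$ (one neighborhood query per center, or all dyadic intervals of $R$ followed by the offline randomized binary search of \cref{prop:learn_random_neighbor}), after which each $v\notin R$ picks an exactly uniform, independent neighbor and the correctness guarantee of \cref{thm:intro:simple star contraction} applies directly -- no appeal to the correlated, constant-sample analysis is needed in this branch.
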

The quantum part of this theorem gives a near-quadratic speedup over the best possible randomized algorithm.  
Moreover, there is a natural bottleneck to giving a $o(\sqrt{n})$ quantum cut query algorithm 
for edge connectivity, which is that even computing the minimum degree of a graph seems to require $\Omega(\sqrt{n})$ quantum cut queries.\footnote{It is intuitive that computing the edge connectivity 
of a graph is more difficult than computing the minimum degree, and we formalize this via a simple reduction in \cref{appensec:reduction}.}  There is a very natural $O(\sqrt{n})$ quantum 
algorithm for computing the minimum degree: the degree of a single vertex can be computed with one cut query, and one can then use quantum minimum finding \cite{DurrH96} on top of this to find the 
minimum degree with $O(\sqrt{n})$ cut queries.  We conjecture that this simple algorithm is optimal, which would imply that the quantum statement of \cref{thm:intro:quantum result} is tight up to 
polylogarithmic factors.

As a final application, we use our new star contraction technique to obtain a one-pass $\tOh(n)$-space algorithm for computing edge connectivity with high probability in the \textit{complete vertex-arrival} streaming model. In this model, the vertices of the graph $G$ arrive in an arbitrary order with all incident edges. This contrasts with the edge-arrival streaming model, where edges of $G$ arrive in arbitrary order, for which a $\tilde \Omega(n^2)$ lower bound was proven on the space complexity of a one-pass algorithm that computes the edge connectivity \cite{Zelke11}. This bound can be modified to also prove an $\Omega(n^2)$ lower bound on the one-pass space complexity of edge connectivity in the more restrictive \textit{explicit vertex-arrival} model, where the vertices of $G$ arrive only with the edges incident on the previously seen vertices, as was considered in e.g.\ \cite{CormodeDK19}.  For completeness, we include a proof sketch of this lower bound in \Cref{appensec:streaming}.
If however the vertices arrive with edges incident on the previously seen vertices in a {\em random} order, then our technique still implies an $\tilde O(n)$-space algorithm.
For comparison, we also discuss why it is not clear how to use the related 2-out contraction technique to achieve these results.

\section{Technical overview} \label{sec:tech overview}

In the following sections we introduce one of the main tools in this work, star contraction, and give a sketch of the classical and quantum cut query algorithms that make use of star contraction.

\subsection{Star contraction}
The main workhorse for proving our results is a new technique for randomly contracting edges of a simple graph while preserving a non-trivial minimum cut with constant probability.
The idea of contracting edges while preserving non-trivial minimum cuts comes from the celebrated result of Kawarabayashi and Thorup (Fulkerson Prize 2021) \cite{KawarabayashiT19}, which gave the first near-linear time deterministic algorithm for computing the edge connectivity of a simple graph.
A critical observation in their work is the following: we can contract edges in a simple graph $G$ to get a graph $G'$ so that (i) $G'$ has $\tOh(n/\delta(G))$ vertices and $\tOh(n)$ edges, where $\delta(G)$ is the minimum degree of $G$, and (ii) all non-trivial minimum cuts in $G$ are preserved (i.e., no edge participating in a non-trivial minimum cut is contracted).
In particular, if $G$ has a non-trivial minimum cut then $\lambda(G') = \lambda(G)$.
Such a contraction is useful for computing edge connectivity since when the edge connectivity is large and there is a non-trivial minimum cut (which is usually harder to handle), the contraction significantly sparsifies and reduces the 
number of vertices of the graph. 
We call this type of contraction a {\em KT contraction}.

The KT contraction technique has been highly influential, and many works have since used and studied it. The algorithm for KT contraction given in \cite{KawarabayashiT19} takes time $O(m\log^{12} n)$ in the sequential setting when 
the graph has $m$ edges. This was later improved by Henzinger, Rao, and Wang \cite{HRW20} to $O(m \log^2 n (\log \log n)^2)$. Using an expander decomposition algorithm \cite{ChuzhoyGLNPS19,SaranurakW19,NanongkaiS16,Wulff-Nilsen16a} as a black box, Saranurak \cite{Saranurak21} showed a slower but simpler $\tOh(m^{1+o(1)})$ time algorithm to compute a KT contraction. All these algorithms are deterministic but rather complicated, making them hard to adapt to other settings. Rubinstein, Schramm, and Weinberg \cite{RubinsteinSW18} provide a randomized algorithm for computing a KT contraction that is efficient 
in the cut-query setting, and leads to their aforementioned $O(n\log^3 n)$ cut query algorithm for edge connectivity. 

Most relevant for our work is the beautiful {\em 2-out contraction} algorithm by Ghaffari, Nowicki, and Thorup \cite{GNT20}. In this algorithm, every vertex independently at random (with replacement) chooses two of its incident edges to contract. Ghaffari et al.\ show that the resulting contracted graph $G'$ has only $O(n/\delta(G))$ vertices with high probability, and moreover if $G$ has a non-trivial minimum cut then $\mincut(G)=\mincut(G')$ with constant probability.
They use this algorithm to get the current fastest randomized algorithm for edge connectivity with runtime\footnote{The stated bound in \cite{GNT20} is $O(\min\{m + n \log^3 n, m \log n\})$, but more recent work on the 
minimum cut problem by \cite{GawrychowskiMW20} improves it to the bound we quote here.} $O(\min\{m + n \log^2 n, m \log n\})$, and they also obtain improved algorithms for edge connectivity in the distributed setting.

Although \cite{GNT20} did not study the cut query model, the 2-out contraction approach gives a simple randomized algorithm for edge connectivity with $O(n \log n)$ cut queries, improving the 
bound from \cite{RubinsteinSW18}.  As this is very related to our approach, we give an outline of the proof here.  First, we can compute $\delta(G)$ with $n$ cut queries by querying $|\cut(\{v\})|$ for every 
vertex $v$.  The next thing to notice is that for any vertex $v$ we can randomly choose a neighbor 
of $v$ with $O(\log n)$ cut queries using a randomized version of binary search.  This is because with $3$ cut queries we can compute $|E(v,S)|$ for any set $S \subseteq V \setminus \{v\}$ (see \cref{prop:3}), and thus can continue searching for a neighbor of 
$v$ in the set $S$ with probability proportional to this number.  Thus with $O(n \log n)$ queries we can perform 2-out contraction and identify the sets of vertices forming the ``supervertices'' of the contracted 
graph $G'$.  By the main theorem of \cite{GNT20}, with high probability $G'$ will have $O(n/\delta(G))$ supervertices.  The remaining task is to compute the edge connectivity of $G'$.  To do this we can make use of a very useful
tool developed by Nagamochi and Ibaraki \cite{NI92} called a \emph{sparse $r$-edge connectivity certificate}.  Let $F$ be the set of edges found by repeating $r$ times: (i) find a spanning forest of $G'$ and (ii) add the edges of this spanning forest to $F$ and 
remove them from $G'$.  Then Nagamochi and Ibaraki show that if $|\cut_G(S)| \le r$ then $\cut_G(S) = \cut_F(S)$.  In particular, the edge connectivity of a sparse $\delta(G)$-edge connectivity certificate of $G'$ will 
equal $\tau = \min\{\delta(G),\mincut(G')\}$.  Contraction cannot decrease edge connectivity, so if $\mincut(G) = \delta(G)$ then $\tau$ will always be the correct answer; if $\mincut(G) < \delta(G)$ then it will be correct 
whenever we do not contract an edge of a non-trivial minimum cut in the 2-out contraction, which happens with constant probability.
We can find a single spanning forest of $G'$ deterministically with $O(n \log(n)/\delta(G))$ cut queries \cite[Theorem 5.10]{HarveyThesis}, thus we can find a sparse $\delta(G)$-edge connectivity certificate of $G'$ with $O(n \log n)$ cut queries overall.

It is not obvious how to independently sample a uniformly random neighbor of every vertex without spending $\Omega(\log n)$ cut queries 
on average per vertex.  Even with the very powerful matrix-vector multiplication queries, where one can learn the entire neighborhood of a vertex with a single query, it is not clear 
what else one can do to avoid spending $\Omega(1)$ queries per vertex on average to implement 2-out contraction.

We introduce a new graph contraction technique called \emph{star contraction} that allows one to take advantage of the power of cut and 
matrix-vector multiplication queries to process vertices in parallel.  For greater flexibility in the applications to different types of queries, we state this as a 
general method that can be instantiated in various ways.

\begin{technique}[Star contraction] 
\label{star_primitive}
Let $G=(V,E)$ be a simple graph and $p \in (0,1]$ be a probability parameter (think of $p \in \widetilde \Theta(1/\delta(G))$).
\begin{enumerate}
\item
Define a set of ``center vertices'' $R$ where every vertex is put into $R$ independently at random with probability $p$.
\item
Define a set of ``star edges'' $X$ by doing the following for every vertex $v \notin R$: pick a neighbor $c \in R$ (if it exists) and put the edge $\{v,c\}$ into $X$. 
The set $X$ is a collection of star subgraphs centered at vertices in $R$.  
\end{enumerate}
Output the graph $G'$ obtained from $G$ by contracting all edges in $X$. 
\end{technique}
Note that in item~2 we do not specify how to pick a neighbor in $R$.  The rule for doing this will vary in our applications.  The nice thing about the star contraction 
framework is that no matter what rule is used here, the number of vertices in $G'$ will always be at most $|R|$ plus the number of vertices in $V \setminus R$ that 
have no neighbor in $R$.  By taking $p = \Theta(\log(n)/\delta(G))$, for example, with high probability all vertices will have a neighbor in $R$ and hence $G'$ will only have $O(n \log(n)/\delta(G))$ many vertices.  This leaves one only with the question of choosing a good rule to instantiate item~2 that does not choose an edge of non-trivial minimum cut
with too high a probability, and that can be efficiently performed in the query model of interest.

The most natural rule to instantiate item~2 of \cref{star_primitive} is to have each vertex in $V \setminus R$ independently and uniformly at random choose a neighbor in $R$.  We refer to this instantiation as 
\emph{uniform star contraction}.  The analysis of this case suffices for our algorithms in the quantum cut query model, the matrix-vector multiplication query model, and the streaming model.
As an example, in the matrix-vector multiplication and quantum cut query settings we can learn all the neighbors of a vertex with $\tOh(1)$ queries.
Hence, we can learn all edges incident on the center vertices $R$ with only $\tOh(|R|)$ queries, which then allows us to implement uniform star contraction.
Now if $\delta(G)$ is large (and we choose $p = \Theta(\log(n)/\delta(G))$ as above) then the query cost $\tOh(|R|)$ will be small, and we will take advantage of this.
This contrasts with the case of 2-out contraction, where in general there does not exist a small set so that all contracted edges are incident on this set (as an example, consider the case where $G$ is the complete graph).
Formally, we show the following theorem about uniform star contraction.

\begin{restatable}{theorem}{simplestar}
\label{thm:intro:simple star contraction}
Let $G=(V,E)$ be an $n$-vertex simple graph with $\mincut(G) < \delta(G)$. Then uniform star contraction with $p=\frac{1200 \ln n}{\delta(G)}$ gives $G'$ where 
\begin{enumerate}
\item $G'$ has at most $2400 n \ln(n)/\delta(G)$ vertices with probability 
at least $1-1/n^4$, and
\item $\mincut(G')=\mincut(G)$ with probability at least $2 \cdot 3^{-13}$.
\end{enumerate}
\end{restatable}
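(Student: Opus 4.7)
The plan is to prove the two items separately: for item (1), I would apply standard concentration to the size of $R$ and to the event that every vertex has at least one neighbor in $R$; for item (2), I would exploit a structural property of minimum cuts to control the conditional probability of contracting any cut edge.

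For item (1), observe that after star contraction each $v \in R$ becomes one supervertex of $G'$, and the only additional vertices of $G'$ are those $v \in V \setminus R$ with no neighbor in $R$; thus $|V(G')| \le |R| + |\{v \in V \setminus R : N(v) \cap R = \emptyset\}|$. A multiplicative Chernoff bound on $|R| \sim \mathrm{Bin}(n,p)$ gives $|R| \le 2np = 2400\, n \ln(n)/\delta(G)$ with probability at least $1 - n^{-400}$, and since every $v \in V$ has degree at least $\delta(G)$, $\Pr[N(v) \cap R = \emptyset] \le (1-p)^{\delta(G)} \le n^{-1200}$, so a union bound guarantees every vertex has a neighbor in $R$ with probability at least $1 - n^{-1199}$. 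Combined, these easily yield item~(1).

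For item (2), fix a non-trivial minimum cut $(S, \bar{S})$ and, for each $v \in V$, let $d(v)$ denote its degree, $\lambda_v$ the number of cut edges incident to $v$, and $A_v \subseteq N(v)$ the set of cut neighbors of $v$. The key structural fact is that, since moving any single vertex to the opposite side cannot decrease the cut value, every $v$ satisfies $\lambda_v \le d(v)/2$. For $v \in V \setminus R$, let $p_v(R) = |A_v \cap R|/|N(v) \cap R|$ (with $0/0 := 0$) be the conditional probability that $v$'s star edge is a cut edge; since the picks are conditionally independent given $R$,
\[
\Pr[\text{no cut edge is contracted} \mid R] = \prod_{v \notin R}(1 - p_v(R)).
\]
A symmetry/deferred-decisions argument shows that, conditional on $v$ picking anything at all, the pick is uniform over $N(v)$, so $\bbE[p_v(R)\one_{v \notin R}] \le \lambda_v/d(v)$; summing and using $\sum_v \lambda_v = 2\mincut(G)$ gives $\bbE[\sum_{v \notin R} p_v(R)] \le 2\mincut(G)/\delta(G) < 2$, and Markov then yields $\sum_{v \notin R} p_v(R) < 6$ with probability at least $2/3$. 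Separately, I would apply a Bernstein-type tail bound to $|A_v \cap R| - 3|(N(v) \setminus A_v) \cap R|$, whose mean is $\le -p\,d(v)$ (using $\lambda_v \le d(v)/2$) and variance is $O(p\,d(v))$: since $p\,d(v) \ge p\,\delta(G) = 1200 \ln n$, a union bound shows $\max_v p_v(R) \le 3/4$ with probability $1 - n^{-\Omega(1)}$. On the intersection of the two events, the inequality $1 - x \ge e^{-2x}$ valid for $x \in [0, 3/4]$ gives $\prod(1 - p_v(R)) \ge e^{-12}$, and combining the probabilities yields $\Pr[\mincut(G') = \mincut(G)] \ge \tfrac{1}{2}e^{-12} \ge 2 \cdot 3^{-13}$ for all sufficiently large $n$.

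The main obstacle is controlling $\max_v p_v(R)$, not just $\sum_v p_v(R)$. The naive Markov bound on the expected number of contracted cut edges gives only $\Pr[X \ge 1] \le 2 \mincut(G)/\delta(G)$, which is useless when $\mincut(G)$ is close to $\delta(G)$; and even a tight bound on $\sum_v p_v(R)$ is insufficient, since a single vertex with $p_v(R)$ close to $1$ would drive $\prod(1 - p_v(R))$ to zero regardless of how small the sum is. The two ingredients that must cooperate are the structural inequality $\lambda_v \le d(v)/2$, which forces $\bbE[p_v(R)] \le 1/2$, and the choice $p\,\delta(G) = 1200 \ln n$, which provides the concentration needed to ensure $p_v(R) \le 3/4$ uniformly over $v$ with high probability.
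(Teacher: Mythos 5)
Your proposal is correct and follows essentially the same route as the paper: item~(1) is the same Chernoff-plus-union-bound argument, and item~(2) uses the same decomposition into a sum bound (via the exact conditional expectation identity $\bbE[c_R(v)/d_R(v) \mid d_R(v)>0]=c(v)/d(v)$ plus Markov) and a uniform max bound (via concentration and a union bound, enabled by $p\,\delta(G)=1200\ln n$), before lower-bounding the product. The only cosmetic differences are that the paper packages the last step as an extreme-point bound $(1-\alpha)^{\ceil{\beta/\alpha}}$ on $(\alpha,\beta)$-good subgraphs rather than your inequality $1-x\ge e^{-2x}$ on $[0,3/4]$, and establishes the max property with two Chernoff bounds instead of your single Bernstein bound on $|A_v\cap R|-3|(N(v)\setminus A_v)\cap R|$; both variants yield the stated constants.
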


We give an overview of the proof here.
As mentioned, the number of vertices in the contracted graph $G'$ is at most $|R|$ plus 
the number of vertices in $V \setminus R$ that have no neighbor in $R$.  
By a Chernoff bound, with high probability $|R|$ will be at most twice its expectation, which is $\Theta(n \log(n)/\delta(G))$.
Further, the expected number of neighbors of any vertex $v$ among the centers $R$ is $\Omega(\log n)$. Thus, by a Chernoff bound plus a union bound, every 
vertex in $V \setminus R$ will have a neighbor in $R$ with high probability.
This argument nearly trivially bounds the number of vertices in $G'$.
In contrast, bounding the number of vertices in $G'$ is the most complicated part of the proof for the analog of \cref{thm:intro:simple star contraction} for 2-out contraction, although it must be noted the bound obtained there is better by a factor of $\log n$.  
Another nice property of star contraction is that each connected component of $G'$ has diameter 2, a property that is useful for designing algorithms in models of distributed 
computing.  Ghaffari et al.\  show that after 2-out contraction the average diameter of a component is $O(\log \delta(G))$ with high probability \cite[Lemma 5.1]{GNT20}.\footnote{Perhaps more 
comparable to our case, Ghaffari et al.\ also obtain a \emph{worst-case} upper bound of $O(\log n)$ on the diameters of the connected components of a contracted graph $G'$ with $O(n \log(n)/\delta(G))$ vertices obtained 
by only contracting a subset of the edges selected in a 2-out sample \cite[Remark 5.3]{GNT20}.}

For the second item of the theorem it is useful to first review the proof of the analogous statement for $2$-out contraction.  Let $C$ be a non-trivial 
minimum cut of $G$.   Let a \emph{random 1-out sample} of $G$ be 
the set of edges obtained by independently and uniformly at random selecting an edge incident to each vertex.  A $2$-out contraction is exactly the process of independently 
performing two random 1-out samples of $G$ and contracting all the selected edges.  The probability that we contract an edge of $C$ in performing a 2-out contraction is 
exactly the square of the probability that we select an edge of $C$ in a random 1-out sample.

For every vertex $v$ let $d(v)$ be the degree of $v$ and $c(v)$ be the number of $u$ such that $\{u,v\} \in C$.  Let $N(C) = \{v \in V : c(v) > 0\}$ be the 
set of vertices incident to $C$.  When we take a random 1-out sample of $G$, the probability that we do not choose an edge of $C$ is exactly
\begin{equation}
\label{eq:Cprod}
\prod_{v \in N(C)} \left( 1 - \frac{c(v)}{d(v)} \right) \enspace .
\end{equation}
At first it might seem that this probability could be very small.  The key to lower bounding it combines two observations:
\begin{align}
\frac{c(v)}{d(v)} &\le 1/2 \text{ for every } v \in N(C) 
\label{eq:half} \\  
\sum_{v \in N(C)} \frac{c(v)}{d(v)} &\le 2\frac{|C|}{\delta(G)} \le 2
\label{eq:sumis2}.
\end{align}
The first inequality follows from the fact that $C$ is a non-trivial cut, and if \cref{eq:half} did not hold then we could move $v$ to the other side and obtain a smaller cut. 
To derive \cref{eq:sumis2} we use the fact that $d(v) \ge \delta(G)$ and $|C| \le \delta(G)$.

How small can \cref{eq:Cprod} be under the constraints of 
\cref{eq:half} and \cref{eq:sumis2}?  In fact it is always at least $1/16$: some thought shows that the minimum of \cref{eq:Cprod} will be achieved 
at an extreme point of the set of constraints, which is obtained when 4 vertices have 
$c(v)/d(v) = 1/2$, thereby saturating both \cref{eq:half} and \cref{eq:sumis2}.\footnote{See \cref{clm:sample} for a proof.  One can alternatively obtain a looser bound by using the inequality $1-x \ge \exp(-x/(1-x))$ for $0<x<1$, as is done in \cite{GNT20}.}
The lower bound of $1/16$ is tight as can be seen by taking a non-trivial minimum cut of the cycle graph.
This completes the slick argument that indeed 2-out contraction will not contract an edge of a non-trivial minimum cut with constant probability.

Correctness of our algorithms based on star contraction is proven using analogs of \cref{eq:half} and \cref{eq:sumis2} (with slightly worse constants).
We again illustrate the correctness proof for the case of \emph{uniform} star contraction.
Let $d_R(v) = |E(v, R\setminus \{v\})|$ and $c_R(v) = |C \cap E(v, R\setminus \{v\})|$.
As an analog of \cref{eq:sumis2}, we directly show that for all $v \in V$
\begin{align}
\bbE_R \left [ \frac{c_R(v)}{d_R(v)} \;\bigg|\; d_R(v) > 0 \right] = \frac{c(v)}{d(v)} \enspace .\label{eq:expectation}
\end{align}
Therefore by linearity of expectation and \cref{eq:sumis2} we have $\bbE[\sum_{v: d_R(v)>0}  c_R(v)/d_R(v)] \le 2$, and by Markov's inequality the sum will not significantly exceed this quantity with constant probability.
To prove an analog of \cref{eq:half} we again use the fact that, with high 
probability, $d_R(v) = \Omega(\log n)$ for all $v$.
By \cref{eq:expectation} we also know that $\bbE[c_R(v)/d_R(v) \mid d_R(v) > 0] = c(v)/d(v) \le 1/2$.  
Thus for $c_R(v)/d_R(v)$ to exceed $2/3$ we 
must have $c_R(v) = \Omega(\log n)$ and $c_R(v)$ exceeding its expected value by a constant factor bigger than $1$.  We then again use a Chernoff 
bound to show that for each $v$ individually this does not happen with high probability, and finally apply a union bound over all~$v$.

\subsection{Matrix-vector multiplication and quantum cut query algorithm}
\label{sec:tech_mvp}
As a direct application of our uniform star contraction procedure we obtain an algorithm for computing the edge connectivity of a simple graph with $\tOh(\sqrt{n})$ quantum cut queries (i.e., \Cref{thm:intro:quantum result}) or 
matrix-vector multiplication queries to the adjacency matrix.
We sketch the algorithm here and postpone details to \cref{sec:quant_algo}.
The algorithm uses the following primitives (which we can run either on the original graph, a vertex-induced subgraph, or a vertex-induced subgraph with an explicit set of edges removed):
\begin{enumerate}
\item[P1.]
\label{P:nbhs}
{\em Find all neighbors of a vertex.}
This can be done with 1 matrix-vector multiplication query to the adjacency matrix (for a vertex $v$, query $A \chi_v$ with $\chi_v$ the standard basis vector corresponding to vertex $v$) or $O(\log n)$ quantum cut queries (this is implicitly shown in \cite{LeeSZ21} and made explicit in \cite[Corollary 11]{AuzaL21}).
\item[P2.]
{\em Construct a spanning forest.}
This can be done with $\polylog(n)$ matrix-vector multiplication queries \cite{AuzaL21} or $\polylog(n)$ quantum cut queries \cite{LeeSZ21}.
\item[P3.]
{\em Compute the minimum degree.} This takes 1 matrix-vector multiplication query (query for the matrix-vector product $A \one$, with $\one$ the all-ones vector) or $O(\sqrt{n})$ quantum cut queries (run quantum minimum finding on the vertex degrees).
\item[P4.]
{\em Compute a cut query.} This can clearly be done with 1 matrix-vector multiplication or quantum cut query.
\end{enumerate}
Uniform star contraction for a given parameter $p$ can be implemented with just the first primitive: (i) pick a random subset of vertices $R$ by including every vertex independently at random with probability $p$, (ii) for every vertex in $R$ learn all its neighbors, explicitly 
giving the set $\cut(R)$, and (iii) for every vertex $v$ not in $R$, select a random edge in $\cut(R)$ incident to $v$ (if it exists).
Contracting the resulting star graphs yields (the supervertices of) the contracted graph $G'$.
We only make queries in step (ii).  By primitive~P1, this can be done with $O(|R|)$ queries, and this is $O(np)$ in expectation.

We can now easily sketch our algorithm for computing the edge connectivity of the input graph~$G$ using the above primitives:
\begin{enumerate}
\item
Compute the minimum degree $\delta(G)$ using primitive P3.
\item
If $\delta(G) \leq \sqrt{n}$, find a sparse $\delta(G)$-edge connectivity certificate using primitive P2. 
Output the edge connectivity of the connectivity certificate.
\item
If $\delta(G) > \sqrt{n}$, do uniform star contraction with $p \in \Theta(\log(n)/\delta(G))$, resulting in a contracted multigraph $G'$ that has $\tOh(\sqrt{n})$ vertices 
with high probability.
Run the randomized algorithm from \cite{MukhopadhyayN20} (\cref{thm:MN20}) to compute $\mincut(G')$.
\end{enumerate}
Step 1.~can be implemented with $O(\sqrt{n})$ matrix-vector multiplication or quantum cut queries by~P3.  Step 2.~costs $\polylog(n)$ queries per spanning forest by P2, thus $\tOh(\delta(G)) \in \tOh(\sqrt{n})$ queries overall.
In step 3.~we have $|R| = O(\sqrt{n} \log n)$ with high probability, in which case the star contraction can be done with $O(\sqrt{n} \log n)$ queries by P1.  The algorithm of  \cite{MukhopadhyayN20} 
can compute the weight of a minimum cut in a weighted $N$-vertex graph with high probability after $\tOh(N)$ classical cut queries, thus by primitive P4 we can compute $\mincut(G')$ with $\tOh(\sqrt{n})$ queries.

\subsection{Randomized cut-query algorithm}
Finally we describe our randomized $O(n)$ cut query algorithm for edge connectivity.  It does not seem possible to achieve this result using 
uniform star contraction as we did in the quantum cut query and matrix-vector multiplication query case.  The reason is that a vertex in $V \setminus R$ could 
have up to $|R|$  many neighbors in $R$ and it is too expensive to learn all of these neighbors with cut queries.  Instead we use another variation on star contraction 
that we call \emph{sparse star contraction}.  We show that instead of choosing a uniformly random neighbor in $R$, we can instead first learn a bipartite subgraph 
between $V \setminus R$ and $R$ where each vertex in $V \setminus R$ has \emph{constant} degree.  We then do 1-out contraction by independently choosing, for each $v \in V \setminus R$,
a uniformly random neighbor in this sparse bipartite subgraph.  Our main technical contribution is to show that this process can be done while preserving a non-trivial minimum cut with constant probability.  
We call this technique sparse star contraction as the contraction is performed on a bipartite subgraph with only $O(n)$ edges.
To actually learn this sparse bipartite subgraph with $O(n)$ cut queries we use our
second main tool, which is the \emph{separating matrix} framework of Grebinski and Kucherov \cite{GK98,GK00}.  We next elaborate on sparse star contraction and the separating matrix framework in 
more detail.

\paragraph{Sparse star contraction.}
To put sparse star contraction into context we begin with a more general scenario.  We can imagine a general form of a randomized contraction algorithm 
that first learns a spanning subgraph $H = (V, E')$ of the input graph $G=(V,E)$, and then for each vertex $v \in V$ independently at random selects an edge incident to $v$ in $H$.  
Finally, the selected edges are contracted in the original graph $G$.  In our case it will further be useful to think of $H$ as a \emph{directed} graph, where we will choose a random 
\emph{outgoing} edge from each vertex.  This point of view gives us more control over which endpoints can contract an edge.
We make the following definition.
\begin{definition}[Directed subgraph, 1-out contraction]
Let $G =(V,E)$ be a simple graph.  We say that the directed graph $H = (V,A)$ is a \emph{directed subgraph} of $G$ if 
every arc $(u,v) \in A$ satisfies $\{u,v\} \in E$.  In a \emph{random 1-out sample} of $H$, we independently and uniformly at random choose an outgoing edge in $H$
from every vertex that has one.  In a \emph{random 1-out contraction} of $H$, we take a random 1-out sample of $H$ and output the graph $G'$ obtained by 
contracting the sampled edges in $G$.
\end{definition}
Note that in a directed subgraph $H$, for any edge $\{u,v\}$ of $G$, we can either have both arcs $(u,v),(v,u)$ in $H$, just one of them, or neither of them.
When we speak about doing 1-out contraction on an undirected subgraph it should be interpreted that all edges are oriented in both directions.

With this terminology, uniform star contraction corresponds to doing a random 1-out contraction on the directed subgraph $H$ which is the induced bipartite graph between $V \setminus R$ and $R$ with all edges directed from 
$V \setminus R$ to $R$.  

There are two properties that we want in a directed subgraph $H$.  The first is that after a random 1-out contraction on $H$ the contracted graph does not have too many vertices.  We can automatically 
guarantee this condition by working in the star contraction framework.
The second is that in taking a random 1-out sample of $H$ we do not have too high probability of selecting an edge of a non-trivial minimum cut.  We can precisely extract a sufficient 
condition that makes a directed subgraph $H$ ``good for contracting'' in this second sense.
\begin{definition}[$(\alpha,\beta)$-good for contracting]
\label{def:good_contraction}
Let $G =(V,E)$ be a simple graph and $C \subseteq E$.  Let $H=(V,A)$ be a directed subgraph of $G$.  For every $u \in V$ let $q_u = \Pr_{v: (u,v) \in A} [\{u,v \} \in C]$ if $|\{(u,v) \in A : v \in V\}| > 0$ and $q_u = 0$ otherwise.
We say that $H$ is $(\alpha,\beta)$-good for contracting with respect to $C$ if it satisfies the following two conditions
\begin{enumerate}
	\item \emph{max property}: $\max_u q_u \le \alpha$, and
	\item \emph{sum property}: $\sum_u q_u \le \beta$.
\end{enumerate}
An undirected subgraph of $G$ is $(\alpha,\beta)$-good for contracting if and only if its directed version where all edges are directed in both directions is.
\end{definition}
As an example, it follows from \cref{eq:half} and \cref{eq:sumis2} used in the correctness proof of 2-out contraction that $G$ itself is $(1/2,2)$-good for contracting for any non-trivial minimum cut $C$.  In \cref{cor:sample_cor} we show 
that if $H$ is $(\alpha, \beta)$-good for contracting with respect to $C$ then the probability we do not select an edge of $C$ in a random 1-out sample of $H$ is at least $(1-\alpha)^{\ceil{\beta/\alpha}}$.

In sparse star contraction, we again start out by choosing a random set $R$ by taking each vertex $v$ to be in $R$ with probability $p$, although we take $p = \Theta(\log(\delta(G))/\delta(G))$ to be slightly smaller 
than what we used before.  With constant probability 
 the number of vertices in $V \setminus R$ with no neighbor in $R$ will be $O(n/\delta(G))$, and $R$ itself will satisfy $|R| = O(n \log(\delta(G))/\delta(G))$.  Let $H$ be the induced bipartite subgraph between $V\setminus R$ and 
 $R$ with all edges directed from $V \setminus R$.  In uniform start contraction we do a random 1-out contraction on $H$.  For the randomized cut query algorithm we will learn a sparse subgraph $H'$ of $H$ that has the property that every $v \in V \setminus R$ 
 that has an outgoing edge in $H$ also has an outgoing edge in $H'$.  No matter what $H'$ we take with this property we are guaranteed that after 1-out contraction the resulting contracted graph will have $O(n \log(\delta(G))/\delta(G))$ 
 vertices.  Our main technical contribution (in particular \cref{lem:preserve_main}) shows that we can find such an $H'$ that is $(\alpha, \beta)$-good for contracting for $\alpha < 1$ and small constant $\beta$ that has \emph{constant} degree.  
 As $H'$ only has $O(n)$ edges, we can hope to learn it with $O(n)$ cut queries, and we show this can indeed be done using the separating matrix machinery, described next.

\paragraph{Separating matrices and Recover-$k$-From-All.}
The second key tool of our algorithm is the \emph{separating matrix} machinery.
This toolset is best described by first considering an immediate obstacle to our $O(n)$ cut query algorithm for edge connectivity: an $O(n)$ bound is not even known for the simpler problem of determining if a graph is connected.
Harvey gave a deterministic $O(n \log n)$ cut query algorithm for connectivity \cite[Theorem 5.10]{HarveyThesis}, and we are not aware of any better result in the 
randomized case.  Besides the fact that connectivity is a special case of edge connectivity, our algorithmic framework will also heavily rely on being able to efficiently find spanning forests to construct sparse $r$-edge connectivity certificates.  

Harvey's connectivity algorithm, which can also find a spanning forest, is an implementation of Prim's spanning forest algorithm in the cut 
query model.  This algorithm can equally well be implemented with a weaker oracle that simply reports whether or not there exists an edge between two disjoint sets $S$ and $T$ (this is known in the literature as a bipartite independent set oracle, see e.g.\ \cite{BeameHRRS20}). Interestingly, Harvey's algorithm is actually optimal if restricted to this type of queries.
Indeed, any deterministic algorithm that determines connectivity 
while making use of an oracle that returns 1 bit of information must make $\Omega(n \log n)$ queries.  This follows from the aforementioned deterministic $\Omega(n \log n)$ 2-party communication complexity lower bound for 
deciding if a graph is connected \cite{HajnalMT88}.\footnote{We do not know of any such superlinear lower bound for randomized algorithms making bipartite indpendent set queries or the randomized communication complexity of connectivity. The best known bounded-error randomized communication complexity lower bound of $\Omega(n)$ follows from a reduction from set-disjointness \cite[Corollary 7.4]{BabaiFS86} or the inner-product mod 2 function \cite[Theorem 1]{IvanyosKLSW12} on $\Theta(n)$ bit inputs.}

The key now to both our zero-error $O(n)$ cut query algorithm for finding a spanning forest and our randomized $O(n)$ cut query algorithm for edge connectivity is to make use of the fact that a cut query actually returns 
$\Omega( \log n)$ bits of information.  This power was first harnessed by Grebinski and Kucherov \cite{GK98,GK00} who studied a related, but more powerful, query known as an additive query.  
In this model, when the input is a simple $n$-vertex graph with adjacency matrix $A$ one can query two Boolean vectors $x,y \in \{0,1\}^n$ and receive the answer $x^T A y$.   Grebinski and Kucherov 
\cite{GK00} showed the surprising fact that one can learn an $n$-vertex simple graph with only $O(n^2/\log n)$ additive queries, achieving the information theoretic lower bound.   
The main tool in the proof of Grebinski and Kucherov is the use of \emph{separating matrices}: the existence of an $O(n/\log n)$-by-$n$ matrix $B$ such that $Bx \ne By$ for any two distinct $n$-dimensional Boolean 
vectors $x$ and $y$.  In \cref{lem:learn} we use the separating matrix framework of Grebinski and Kucherov to show that if $S,T$ are disjoint subsets of $V$ that are polynomially related in 
size and 
$d_T(v) \le \ell$ for every $v \in S$, then we can learn all edges between $S$ and $T$ with only $O(\ell |S|)$ cut queries.  

This fact is the heart of the subroutine Recover-$k$-From-All (\cref{alg:recover}) which plays a key role in both the spanning forest and edge connectivity algorithms.  The input to this algorithm is two disjoint subsets 
$S,T \subseteq V$ 
that are polynomially related in size with the promise that $d_T(v) \ge \ell$ for every $v \in S$.  Recover-$k$-From-All is a zero-error randomized algorithm that can then learn $\min\{k,\ell\}$ neighbors in $T$ for 
every vertex $v \in S$ and makes $O(k |S|)$ cut queries in expectation.  Recover-$k$-From-All is based on ideas from $\ell_0$ sampling (e.g.\ Theorem 2 of \cite{JowhariST11}), which is similarly used in the 
connectivity algorithm in the semi-streaming model by Ahn, Guha, and MacGregor \cite{AGM12}.
First we put vertices in $S$ into $O(\log n)$ buckets by putting together those vertices with similar values of $d_T(v)$.  For the bucket $B$ with degree around $r$ into $T$, 
we randomly subsample a set $T' \subseteq T$ by putting each vertex of $T$ into $T'$ with probability $2k/r$.  We call a vertex in the bucket ``caught'' if $d_{T'}(v)$ is close to 
its expectation (e.g., it is in $[k,8k]$).  Letting $B' \subseteq B$ be the set of caught vertices, we can then learn $E(B',T')$ with $O(k|B'|)$ cut queries.\footnote{We assume that $B',T'$ are polynomially related in size 
for this high level description.  Handling smaller $B'$ is a technicality postponed to the full proof.}
This is repeated on all buckets until all vertices have been caught.  As we expect to catch a constant fraction of the remaining vertices in a bucket with each iteration, and the complexity of an iteration scales with the number of remaining vertices, one can argue that the expected 
number of queries overall is $O(k n)$.  

\paragraph{Edge connectivity.}
Now that we described the main tools, we can describe the main algorithm.
To make the exposition simpler, we begin with explaining how star contraction and Recover-$k$-From-All can be used to give a randomized $O(n \log \log n)$ cut query algorithm for edge connectivity.  This algorithm is given in \cref{sec:classical almost linear}.  We then describe the additional trick needed to reduce the query complexity to $O(n)$, which is given in \cref{sec:main_algo}.
The basic algorithm essentially follows the same steps as used in the quantum cut query case. 
\begin{enumerate}
	\item Compute the minimum degree $\delta(G)$.
	\item Perform sparse star contraction:
	 \begin{enumerate}
	 \item Choose a set $R$ by taking each $v \in V$ to be in $R$ with probability $\Theta(\log(\delta(G))/\delta(G))$.  Let $H$ be the directed subgraph obtained by picking every edge between $V \setminus R$ and $R$ and directing it from $V \setminus R$ to $R$.
	 \item Use Recover-$k$-From-All with constant $k$ on $V \setminus R$ and $R$ to learn a sparse subgraph $H'$ of $H$.
	 \item Do a random 1-out contraction on $H'$ and let $G'$ be the resulting graph.
	 \end{enumerate}
	\item Compute the edge connectivity of $G'$, and output the minimum of this and $\delta(G)$.
\end{enumerate}
As in the proof of \cref{thm:intro:simple star contraction}, we can again argue that with constant probability (i) $H$ is $(\alpha,\beta)$-good for contracting for some $\alpha < 1$ and constant $\beta$ (more specifically, for $\alpha = 3/5$ and $\beta = 8$) with respect to a non-trivial minimum cut, and (ii) that only $O(n/\delta(G))$ vertices in $V \setminus R$ have no neighbor in $R$.
Now, however, it is too expensive to learn the entire subgraph $H$ as in the quantum cut query algorithm, or even to independently sample a uniformly random neighbor in $R$ for each $v \in V \setminus R$ within the $O(n)$ query budget.
Instead, we use Recover-$k$-From-All with constant $k$ to learn a sparse subgraph $H'$ of 
$H$, where $H'$ has an outgoing edge for every $v \in V \setminus R$ that has one in $H$.  Moreover, the outgoing neighbors of $v$ in $H'$ are learned from a random set of vertices, conditioned on this 
set having at least one and not too many neighbors of $v$.  This can be done with $O(n)$ queries.  We then do a random 1-out contraction on the 
explicitly known graph $H'$.  Letting $G'$ be the result of this contraction, we finally compute the edge connectivity of $G'$ and output the minimum of this and $\delta(G)$.  

We postpone describing how we compute the edge connectivity of $G'$ and instead focus on showing that $H'$ is still $(\alpha',\beta')$-good for contracting for some 
$\alpha' < 1$ and constant $\beta'$.  As we use a
constant $k$ in Recover-$k$-From-All, we only expect to find a constant number of neighbors of a particular vertex $v$.  We have to show that, even in this very small sample, not 
too high a fraction of neighbors are on the opposite side of a non-trivial cut from $v$, \emph{for all} vertices $v$ incident on the cut.  In this low probability sampling regime, a Chernoff bound 
can only upper bound the failure probability for a particular vertex by a constant, which is not good enough as we have to union bound over the possibly growing number of vertices incident 
on the cut.

Instead, in \cref{lem:preserve_main}, we show the following statement.  Let $v \in V \setminus R$ and $C$ be a non-trivial minimum cut of $G$.  
Let $R' \subseteq R$ be chosen by putting each vertex of $R$ into $R'$ with 
probability $p = 2k/d_R(v)$ conditioned on $d_{R'}(v) > 0$.  We have already mentioned the fact that $\bbE_{R'}[c_{R'}(v)/d_{R'}(v) \mid d_{R'}(v) > 0 ] = c_R(v)/d_R(v)$.  We show that as long as $k \ge 10$
\[
\Pr_{R'}\left[\frac{c_{R'}(v)}{d_{R'}(v)} \ge \frac{c_R(v)}{d_R(v)} + \frac{1}{10} \mid d_{R'}(v) > 0 \right] \le \frac{200}{k} \frac{c_R(v)}{d_R(v)} \enspace .
\]
We know that $\sum_{v: c_R(v) > 0} c_R(v)/d_R(v) \le 8$ since $H$ is $(3/5,8)$-good for contracting.
Hence, by relating the failure probability to a sum that is bounded, and taking $k$ to be a large constant, we can again use a union bound to argue that $H'$ satisfies the max property with 
$\alpha' = 7/10$ with constant probability.

After a random 1-out contraction on $H'$ the contracted graph $G'$ has $O(n \frac{\log(\delta(G))}{\delta(G)})$ vertices.  Ideally, however, we would like it to only have 
$O(n/\delta(G))$ vertices.  As we describe next, we can then compute $\mincut(G')$ with $O(n)$ queries by finding a sparse $\delta(G)$-edge connectivity certificate.  
To further reduce the size of the contracted graph, we use Recover-$k$-From-All with $k = \Theta(\log(\delta(G)))$ to learn a directed subgraph $H_2$ of $G[R]$ where 
all but $O(n/\delta(G))$ vertices have outdegree $h=\Omega(\log(\delta(G))$.  This requires $O(\log(\delta(G)) |R|) = O(n)$ cut queries.
As was done with $H'$ we can similarly argue that $H_2$ is $(\alpha,\beta)$-good for contracting 
for yet other $\alpha < 1$ and constant $\beta$.  We then do a 2-out contraction on $H_2$.  We can use a lemma of \cite[Lemma 2.5]{GNT20} to conclude that 2-out contraction on $H_2$ reduces 
the number of vertices in $G[R]$ by a factor of $h$ and thus it becomes $O(n/\delta(G))$.  As all but $O(n/\delta(G))$ vertices in $V \setminus R$ are connected to a vertex in $R$ this reduces 
the number of vertices in $G'$ overall to $O(n/\delta(G))$.  

\paragraph{Spanning forests and sparse edge connectivity certificates.}
In order to accomplish step~3.~of the algorithm we show that we can construct a sparse $r$-edge connectivity certificate in a contracted graph with $q$ vertices using 
$O(n + r q \log(n)/\log(q))$ cut queries.  This lets us construct a sparse $\delta(G)$-edge connectivity certificate of $G'$ with $O(n)$ queries when $G'$ has $O(n/\delta(G))$ vertices.
In the final part of this section we give an overview of the key ideas that go into this algorithm and the obvious prerequisite of constructing a spanning forest with $O(n)$ cut queries.

Our spanning forest algorithm follows the framework of Bor\r{u}vka's spanning forest algorithm as has been used in several works related to matrix-vector multiplication queries
\cite{AGM12,LeeSZ21,AuzaL21}.  The application here requires several additional tricks to stay within the $O(n)$ query budget.

The algorithm proceeds in rounds and maintains the 
invariant that in each round there is a paritition $S_1, \ldots, S_t$ of $V$ and a spanning tree for 
each $S_i$ in the partition.  Initially, each $S_i$ is just a single vertex. In each round, it performs the following two steps:
\begin{enumerate}
    \item For each $S_i$, it finds a vertex that has at least one neighbor outside $S_i$. We call such vertices \textit{active} vertices. Whether or not a vertex is active can be determined with a constant number of cut queries by computing $|E(v,\bar{S}_i)|$ 
    for $v \in S_i$. We go over each $S_i$ looking for an active vertex; once we find an active vertex in $S_i$ we move on to $S_{i+1}$. The vertices that are discovered to be \textit{inactive} are ignored for all future rounds of the algorithm.
    \item Next we randomly bipartition the set of connected components and use Recover-$k$-From-All with constant $k$ to learn, for each active vertices on one side, a neighbor on the other side.  
    As in the case of Bor\r{u}vka's algorithm, we then combine the components which are connected by edges we discovered and reduce the number of components by a constant fraction.
\end{enumerate}
Note that, across all iterations of step 1, we make at most $n$ many useless queries (i.e., queries where we find a vertex to be inactive). So we only need to account for the query complexity of step 2. Here we crucially use the fact that we can reduce the number of connected components by a constant factor to show that total number of cut queries required over all invocations of 
Recover-$k$-From-All is bounded by $O(n)$.

The next task is to extend the spanning forest algorithm to also construct sparse edge connectivity certificates.  For our application we will want to construct a sparse edge connectivity certificate of the contracted graph $G'$, which is an integer weighted graph with $q \in o(n)$ vertices.  The most natural idea would be to extend our spanning forest algorithm to construct a spanning forest of such a graph while making only $O(q)$ cut queries.  We could then directly construct a sparse $r$-edge connectivity certificate with $O(rq)$ cut queries by iteratively finding one spanning forest at a time.  Unfortunately, 
we do not know how to find a spanning forest in a contracted graph more efficiently.  The reason is that if the adjacency matrix of the graph has entries with magnitude $M$ this introduces an extra $\log(M+1)$ factor into the separating matrix bounds, which we 
cannot afford.  

Instead we revisit the spanning forest algorithm and further parallelize Bor\r{u}vka's algorithm by \emph{simultaneously} building the different forests of the sparse edge connectivity certificate.\footnote{
Nagamochi-Ibaraki \cite{NI92} also construct all spanning forests of the sparse edge connectivity certificate in parallel.  They iterate over each edge of the graph and place it in the first spanning forest in which it does not create a cycle.  We cannot afford to iterate over all edges and instead modify Bor\r{u}vka's algorithm to build the spanning forests in parallel.}  We also crucially 
make use of the fact that $G'$ is not an arbitrary integer weighted graph, but the contraction of a simple graph $G$, and that for our application we can afford an extra additive $O(n)$ term.  
The fact that $G'$ is a contraction of a simple graph allows us to keep using the separating matrix machinery on \emph{Boolean} matrices by working on appropriate submatrices of the adjacency matrix of $G$, and the extra $O(n)$ term is used to identify these submatrices.

Let $F_1, \cdots, F_r$ be the spanning forests that we want to compute, recalling that $F_i$ is spanning forest in the graph $G \setminus (\bigcup_{j < i} F_i)$. Initially, each $F_i$ is empty. As before, we use steps 1.~and 2.~to find edges to extend these spanning tree. However, the crucial difference is the following: We find these edges with respect to the connected components of the last tree $F_r$, and we add each of these edges into the spanning forest $F_i$ for the least value of $i$ where it does not create a cycle. It is not hard to see that the set of connected components $\{S^{(i)}_1, \cdots, S^{(i)}_{t_i}\}$ of $F_i$ for different $i \in [r]$ form a laminar family: $\{S^{(i+1)}_1, \cdots, S^{(i+1)}_{t_{i+1}}\}$ is a refinement of $\{S^{(i)}_1, \cdots, S^{(i)}_{t_i}\}$. However, we cannot expect that the number of connected components of $F_r$ will decrease by a constant factor in each round as before. We can however show that it happens within $O(r)$ rounds. This, together with a similar accounting of cut queries as before, leads to the following theorem.
\begin{theorem}[Informal version, see \cref{thm:intro-certificate}] 
\label{thm:intro-certificate1}
Let $G = (V,E)$ be an $n$-vertex simple graph, and let $G' = (V',E')$ be a contraction of $G$ with $q$ supervertices, for $q$ sufficiently large. There is a zero-error randomized algorithm that makes $O(n + r q \log(n)/\log(q))$ cut queries in expectation and outputs a sparse $r$-edge connectivity certificate for $G'$.
\end{theorem}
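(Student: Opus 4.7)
The plan is to generalize the spanning-forest algorithm used for \cref{thm:spanning} so that it constructs all $r$ forests $F_1,\dots,F_r$ of the Nagamochi--Ibaraki certificate simultaneously, rather than sequentially. Each newly discovered edge of $G'$ is routed to the smallest-index forest in which it does not close a cycle; consequently the vertex partitions $\Pi_1 \succeq \Pi_2 \succeq \cdots \succeq \Pi_r$ induced by the connected components of $F_1,\dots,F_r$ form a refinement chain, which is exactly what the sequential Nagamochi--Ibaraki procedure would produce from the same edge stream. I never materialize the integer-weighted $G'$ directly; because $G'$ is a contraction of the simple graph $G$, every needed cross-component query is translated into a cut query in $G$ between unions of supervertices, so the separating-matrix machinery of \cref{lem:learn} continues to operate on a Boolean matrix and no $\log M$ overhead intrudes.

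Each round proceeds in two stages, analogous to Bor\r{u}vka but driven by the finest partition $\Pi_r$. First, for every component in $\Pi_r$ I scan its supervertices until I find an \emph{active} one, meaning a supervertex with a neighbor in a different $\Pi_r$-component; each activity test costs $O(1)$ cut queries in $G$, and a supervertex discovered to be inactive is permanently removed from consideration, so across the whole execution the activity tests charge a total of $O(n)$ cut queries, contributing the $O(n)$ additive term. Second, I randomly bipartition the surviving components of $\Pi_r$ and invoke Recover-$k$-From-All with $k=O(1)$ between the active supervertices on one side and all supervertices on the other. The call returns $O(1)$ cross-candidate edges per active supervertex, and by the separating-matrix bound of \cref{lem:learn} it costs $O(q\log(n)/\log(q))$ queries when the effective address space has $q$ supervertices sitting inside a universe of $n$ original vertices---this is where the $\log(n)/\log(q)$ factor enters. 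Each returned edge is then routed into the smallest-index $F_i$ that accepts it.

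The structural claim I need is that the total number of rounds is $O(r \log q)$. A discovered edge between two components of $\Pi_r$ need not merge them in $F_r$: it may instead be absorbed by some $F_i$ with $i<r$ because it closes a cycle there, in which case $|\Pi_r|$ does not shrink. However, every such absorption strictly reduces $|\Pi_i|$ for that specific $i$, and the total cumulative reduction available across $i=1,\dots,r$ is bounded by $r(q-1)$. A potential-function argument using $\Phi=\sum_{i=1}^r |\Pi_i|$ shows that within any window of $\Theta(r)$ consecutive rounds at least a constant fraction of the active components must successfully merge in $\Pi_r$ itself, delivering a constant-factor halving of $|\Pi_r|$ every $O(r)$ rounds and hence $O(r \log q)$ rounds overall. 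Multiplying this by the per-round cost $O(q\log(n)/\log(q))$ and adding the $O(n)$ activity-test contribution yields the claimed bound.

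The main obstacle I anticipate is precisely this round-count analysis: edges absorbed at lower levels contribute no direct progress to $\Pi_r$, and one must rule out scenarios where round after round is spent making progress only at lower levels while $\Pi_r$ stagnates. Making the potential argument tight, rather than losing an extra $\log q$ factor to a coarser analysis, is the hardest step. A secondary technical issue is that Recover-$k$-From-All requires the two sides to be polynomially related in size, which can fail for the smallest surviving components of $\Pi_r$; handling these with the same small-set fallback used for \cref{thm:spanning} (processing them via a few cheap direct queries) should resolve it without disturbing the overall accounting.
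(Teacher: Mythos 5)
Your proposal is correct and follows essentially the same route as the paper: parallel Bor\r{u}vka on all $r$ forests with edges routed to the least-index forest, laminarity of the partitions, activity tests amortized to $O(n)$, Boolean separating-matrix queries via unions of supervertices, and a round count controlled by exactly the potential $\sum_i |\Pi_i| \le rq$ (the paper's Claim 7.2 shows $|\Pi_r|$ halves every $O(r)$ rounds by this argument, plus Markov). The only detail worth noting is that the paper's small-set fallback is to stop the parallel phase once the active components of $F_r$ drop below $q/\log n$ and then finish each forest sequentially with the simple $O(q)$-query algorithm, costing $O(rq)$ total, which matches your intended handling.
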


\subsection{Open questions}
Our work raises some open questions that concern both upper bounds and lower bounds.
\paragraph{Lower bounds.}
The tightness of a number of our algorithms hinges on a positive answer to the following questions.
    \begin{itemize}
          \item
        Can we show a lower bound of $\Omega(n\log n)$ for the randomized two-party communication complexity of edge connectivity?
        The current best known bound in the randomized case is $\Omega(n \log\log n)$ \cite{AssadiD21},
        while the deterministic communication complexity of this problem is known to be $\Omega(n\log n)$~\cite{HajnalMT88}.
        A positive answer to this question implies an $\Omega(n)$ lower bound on the randomized cut query complexity of edge connectivity, showing that \cref{thm:intro:classical result} is tight. On the flip side, a randomized algorithm for edge connectivity making $o(n)$ cut queries would imply a negative answer to this question.  It is reasonable to think that a lower bound of of $\Omega(n \log n)$ for the randomized two-party communication complexity should hold even for the simpler problem of deciding if graph is connected, and we conjecture this to be true. Proving this would resolve the randomized communication complexity of connectivity, which has remained open since the work of Babai, Frankl, and Simon \cite{BabaiFS86}.
        \item
        Does computing the minimum degree of a simple graph indeed require $\Omega(\sqrt{n})$ quantum cut queries?
        As mentioned before, quantum minimum finding gives a simple $O(\sqrt{n})$ upper bound.
        By a reduction from minimum degree to edge connectivity (\cref{appensec:reduction}), a positive answer would imply that our $\tOh(\sqrt{n})$ quantum cut query algorithm for edge connectivity is tight (up to polylogarithmic factors).
    \end{itemize}

\paragraph{Upper bounds.}
Our algorithms could give rise to new algorithms and upper bounds in a number of ways.
    \begin{itemize}
        \item
        \textbf{Weighted graphs:} Can we find a minimum cut in a \emph{weighted graph} with $O(n)$ cut queries?
        This would not violate any currently known lower bound, and it would improve on the $O(n \, \polylog\, n)$ cut query algorithm from \cite{MukhopadhyayN20}.
        Similarly, can we find a minimum cut in a weighted graph with $o(n)$ quantum cut queries or matrix-vector queries?
        \item
        \textbf{(Approximate) edge connectivity with $\polylog(n)$ queries:} We mentioned that a key bottleneck for edge connectivity with quantum cut queries is computing the minimum degree, which might require $\Omega(\sqrt{n})$ quantum cut queries.
        In contrast, we can \emph{approximate} the minimum degree with $\polylog(n)$ quantum cut queries \cite{LeeSZ21}.
        The possibility hence remains of \emph{approximating} the edge connectivity with $\polylog(n)$ quantum cut queries.
        We also note that the minimum degree can be computed with exactly 1 matrix-vector query to the adjacency matrix.
        Hence, computing the edge connectivity with $\polylog(n)$ matrix-vector queries is also an interesting open question.
        \item
        \textbf{Zero-error and deterministic:}
        Can we make our zero-error $O(n)$ cut query algorithm for connectivity \emph{deterministic}?
        Alternatively, could we make our randomized $O(n)$ cut query algorithm for edge connectivity \emph{zero-error}?
        Both improvements would lead to tight algorithms (see \cref{table:bounds}).
    \end{itemize}

\subsection{Organization} \label{sec:organization}
 
In \cref{sec:prelim}, we define and state the necessary preliminary results that we need to state the technical details in the subsequent sections. In  \cref{sec:direct appl}, we give a formal proof of the correctness of uniform star contraction and provide three direct applications: in \cref{sec:quant_algo}, we design efficient quantum cut query and matrix-vector multiplication query algorithms (thereby proving \Cref{thm:intro:quantum result}), in \cref{sec:streaming}, we provide one-pass semi-streaming algorithms for edge connectivity in the complete and random vertex arrival model, and in \cref{sec:sequential}, we show a linear-time sequential algorithm for edge connectivity for dense graphs. Next, in \cref{sec:spanning forest}, we show a zero-error algorithm for computing a spanning forest with $O(n)$ cut queries.  In \cref{sec:classical almost linear} we give a randomized $O(n \log \log n)$ cut query algorithm 
for edge connectivity which combines many of the ideas from the previous sections.  Finally, in \cref{sec:classical edge conn}, we add one additional trick to give a randomized $O(n)$ cut query algorithm for edge connectivity, thereby proving \Cref{thm:intro:classical result}.

\section{Preliminaries}\label{sec:prelim}

\paragraph{Notation (strings, sets and matrices).} 
For a string $x \in \{0,1\}^n$ we use 
$|x|$ for the number of ones in $x$.
For a positive integer $n$ we let 
$[n]=\{1,\ldots, n\}$.  For a set $S \subseteq [n]$ we use $\bar{S}$ for the complement of $S$ and $|S|$ for its cardinality.  Given a matrix $M$ of dimension $k$-by-$\ell$, we denote the $\ith$ row of $M$ as $M(i,:)$. Note that $M(i,:)$ is a vector of dimension $\ell$. Given a set of column indices $R \subseteq [\ell]$, we define $M(i,R)$ to be the subvector of $M(i,:)$ that has entries restricted to the indices in $R$. Clearly $M(i,R)$ has dimension $|R|$. For a subset $R \subseteq [n]$, we let $\chi_R \in \{0,1\}^n$ be the characteristic vector of $R$, that is, $\chi_R(i) = 1$ if $i \in R$ and $\chi_R(i)=0$ otherwise.

\paragraph{Notation (graphs).}
Let $V$ be a finite set and $V^{(2)}$ the set of all $2$-element subsets of $V$.  We represent a weighted graph $G$ by the triple $G = (V,E,w)$, where $E \subseteq V^{(2)}$ is the set of edges and 
$w : E \rightarrow \mathbb{R}_{>0}$ assigns a positive weight to each edge.  For a subset of edges $F \subseteq E$ we let $w(F) = \sum_{f \in F} w(f)$.  
When the weight of every edge is $1$, we call the graph \emph{unweighted}.  We will 
consider 
two kinds of unweighted graphs, multigraphs and simple graphs.  In a multigraph $E$ is 
allowed to be a multiset of $V^{(2)}$ while 
in a simple graph it is simply a subset.  
In both cases we drop the weight function and write the graph as $G = (V,E)$.  
The inputs to our algorithms will always be simple graphs, and multigraphs only arise 
by considering contractions of a simple graph, described below.
We typically denote the number of vertices $|V| = n$ and the number of edges $|E| = m$.  

Let $G = (V,E,w)$ be a weighted graph.  For disjoint sets $S,T \subseteq V$ we let $E_G(S,T) = \{e \in E : |e \cap S| = |e \cap T| = 1\}$.  As shorthand, we use $\cut_G(S) = E_G(S,\bar{S})$ for the \emph{cut} defined by $S$, the set of 
edges with exactly one endpoint in $S$.  We will drop the subscript when the graph is clear from context.  We let $\mincut(G)= \min_{\emptyset \ne S \subsetneq V} w(\cut_G(S))$ be the weight of a minimum cut of $G$.  
When $G$ is unweighted we call $\mincut(G)$ the \emph{edge connectivity} of $G$.
The degree of a vertex $v$ is denoted by $d(v) = |\cut(\{v\})|$ and the number of edges from $v$ into a subset $R$ is $d_R(v) = |E(\{v\}, R \setminus \{v\})|$.
Finally, the minimum degree of a graph is denoted by $\delta(G) = \min_{v \in V} d(v)$.

We will also consider directed graphs $H = (V,A)$ where $A \subseteq \{(u,v) : u, v \in V, u \ne v\}$ is set of directed edges or arcs.  Directed graphs in this work will always arise by taking a (subgraph of a) simple 
graph and orienting the edges, possibly in both directions.  We will use analogous notations with arrows on top of them for directed graphs.  For disjoint sets $S,T$ we let $\overrightarrow{E}_H(S,T) = \{(u,v) \in A : u \in S, v \in T\}$ 
for the set of arcs directed from $S$ to $T$
and use the shorthand $\cutd_H(S) = \overrightarrow{E}_H(S,\bar{S})$ for the set of arcs leaving $S$.  Again we drop the subscript when the graph is clear from context.

Finally, in our algorithms we will look at \emph{contractions} of simple graphs.  For a simple graph $G = (V,E)$ and a subset of edges $F \subseteq E$, the multigraph $G' = (V',E')$ formed by \emph{contracting} the edges $F$ in $G$ 
is defined as follows.  $V'$ is the set of connected components of $F$.  We will sometimes refer to the vertices of $G'$ as \emph{supervertices} as they are sets of 
vertices of $G$.  For $S,T \in V'$ the number of edges between $S,T$ in $G'$ is $|E_G(S,T)|$.  We will sometimes instead view a contraction of a simple graph as an integer weighted graph, as a cut query cannot distinguish between these representations.

\paragraph{Useful bounds.}
Next, we define some useful algebraic bounds which we use in the analysis of our algorithms.

\begin{claim}
\label{clm:concave}
Let $b_1, \ldots, b_t$ satisfy $\sum_{i=1}^t b_i = n$ and $b_i \ge 1$ for all $i=1, \ldots, t$.  Then 
\[
\sum_{i=1}^t \frac{b_i}{\log(2b_i)} \le \frac{3n}{\log(2n/t)} \enspace .
\]
\end{claim}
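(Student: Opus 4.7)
The plan is to set $\mu := n/t$ so that the target inequality reads $\sum_{i} f(b_i) \le 3 t\, f(\mu)$, where $f(x) := x/\log(2x)$, and then perform a case analysis on $\mu$. The trouble is that $f$ is not globally concave on $[1,\infty)$ (a direct application of Jensen fails), which is exactly why the bound carries the constant $3$ rather than $1$; the strategy is to recover two separate factors of $n/\log(2\mu)$, one from ``large'' coordinates and one from ``small'' coordinates.

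First I would dispose of the easy regime $\mu < 2$. Here $\log(2\mu) < \log 4 = 2\log 2 < 3 \log 2$, so $3n/\log(2\mu) > n/\log 2$. Combined with the universal bound $f(b_i) = b_i/\log(2b_i) \le b_i/\log 2$ valid for every $b_i \ge 1$, summing gives $\sum_i f(b_i) \le n/\log 2 < 3n/\log(2\mu)$, with slack to spare.

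For the main regime $\mu \ge 2$, I would split $[t] = L \sqcup S$ with $L = \{i : b_i \ge \mu\}$ and $S = \{i : b_i < \mu\}$. On $L$, monotonicity of $\log(2x)$ yields $f(b_i) \le b_i/\log(2\mu)$, so $\sum_{L} f(b_i) \le \bigl(\sum_{L} b_i\bigr)/\log(2\mu) \le n/\log(2\mu)$. On $S$, the key observation is that on the interval $[1,\mu]$ the function $f$ attains its maximum at an endpoint (a short calculus check shows $f'(x) = 0$ only at $x = e/2 \approx 1.36$, which is a minimum). Since $f(1) = 1/\log 2 = 2/\log 4 = f(2) \le f(\mu)$ for $\mu \ge 2$ (because beyond $x = e/2$ the function $f$ is increasing), we obtain $f(b_i) \le f(\mu) = \mu/\log(2\mu)$ for every $i \in S$. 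Summing over the at most $t$ indices in $S$ gives $\sum_{S} f(b_i) \le t \cdot \mu/\log(2\mu) = n/\log(2\mu)$.

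Adding the two bounds yields $\sum_i f(b_i) \le 2n/\log(2\mu) \le 3n/\log(2n/t)$, as claimed. The only non-trivial step is the endpoint-maximum observation for $f$ on $[1,\mu]$, which I expect to be the main (but still routine) obstacle; everything else is bookkeeping once $\mu = n/t$ is introduced.
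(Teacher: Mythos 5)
Your proof is correct, but it takes a genuinely different route from the paper's. The paper's proof is a two-line Jensen argument: it compares $x/\log(2x)$ pointwise with $3x/\log(\euler^2 x)$ (using $\euler^2\le 8$ so that $\log(\euler^2 x)\le 3\log(2x)$ for $x\ge1$), observes that $x/\log(\euler^2 x)$ \emph{is} concave on $[1,\infty)$, and applies Jensen to that surrogate; the factor $3$ is exactly the cost of passing to the concave majorant. You instead keep the original non-concave $f(x)=x/\log(2x)$ and exploit its unimodal shape directly: after the easy regime $\mu=n/t<2$, you split the coordinates at $\mu$, use monotonicity of $\log(2x)$ on the large coordinates and the endpoint-maximum property of $f$ on $[1,\mu]$ (decreasing on $[1,\euler/2]$, increasing thereafter, with $f(1)=f(2)\le f(\mu)$) on the small ones, each piece contributing $n/\log(2\mu)$. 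All the steps check out — in particular the critical-point computation and the identity $f(1)=f(2)$ are correct in any log base — and your argument in fact yields the claim with constant $2$ in place of $3$, at the price of a longer, more hands-on case analysis where the paper buys brevity with a lossier concave surrogate.
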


\begin{proof}
Let $\euler =2.718\ldots$ be Euler's constant.
On the interval $[1, \infty)$ the function $x/\log(\euler^2 x)$ is concave and satisfies 
$x/\log(2x) \le 3 x/\log(\euler^2 x)$.  Therefore by Jensen's inequality
\begin{align*}
\sum_{i=1}^t \frac{b_i}{\log(2b_i)} &\le 3 \sum_{i=1}^t \frac{b_i}{\log(\euler^2 b_i)} \\
&\le \frac{3n}{\log(\euler^2 n/t)} \enspace. \qedhere
\end{align*}
\end{proof}

\begin{claim}
\label{clm:concave2}
Let $b_1, \ldots, b_k$ satisfy $\sum_i b_i = n$.  Then $\sum_{i=1}^k \log(b_i) \le k\log(n/k)$.
\end{claim}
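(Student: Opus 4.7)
The plan is to invoke concavity of the logarithm and apply Jensen's inequality directly. The function $x \mapsto \log(x)$ is concave on $(0,\infty)$, so for any positive $b_1, \ldots, b_k$ the inequality
\[
\frac{1}{k}\sum_{i=1}^k \log(b_i) \le \log\!\left(\frac{1}{k}\sum_{i=1}^k b_i\right)
\]
holds. Substituting the constraint $\sum_i b_i = n$ gives $\frac{1}{k}\sum_i \log(b_i) \le \log(n/k)$, and multiplying through by $k$ yields the claim.

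The only thing to worry about is the domain: $\log(b_i)$ is only defined (in the usual sense) when $b_i > 0$. I expect this is the implicit regime in which the claim is applied (just as in the preceding \cref{clm:concave}, where the hypothesis $b_i \ge 1$ is made explicit). If one wanted to cover the case $b_i = 0$, the inequality still holds trivially under the convention $\log(0) = -\infty$, since the left-hand side becomes $-\infty$ while the right-hand side is finite. So there is no real obstacle here; the entire argument is a one-line invocation of Jensen with a brief note on positivity.
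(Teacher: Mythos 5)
Your proof is correct and is exactly the paper's argument: apply Jensen's inequality to the concave function $\log$, substitute $\sum_i b_i = n$, and multiply by $k$. The remark about positivity of the $b_i$ is a reasonable (and harmless) addition that the paper leaves implicit.
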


\begin{proof}
By concavity of $\log(x)$ we have 
\[
\frac{1}{k} \sum_{i=1}^k \log b_i \le \log \left( \sum_{i=1}^k b_i/k \right) = \log(n/k) \enspace,
\]
which gives the claim.
\end{proof}

We will frequently make use of the following versions of the Chernoff bound.
\begin{lemma}
\label{lem:chernoff}
Let $X_1, \ldots, X_n$ be independent random variables taking values in $\{0,1\}$.  Let $X = \sum_{i=1}^n X_i$ and $\mu = \bbE[X]$.  Then
\begin{align}
\Pr[X \le (1-\delta) \mu] &\le \exp(-\delta^2 \mu/2) \text{ for any } 0 \le \delta \le 1
\label{eq:chernoff_below} \\
\Pr[X \ge (1+\delta) \mu] &\le \exp(-\delta^2 \mu/(2 + \delta)) \text{ for any } 0 \le \delta
\label{eq:chernoff_above}
\end{align}
In particular, if $\mu \ge 10$ then $\Pr[ \mu/2 \le X \le 2\mu] \ge 1/2$.
\end{lemma}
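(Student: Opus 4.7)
The plan is to follow the standard Chernoff-bound derivation via the moment generating function, then specialize to obtain the ``in particular'' claim. First I would introduce, for a parameter $t > 0$, the quantity $\bbE[e^{tX}]$ and apply Markov's inequality: $\Pr[X \ge (1+\delta)\mu] = \Pr[e^{tX} \ge e^{t(1+\delta)\mu}] \le e^{-t(1+\delta)\mu}\bbE[e^{tX}]$. Independence lets me factor $\bbE[e^{tX}] = \prod_i \bbE[e^{tX_i}]$, and writing $p_i = \Pr[X_i=1]$ gives $\bbE[e^{tX_i}] = 1 + p_i(e^t - 1) \le \exp(p_i(e^t - 1))$, so $\bbE[e^{tX}] \le \exp(\mu(e^t - 1))$. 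For the lower-tail bound I would use the symmetric argument with $t < 0$.

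Next I would choose $t = \ln(1+\delta)$ in the upper-tail calculation to obtain the classical inequality $\Pr[X \ge (1+\delta)\mu] \le \bigl(e^\delta/(1+\delta)^{1+\delta}\bigr)^\mu$, and analogously $\Pr[X \le (1-\delta)\mu] \le \bigl(e^{-\delta}/(1-\delta)^{1-\delta}\bigr)^\mu$ for $0 \le \delta \le 1$. The remaining work is purely analytic: I need to show $(1-\delta)\ln(1-\delta) + \delta \ge \delta^2/2$ on $[0,1]$, which yields \cref{eq:chernoff_below}, and $(1+\delta)\ln(1+\delta) - \delta \ge \delta^2/(2+\delta)$ on $[0,\infty)$, which yields \cref{eq:chernoff_above}. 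Both inequalities follow by comparing Taylor series or by checking that the difference of the two sides vanishes at $\delta = 0$ with nonnegative derivative throughout; this step is routine and is where the slightly loose constants come from.

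Finally, for the ``in particular'' statement I would apply \cref{eq:chernoff_below} with $\delta = 1/2$ to get $\Pr[X < \mu/2] \le \exp(-\mu/8)$, and \cref{eq:chernoff_above} with $\delta = 1$ to get $\Pr[X > 2\mu] \le \exp(-\mu/3)$. Summing and substituting $\mu \ge 10$ gives a total deviation probability of at most $\exp(-10/8) + \exp(-10/3) < 1/2$, so the complementary event has probability at least $1/2$ as claimed. The only mildly delicate step in the whole proof is verifying the two elementary inequalities on $\ln(1\pm\delta)$ that convert the sharp MGF bound into the cleaner exponential form stated in the lemma; everything else is a mechanical application of Markov and independence.
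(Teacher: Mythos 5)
Your proof is correct and is the standard moment-generating-function derivation of the Chernoff bounds; the paper itself states \cref{lem:chernoff} as a known fact without proof, so there is nothing to diverge from. All the steps check out: the two elementary inequalities $(1-\delta)\ln(1-\delta)+\delta \ge \delta^2/2$ on $[0,1]$ and $(1+\delta)\ln(1+\delta)-\delta \ge \delta^2/(2+\delta)$ on $[0,\infty)$ are standard (the latter is implied by the sharper Bernstein-form bound with denominator $2+2\delta/3$), and the numerical estimate $e^{-10/8}+e^{-10/3} < 1/2$ for the ``in particular'' clause is valid.
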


\subsection{Edge connectivity certificates}
Here we mention a few known results regarding so-called connectivity certificates (or, in our case, edge connectivity certificate), which we define next. 

\begin{definition}[sparse edge connectivity certificate]
Let $G = (V,E)$ be an $n$-vertex unweighted multigraph.  A sparse $r$-edge connectivity certificate for $G$ is a subgraph $\hat H = (V,F)$ with at most $rn$ edges 
and with the property that $|\cut_{\hat H}(S)| = \min\{r, |\cut_G(S)|\}$ for any $S \subseteq V$.
\end{definition}

A famous result of Nagamochi and Ibaraki \cite{NI92} gives a recipe for computing a sparse edge connectivity certificate efficiently by packing edge disjoint spanning forests.
\begin{theorem}[\cite{NI92}]
\label{lem:NI}
Let $G = (V,E)$ be an unweighted multigraph with $m$ edges and for $i \in [r]$ let $F_i$ be a spanning forest of $(V, E \setminus \bigcup_{j=1}^{i-1} F_j)$.  Then 
$\hat H = (V, \bigcup_{j=1}^r F_j)$ is a sparse $r$-edge connectivity certificate for $G$. In addition, this sparse certificate can be constructed by a deterministic algorithm in time $O(m)$.
\end{theorem}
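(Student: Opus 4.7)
The plan is to handle the structural certificate property and the $O(m)$ running-time claim separately. The edge count is immediate: each $F_i$ is a forest on $n$ vertices, hence has at most $n-1$ edges, so $|E(\hat H)| \le r(n-1) < rn$.

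For the cut-preservation property I would reduce to showing $|\cut_{\hat H}(S)| \ge \min\{r, |\cut_G(S)|\}$ for every $\emptyset \ne S \subsetneq V$, since $\hat H \subseteq G$ gives the matching upper bound $|\cut_{\hat H}(S)| \le |\cut_G(S)|$ automatically. Writing $G_i := (V, E \setminus \bigcup_{j<i} F_j)$ so that $F_i$ is a spanning forest of $G_i$, the observation that drives the entire argument is
\[
|\cut_{G_i}(S)| \ge 1 \quad\Longrightarrow\quad |F_i \cap \cut(S)| \ge 1.
\]
Indeed, any edge $\{u,v\} \in \cut_{G_i}(S)$ forces $u$ and $v$ into the same connected component of $G_i$, hence of the spanning forest $F_i$, so the unique $u$--$v$ path in $F_i$ must cross $(S,\bar S)$ at least once. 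Iterating, the sequence $|\cut_{G_1}(S)|, |\cut_{G_2}(S)|, \ldots$ strictly decreases as long as it is positive. Writing $k := |\cut_G(S)|$: if $k \le r$ the cut is exhausted within $r$ rounds, placing all $k$ cut edges into $\hat H$, while if $k > r$ then each of the first $r$ rounds contributes at least one distinct cut edge to $\hat H$. In either case $|\cut_{\hat H}(S)| \ge \min\{r, k\}$.

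The running time is where the main work lies: constructing the forests one at a time with any standard spanning-forest algorithm costs $\Theta(rm)$, which is too slow. The plan here is to invoke the maximum-adjacency (scan-first search) procedure of \cite{NI92}, which orders the vertices so that each newly visited vertex has the maximum number of edges to the already-visited set (ties broken arbitrarily) and simultaneously classifies each edge into the least index $i$ for which it does not close a cycle in the partial $F_i$. Implemented with bucket priority queues, which is possible because the relevant priorities are bounded integers, this runs in $O(m)$ time. The main obstacle is not the structural argument but verifying that this simultaneous construction really does yield spanning forests of the peeled graphs $G_i$ in the required sense; for the present paper we use \cite{NI92} as a black box.
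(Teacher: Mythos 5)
Your argument for the combinatorial part is correct, and it is worth noting that the paper itself offers no proof of this statement at all -- it is quoted as a black box from \cite{NI92} -- so your write-up actually supplies something the paper omits. The key step (an edge of $\cut_{G_i}(S)$ puts its endpoints in one component of $G_i$, hence in one tree of $F_i$, whose connecting path must cross the cut) is exactly the standard Nagamochi--Ibaraki argument, and the iteration ``the cut value drops by at least one per round while positive, and the forests are edge-disjoint'' correctly yields $|\cut_{\hat H}(S)| \ge \min\{r, |\cut_G(S)|\}$. Deferring the $O(m)$ running time to scan-first search in \cite{NI92} is consistent with how the paper uses the result. One small caveat: you claim the matching upper bound is ``automatic'' from $\hat H \subseteq G$, but that only gives $|\cut_{\hat H}(S)| \le |\cut_G(S)|$, not $|\cut_{\hat H}(S)| \le r$; when $|\cut_G(S)| > r$ the certificate can cross the cut strictly more than $r$ times (e.g.\ a star spanning tree of $K_4$ with $r=1$ and $S$ the center vertex), so the literal equality $|\cut_{\hat H}(S)| = \min\{r, |\cut_G(S)|\}$ in the paper's \cref{lem:NI}-adjacent definition fails. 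This is a defect of the paper's definition rather than of your proof: the property actually needed and used downstream is the one you prove, namely that $|\cut_{\hat H}(S)| \ge \min\{r,|\cut_G(S)|\}$ and in particular $\cut_{\hat H}(S) = \cut_G(S)$ whenever $|\cut_G(S)| \le r$.
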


\subsection{Cut query primitives}\label{sec:cut query prelims}

In this section, we state a few primitives in the cut query model that we use in various places of our algorithms. Most of these results appear in similar guise in \cite{RubinsteinSW18, MukhopadhyayN20}. We mention them here for completeness.

\begin{proposition}[Claim 5.1 in \cite{MukhopadhyayN20}]
\label{prop:3}
Let $G = (V,E,w)$ be a weighted graph and 
$S,T \subseteq V$ be disjoint sets.  The quantity $w(E(S,T))$ can be computed with 3 cut queries.
\end{proposition}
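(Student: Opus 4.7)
The plan is to express $w(E(S,T))$ as a short linear combination of three cut weights, each of which is obtainable via a single cut query. The natural candidates are $w(\cut(S))$, $w(\cut(T))$, and $w(\cut(S\cup T))$, and I would verify by inclusion–exclusion that
\[
w(E(S,T)) = \tfrac{1}{2}\bigl(w(\cut(S)) + w(\cut(T)) - w(\cut(S\cup T))\bigr).
\]

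To justify this identity, I would partition $V$ into the three disjoint parts $S$, $T$, and $U := V\setminus(S\cup T)$, and classify each edge by which pair of parts it spans (ignoring edges inside $S$, $T$, or $U$, since they do not contribute to any of the three cuts). Writing $a = w(E(S,T))$, $b = w(E(S,U))$, $c = w(E(T,U))$, I would observe that $w(\cut(S)) = a+b$, $w(\cut(T)) = a+c$, and $w(\cut(S\cup T)) = b+c$, since $S$ and $T$ are disjoint so $\cut(S\cup T)$ consists exactly of the edges going from $S\cup T$ to $U$. Adding the first two and subtracting the third gives $2a$, yielding the displayed identity.

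There is no real obstacle here: the only thing to double-check is the assumption that $S$ and $T$ are disjoint, which is given, so $S\cup T$ is a legitimate set on which to evaluate the cut function. Since each of $\cut(S)$, $\cut(T)$, and $\cut(S\cup T)$ is obtained by one cut query, the total query cost is $3$, as claimed.
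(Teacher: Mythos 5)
Your proof is correct and uses exactly the same three cut queries and the same inclusion–exclusion identity as the paper; the paper merely phrases the verification in terms of the adjacency matrix (noting $\chi_{\bar R}^T A \chi_R = w(\cut(R))$) rather than your direct edge-partition count. No issues.
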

\begin{proof}
Let $A$ be the adjacency matrix of $G$, i.e.\ $A(u,v) = w(\{u,v\})$.
Then
\[
w(E(S,T))
= \chi_S^T A \chi_T
= \frac{1}{2} \left(\chi_{\bar S}^T A \chi_S + \chi_{\bar T}^T A \chi_T - \chi_{\overline{S \cup T}}^T A \chi_{S \cup T} \right) \enspace .
\]
The proposition follows by noting that $\chi_{\bar R}^T A \chi_R = w(\cut(R))$ can be evaluated with a single cut query, 
for any $R \subseteq [n]$.
\end{proof}

The following two corollaries follow easily from \Cref{prop:3}.

\begin{corollary} \label{cor:bip-query}
Let $G = (V,E,w)$ be a weighted graph.
For disjoint subsets $S,T \subset V$, let $G' = (S \cup T, E(S,T),w')$ be the induced bipartite subgraph between $S$ and $T$, where 
$w'$ is the weight function $w$ restricted to $E(S,T)$.  Let $A_{G'} \in \R^{|S| \times |T|}$ be the \emph{bipartite} adjacency matrix of $G'$. We can simulate a query of the form $x^T A_{G'} y$ for $x \in \{0,1\}^{|S|}$ and $y \in \{0,1\}^{|T|}$ with 3 cut queries to $G$.
\end{corollary}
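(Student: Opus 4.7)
The plan is to reduce the simulation directly to \cref{prop:3}. Since $x \in \{0,1\}^{|S|}$ and $y \in \{0,1\}^{|T|}$ are Boolean vectors indexed by $S$ and $T$ respectively, I will interpret them as characteristic vectors of subsets $S' \subseteq S$ and $T' \subseteq T$, namely $x = \chi_{S'}$ and $y = \chi_{T'}$.

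Next I will unpack the quadratic form $x^T A_{G'} y$. Because $A_{G'}$ is the bipartite adjacency matrix of the induced subgraph $G'$ between $S$ and $T$, the product $x^T A_{G'} y$ equals $\sum_{u \in S', v \in T'} w(\{u,v\}) \cdot \mathbf{1}[\{u,v\} \in E(S,T)]$, which is exactly $w(E_G(S', T'))$. The key observation is that $S'$ and $T'$ remain disjoint, since $S \cap T = \emptyset$ by hypothesis, so they form a valid input to \cref{prop:3}.

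Finally I will invoke \cref{prop:3} applied to the disjoint pair $(S', T')$ in $G$, which computes $w(E_G(S', T'))$ with $3$ cut queries to $G$. This yields the value $x^T A_{G'} y$ with the claimed query cost. The whole argument is a one-line reduction; there is no real obstacle, the only thing to be careful about is matching the indexing conventions between the Boolean vectors and the subsets they represent, and verifying that the induced bipartite structure means no diagonal or same-side edges contribute to the sum.
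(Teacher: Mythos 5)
Your proposal is correct and is essentially identical to the paper's proof: both identify the supports of $x$ and $y$ as disjoint subsets $S_x \subseteq S$ and $T_y \subseteq T$, observe that $x^T A_{G'} y = w(E_G(S_x, T_y))$, and invoke \cref{prop:3}. No further comment is needed.
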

\begin{proof}
Let $S_x \subseteq S$ and $T_y \subseteq T$ denote the supports of $x$ and $y$, respectively.
Then the corollary follows by noting that $x^T A_{G'} y = w(E(S_x,T_y))$ and using \cref{prop:3}.
\end{proof}

\begin{corollary}
Let $G = (V,E)$ be an unweighted multigraph. Let subsets $V' \subseteq V$ and $X \subseteq E$ be given explicitly, and define the graph $G' = (V',E')$ by setting $E' = \{\{u,v\} \in E \setminus X \mid u,v \in V'\}$.
Then we can simulate a cut query to $G'$ with 3 cut queries to $G$.
\end{corollary}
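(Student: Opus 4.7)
The plan is to reduce a cut query in $G'$ to counting edges between two disjoint sets in $G$, minus a correction term that can be computed from the explicit description of $X$. Specifically, given a query set $S \subseteq V'$, let $T = V' \setminus S$. Since $E' = \{\{u,v\} \in E \setminus X : u, v \in V'\}$, we have
\[
\cut_{G'}(S) = E_{G'}(S,T) = E_G(S,T) \setminus X,
\]
because any edge of $E_G(S,T)$ automatically has both endpoints in $V'$ (as $S, T \subseteq V'$). Therefore
\[
|\cut_{G'}(S)| = |E_G(S,T)| - |X \cap E_G(S,T)|.
\]

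I would then compute the two terms on the right separately. The first term $|E_G(S,T)|$, where $S$ and $T$ are disjoint subsets of $V$, is exactly the quantity that \cref{prop:3} handles using $3$ cut queries to $G$. The second term $|X \cap E_G(S,T)|$ involves no queries at all: because $X$ is given explicitly, we can simply iterate through its edges and count how many have one endpoint in $S$ and the other in $T$. Subtracting gives the desired cut value, using $3$ cut queries to $G$ in total.

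There is essentially no obstacle here; the only thing to double check is that the disjointness hypothesis of \cref{prop:3} is met (it is, since $S \cap T = \emptyset$ by construction) and that edges counted in $E_G(S,T)$ indeed all lie in the induced bipartite graph on $V'$ (they do, again by $S, T \subseteq V'$). The corollary then follows immediately.
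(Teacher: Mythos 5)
Your proposal is correct and matches the paper's proof essentially verbatim: both decompose $|\cut_{G'}(S)|$ as $|E_G(S, V'\setminus S)|$ minus the explicitly computable correction $|X \cap E_G(S, V'\setminus S)|$, using \cref{prop:3} for the first term. No further comment is needed.
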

\begin{proof}
Consider $\cut_{G'}(S)$ in $G'$ for a vertex set $S \subset V'$. It is easy to observe that $|\cut_{G'}(S)| = |E_G(S, V'\setminus S) \setminus X|$. The subscripts $G$ and $G'$ clarify the corresponding graph associated with the edge set. We can compute $|E_G(S, V'\setminus S)|$ by making 3 cut queries to $G$ using \cref{prop:3}. As we know the set $X$ explicitly, we can compute $|E_G(S, V'\setminus S) \cap X|$ without queries. Subtracting the latter from the former we get the value of $|\cut_{G'}(S)|$.
\end{proof}

\begin{proposition}[Randomized binary search]
\label{prop:learn_random_neighbor}
Let $G=(V,E)$ be a simple graph. Let $v\in V$ and let $R\subseteq V\backslash \set{v}$. There is a randomized algorithm that, if $d_R(v)>0$, can output a uniformly random neighbor of $v$ in $R$ with $O(\log n)$ cut queries.
\end{proposition}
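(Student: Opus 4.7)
The plan is to do a randomized binary search on the set $R$, where at each step we use \cref{prop:3} to compute the number of edges from $v$ into a candidate subset with only $3$ cut queries, and then recurse into one half with probability proportional to its contribution to $d_R(v)$.

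More concretely, I would maintain a current candidate set $S \subseteq R$, initialized as $S = R$, together with the promise $d_S(v) > 0$. At each step, partition $S$ arbitrarily into two roughly equal halves $S_1, S_2$. Using \cref{prop:3} on the disjoint sets $\{v\}$ and $S_1$ (and similarly $\{v\}$ and $S_2$) I can compute $d_{S_1}(v)$ and $d_{S_2}(v)$ in $O(1)$ cut queries; in fact, since $d_{S_1}(v)+d_{S_2}(v)=d_S(v)$, one of these can be inferred from the other, so $3$ cut queries per level suffice. Then I set $S \leftarrow S_1$ with probability $d_{S_1}(v)/d_S(v)$ and $S \leftarrow S_2$ otherwise, and recurse. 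The process terminates when $|S|=1$, at which point the unique element of $S$ is output.

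For correctness, fix a neighbor $u$ of $v$ in $R$. Let $S^{(0)} \supseteq S^{(1)} \supseteq \dots \supseteq S^{(t)}$ be the sequence of candidate sets along the path that contains $u$. At step $i$, conditioned on $u \in S^{(i)}$, the algorithm keeps $u$ with probability $d_{S^{(i+1)}}(v)/d_{S^{(i)}}(v)$. Multiplying these probabilities telescopes to
\[
\prod_{i=0}^{t-1} \frac{d_{S^{(i+1)}}(v)}{d_{S^{(i)}}(v)} \;=\; \frac{d_{S^{(t)}}(v)}{d_{S^{(0)}}(v)} \;=\; \frac{1}{d_R(v)},
\]
since $|S^{(t)}|=1$ and $u$ is a neighbor, so $d_{S^{(t)}}(v)=1$. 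Hence each neighbor is returned with probability exactly $1/d_R(v)$, as required.

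For the query complexity, each level of recursion halves $|S|$ and spends $O(1)$ cut queries, so after $O(\log |R|) = O(\log n)$ steps we reach a singleton. There is no real obstacle here; the only thing to be slightly careful about is ensuring that the invariant $d_S(v)>0$ is maintained, which is immediate from the sampling rule (a half with $d_{S_i}(v)=0$ is chosen with probability $0$), and handling the corner case where one half is empty (simply recurse into the other half deterministically).
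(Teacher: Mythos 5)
Your proposal is correct and is essentially identical to the paper's proof: both perform a randomized binary search that halves the candidate set, computes $d_{S_1}(v)$ and $d_{S_2}(v)$ via \cref{prop:3}, and recurses with probability proportional to the degree into each half. Your telescoping argument for uniformity is a nice explicit touch that the paper leaves implicit.
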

\begin{proof}
This is a simple variation on \cite[Corollary 2.2]{RubinsteinSW18} which we describe for completeness.

Use a single cut query to check whether $v$ has a neighbor in $R$.
If so, do the following recursively, until we end up with a single neighbor of $v$: split $R$ into two sets $R_1$ and $R_2$ with $|R_1| = \lceil |R|/2 \rceil$ and $|R_2| = \lfloor |R|/2 \rfloor$, and learn the number of neighbors $d_{R_1}(v)$ and $d_{R_2}(v)$ of $v$ in $R_1$ resp.~$R_2$.  Each of these can be computed with 3 cut queries by \cref{prop:3}.
Now pick $R_1$ with probability $d_{R_1}(v)/(d_{R_1}(v)+d_{R_2}(v))$ and otherwise pick $R_2$.
If the chosen set has a single vertex then return that vertex, otherwise recurse on the chosen set.
\end{proof}

Finally, a much more involved primitive is the following randomized cut query algorithm for finding a minimum cut in a weighted graph by Mukhopadhyay and Nanongkai \cite{MukhopadhyayN20}.
\begin{theorem}[{\cite[Section 5.1]{MukhopadhyayN20}}]
\label{thm:MN20}
Let $G = (V,E,w)$ be a weighted graph with $n$ vertices.  There is a randomized algorithm that computes $\mincut(G)$ with high probability after $O(n \log^8 n)$ cut queries.\footnote{The bound is stated as $\tOh(n)$ in \cite{MukhopadhyayN20} but 
we need a concrete exponent in the polylog term to make our algorithm explicit.  We take $\log^8(n)$ as a conservative over-estimate; the true cost of the algorithm in \cite{MukhopadhyayN20} is likely smaller.}
\end{theorem}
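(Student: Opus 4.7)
The plan is to import Karger's $O(m\,\polylog\,n)$ sequential minimum-cut algorithm into the cut-query model, simulating each of its steps with $\tOh(n)$ cut queries. The high-level strategy has three stages: (i) produce a sparse subgraph that retains all near-minimum cuts; (ii) pack $O(\log n)$ spanning trees so that, by Karger's tree-packing lemma, every fixed minimum cut 2-respects at least one of them with high probability; and (iii) for each such tree, compute the minimum 2-respecting cut by a dynamic program whose cut-value evaluations are answered using \Cref{prop:3}.

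First, I would build a Nagamochi--Ibaraki style sparse certificate by iteratively stripping off spanning forests, each of which can be learned in the weighted cut-query model using Prim-style randomized binary search on $\cutd$-sizes (\Cref{prop:learn_random_neighbor}, generalized to weighted graphs), at a cost of $O(n \log n)$ queries per forest. Repeating $O(\log n)$ times and then running a standard tree-packing procedure (either Thorup-style greedy packing or Karger's randomized packing) on this certificate yields $O(\log n)$ trees with the desired 2-respecting property. This step contributes $\tOh(n)$ queries in total since each forest and each packed tree is obtained from $\tOh(n)$ queries.

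Next, for each tree $T$ in the packing, I would compute the minimum cut of $G$ that 2-respects $T$. For any pair of tree edges $(e_1,e_2)$, the associated cut is determined by the subsets of $V$ induced by removing $e_1$ and $e_2$ from $T$; its weight can be evaluated using $O(1)$ cut queries via \Cref{prop:3}. The naive version considers $\Omega(n^2)$ pairs, so the crux is an efficient DP: I would exploit the heavy-path decomposition of $T$ along the lines of Karger's $O(m\log^3 n)$ algorithm, reducing the search to $\tOh(n)$ carefully chosen pair evaluations per tree. Running this across all $O(\log n)$ trees and returning the minimum gives $\mincut(G)$ with high probability.

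The main obstacle will be stage~(iii): naively walking over all pairs of tree edges is far too expensive, and the cut-query cost of each candidate evaluation must be amortized tightly. Leveraging the heavy-path structure (so that pairs on different heavy paths and pairs on the same heavy path are handled by two separate subroutines, each of query cost $\tOh(n)$ via interval-merging on paths) is what keeps the polylogarithmic overhead bounded. Summing the costs across the spanning-forest construction, tree packing, and $O(\log n)$ invocations of the 2-respecting DP — each itself carrying several logarithmic factors from certificate packing, randomized search, and heavy-path recursion — yields the claimed $O(n \log^8 n)$ cut-query bound, where the exponent $8$ is a comfortable over-estimate absorbing the individual polylog factors of the subroutines.
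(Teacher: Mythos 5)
This statement is not proved in the paper at all: it is imported verbatim from \cite{MukhopadhyayN20} (Section 5.1 there) and used as a black box, with the footnote only explaining why the authors replace the $\tOh(n)$ of the original with the concrete over-estimate $O(n\log^8 n)$. So the only fair comparison is against the cited work, whose strategy your sketch does broadly reproduce: Karger-style tree packing, reduction to minimum 2-respecting cuts, and evaluation of candidate cuts via $O(1)$ cut queries each (\cref{prop:3}).

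That said, your sketch has two genuine gaps. First, stage (i)/(ii) is wrong as stated: a Nagamochi--Ibaraki $r$-certificate with $r = O(\log n)$ preserves only cuts of value at most $r$ and destroys the minimum cut whenever $\lambda(G) > r$; it does not produce a graph whose minimum cut value is $O(\log n)$ while approximately preserving the minimum cuts of $G$. What the tree-packing lemma needs is Karger's randomly sampled \emph{skeleton} (edges sampled with probability inversely proportional to the min cut, or a Bencz\'ur--Karger style sparsifier), and constructing such an object with only $\tOh(n)$ cut queries is itself one of the nontrivial contributions of \cite{MukhopadhyayN20} --- you cannot sample individual weighted edges cheaply in this model. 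Second, the claim that heavy-path decomposition ``reduces the search to $\tOh(n)$ carefully chosen pair evaluations per tree'' is precisely the main technical theorem of \cite{MukhopadhyayN20} (and of the follow-up \cite{GawrychowskiMW20}), not a routine adaptation of Karger: Karger's $O(m\log^3 n)$ dynamic program implicitly ranges over many edge pairs using precomputed tree aggregates that come for free in the RAM model but would each cost queries here. Reducing to $\tOh(n)$ explicit cut-value evaluations requires their interval/monotonicity structure on pairs of heavy paths, which your proposal assumes rather than supplies. Since the theorem is used here purely as a citation, these gaps do not affect the present paper, but your sketch should not be read as a self-contained proof of the quoted bound.
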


\section{Star contraction and direct applications} \label{sec:direct appl}
In this section we give the full proof of the uniform star contraction theorem stated in the introduction (\cref{thm:intro:simple star contraction}). We then derive several direct applications of it.

\subsection{$(\alpha,\beta)$-good for contracting}
We prove some preliminaries to set the scene.
First, we verify that if a directed subgraph $H$ is $(\alpha,\beta)$-good for contracting for $C$ (\cref{def:good_contraction}) then in taking a random 1-out sample of $H$ we do not have 
too high a probability of selecting an edge of $C$.  Let $G =(V,E)$ be a simple graph and $C \subseteq E$.  Let $H=(V,A)$ be a directed subgraph of $G$.  For every $v \in V$ let $q_v = \Pr_{u: (v,u) \in A} [\{u,v \} \in C]$ if $|\{(v,u) \in A : u \in V\}| > 0$ and $q_v = 0$ otherwise.  The probability we \emph{do not} choose an edge of $C$ when taking a 
random 1-out sample on $H$ is exactly
\[
\prod_{v \in V} ( 1 - q_v ) \enspace.
\]
We can lower bound this probability via the next proposition.  A very similar statement is given in \cite[Lemma 2.7]{GNT20}; we give an alternative 
analysis that improves the bound by a constant factor.
\begin{proposition}
\label{clm:sample}
Let $n$ be a positive integer, $0 \le \alpha < 1$, and $\beta \ge 1$.  Define
\begin{equation*}
    \begin{aligned}
        F(\alpha,\beta) = \ &\underset{x \in \R^n}{\text{minimize}}&& \prod_i (1-x_i) \\
        &\text{subject to} && \sum_i x_i = \beta,\\
        &&& 0 \le x_i \le \alpha \enspace .
    \end{aligned}
\end{equation*}
Then $F(\alpha,\beta) \ge (1-\alpha)^{\ceil{\beta/\alpha}}$.
\end{proposition}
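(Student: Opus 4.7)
The plan is a standard extreme-point argument. The feasible region
\[
P = \bigl\{x \in \R^n : \textstyle\sum_i x_i = \beta,\ 0 \le x_i \le \alpha \text{ for all } i\bigr\}
\]
is a compact convex polytope, and the objective can be rewritten as $\exp\!\bigl(\sum_i \log(1-x_i)\bigr)$. Because $\log(1-t)$ is concave on $[0,\alpha] \subseteq [0,1)$, the function $\sum_i \log(1-x_i)$ is concave, so its minimum over $P$ is attained at an extreme point of $P$. Since $\exp$ is monotone, the same extreme point minimizes the original product.

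Next I would characterize the extreme points of $P$. The polytope lives in the $(n-1)$-dimensional affine subspace $\sum_i x_i = \beta$, so at any extreme point at least $n-1$ of the box constraints must be tight. Because each coordinate can hit at most one of its two bounds (as $\alpha > 0$), this means at most one coordinate lies strictly between $0$ and $\alpha$. Up to permutation, an extreme point therefore has the form: $k$ coordinates equal to $\alpha$, one coordinate equal to $r := \beta - k\alpha$, and the remaining $n-k-1$ coordinates equal to $0$, with $0 \le r < \alpha$. The constraint $r \in [0,\alpha)$ forces $k = \lfloor \beta/\alpha \rfloor$ and $r = \beta - k\alpha$, uniquely up to permutation.

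Finally I would evaluate the product at this extreme point, which equals $(1-\alpha)^k (1-r)$. If $\beta/\alpha$ is an integer, then $r = 0$ and $k = \lceil \beta/\alpha \rceil$, so the product is exactly $(1-\alpha)^{\lceil \beta/\alpha \rceil}$. Otherwise $r \in (0,\alpha)$ and $k+1 = \lceil \beta/\alpha \rceil$; using $1-r \ge 1-\alpha$ one obtains
\[
(1-\alpha)^k (1-r) \ge (1-\alpha)^{k+1} = (1-\alpha)^{\lceil \beta/\alpha \rceil},
\]
which is the claimed bound. There is no real obstacle here: the only step that requires a moment of care is the extreme-point characterization, which uses the standard fact that a vertex of a polytope in $\R^n$ has at least $n$ linearly independent tight constraints, combined with the observation that our single equality and the one-sided box constraints together force all but one coordinate to the boundary.
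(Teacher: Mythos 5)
Your proof is correct and follows essentially the same route as the paper: both reduce to minimizing the concave function $\sum_i \log(1-x_i)$ over the polytope and observe that the minimum is attained at an extreme point. The only cosmetic difference is that the paper first relaxes $\beta$ upward to $\alpha\ceil{\beta/\alpha}$ so that the relevant extreme point has all nonzero coordinates equal to $\alpha$, whereas you handle the single fractional coordinate $r=\beta-\lfloor\beta/\alpha\rfloor\alpha$ directly via $1-r\ge 1-\alpha$; both are valid.
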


\begin{proof}
We instead analyze $\ln(F(\alpha,\beta))$ whose objective function is $\sum_i \ln(1-x_i)$.
The problem then becomes the minimization of a continuous concave function over a compact convex set $K$.  By the Krein-Milman theorem 
\cite[Section 3.23]{Rudin}, $K$ is the closed convex hull of its extreme points and therefore by Jensen's inequality a global optimum will occur at an extremal point. It is 
clear that $F(\alpha,\beta) \ge F(\alpha,\beta')$ for $\beta \le \beta'$.  Therefore we instead lower bound $F(\alpha,\beta')$ with $\beta' = \alpha \ceil{\beta/\alpha}$.  
The extremal points in the set $0 \le x_i \le \alpha, \sum_i x_i = \beta'$ have $\beta'/\alpha$ non-zero entries which are all equal to $\alpha$.  Plugging this in gives the bound.
\end{proof}

\begin{corollary}
\label{cor:sample_cor}
If $H$ is $(\alpha,\beta)$-good for contracting with respect to $C$ then the probability an edge of $C$ is not selected in taking a random 1-out sample of $H$ is 
at least $(1-\alpha)^{\ceil{\beta/\alpha}}$.
\end{corollary}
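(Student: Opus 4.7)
The plan is to reduce Corollary \ref{cor:sample_cor} essentially in one step to Proposition \ref{clm:sample}, which has already done the nontrivial concavity/extreme-point work.

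First I would write out the probability of interest explicitly. In a random 1-out sample of $H$, each vertex $u$ that has at least one out-neighbor independently selects a uniformly random out-neighbor; vertices with no outgoing arcs contribute nothing. The event ``the selected arc from $u$ has its underlying edge in $C$'' has probability exactly $q_u$ by the definition in \cref{def:good_contraction} (and is vacuously $0$ when $u$ has no out-neighbors, matching our convention). Because the choices at distinct vertices are independent, the probability that no edge of $C$ is chosen is
$$\prod_{u \in V} (1 - q_u).$$

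Next I would invoke the $(\alpha,\beta)$-good-for-contracting hypothesis. The max property gives $q_u \le \alpha$ for every $u$, and the sum property gives $\sum_u q_u \le \beta$. Set $\beta_0 = \sum_u q_u \le \beta$. Then the vector $(q_u)_{u \in V}$ is feasible for the minimization problem defining $F(\alpha,\beta_0)$ in Proposition \ref{clm:sample} (with $n = |V|$), so
$$\prod_{u \in V} (1-q_u) \;\ge\; F(\alpha,\beta_0) \;\ge\; (1-\alpha)^{\lceil \beta_0/\alpha \rceil}.$$

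The last thing to address is the gap between $\beta_0$ and $\beta$: the proposition is stated with an equality constraint, whereas we only know $\beta_0 \le \beta$. This is handled by monotonicity of the target bound in $\beta$: since $\beta_0 \le \beta$ we have $\lceil \beta_0/\alpha \rceil \le \lceil \beta/\alpha \rceil$, and since $1-\alpha \in [0,1]$ raising to a smaller exponent only increases the value, so $(1-\alpha)^{\lceil \beta_0/\alpha \rceil} \ge (1-\alpha)^{\lceil \beta/\alpha \rceil}$. Chaining these inequalities yields the claimed bound. I do not expect any real obstacle here; the only care needed is the bookkeeping between the equality-constrained proposition and the inequality hypothesis of the corollary, which is resolved by the elementary monotonicity observation above.
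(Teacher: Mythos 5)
Your proof is correct and follows essentially the same route as the paper: the paper derives the corollary immediately from the product formula $\prod_v(1-q_v)$ for the non-selection probability together with \cref{clm:sample}, exactly as you do. Your explicit handling of the slack between $\sum_u q_u$ and $\beta$ via monotonicity of $(1-\alpha)^{\lceil \cdot/\alpha\rceil}$ is the same observation the paper buries inside the proof of \cref{clm:sample} (``$F(\alpha,\beta)\ge F(\alpha,\beta')$ for $\beta\le\beta'$''), so there is no substantive difference.
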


Finally, we note a simple property of an $(\alpha,\beta)$-good for contracting subgraph.
\begin{proposition}
\label{prop:remove_out}
Let $G = (V,E)$ be a simple graph, $C \subseteq E$ 
a set of edges, and $H$ a directed subgraph of $G$ that is $(\alpha,\beta)$-good for contracting with 
respect to $C$.  For $v \in V$ if we form the 
graph $H'$ from $H$ by removing all outgoing edges of $v$ in $H$ then $H'$ is also $(\alpha,\beta)$-good for 
contracting with respect to $C$.
\end{proposition}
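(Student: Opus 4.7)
My plan is to unwind the definition of $(\alpha,\beta)$-good for contracting and observe that the only quantity that changes when we delete outgoing edges at $v$ is $q_v$ itself.

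Concretely, for each $u \in V$ the value $q_u$ depends solely on the set of outgoing arcs of $u$ in the directed subgraph. Let $q_u$ denote this value in $H$ and $q'_u$ the corresponding value in $H'$. For every $u \neq v$, the outgoing arcs of $u$ are unchanged when we pass from $H$ to $H'$, so $q'_u = q_u$. For $u = v$, all outgoing arcs are removed, so by the convention in \cref{def:good_contraction} we have $q'_v = 0 \le q_v$.

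Therefore $\max_u q'_u \le \max_u q_u \le \alpha$, which gives the max property for $H'$, and $\sum_u q'_u \le \sum_u q_u \le \beta$, which gives the sum property. Hence $H'$ is $(\alpha,\beta)$-good for contracting with respect to $C$.

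There is no real obstacle here; the statement is essentially a sanity check that the definition behaves monotonically under deletion of outgoing arcs at a single vertex, and the argument is a two-line case split on whether $u = v$ or not.
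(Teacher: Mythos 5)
Your argument is correct and is exactly the intended one: the paper states this proposition without proof precisely because, as you observe, removing the outgoing arcs of $v$ sets $q_v$ to $0$ by the convention in \cref{def:good_contraction} and leaves every other $q_u$ unchanged, so both the max and sum properties persist. Nothing further is needed.
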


With these preliminaries in hand to set the scene, we next give the proof of the uniform star contraction theorem (\cref{thm:intro:simple star contraction}).
There a subset $R \subseteq V$ is chosen randomly and $H$ is taken to be $(V, \cutd(V \setminus R))$, where $\cutd(V \setminus R)$ denotes the edges in $\cut(V \setminus R)$ directed 
from $V\setminus R$ to $R$. 

\subsection{Full proof of uniform star contraction} \label{sec:simple-star-contr}
In this section we prove \cref{thm:intro:simple star contraction}, which for convenience is restated here.
\simplestar*

We first show two propositions that will help to prove \cref{thm:intro:simple star contraction}.  In uniform star contraction we do a random 1-out contraction on the 
graph $H=(V, \cutd(V \setminus R))$ where $R$ is chosen randomly by putting each vertex into $R$ with probability $p$.  As in doing 1-out contraction on $H$ 
we choose a neighbor of every vertex in $V\setminus R$ that has one, the total number of vertices in $G'$ can be upper bounded by $|R|$ plus the number of 
vertices in $V \setminus R$ that have no neighbor in $R$.  The next proposition bounds the probability that these values are much larger than their expectations.
\begin{proposition}
\label{prop:sizeR}
Let $G = (V,E)$ be an $n$-vertex graph.  
Randomly choose a set of vertices $R$ by putting each vertex into $R$ independently at random with probability $p$.  
Then 
\begin{enumerate}
    \item $\Pr_R[|R| \ge 2pn] \le \exp(-pn/3)$.
    \item $\Pr_R[d_R(v) \le 0.9 p d(v)] \le \exp(-p d(v)/200)$ for any $v \in V$.
\end{enumerate}
\end{proposition}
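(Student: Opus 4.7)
Both parts are direct applications of the Chernoff bounds stated in \cref{lem:chernoff}, so my plan is essentially just to identify the right sum of independent Bernoulli variables in each case and plug in.

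For part 1, I would write $|R| = \sum_{v \in V} \mathbf{1}[v \in R]$, which is a sum of $n$ independent Bernoulli random variables each of mean $p$, so $\mu := \bbE[|R|] = pn$. The event $\{|R| \ge 2pn\}$ is exactly $\{|R| \ge (1+\delta)\mu\}$ with $\delta = 1$. Applying \cref{eq:chernoff_above} then yields
\[
\Pr[|R| \ge 2pn] \le \exp\!\left(-\frac{\delta^2 \mu}{2+\delta}\right) = \exp(-pn/3),
\]
which is exactly the bound claimed.

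For part 2, fix $v \in V$ and observe that $d_R(v) = |E(\{v\}, R \setminus \{v\})| = \sum_{u : \{u,v\} \in E} \mathbf{1}[u \in R]$. Since each neighbor of $v$ is included in $R$ independently with probability $p$, this is a sum of $d(v)$ independent Bernoulli random variables, giving $\mu' := \bbE[d_R(v)] = p \cdot d(v)$. The event $\{d_R(v) \le 0.9\,p\,d(v)\}$ is $\{d_R(v) \le (1-\delta)\mu'\}$ with $\delta = 0.1$. Applying the lower-tail bound \cref{eq:chernoff_below} gives
\[
\Pr[d_R(v) \le 0.9\,p\,d(v)] \le \exp(-\delta^2 \mu'/2) = \exp(-p\,d(v)/200),
\]
as required.

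There is no real obstacle here: both items are textbook Chernoff estimates once the correct independent sum is identified. The only very minor care needed is in part 2 to note that the relevant sum is indexed by the neighbors of $v$ (so that the mean is $p\,d(v)$ and the variables are genuinely independent) and that whether or not $v$ itself lies in $R$ is irrelevant since $d_R(v)$ uses $R \setminus \{v\}$.
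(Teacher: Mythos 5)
Your proof is correct and matches the paper's, which simply cites \cref{eq:chernoff_above} for item~1 and \cref{eq:chernoff_below} for item~2; your choices $\delta=1$ and $\delta=0.1$ give exactly the stated exponents $pn/3$ and $p\,d(v)/200$. The extra care you take in identifying the independent Bernoulli sums (and noting that $d_R(v)$ excludes $v$ itself) is fine but not needed beyond what the paper states.
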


\begin{proof}
Both items follows directly from a Chernoff bound.
For the first item we use \cref{eq:chernoff_above} and for the second \cref{eq:chernoff_below}.
\end{proof}
\cref{prop:sizeR} will handle item~1 of \cref{thm:intro:simple star contraction}.  We next show that $H=(V, \cutd(V \setminus R))$ is 
$(2/3,8)$-good for contracting with probability at least $2/3$.  This will imply item~2 by \cref{cor:sample_cor}.   We single out showing that $H$ has the ``sum property'' (item~2 of \cref{def:good_contraction}) with 
constant probability in 
the next proposition.  We go ahead and prove a slightly more general statement than is needed for 
\cref{thm:intro:simple star contraction}, but which will be 
used in the $O(n)$ randomized cut query algorithm for edge 
connectivity.  It is also interesting to note that this statement holds for any sampling probability $p$.

\begin{proposition}
\label{prop:expectation}
Let $G = (V,E)$ be a simple $n$-vertex graph and let $C \subseteq E$.  Let $0 < f \le g \le n$ be positive integers.  Choose a set $R$ by putting each vertex of $V$ into $R$ independently at random with probability $p$.  Then
for any $v \in V$
\begin{equation}
\bbE_R\left[ \frac{c_R(v)}{d_R(v)}  \;\bigg|\; f \le d_R(v) \le g \right] = \frac{c(v)}{d(v)} \label{expectation} \\
\end{equation}
\end{proposition}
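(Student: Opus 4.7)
The plan is to exploit a simple symmetry/exchangeability argument: among the $d(v)$ neighbors of $v$, the $c(v)$ that lie in $C$ are a fixed subset, and inclusion of each neighbor in $R$ is i.i.d.\ Bernoulli$(p)$, so conditioning on the \emph{number} $d_R(v)$ of included neighbors makes the identities of the included neighbors a uniformly random $d_R(v)$-subset of the $d(v)$ neighbors.

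Concretely, first I would observe that $d_R(v)$ and $c_R(v)$ depend on $R$ only through the restriction of $R$ to the neighborhood $N(v)$ of $v$, so the sampling at other vertices can be ignored. Next, for any integer $k$ with $1 \le k \le d(v)$, conditioning on $d_R(v) = k$ the restriction $R \cap N(v)$ is uniformly distributed over the $\binom{d(v)}{k}$ subsets of $N(v)$ of size $k$. Hence $c_R(v)$, given $d_R(v) = k$, is hypergeometrically distributed with parameters $(d(v), c(v), k)$, and so
\[
\bbE_R\!\left[\, c_R(v) \;\big|\; d_R(v) = k \,\right] \;=\; k \cdot \frac{c(v)}{d(v)}.
\]
Dividing both sides by $k$ (which is legal since $k \ge f \ge 1$), I get $\bbE_R[c_R(v)/d_R(v) \mid d_R(v) = k] = c(v)/d(v)$, a quantity that does not depend on $k$.

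To finish, I would use the tower property. Writing $Z = c_R(v)/d_R(v)$ and noting the event $\{f \le d_R(v) \le g\}$ is $\sigma(d_R(v))$-measurable,
\[
\bbE_R[Z \mid f \le d_R(v) \le g] \;=\; \sum_{k=f}^{g} \bbE_R[Z \mid d_R(v) = k] \,\Pr[d_R(v) = k \mid f \le d_R(v) \le g],
\]
and since every summand $\bbE_R[Z \mid d_R(v) = k]$ equals $c(v)/d(v)$, the weighted average is $c(v)/d(v)$ as claimed.

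The only nontrivial point is the exchangeability step; once that is in place the rest is bookkeeping. There is no real obstacle, though one has to be a little careful in the edge case $c(v) = 0$ (where the statement holds trivially with both sides zero) and to note that the conditioning event has positive probability whenever $d(v) \ge f$, which is implicit since otherwise the statement is vacuous.
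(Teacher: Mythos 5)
Your proof is correct and is essentially the paper's argument: the paper also conditions on $d_R(v)=b$ and uses exchangeability of the Bernoulli indicators given their sum (phrased as $\bbE[X_i \mid Y=b]=b/d$ for every indicator, rather than via the hypergeometric distribution) to get $\bbE[c_R(v)\mid d_R(v)=b]=c(v)b/d(v)$, then averages over $b$ in the range $[f,g]$. The two write-ups differ only in packaging.
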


\begin{proof}
Let $d$ be the degree of $v$ and $c$ be the number of edges of $C$ incident to it.  The proposition is equivalent to the following 
purely probabilistic statement.  Let $X_1, \ldots, X_c, Z_1, \ldots, Z_{d-c}$ be independent and identically distributed Bernoulli random variables that are $1$ with probability $p$.  
$X_i=1$ represents the event that the $\ith$ edge of $C$ incident to $v$ is selected, and $Z_i=1$ the event that the $\ith$ non-edge of $C$ incident to $v$ is selected.  
Then $X = \sum_{i=1}^c X_i$ is the random variable for the total number of edges of $C$ incident to $v$ selected and $Y = X + \sum_{i=1}^{d-c}Z_{i}$ is the random 
variable for the total number of edges incident to $v$ selected.
We want to show $\bbE[X/Y \mid f \le Y \le g] = c/d$. 

For $0 < b \le d$ let us first compute $\bbE[X/Y \mid Y = b] = (1/b) \bbE[X \mid Y = b]$.  We claim that $\bbE[X \mid Y = b] = cb/d$.  By 
linearity of conditional expectation, $\bbE[X \mid Y = b] = \sum_{i=1}^c \bbE[X_i \mid Y = b]$.  As each $X_i$ and $Z_j$ are identically distributed, $\bbE[X_i \mid Y=b]$ is independent of $i$ 
and also equal to $\bbE[Z_j \mid Y=b]$ for $j = 1, \ldots, d-b$.  Thus $b = \sum_{i=1}^c \bbE[X_i \mid Y=b] + \sum_{j=1}^{d-c} \bbE[Z_j \mid Y=b] = d \cdot \bbE[X_i \mid Y = b]$ 
for any $i$.  This implies $\bbE[X_i \mid Y = b] = b/d$ and so $\bbE[X \mid Y = b] = cb/d$.

As $\bbE[X \mid Y = b] = cb/d$ for any $0 < b \le d$, we directly obtain $\bbE[X/Y \mid f \le Y \le g] = c/d$ for any integers $0 < f \le g$.  
\end{proof}

\begin{lemma}
\label{lem:good_for_contracting}
Let $G = (V,E)$ be a simple graph and $C$ be a non-trivial minimum cut of $G$.  Choose a set $R$ by putting each $v \in V$ into 
$R$ independently at random with probability $p \ge 1200 \ln(n)/ \delta(G)$.  Then the directed subgraph $H = (V, \cutd(V \setminus R))$ is 
$(2/3,8)$-good for contracting with respect to $C$ with probability at least $2/3$.
\end{lemma}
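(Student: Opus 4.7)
The plan is to verify the two conditions of \cref{def:good_contraction} separately: the \emph{sum property} by a first-moment Markov estimate, and the \emph{max property} by vertex-wise concentration followed by a union bound. Note that $H = (V, \cutd(V \setminus R))$ has arcs only from $V \setminus R$ to $R$, so the random variable $q_v$ vanishes whenever $v \in R$ or $d_R(v) = 0$, and otherwise equals $c_R(v)/d_R(v)$. The two events will be combined at the end.

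For the sum property, I would first apply \cref{prop:expectation} (with $f=1$, $g=n$) to the sampling of $R \setminus \{v\}$, which is independent of whether $v \in R$. This gives $\bbE[c_R(v)/d_R(v) \mid d_R(v) > 0] = c(v)/d(v)$, and therefore $\bbE[q_v] \le c(v)/d(v)$. Linearity of expectation together with \cref{eq:sumis2}---which uses that $|C| \le \delta(G)$ because $C$ is a non-trivial minimum cut---yields $\bbE[\sum_v q_v] \le 2$. Markov's inequality then gives $\Pr[\sum_v q_v \ge 8] \le 1/4$.

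For the max property, I would fix $v \in V$ and decompose $d_R(v)$ into two independent Binomials: $X := c_R(v) \sim \mathrm{Bin}(c(v), p)$ (the sampled $C$-edges at $v$) and $Y := d_R(v) - c_R(v) \sim \mathrm{Bin}(d(v) - c(v), p)$ (the sampled non-$C$-edges). The failure event $q_v > 2/3$ then translates to $X > 2Y$. By the argument behind \cref{eq:half}, the non-trivial minimum cut assumption forces $c(v) \le d(v)/2$, so $b := d(v) - c(v) \ge d(v)/2 \ge \delta(G)/2$ and $\bbE(2Y - X) = p(2b - c(v)) \ge p\,d(v)/2 \ge 600 \ln n$. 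A Bernstein-type concentration bound applied to the signed sum $2Y - X$---whose summands lie in $[-1,2]$ and have total variance $O(pd(v))$---then gives $\Pr[2Y - X \le 0] \le \exp(-\Omega(pd(v))) \le 1/n^c$ for a constant $c$ that can be made large by the numerical factor in $p \ge 1200 \ln n/\delta(G)$. A union bound over $v \in V$ shows the max property fails with probability at most $1/n^{c-1}$.

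Combining the two estimates yields $\Pr[\text{both properties hold}] \ge 3/4 - 1/n^{c-1} \ge 2/3$. The main subtlety I anticipate is the max-property concentration when $c(v)$ is very small: a naive multiplicative Chernoff applied to $X$ alone is too weak, since $\bbE X = c(v)p$ can be tiny. The remedy is to analyze $2Y - X$ jointly, so that the deviation bound is governed by $\bbE Y \ge 600 \ln n$ rather than by $\bbE X$. This is the reason for stating \cref{prop:expectation} in the flexible conditional form used above, and for choosing the lead constant $1200$ in $p$ so generously.
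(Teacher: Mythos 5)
Your proof is correct. The sum property is handled exactly as in the paper: \cref{prop:expectation} plus linearity and Markov giving failure probability $1/4$. For the max property you take a genuinely different (and valid) route. The paper argues in two steps: first it shows $d_R(v) \ge 0.9\,p\,d(v)$ for all $v$ simultaneously via \cref{prop:sizeR} and a union bound, and then observes that on this event $c_R(v)/d_R(v) > 2/3$ forces $c_R(v) > 1.2\,\bbE[c_R(v)]$ \emph{and} $c_R(v) > 720\ln n$, so a multiplicative Chernoff upper-tail bound on $c_R(v)$ gives $n^{-8}$ per vertex. You instead rewrite the bad event as the linear inequality $X > 2Y$ and apply Bernstein to the signed sum $2Y - X$, whose mean $p(2d(v)-3c(v)) \ge p\,d(v)/2 \ge 600\ln n$ controls the deviation regardless of how small $c(v)$ is. Both yield $n^{-\Omega(1)}$ per vertex and survive the union bound with room to spare. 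One remark: your stated worry that "a naive multiplicative Chernoff applied to $X$ alone is too weak when $\bbE X$ is tiny" is not actually a problem for the paper's route --- when $\bbE[c_R(v)] = p\,c(v)$ is small, reaching the absolute threshold $720\ln n$ is an enormous multiplicative deviation, and the bound $\exp(-\delta^2\mu/(2+\delta)) \approx \exp(-\delta\mu)$ for large $\delta$ still gives roughly $\exp(-(720\ln n - \mu))$, which is polynomially small. So the two arguments are equally sound; yours packages the case analysis into a single Bernstein application, while the paper's keeps the denominator and numerator estimates separate (and reuses the denominator bound elsewhere, e.g.\ for counting vertices with no neighbor in $R$).
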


\begin{proof}
As $C$ is a non-trivial minimum cut we know that
$\sum_{v: c(v) > 0} c(v)/d(v) \le 2$ (see \cref{eq:sumis2} 
and following discussion).  Together 
with \cref{prop:expectation}, 
linearity of expectation, and Markov's inequality gives 
\[
\Pr_R \left[ \sum_{v: c_R(v) > 0} \frac{c_R(v)}{d_R(v)} \ge 8 \right] \le 1/4 \enspace,
\]
showing the ``sum property'' of \cref{def:good_contraction} holds except with probability $1/4$.  Note that we did not have to use the
fact that $p \ge 1200 \ln(n)/ \delta(G)$ here, this holds for any $p$.

It remains to show the ``max property'' of \cref{def:good_contraction}, where we do use the assumption that $p$ is not too small.
For any $v \in V$ we have $d_R(v) \ge 0.9 p d(v)$ except with probability at most $\exp(-6\ln n)$ by item~2 of \cref{prop:sizeR}.  
Thus except with probability $n^{-5}$ this will hold 
for all $v$ by a union bound.  Let us add this to our error probability and assume this holds in the rest of the proof.  
The expected value of $c_R(v)$ is $p c(v)$ and we 
know that $c(v)/d(v) \le 1/2$ because $C$ is non-trivial.  Thus $\bbE[c_R(v)] \le p d(v)/2$ and to have $c_R(v)/d_R(v) > 2/3$ we must have $c_R(v) > 1.2 \bbE[c_R(v)]$ and 
$c_R(v) > 720 \ln n$, the latter because we are in the case $d_R(v) \ge 0.9 \cdot 1200 \ln n$.  A Chernoff bound thus gives the probability this 
happens is at most $n^{-8}$.  Hence by a union bound this will be true for all $v$ except with probability at most $n^{-7}$.  

This shows that $H = (V, \cutd(V \setminus R))$ is $(2/3,8)$-good for contracting with respect to $C$ except with probability at most $1/4+n^{-5} + n^{-7} \le 1/3$ over the choice of $R$.
\end{proof}

With these tools in hand, we can now prove 
\cref{thm:intro:simple star contraction}.

\begin{proof}[Proof of \cref{thm:intro:simple star contraction}]
Let us first show item~1.  The number of vertices in $G'$ can be upper bounded by $|R|$ plus 
the number of vertices in $V \setminus R$ that have no neighbor in $R$.  By item~1 of \cref{prop:sizeR}  
$|R|$ will be at most twice its expected value, which is $2400 n\ln(n)/\delta(G)$, except with probability at most $n^{-400}$.  By item~2 of \cref{prop:sizeR} the probability 
that a vertex has no neighbor in $R$ is at most $\exp(-6 \ln n)$.  Thus by a union bound, except with probability $n^{-5}$, every 
vertex in $V \setminus R$ will have a neighbor in $R$.  Both items will 
hold with probability at least $1-1/n^4$.  This completes the proof of item~1.

For item~2, let $C$ be a non-trivial cut realizing $\mincut(G)$.  With the choice of $p$ in the theorem we know that $H = (V, \cutd(V \setminus R)$ is $(2/3,8)$-good for contracting with respect to $C$ with 
probability at least $2/3$ over the choice of $R$ by \cref{lem:good_for_contracting}.  
When this happens, we do not select an edge of $C$ in doing a random 1-out sample of $H$ with probability at least $3^{-12}$ by \cref{cor:sample_cor}.  Not selecting an edge of $C$ to contract 
implies that in the contracted graph $G'$, for every super-vertex $S$, all vertices in $S$ will lie on the same side of the cut $C$.  This in turn gives $\mincut(G) = \mincut(G')$.
Thus overall item~2 of the theorem holds with 
probability at least $2 \cdot 3^{-13}$.  
\end{proof}

\subsection{Quantum cut queries and matrix-vector multiplication queries} \label{sec:quant_algo}

In this subsection we consider randomized algorithms that can make matrix-vector multiplication queries to the adjacency matrix of a graph 
and quantum algorithms with cut queries to a graph.  In the 
former, if $A$ is the adjacency matrix of an $n$-vertex simple graph $G$, one can query a vector $x \in \{0,1\}^n$ and receive the answer $Ax$.  

As described in \cref{sec:tech_mvp}, our proof works for any query model that has the following primitives, which we now state more formally.  We say 
that an algorithm that can perform these operations in the stated costs has \direct access to the graph, which stands for ``Minimum Degree and Cut Product.''  The 
origin of the name is explained below in \cref{rem:MDCP}.
\begin{definition}[\direct access]
Let $G=(V,E)$ be a simple graph.  We say 
a query model has \direct access to $G$ if it can 
execute the following query 
operations with high probability in the specified 
cost.
\begin{enumerate}
    \item Minimum degree queries: 
    One can query $\mindeg(G)$ and receive as answer the minimum degree of $G$. The cost is 
    $O(\sqrt{n} \log n)$.
    \item Neighborhood queries: For any vertex $v \in V$, the characteristic vector $\nbh(v) \in \{0,1\}^n$ of $v$ can be computed with cost $O(\log n)$.
    \item Spanning forest queries: For any subset $E' \subseteq E$ known to the algorithm one can compute a spanning forest $\spf(G,E')$ of 
    $G'=(V, E\setminus E')$.  The cost is 
    $O(\log^6 n)$.
    \item Cut queries: For any $S \subseteq V$ one can compute $|\cut(S)|$ with cost $O(1)$.
\end{enumerate}
\end{definition}

The formulation of the spanning forest query may seem unusual.  It is formulated in this way to allow the computation of sparse edge connectivity certificates using the technique of Nagamochi-Ibaraki given in \cref{lem:NI}.  Further, both matrix-vector multiplication queries and quantum cut queries allow one to compute a spanning forest in a polylogarithmic number of queries, as shown later in this section (\cref{lem:simulate_direct}).

\begin{proposition}
\label{prop:direct_cert}
Let $G = (V,E)$ be an $n$-vertex simple graph and $0 < r \le n$ an integer. There is a deterministic algorithm with \direct access to $G$ that computes a sparse $r$-edge connectivity certificate for $G$ with cost $O(r \log^6 n)$.
\end{proposition}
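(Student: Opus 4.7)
The proposal is to apply the Nagamochi--Ibaraki recipe from \cref{lem:NI} essentially verbatim, using the spanning forest primitive of \direct access as the black box that produces each layer of the certificate. The whole content of the proposition is that the query cost fits inside the $O(r \log^6 n)$ budget, which falls out immediately from composing $r$ spanning forest queries.

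Concretely, the algorithm maintains an explicit set $E' \subseteq E$ of already-selected edges, initially empty, and iteratively sets $F_i \gets \spf(G, E')$ and then updates $E' \gets E' \cup F_i$ for $i = 1, 2, \ldots, r$. This exactly matches the hypothesis of \cref{lem:NI}: at step $i$ the returned $F_i$ is a spanning forest of $(V, E \setminus \bigcup_{j < i} F_j)$. The output is $\hat H = (V, E')$ after the $r$ iterations, which by \cref{lem:NI} is a sparse $r$-edge connectivity certificate. Note that the $E'$ we feed into $\spf$ is always explicitly known to the algorithm because it is just the union of the spanning forests we have already constructed, so the spanning forest query primitive is legitimately applicable.

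For the cost analysis, each call to $\spf$ has cost $O(\log^6 n)$ by definition of \direct access, and we make exactly $r$ such calls; no other queries are needed since assembling $\hat H$ and maintaining $E'$ is a pure data-structure operation. This gives total query cost $O(r \log^6 n)$, as required. The algorithm is deterministic provided the spanning forest primitive is (which is the case under matrix-vector multiplication access, and in the quantum case the stated ``with high probability'' guarantee is what propagates through).

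There is essentially no obstacle here: the proposition is a clean packaging of \cref{lem:NI} in terms of the abstract \direct interface, and the only thing to verify is that the spanning forest primitive was deliberately defined to accept an explicit excluded edge set $E'$ precisely so that this iterative construction goes through. If anything, the only care needed is to remark that the $r$ forests $F_1, \ldots, F_r$ are pairwise edge-disjoint by construction, so $|E'| \le r(n-1) < rn$ at termination, matching the edge bound in the definition of a sparse $r$-edge connectivity certificate.
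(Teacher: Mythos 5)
Your proof is correct and is essentially identical to the paper's: both iterate $F_i \gets \spf(G, \bigcup_{j<i} F_j)$ for $i = 1, \ldots, r$, note that the excluded edge set is explicitly known from prior queries so the spanning forest primitive applies, and invoke \cref{lem:NI} to conclude that the union is a sparse $r$-edge connectivity certificate at total cost $O(r \log^6 n)$.
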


\begin{proof}
This follows directly from the construction of an $r$-sparse edge connectivity certificate by Nagamochi and Ibaraki \cite{NI92} given in \cref{lem:NI}. Let $F_0 =\emptyset$.  For $i=1, \ldots r$ one computes $F_i = \spf(G, \bigcup_{j=0}^{i-1} F_j)$ with a query of cost $O(\log^6 n)$.  At stage $i$ the edges in $\bigcup_{j=0}^{i-1} F_j$ are all known to the algorithm from previous queries so this is a valid query.  The total cost is $O(r \log^6 n)$ and a sparse $r$-edge connectivity certificate is given by $\bigcup_{j=1}^{r} F_j$.
\end{proof}

\begin{algorithm}[!htbp]
\caption{\direct query algorithm for edge connectivity}
\label{alg:simple_ec}
 \hspace*{\algorithmicindent} \textbf{Input:} 
 \direct query access to a simple graph $G$ \\
 \hspace*{\algorithmicindent} \textbf{Output:} $\mincut(G)$
\begin{algorithmic}[1]
\State $\delta \gets \mindeg(G)$.
\label{line:min_degree}
\If{$\delta < \sqrt{n}$} 
    \State Compute a sparse 
    $\sqrt{n}$-edge connectivity certificate $F$ of $G$. 
    \State Output $\mincut(F)$.
    \label{line:small_deg}
\Else 
    \State $\best \gets \infty$
    \ForAll{$i=1, \ldots, \ceil{100 \log n}$} \label{line:outer_for}
        \State $R \gets \emptyset$
        \label{line:initR}
        \ForAll{$v \in V$}
            Put $v$ in $R$ at random with probability $1200 \ln(n)/\delta$.
        \EndFor
        \If{$|R| > 2400n \ln(n)/\delta$}     Abort.
            \label{line:abort}
        \EndIf
        \ForAll{$c \in R$} 
            Query $\nbh(c)$.   \label{line:Rnbh}
        \EndFor
        \State $X \gets \emptyset$
        \ForAll{$v \in V \setminus R$}
            Choose a neighbor $u \in R$ of $v$ uniformly at random (if it exists) and add $\{u,v\}$ to $X$. 
            \label{line:sample}
        \EndFor
        \State Let $G'$ be the multigraph formed from $G$ by contracting all edges in $X$.
        \label{line:contract}
        \If{$G'$ has more than $2400n \ln(n)/\delta$ vertices}     Abort.
            \label{line:abort2}
        \EndIf
        \If{$\mincut(G') < \best$} $\best \gets \mincut(G')$.
        \label{line:MN}
        \EndIf
    \EndFor
    \State Output the minimum of $\delta$ and $\best$. 
\EndIf
\end{algorithmic}
\end{algorithm}

\begin{theorem}\label{thrm:quant_algo}
Let $G$ be an $n$-vertex simple graph. 
There is a randomized algorithm with \direct access to $G$ that outputs the 
edge connectivity of $G$ with high probability and has total query cost of $\tOh(\sqrt{n})$.
\end{theorem}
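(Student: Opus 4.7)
The plan is to verify the correctness and query complexity of Algorithm \ref{alg:simple_ec} by analyzing the two branches determined by $\delta = \mindeg(G)$, computed once in line \ref{line:min_degree} at cost $O(\sqrt{n}\log n)$.

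In the case $\delta < \sqrt{n}$, the algorithm computes a sparse $\sqrt{n}$-edge connectivity certificate $F$ of $G$. Since $\mincut(G)\le\delta(G)<\sqrt{n}$, the defining property of such a certificate guarantees $|\cut_F(S)|=|\cut_G(S)|$ for every $S$ achieving $\mincut(G)$, so $\mincut(F)=\mincut(G)$, and this can be computed offline from the explicit $F$. By \cref{prop:direct_cert}, the total cost of this branch is $O(\sqrt{n}\log^6 n) = \tOh(\sqrt{n})$.

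In the case $\delta\ge\sqrt{n}$, I split into two subcases. If $\mincut(G)=\delta$ (a trivial minimum cut), then contraction never decreases the minimum cut, so $\mincut(G')\ge\mincut(G)=\delta$ in every iteration, and the algorithm correctly outputs $\min\{\delta,\best\}=\delta$. If $\mincut(G)$ is attained only by a non-trivial cut, I would invoke \cref{thm:intro:simple star contraction} with $p=1200\ln(n)/\delta$: within a single iteration, the abort in line \ref{line:abort2} fires only with probability $1/n^4$ (item 1 of the theorem), and otherwise $\mincut(G')=\mincut(G)$ with probability at least $2\cdot 3^{-13}$ (item 2). Performing $\ceil{100\log n}$ independent repetitions and taking the minimum over $\best$ boosts the success probability to $1-n^{-\Omega(1)}$, after which $\best=\mincut(G)<\delta$ and the output is correct.

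For the query cost in the high-minimum-degree branch, a single outer iteration breaks down as follows: line \ref{line:abort} together with item 1 of \cref{thm:intro:simple star contraction} ensures $|R|\le 2400 n\ln(n)/\delta = O(\sqrt{n}\log n)$, so the neighborhood queries in line \ref{line:Rnbh} cost $|R|\cdot O(\log n)=O(\sqrt{n}\log^2 n)$; lines \ref{line:sample}--\ref{line:contract} use no queries because every edge incident to $R$ is now known explicitly; and line \ref{line:MN} runs the cut-query minimum-cut algorithm of \cref{thm:MN20} on $G'$, which has $n'=O(\sqrt{n}\log n)$ vertices, at cost $O(n'\log^8 n')$ cut queries to $G'$. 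Multiplying by the $\ceil{100\log n}$ outer iterations gives $\tOh(\sqrt{n})$ total.

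The one nontrivial implementation detail, and what I expect to be the main obstacle in the write-up, is simulating a cut query to the contracted multigraph $G'$ using only \direct access to $G$. Given $S'\subseteq V(G')$, let $S\subseteq V$ be the union of the supervertices lying in $S'$; the supervertex partition is known explicitly from steps \ref{line:Rnbh}--\ref{line:contract}. Because every contracted edge lies inside a single supervertex, one has $|\cut_{G'}(S')|=|E_G(S,V\setminus S)|$, which by \cref{prop:3} costs only $3$ cut queries to $G$. Plugging this $O(1)$-overhead simulation into the algorithm of \cref{thm:MN20} preserves its stated bound up to a constant factor, which completes the proof.
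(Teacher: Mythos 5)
Your proposal is correct and follows essentially the same route as the paper's proof: the same case split on $\delta$ versus $\sqrt{n}$, the same use of \cref{prop:direct_cert} for the low-degree branch, the same invocation of \cref{thm:intro:simple star contraction} with $O(\log n)$ repetitions for the high-degree branch, and the same simulation of cut queries to the contracted multigraph $G'$ by taking unions of supervertices (the paper observes $|\cut_{G'}(T)| = |\cut_G(S)|$ directly, needing one query rather than three, but this is immaterial). No gaps.
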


\begin{proof}
The algorithm is given in 
\cref{alg:simple_ec}.  We go through the 
steps to describe in more detail their implementation and give their cost.

In the first step we compute the minimum degree $\delta$ of the graph, which can be done with cost $O(\sqrt{n}\log n)$.  The rest of the algorithm breaks down into two cases depending on $\delta$.

If $\delta < \sqrt{n}$ then 
$\mincut(G) < \sqrt{n}$ as well.  Thus by
definition, if $F$ is a sparse $\sqrt{n}$-edge connectivity certificate for $G$ we will have $\mincut(G) = \mincut(F)$ and the algorithm will output correctly in line~\ref{line:small_deg}.  
The cost of computing a sparse $\sqrt{n}$-edge connectivity certificate is $O(\sqrt{n} \log^6 n)$ by
\cref{prop:direct_cert}.  Thus overall 
in the low degree case the query cost is 
$O(\sqrt{n} \log^6 n)$.

Let us now consider the case where 
$\delta \ge \sqrt{n}$.  We first 
describe 
the implementation of each step in the for loop beginning on line~\ref{line:outer_for}
and its cost.  
In each iteration of the for loop
we do uniform star contraction with 
$p = 1200 \ln(n)/\delta$ as detailed 
in lines \ref{line:initR}--\ref{line:contract}. By item~1 of 
\cref{thm:intro:simple star contraction} 
the probability that we abort in 
line~\ref{line:abort} or line~\ref{line:abort2}
is $n^{-4}$, thus the probability 
we abort in any iteration of the for loop is $O(n^{-3})$.  We add this to our error 
bound and henceforth assume this does not 
happen.  The cost of 
line~\ref{line:Rnbh} is thus $O(n \log^2(n)/\delta)$ as we are assured in 
this step that 
$|R| \le 2400 \ln(n)/\delta$.  

Once we know all neighbors of all vertices in $R$, for each $v \in V \setminus R$ we can compile a list $B(v)$ of its neighbors in $R$.  We then use these lists to execute line~\ref{line:sample}.  

Next we argue that \direct access to $G$ gives us cut query access to the multigraph $G'$ constructed in line~\ref{line:contract}.  We can compute
the connected components of the set of edges $X$.  Say that this gives the 
partition $\Pcal = \{A_1, \ldots, A_t\}$ of $V$.  Then $\Pcal$ is the vertex set 
of $G'$ and by \cref{thm:intro:simple star contraction} we have 
$|\Pcal| = O(n \log(n)/\delta)$ with 
high probability.  To execute the 
cut query $|\cut_{G'}(T)|$ for $T \subseteq \Pcal$, note that
$|\cut_{G'}(T)| = |\cut_G(S)|$ 
where $S = \bigcup_{A \in T} A$.  Thus 
with \direct access to $G$ we can answer cut queries to $G'$.  Therefore 
to execute line~\ref{line:MN} we can run 
the randomized cut query algorithm of \cite{MukhopadhyayN20} quoted in \cref{thm:MN20} to compute
$\mincut(G')$ with high probability. The cost of this step is $O(n \log^{9}(n)/\delta)$.  Thus over all $O(\log n)$ 
iterations of the for loop the total cost is 
$O(n \log^{10}(n)/\delta)$.  As $\delta \ge \sqrt{n}$ the total cost in this case is 
$O(\sqrt{n} \log^{10}(n))$.

Now let us argue correctness in the high 
degree case.  Even when item~2
of \cref{thm:intro:simple star contraction} does not hold in the 
star contraction, the graph
$G'$ is always a contraction of $G$ and thus 
$\mincut(G') \ge \mincut(G)$.  As we compute 
$\mincut(G')$ correctly with high probability 
in line~\ref{line:MN}, with high probability 
we have $\best \ge \mincut(G)$ and so the algorithm 
will output correctly if $\mincut(G) = \delta$.  
If $\mincut(G) < \delta$ by \cref{thm:intro:simple star contraction} we will have $\mincut(G') = \mincut(G)$ with probability at least $2\cdot 3^{-13}$ in each 
iteration of the for loop.  Thus in this case, as 
we repeat the star contraction $100 \log n$ times
and take the minimum result, with high probability we will have $\best = \mincut(G)$, and the algorithm will 
output correctly.
\end{proof}

Next we show that with both quantum cut queries and 
matrix-vector multiplication queries to the adjacency matrix of $G$ we can simulate \direct access to $G$.

\begin{lemma}
\label{lem:simulate_direct}
Let $G$ be a simple graph.
A quantum algorithm with cut query access to $G$ and a randomized algorithm with matrix-vector multiplication queries to the adjacency matrix of $G$ both have \direct access to $G$.
\end{lemma}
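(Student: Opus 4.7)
The plan is to verify each of the four \direct primitives separately for both models. Three of the four are essentially direct one-query simulations, and only the spanning forest primitive needs a small amount of bookkeeping to handle the excluded edge set $E'$.

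For matrix-vector multiplication queries to the adjacency matrix $A$, the first, second, and fourth primitives each reduce to a single query. Querying $A\one$ returns the entire degree sequence, from which $\mindeg(G)$ is read off classically; this costs one query, well within the $O(\sqrt{n}\log n)$ budget. Querying $A\chi_v$ returns $\nbh(v)$ in one query. A cut query $|\cut(S)|$ equals $\chi_{\bar S}^T A\chi_S$, which is one matrix-vector query followed by a classical inner product. For the spanning forest primitive I would invoke the $\polylog(n)$-query matrix-vector spanning forest algorithm of \cite{AuzaL21}, but applied to the subgraph $(V, E\setminus E')$. Since $E'$ is given explicitly, I can simulate each matrix-vector query to the adjacency matrix $A'$ of $(V, E\setminus E')$ with one query to $A$ plus the classical subtraction $A_{E'} x$, where $A_{E'}$ is the (explicitly known) adjacency matrix of $(V,E')$.

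For quantum cut query access the argument is completely parallel. Primitive 4 is one cut query by definition. For primitive 2 I would cite \cite[Corollary 11]{AuzaL21} (whose underlying idea is implicit in \cite{LeeSZ21}), which computes $\nbh(v)$ with $O(\log n)$ quantum cut queries. For primitive 3 I would invoke the quantum cut query spanning forest algorithm of \cite{LeeSZ21}, applied to $(V, E\setminus E')$; a cut query to this subgraph equals $|\cut_G(S)| - |E'\cap \cut_G(S)|$, where the second term is computable classically from the known $E'$, so each of the algorithm's queries costs one query to $G$. The one primitive with real content is minimum degree: each individual degree is a single cut query $d(v) = |\cut(\{v\})|$, and quantum minimum finding of D\"urr and H\o yer \cite{DurrH96} computes $\min_v d(v)$ with $O(\sqrt{n})$ queries to a value oracle; standard logarithmic amplification makes the success probability high within the $O(\sqrt{n}\log n)$ budget.

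No step is a serious obstacle; the mild subtlety is that the spanning forest primitive is stated for $(V, E\setminus E')$ rather than for $G$ itself, but in both models this is handled by a single classical correction per simulated query, using the fact that the algorithm knows $E'$ explicitly when it issues the query.
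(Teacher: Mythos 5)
Your proof is correct and follows essentially the same route as the paper: the same external results (\cite{AuzaL21}, \cite{LeeSZ21}, \cite{DurrH96}) for each primitive, the same quantum minimum finding for item~1, and the same idea of handling the excluded edges $E'$ by an explicit classical correction per simulated query. The only (cosmetic) difference is that the paper unifies both models through the single primitive $Ax \circ (\one - x)$ and applies the $E'$-correction there, whereas you treat the two models separately and correct at the level of full cut queries in the quantum case; both are valid.
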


\begin{proof}
Let $A$ be the adjacency matrix of $G = (V,E)$.  For two vectors $x,y \in \R^n$ let $x \circ y \in \R^n$ be their entrywise product, and 
let $\one$ be the $n$-dimensional all one vector.
Lee, Santha, and Zhang \cite{LeeSZ21} use a generalization of the Bernstein-Vazirani algorithm \cite{BV97} 
to show that for any $x \in \{0,1\}^n$ a quantum algorithm can with certainty compute $Ax \circ (\one-x)$ with $O(\log n)$ cut queries.
This power was stated explicitly in \cite[Corollary 11]{AuzaL21}.  One can also 
clearly compute $Ax \circ (\one -x)$ with one matrix-vector 
query to the adjacency matrix.  We postpone item~1 for now and first discuss 
items~2 and~3 together for the two models
in terms of $Ax \circ (1-x)$ queries.

For item~2 note that we can learn the neighborhood 
of $v$ via the query $Ae_v \circ (\one-\chi_{\{v\}})$.

For item~3, Auza and Lee \cite[Theorem 8]{AuzaL21},
in simplifying and quantitatively improving the results of \cite{LeeSZ21}, show that one can compute a spanning forest of $G$ with high probability with 
$O(\log^5 n)$ many $Ax \circ (\one-x)$ queries.  
Say that one explicitly knows a set of edges $E' \subseteq E$.
Let $B$ be the adjacency matrix of the graph $(V,E')$ and $A'$ be the adjacency matrix of the 
graph $G' = (V, E\setminus E')$.  Thus $A' = A-B$ 
and we can compute 
$A' x \circ (\one -x) = A x \circ (\one -x) - Bx \circ (\one -x)$ from $A x \circ (\one-x)$ as $Bx \circ (\one-x)$ is known 
explicitly.  This allows us to also compute 
spanning forests of $G'$ with high probability 
in $O(\log^5 n)$ many $A x \circ (\one -x)$ queries 
as well.

For item~4 it is clear that a quantum cut query 
algorithm can compute a classical cut query. Since $(\one-x)^T A x = (\one-x)^T (A x \circ (\one-x))$, one can also compute a cut query with one $A x \circ (\one-x)$ query.

Item~1 is where the arguments diverge.  It is not 
obvious how to compute the minimum degree with 
$o(n)$ queries of the form $Ax \circ (\one-x)$.  
However, the minimum degree can be computed with 
one matrix-vector multiplication query as all degrees are given by $A \one$.

For the quantum case, note that the degree of one 
vertex can be computed with a single classical cut 
query.  Thus we can find the minimum degree with 
high probability in $O(\sqrt{n} \log n)$ quantum 
cut queries using the quantum minimum finding routine
of D\"{u}rr and H{\o}yer \cite[Theorem 1]{DurrH96}.
\end{proof}

\begin{remark}
\label{rem:MDCP}
As can be seen from the proof of \cref{lem:simulate_direct}, to simulate \direct access it suffices to be able to
\begin{enumerate}
	\item Compute $\mindeg(G)$ with high probability in cost $O(\sqrt{n} \log(n))$.
	\item Compute $Ax \circ (\one -x)$ with high probability in cost $O(\log n)$ for any $x \in \{0,1\}$, where $A$ is the adjacency matrix of $G$.  We call this a \emph{cut product}.
\end{enumerate}
This is a smaller set of primitives that can still be used in \cref{thrm:quant_algo} to give a $\tOh(\sqrt{n})$ cost algorithm for edge connectivity, and is the origin of the name 
Minimum Degree and Cut Product.  We chose to define 
\direct access with a more verbose but less mysterious set of primitives for greater clarity.
\end{remark}

\subsection{One-pass semi-streaming algorithms} \label{sec:streaming}
Next, we consider applications of star contraction for edge connectivity computation in various settings of the streaming model of computation. Specifically, we consider the following settings.
\begin{enumerate}
    \item \textit{Explicit vertex arrivals}, in which vertices appear in an arbitrary order, along with all edges incident to previously seen vertices.
    \item  \textit{Complete vertex arrivals}, in which vertices appear in an arbitrary order, along with all incident edges.
    \item \textit{Random vertex arrivals}, in which vertices appear in a random order, along with all edges incident to previously seen vertices.
\end{enumerate}

For a related model of streaming computation where the edges arrive in arbitrary order, \cite{Zelke11} showed a $\Omega(n^2)$ space lower bound against any one-pass randomized algorithm that correctly computes edge connectivity. We observe that this lower bound construction also gives a lower bound of the same strength for the explicit vertex arrival setting.\footnote{This does not follow as a black box reduction. Rather, the edge stream admitted by the lower bound construction of \cite{Zelke11} is actually an explicit vertex arrival stream. Hence the same argument provides an $\Omega(n^2)$ space lower bound.}
We include a proof sketch in \cref{appensec:streaming}.

\begin{restatable}{observation}{streamreduction}[Follows from \cite{Zelke11}]
\label{obs:lb_vertex_arrival}
Any one-pass streaming algorithm computing the edge connectivity of a simple graph in the explicit vertex arrival setting requires $\Omega(n^2)$ memory.
\end{restatable}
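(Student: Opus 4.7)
The plan is to revisit Zelke's $\Omega(n^2)$ lower bound \cite{Zelke11} for one-pass randomized streaming algorithms computing edge connectivity in the edge-arrival model, and observe that the edge stream produced by his hard instance is already (up to reordering) a valid explicit vertex-arrival stream. Hence no new reduction is needed: the same communication complexity argument that proves the edge-arrival bound carries over verbatim to the explicit vertex-arrival setting.

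Concretely, I would proceed as follows. First, recall the high-level structure of Zelke's construction: it is a two-party communication reduction in which Alice encodes her input into the edges of a carefully designed (essentially bipartite) subgraph on one half of the vertices, and Bob then appends a constant-size gadget of extra vertices and edges whose role is to select, via the value of the global edge connectivity, a single coordinate of Alice's input. Alice streams her portion first and Bob streams his portion second; the communication cost lower bound on INDEX (or a related problem) transfers to a space lower bound of $\Omega(n^2)$ on any one-pass streaming algorithm computing $\mincut$. Second, I would verify that Alice's portion of the stream can be re-emitted in explicit vertex-arrival order: order the vertices of the bipartite part so that one side arrives first with no incident edges (which is vacuously legal, since no earlier vertices exist), and then emit the vertices of the other side one at a time, each equipped with its edges to the already-arrived side. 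Third, I would check that Bob's gadget likewise admits an explicit vertex-arrival ordering: his new vertices arrive after all of Alice's vertices, each carrying only edges to vertices that have already appeared. This covers every edge of the instance exactly once at the moment of an arriving vertex, with all other endpoints already present, which is the definition of an explicit vertex-arrival stream.

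From this point the lower bound is immediate: any one-pass algorithm in the explicit vertex-arrival model can be simulated, with identical space, by a one-pass algorithm in the edge-arrival model on the same sequence of edges, so Zelke's space lower bound applies directly.

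The main obstacle is the second step: one must check that Zelke's hard distribution really is compatible with the explicit vertex-arrival ordering constraint, i.e.\ that no edge in Alice's portion or Bob's gadget is between two vertices that are not yet both present at the appropriate arrival moment. Since the hard instance is built on a bipartite backbone plus a small gadget attached to it, this compatibility is essentially a bookkeeping verification rather than a new combinatorial argument; nevertheless, this is the one place where we must look inside the construction of \cite{Zelke11} rather than quoting it as a black box. A proof sketch carrying out this verification is deferred to \Cref{appensec:streaming}.
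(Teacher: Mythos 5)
Your proposal is correct and follows essentially the same route as the paper: observe that Zelke's INDEX reduction produces an edge stream that can be reordered as an explicit vertex-arrival stream (Alice's vertices first, each with edges to previously seen vertices; Bob's gadget vertices afterwards, likewise), so the $\Omega(n^2)$ bound transfers directly. Note only that your recollection of Zelke's hard instance is slightly off --- Alice's part is an arbitrary graph on $n$ vertices (not a bipartite backbone) and Bob's gadget consists of two cliques of size $3n$ plus one extra vertex (not a constant-size gadget) --- but this does not affect the argument, since any graph under any vertex ordering admits an explicit vertex-arrival stream.
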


Next, we employ uniform star contraction to prove that the random vertex arrival setting allows to circumvent the aforementioned lower bound.
In contrast, we discuss in \cref{remark:2-out-stream} that it is not clear how to use the related 2-out contraction technique for this purpose.
Formally, we prove the following.

\begin{theorem}
\label{thrm:random_vertex_arrival}
There is a one-pass streaming algorithm, using $\tOh(n)$ memory, that given a simple graph $G=(V,E)$ in the random vertex arrival setting, computes the edge connectivity of $G$ with high probability.
\end{theorem}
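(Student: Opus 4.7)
The plan is to simulate uniform star contraction (\Cref{thm:intro:simple star contraction}) in a single pass, exploiting the following free gift of the random vertex-arrival model: the first $k$ vertices to arrive form a uniformly random subset of $V$ of size $k$. Taking $k = \lceil 1200\, n \ln n / \delta \rceil$, the first $k$ arrivals play the role of the center set $R$ of star contraction with $p = 1200 \ln n / \delta$; the switch from with-replacement to without-replacement sampling is harmless for the Chernoff-plus-Markov analysis underlying \Cref{thm:intro:simple star contraction}. Since $\delta = \delta(G)$ is only known once the stream ends (the running minimum degree is trivial to maintain in $O(n \log n)$ bits), we launch $O(\log n)$ sub-algorithms in parallel, one per dyadic guess $\delta_j = 2^j$, and at the end select the output of the sub-algorithm whose guess matches the observed $\delta$ up to a factor of $2$.

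A single sub-algorithm for guess $\delta_j$ processes the stream in two phases. In Phase~1 the first $|R| = O(n \log n / \delta_j)$ arrivals are declared centers; we store them and, as edges among them are revealed, maintain $\delta_j$ edge-disjoint spanning forests $F_1,\dots,F_{\delta_j}$ of $G[R]$ by an online Nagamochi-Ibaraki scheme: every newly revealed edge is inserted into the lowest-indexed $F_i$ in which it does not close a cycle. In Phase~2, whenever a non-center $v$ arrives its edges to all previously-arrived vertices (in particular to every vertex of $R$) are revealed; we pick a uniformly random edge from $v$ to $R$ as the star edge and set the supervertex $\sigma(v)$ to be the chosen center. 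Every other edge $\{v,u\}$ incident to $v$ is either discarded (if $\sigma(u)=\sigma(v)$) or, if $\sigma(u)\neq\sigma(v)$, treated as an edge of the contracted multigraph $G'$ between supervertices $\sigma(v)$ and $\sigma(u)$; here $\sigma(u)$ is already determined because either $u \in R$ (so $\sigma(u) = u$) or $u$ is an earlier Phase~2 vertex whose star edge is fixed. These edges are fed into the same greedy forest maintenance, now on the supervertex set; a non-center with no neighbor in $R$ is placed in a supervertex of its own, an event affecting no vertex except with probability at most $n^{-4}$ by item~1 of \Cref{thm:intro:simple star contraction}. The working memory is $O(|R|\, \delta_j) = O(n \log n)$ per sub-algorithm, giving $\tOh(n)$ in total.

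At termination $\bigcup_i F_i$ is a sparse $\delta_j$-edge connectivity certificate of $G'_j$ by \Cref{lem:NI}, so $\min\{\delta_j,\mincut(G'_j)\}$ can be read off exactly. By item~2 of \Cref{thm:intro:simple star contraction}, for the correct guess we have $\mincut(G'_j) = \mincut(G)$ with constant probability, and $\mincut(G'_j) \ge \mincut(G)$ always; outputting the minimum, over all sub-algorithms and over $O(\log n)$ parallel repetitions with fresh randomness for the star-edge sampling, of $\min(\delta,\mincut(\text{certificate}))$ then yields the correct edge connectivity with high probability, provided the repetitions can be made sufficiently independent.

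The last proviso is the main obstacle. The single random permutation of arrival times forces every repetition of the above scheme to reuse the same $R$ = \{first $|R|$ arrivals\}, so fresh randomness in the star-edge step alone can only amplify success conditional on $R$ being $(2/3,8)$-good---an event of probability only about $2/3$ at the Markov step of \Cref{lem:good_for_contracting}. To reach high probability I would extract $\Theta(\log n)$ essentially independent candidate center sets $R^{(t)}$ from the same permutation, for instance by taking $R^{(t)}$ to be the $t$-th consecutive block of $|R|$ arrivals, and replace the direct Nagamochi-Ibaraki maintenance by linear graph sketches in the style of Ahn-Guha-McGregor so that Phase~1 of instance $t$ can be deferred, $v$'s star edge for instance $t$ being maintained by reservoir sampling as the centers of $R^{(t)}$ arrive after $v$, and a post-stream linear contraction of the sketches recovering $\delta_j$ spanning forests of each contracted graph $G'_t$. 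Making this combination fit in $\tOh(n)$ memory per instance uniformly in $\delta_j$ is the heart of the argument.
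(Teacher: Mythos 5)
Your overall architecture coincides with the paper's: dyadic guesses for $\delta(G)$ run in parallel, the prefix of the random arrival order serving as the center set $R$, on-the-fly star contraction with an online Nagamochi--Ibaraki certificate maintained on the supervertices, and $\Theta(\log n)$ repetitions to amplify the constant success probability of star contraction. You also correctly isolate the real difficulty: the sum property in \cref{lem:good_for_contracting} is only guaranteed with probability about $3/4$ (Markov), this is a property of $R$ alone, and a single permutation seems to force every repetition to reuse the same $R$, so re-randomizing only the star edges cannot amplify to high probability.

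However, you do not close this gap: your proposed fix (disjoint consecutive blocks $R^{(t)}$ as center sets, reservoir sampling of star edges for vertices that arrive before their centers, AGM-style linear sketches with post-stream recovery of $\delta_j$ edge-disjoint spanning forests) is left explicitly unfinished, and it is a genuinely harder route --- once $\sigma_t(v)$ is not determined at $v$'s arrival time, the entire ``contract and insert into forests on the fly'' mechanism breaks and you must argue that sketch-based recovery of $\delta_j$ forests of each $G'_t$ fits in $\tOh(n)$ space, which you concede is ``the heart of the argument.'' The paper resolves the independence issue much more simply via \cref{lem:sampling}: one can extract $r$ \emph{independent} $B([n],p)$-distributed center sets $R_1,\dots,R_r$ whose union $B$ is entirely contained in the prefix of the stream of length $|B|$. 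Each repetition then leaves every vertex of $B$ uncontracted (costing only $|B| \le 2rpn = \tOh(n/d)$ extra supervertices, which is affordable), and every vertex arriving after the prefix still sees all of $R_i$ among its previously-arrived neighbors, so the uniform star edge can be chosen immediately and the certificate maintained online exactly as in your single-repetition scheme. This is the missing idea; without it (or a completed version of your sketching workaround, including a verification that fixed-size prefix blocks rather than binomial samples do not break \cref{lem:good_for_contracting}), your argument only achieves constant success probability.
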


\begin{proof}
We run in parallel $\ceil{\log (n)}$ independent instances of an algorithm, each of which uses a different estimate $d = 2^\ell$ for the minimum degree $\delta(G)$, with $\ell = 0,1,2,\dots,\ceil{\log(n)}-1$.
Each algorithm aborts if it uses more than $\tOh(n)$ memory, and we will show that if $d = 2^\ell$ is such that $d \leq \delta(G) < 2 d$ then with high probability the corresponding algorithm will not abort and will correctly output $\mincut(G)$.
Since we know $\delta(G)$ exactly by the end of the stream (we can keep track of all degrees with $\tOh(n)$ memory), we can filter out the correct outcome at the end of the algorithm.

In the remainder we describe the algorithm for the value $d$ satisfying $d \leq \delta(G) < 2 d$.
It will be clear that the algorithms for different estimates are independent, and hence can be run in parallel.
In a single pass, the algorithm will perform uniform star contraction on $G$.
Simultaneously, it will construct a sparse $2d$-edge connectivity certificate on the contracted graph and compute the edge connectivity of this certificate.
Finally, in order to boost the constant success probability of uniform star contraction, we will run $r \in \Theta(\log n)$ parallel repetitions of this.
This last step will require some care, as every instance uses the same randomness from the input stream, and this needs to be done appropriately
to ensure independence.

The key idea to simulate a single implementation of star contraction in the random vertex arrival model is the following: because the vertices arrive in a random order, we can select the first $\Theta(n \log(n)/d)$ vertices as the set of centers $R$, and put all other vertices in $V \setminus R$. Because each vertex $v \in V\setminus R$ comes with all edges incident to $R$, we can for each such vertex $v$ sample a uniform and independent neighbor in $R$ in a single pass, thereby performing uniform star contraction.
Dealing with $r$ parallel repetitions requires a slightly more complicated approach, as we need to ensure independence and hence cannot reuse the same set of centers.
Nonetheless, we can still assume that all sampled centers come at the start of the stream.
This is captured by the following lemma, whose proof we postpone to \cref{app:parallel-sampling}.

\begin{restatable}{lemma}{sampling}
\label{lem:sampling}
There is a sampling procedure that operates within $\tOh(n)$ space and, given a stream $S$ of $n$ vertices, outputs
$Y_{1},\ldots,Y_{r}\subseteq [n]$ after reading the first $|\bigcup_{i=1}^{r}Y_{i}|$ vertices.
The distribution $D$ on $(Y_{1},\ldots,Y_{r})$ defined by the procedure admits the following property. For every
$R_{1},\ldots,R_{r}\subseteq [n]$:
\[  \Pr_{S\sim S_{n}, (Y_{1},\dots,Y_{r})\sim D}[Y_{1}=R_{1}, \dots, Y_{r}=R_{r}] = \prod_{i=1}^{r}\Pr_{X_{i}\sim B([n],p)}[X_{i} = R_{i}]
\enspace .\]
\end{restatable}

In each independent repetition of star contraction, we want to sample a subset $R$ by choosing every vertex with probability
$p=\frac{1200 \ln n}{d}$ under the additional condition that $|R|\leq 2pn$. 
This is done by applying \cref{lem:sampling} to obtain the subsets $R_{1},\dots,R_{r}$, where $R_{i}$ is the subset that should be used
in the $\ith$ repetition.
Let $B=\bigcup_{i=1}^{r}R_{i}$.
By the properties of the sampling procedure, the subsets are generated after reading the first $|B|$ vertices
from the stream.
Then, the $\ith$ repetition uses $R_{i}$ as the set of centers and
performs uniform star contraction for the \emph{subsequent} vertices in the input stream (i.e., those that
follow the first $|B|$ vertices).
For each of these, we choose a uniformly random edge towards $R_i$ and contract it in $G'_{i}$. All vertices in $B$, and
the subsequent vertices in the input stream with no edge towards $R_{i}$, are simply kept
in the contracted graph $G'_{i}$. We will argue that the number of vertices in $G'_{i}$ is still
$\tOh(n/d)$ with high probability, even though (as opposed to the original uniform
star contraction) we never contract any of the vertices of $B$ in $G'_{i}$.
In parallel, we build a $2d$-edge connectivity certificate $F^i_1 \cup \cdots \cup F^i_{2d}$ of $G_i'$, where the $F^i_k$'s are
edge disjoint spanning forests of $G_i'$ (as in the Nagamochi-Ibaraki certificate, \Cref{lem:NI}).
As an invariant throughout the stream, we will have that $F^i_1 \cup \cdots \cup F^i_{2d}$ is a $2d$-edge connectivity certificate of
the (potentially contracted) subgraph $G_i'$ seen so far.
Since this subgraph will only have $\tOh(n/d)$ vertices throughout the stream, the certificate will only contain $\tOh(n)$ edges.

We summarize the $\ith$ parallel repetition in full detail. The set $R_{i}$ can be accessed after reading the first
$|B|$ vertices from the input stream, and it is computed together with all sets $R_{1},\dots,R_{r}$
using \cref{lem:sampling} globally, outside of the $\ith$ repetition itself.
Recall that we abort each repetition as soon as its memory usage exceeds $\tOh(n)$,
in which case we set the $\ith$ outcome to be $\lambda_i = \infty$.
\begin{enumerate}
\item
Initialize $F^i_1,\dots,F^i_{2d}$ as empty forests and set $R_i = \emptyset$.
Initialize a mapping $r_i:V\to V$ to be the identity (through the stream this will keep track of the contracted vertices).
\item
For the $\jth$ vertex arrival $v$ with edges $e_1,...,e_\ell$ between $v$ to previously seen vertices, the following is done:
 \begin{enumerate}
    \item \textbf{Uniform star contraction:}
    If $j \leq |B|$, do nothing.
    If $j > |B|$, pick a uniformly random center $w$ from the center neighborhood $N_{R_i}(v)$ (if it exists) and set $r_i(v) = w$. This amounts to contracting the edge $\{v,w\}$. For each $e_t$ among $e_1,\dots,e_\ell$, except for the contracted edge which is discarded, change the endpoints of $e_t=\{v,u\}$ to be $\{r_i(v),r_i(u)\}$, discard any self loops. At the end of the stream, the vertices with the same $r_i(\cdot)$ values constitute a vertex in $G_i'$.
    \item \textbf{Maintaining of $2d$-edge connectivity certificate:}
    For each (relabelled) incident edge $e_{t}$ among $e_1,\dots,e_{\ell}$, add $e_t$ to $F^i_k$ where $k$ is the minimal index for which $F^i_k\cup \set{e_t}$ contains no cycles. If there is no such $k$, discard the edge.
 \end{enumerate}
\item
If the repetition did not abort by the end of the stream, we compute the edge connectivity of the connectivity certificate $\lambda(G'_i)$ and set $\lambda_i = \lambda(G'_i)$. Note that $\lambda_i \geq \lambda(G)$.
\end{enumerate}
Finally, we combine the $r$ parallel repetitions by outputting $\min\{\delta(G),\lambda_1,\dots,\lambda_r\}$.

\paragraph{Analysis.}
As mentioned before, it suffices to prove correctness and a $\tOh(n)$ memory bound only for the algorithm that has an
estimate $d$ such that $d \leq \delta(G) < 2d$. By \cref{lem:sampling}, a run of the whole algorithm for such an estimate
is equivalent to $r\in\Theta(\log n)$ independent repetitions of a variant of the uniform star contraction with $p=\frac{1200 \ln n}{d}$,
except that vertices from $B$ remain in the contracted graph.

A single repetition can be analysed as follows. Sample each vertex with probability $p=\frac{1200 \ln n}{d}$
to obtain the set of centers $R$.
Then, for some set of vertices $B$ such that $R\subseteq B$ and $|B|\leq 2rpn$, proceed as follows.
For any vertex in $[n]\setminus B$, choose a uniform random edge towards $R$ (if it exists) and contract it.
This results in a contracted graph $G'$.
By item~2 of \cref{prop:sizeR} and $d\leq d(v)$,
the probability that a vertex in $[n]\setminus B$ has no neighbor in $R$ is at most
$n^{-6}$. Hence, by a union bound, $|G'| = |B|$ except with probability at most $n^{-5}$ over the choice of $R$.
Next, we want to lower bound the probability that $\lambda(G)=\lambda(G')$,
assuming that $\lambda(G')<\delta(G)$.
Let $C$ be a non-trivial minimum cut of $G$. 
By \cref{lem:good_for_contracting}, $H = (V, \cutd(V \setminus R)$ is $(2/3,8)$-good for contracting with respect to $C$ with 
probability at least $2/3$ over the choice of $R$.
We perform a random 1-out contraction on a subgraph of $H$,
so the probability of not contracting an edge of $C$ is at least as large as when performing a random
1-out contraction on the whole $H$, which is at least $3^{-12}$ by \cref{cor:sample_cor} over the choice of $R$.
Thus $\lambda(G')=\lambda(G)$ with probability at least $2/3\cdot 3^{-12}$ over the choice of $R$.

By union bound, the overall probability of error is at most the sum of probabilities that some repetition aborts due to
using too much memory and the probability that $\lambda(G_{i}')>\lambda(G)$ holds for every $i=1,\ldots,r$,
both over the choice of $R_{1},\ldots,R_{r}$.
By item~1 of \cref{prop:sizeR},  $|R_{i}| \leq 2rpn$ except with probability $n^{-400}$,
thus $|B|\leq 2rpn$ except with probability $r\cdot n^{-400}$.
For each $i=1,\ldots,r$ we have  $|G_{i}'| = |B|$ except with probability at most $n^{-5}$.
Overall, no repetition aborts except with probability $r\cdot n^{-400}+r\cdot n^{-5}=O(n^{-4})$.
For each $i=1,\ldots,r$ independently, we have $\lambda(G_{i}')=\lambda(G)$ with probability at least $2/3\cdot 3^{-12}$ over
the choice of $R_{i}$. Thus, at least one repetition correctly determines $\lambda(G)$ except
with probability at most $(1-2/3\cdot 3^{-12})^{\Theta(\log n)}=n^{-\Omega(1)}$.
\end{proof}

We can prove a similar result for the \textit{complete vertex arrival} setting, in which the vertices arrive in an arbitrary order with all edges incident on them. 
The proof follows along the same lines, but
is simpler in this case because we can randomly sample sets 
of centers offline before the stream begins. 
As a vertex arrives with all its edges, we 
can immediately randomly choose an edge 
incident on the set of centers to implement 
star contraction.  In parallel we also 
construct a sparse edge connectivity certificate.  Due to the similarities 
with \cref{thrm:random_vertex_arrival}, 
we move the proof to \cref{app:vertex-arrival}.

\begin{restatable}{theorem}{complete}
\label{thrm:complete_vertex_arrival}
There is a one-pass streaming algorithm, using $\tOh(n)$ memory, that given a simple graph $G=(V,E)$ in the complete vertex arrival setting, computes the edge connectivity of $G$ with high probability.
\end{restatable}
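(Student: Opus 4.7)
The plan is to adapt the proof of \cref{thrm:random_vertex_arrival} to this simpler setting, exploiting the key fact that each vertex arrives with \emph{all} its incident edges. First, I would run in parallel $O(\log n)$ instances of the algorithm, one for each power-of-two guess $d = 2^\ell$ for $\delta(G)$, aborting any instance whose memory usage exceeds $\tOh(n)$. Since we can maintain all vertex degrees and in particular $\delta(G)$ exactly within $\tOh(n)$ memory, at the end we read off the output of the unique instance for which $d \le \delta(G) < 2d$. In what follows I describe the algorithm for this ``correct'' estimate $d$.

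Unlike the random vertex arrival case, we do \emph{not} need \cref{lem:sampling} or to place centers at the start of the stream: because each vertex comes with all its incident edges, the choice of which vertices are centers can be made offline, independently for each of $r \in \Theta(\log n)$ parallel repetitions. Specifically, before the stream begins I would independently sample $R_1, \dots, R_r \subseteq V$ by placing each vertex into $R_i$ with probability $p = 1200 \ln(n)/d$; by item~1 of \cref{prop:sizeR}, $|R_i| = O(n \log(n)/d)$ for every $i$ with high probability, and any instance violating this bound is aborted.

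For each repetition $i$, I maintain a relabelling map $r_i : V \to V$ (initially the identity) together with edge-disjoint forests $F^i_1, \dots, F^i_{2d}$. When a vertex $v$ arrives with all its incident edges: if $v \in R_i$ it stays as a supervertex; if $v \notin R_i$ and $v$ has at least one neighbor in $R_i$, I sample one such neighbor $w$ uniformly at random and set $r_i(v) = w$, effectively contracting $\{v,w\}$ (note that $R_i$ is known in full before the stream starts, so this sampling is trivial). Each incident edge $\{v,u\}$ is then relabelled to $\{r_i(v), r_i(u)\}$ using the already-defined values of $r_i$ (neighbors of $v$ that arrive later will be relabelled at their own arrival time), and passed to the Nagamochi--Ibaraki procedure (\cref{lem:NI}): the edge is added to $F^i_k$ for the smallest $k$ such that this creates no cycle, and discarded otherwise. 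At the end of the stream, each $F^i_1 \cup \cdots \cup F^i_{2d}$ is a $2d$-edge connectivity certificate of the contracted multigraph $G'_i$, and I output $\min\{\delta(G), \lambda(G'_1), \dots, \lambda(G'_r)\}$.

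Correctness and the memory bound follow from essentially the same analysis as in \cref{thrm:random_vertex_arrival}, but cleaner. By \cref{thm:intro:simple star contraction}, each $G'_i$ has $O(n \log(n)/d) \le \tOh(n/\delta(G))$ supervertices with high probability, so each certificate has $\tOh(n)$ edges and total memory is $\tOh(n)$. Since $d \le \delta(G)$, we have $\mincut(G) \le \delta(G) \le 2d$, so the $2d$-edge connectivity certificate preserves all minimum cuts of $G'_i$. By item~2 of \cref{thm:intro:simple star contraction} each repetition preserves $\mincut(G)$ with constant probability, so amplification over $r = \Theta(\log n)$ independent repetitions succeeds with high probability. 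The only mild obstacle is the bookkeeping around $r_i$ and ensuring the certificate construction is consistent across the evolving relabellings, but since each edge sees its final label the first time both endpoints have arrived, the incremental construction of the Nagamochi--Ibaraki forests goes through unchanged; and crucially the independence issue that required \cref{lem:sampling} in the random vertex arrival setting is absent here because the centers are chosen offline.
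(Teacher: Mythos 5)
Your proposal is correct and follows essentially the same route as the paper's proof: offline sampling of the center sets $R_1,\dots,R_r$ (exploiting that each vertex arrives with all incident edges, so \cref{lem:sampling} is unnecessary), on-the-fly contraction via the relabelling map, parallel Nagamochi--Ibaraki certificate maintenance, and the same analysis via \cref{prop:sizeR}, \cref{lem:good_for_contracting}, and \cref{cor:sample_cor}. No substantive differences.
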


\begin{remark} \label{remark:2-out-stream}
While we cannot rule it out, it is not obvious how to prove either \cref{thrm:random_vertex_arrival} or \cref{thrm:complete_vertex_arrival} using 2-out contraction \cite{GNT20}.  We crucially use two 
features of star contraction in the proofs.  The first is that when we see a vertex in the stream we can immediately choose its edges to be contracted, which allows us to contract the graph 
``on-the-go.''  In the random vertex arrival model 
we cannot randomly choose 2 edges incident to a vertex when the vertex arrives, as at that point we have not seen all of its neighbors.  To naively implement 2-out this means contracting edges 
has to be delayed until the end of the stream which prohibits constructing the sparse edge connectivity certificate with $\tOh(n)$ memory.

The difficulty of using 2-out contraction to obtain the complete vertex arrival result is more subtle.  Here we do see all incident edges to a vertex when the vertex arrives, thus we can immediately 
select two random incident edges to contract.  However, in this case it is not clear that the (partly) contracted graph throughout the stream has no more than $\tOh(n/\delta(G))$ components.  This 
property is needed to ensure that we can keep a sparse $\delta(G)$-edge connectivity certificate of the contracted graph throughout the stream using a memory of size only $\tOh(n)$.  
For star contraction, this issue is resolved by a second key feature of star contraction that we make use of in our proofs. At any point in the stream, the contracted graph on the vertices seen so far has size at most the number of 
centers, and thus its size can be bounded by $\tOh(n/\delta(G))$ in the branch of the computation with the correct degree estimate.
\end{remark}

\subsection{Linear time sequential algorithm for slightly dense graphs} \label{sec:sequential}
In this section, we give another illustration on how to use the star contraction algorithm.
Similar to the approach in the state-of-the-art $O(m+n\log^2 n)$ algorithm \cite{GNT20} for computing the edge connectivity of a simple graph, we use it to obtain a simple $O(m+n \,\mathrm{polylog}(n))$ sequential algorithm. Formally, we prove the following theorem. 

\begin{theorem}\label{thrm:sequential}
There is a randomized algorithm with running time $O(m+n\log^3 n)$ that computes the edge connectivity of a simple graph $G=(V,E)$ with success probability at least $2/3$.
\end{theorem}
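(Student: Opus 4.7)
My plan is to emulate the query algorithm from \cref{alg:simple_ec} in the sequential RAM model, replacing the cut-query primitives by explicit graph operations. In a single pass over the adjacency lists, in $O(m+n)$ time I compute every vertex degree and in particular $\delta := \delta(G)$. Because $\mincut(G) \le \delta$ always, it suffices to return $\min\{\delta, c\}$ where $c$ is a value that, with constant probability, equals $\mincut(G)$ whenever the graph has a non-trivial minimum cut.

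Next I run uniform star contraction on $G$ with $p = 1200 \ln(n)/\delta$: I flip independent $p$-biased coins to form $R$; for every $v \in V \setminus R$ I scan its adjacency list to enumerate the neighbors in $R$ and pick one uniformly at random; and I contract the selected stars via a union--find structure. This stage runs in $O(m \alpha(n) + n)$ time, and by \cref{thm:intro:simple star contraction} the resulting multigraph $G'$ has at most $O(n\log(n)/\delta)$ supervertices with probability $1 - 1/n^4$, while $\mincut(G') = \mincut(G)$ with probability at least $2\cdot 3^{-13}$ whenever $\mincut(G) < \delta$. Because $G'$ may still carry up to $m$ parallel edges, I immediately pass it through the Nagamochi--Ibaraki sparsification of \cref{lem:NI} with parameter $\delta$: this takes $O(m)$ time and produces a certificate $\hat G'$ with at most $\delta \cdot |V(G')| = O(n \log n)$ edges satisfying $\mincut(\hat G') = \min\{\delta, \mincut(G')\}$.

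Finally, I invoke Karger's near-linear-time randomized minimum-cut algorithm on $\hat G'$, which has $O(n\log(n)/\delta)$ vertices and $O(n \log n)$ edges, computing $\mincut(\hat G')$ in $O(n \log^3 n)$ time, and I output $\min\{\delta, \mincut(\hat G')\}$. The main obstacle is balancing the polylog factors to meet the stated $O(m + n \log^3 n)$ bound: the dominant cost is this last min-cut invocation, and one must check that it applies to a weighted multigraph of $\tilde O(n)$ size without incurring an extra $\log$ from the edge multiplicities. Because star contraction already succeeds with constant probability, no outer boosting is required for the $2/3$ guarantee; a constant number of independent repetitions suffice, which does not inflate the $\log^3 n$ factor. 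Correctness is immediate: the output is an upper bound on $\mincut(G)$ (since contraction only increases edge connectivity and NI sparsification preserves cuts up to $\delta$), and it matches $\mincut(G)$ with constant probability by combining the two guarantees of \cref{thm:intro:simple star contraction} with the high-probability guarantee of Karger's min-cut routine.
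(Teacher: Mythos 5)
Your proposal follows essentially the same route as the paper: compute $\delta(G)$ in $O(m)$ time, run uniform star contraction (\cref{thm:intro:simple star contraction}), sparsify the contracted multigraph with a Nagamochi--Ibaraki $\delta(G)$-edge connectivity certificate of $O(n\log n)$ edges, run an off-the-shelf weighted min-cut algorithm on the certificate, output $\min\{\delta(G),\cdot\}$, and repeat a constant number of times to boost the $2\cdot 3^{-13}$ success probability to $2/3$. The one quantitative slip is exactly the ``obstacle'' you flag: Karger's near-linear-time algorithm runs in $O(m\log^3 n)$ time, so on a certificate with $O(n\log n)$ edges it costs $O(n\log^4 n)$, one $\log$ over budget. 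The paper avoids this by invoking the $O(m\log^2 n)$ algorithm of Gawrychowski, Mozes, and Weimann (\cref{thrm:GMW_min_cut}), which gives $O(n\log^3 n)$ on the certificate; substituting that black box makes your argument go through verbatim.
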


The proof of this theorem is achieved via an application of uniform star contraction as in \cref{thm:intro:simple star contraction}.
The naive implementation requires $O(m)$ time by sampling vertices independently into $R$ in $O(n)$ time, and then choosing a random neighbor in $R$ for each $v\not\in R$ in $O(\sum\limits_{v\in V} d(v))=O(m)$ time.
We combine this with the connectivity certificate algorithm from \cref{lem:NI}, whose implementation in the sequential setting also requires $O(m)$ time \cite{NI92}, and the following recent result.

\begin{theorem}[\cite{GawrychowskiMW20}]\label{thrm:GMW_min_cut}
There is an algorithm that with high probability computes the weight of a minimum cut in a weighted graph $G=(V,E,w)$ with $n$ vertices and $m$ edges in time $O(m\log ^2 n)$.
\end{theorem}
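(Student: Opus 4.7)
The plan is to combine uniform star contraction (\cref{thm:intro:simple star contraction}) with the Nagamochi--Ibaraki sparse edge connectivity certificate (\cref{lem:NI}) and the weighted minimum cut algorithm of \cref{thrm:GMW_min_cut}. The idea mirrors \cite{GNT20}: contract to reduce a dense simple graph to a multigraph on few vertices, sparsify the result, then run a weighted min-cut subroutine. Star contraction is particularly convenient here because, unlike 2-out, its vertex-count bound after contraction is an immediate consequence of the number of centers sampled, so no delicate analysis is needed on the sequential side.

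First compute $\delta(G)$ in $O(n+m)$ time by summing degrees. If $\delta(G) < 1200 \ln n$, skip contraction entirely: build a sparse $\delta(G)$-edge connectivity certificate $H$ of $G$ in $O(m)$ time via \cref{lem:NI}. Since $|E(H)| \le n\,\delta(G) = O(n \log n)$ and $\mincut(G) \le \delta(G)$ guarantees that $H$ preserves the minimum cut exactly, invoking \cref{thrm:GMW_min_cut} on $H$ in $O(n \log^3 n)$ time outputs $\mincut(G)$. Otherwise $\delta(G) \ge 1200 \ln n$ and we apply uniform star contraction with $p = 1200 \ln(n)/\delta(G)$: sample the center set $R$ in $O(n)$ time; for each $v \notin R$, scan its adjacency list once and select a uniformly random neighbor in $R$ via reservoir sampling in $O(d(v))$ time, for $O(m)$ total; then form the contracted multigraph $G'$ in $O(m)$ time using union--find. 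By \cref{thm:intro:simple star contraction}, with probability at least $1 - 1/n^4$ we have $|V(G')| \le 2400\, n \ln(n)/\delta(G)$, and with constant probability $\mincut(G') = \mincut(G)$ whenever $\mincut(G) < \delta(G)$. Now build a sparse $\delta(G)$-edge connectivity certificate $H'$ of $G'$ in $O(|E(G')|) = O(m)$ time via \cref{lem:NI}; since $|E(H')| \le \delta(G)\cdot |V(G')| = O(n \log n)$, running \cref{thrm:GMW_min_cut} on $H'$ costs $O(n \log^3 n)$ time and returns $\min(\delta(G), \mincut(G'))$. Output the minimum of this value and $\delta(G)$.

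For correctness, observe that contraction never decreases the minimum cut, so the output is always an upper bound on $\mincut(G)$. If $\mincut(G) = \delta(G)$ the algorithm is trivially correct. Otherwise, by \cref{thm:intro:simple star contraction}, with constant probability $\mincut(G') = \mincut(G) \le \delta(G)$, in which case the sparse $\delta(G)$-certificate $H'$ preserves $\mincut(G')$ and the output equals $\mincut(G)$. Repeating the contract-and-certify step a sufficiently large constant number of times and taking the minimum boosts the success probability to $2/3$. Each iteration costs $O(m + n \log^3 n)$, preserving the overall bound. The main point to verify is that \cref{lem:NI} runs in time linear in the number of (multi)edges of the contracted graph, so that the certificate for $G'$ can still be built in $O(m)$ time despite $G'$ being a multigraph with parallel edges; this is standard and is indeed the regime in which Nagamochi--Ibaraki is stated.
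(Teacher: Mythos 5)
Your proposal does not prove the stated theorem; it proves a different one. \Cref{thrm:GMW_min_cut} asserts an $O(m\log^2 n)$-time minimum cut algorithm for an \emph{arbitrary weighted} graph, and in this paper it is an external black box cited from \cite{GawrychowskiMW20} (it is the Gawrychowski--Mozes--Weimann algorithm, which proceeds via Karger's tree-packing and 2-respecting cuts machinery, not via contraction). What you have written is, almost verbatim, the paper's proof of \cref{thrm:sequential} --- the $O(m+n\log^3 n)$ sequential algorithm for the edge connectivity of a \emph{simple} graph --- which \emph{uses} \cref{thrm:GMW_min_cut} as a subroutine. Your argument is therefore circular: you invoke \cref{thrm:GMW_min_cut} on the sparsified certificate $H'$ in order to establish \cref{thrm:GMW_min_cut}.

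Beyond the circularity, the approach cannot be repaired into a proof of the stated theorem. Uniform star contraction (\cref{thm:intro:simple star contraction}) and the inequality $\mincut(G)\le\delta(G)$ that justifies the $\delta(G)$-edge connectivity certificate are both specific to simple (unweighted) graphs; for a weighted graph the minimum degree gives no useful bound on the weight of the minimum cut, and the Nagamochi--Ibaraki certificate as stated (\cref{lem:NI}) applies to unweighted multigraphs. Finally, even for simple graphs your bound is $O(m+n\log^3 n)$, which does not match the claimed $O(m\log^2 n)$ for sparse inputs. The correct response here is to treat \cref{thrm:GMW_min_cut} as a cited result whose proof lies outside the paper, not to derive it from the paper's contraction framework.
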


\begin{proof}[Proof of \cref{thrm:sequential}]
The algorithm proceeds as follows: Given $G=(V,E)$, compute $\delta(G)$ in $O(m)$ time. 
If $\lambda(G) \geq \delta(G)$ then this corresponds to the edge connectivity.
Now assume $\lambda(G) < \delta(G)$.
We perform uniform star contraction as in \cref{thm:intro:simple star contraction} on $G$ to obtain $G'$ with $O(\frac{n\log n}{\delta(G)})$ vertices, and with constant probability we have that $\mincut(G)=\mincut(G')$. 
Construct in time $O(m)$ a sparse $\delta(G)$-edge connectivity certificate (\cref{lem:NI}) $G''$ of $G'$ with $O(n\log n)$ edges. Finally, apply \cref{thrm:GMW_min_cut} on $G''$ to compute $\lambda(G'') = \lambda(G')$ with witness $S \subseteq V$. Since $|E(G'')|=O(n\log n)$ this takes $O(n\log ^3 n)$ time. If we output $\min\{\lambda(G'),\delta(G)\}$ then this yields the correct output with constant probability.
We can boost the success probability to above $2/3$ by repeating this a constant number of times, and outputting the smallest value achieved.
\end{proof}

\section{Finding a spanning forest with $O(n)$ cut queries} \label{sec:spanning forest}

In this section we describe an $O(n)$ cut query algorithm for constructing a spanning forest of a simple graph. This proves that the cut query complexity of graph connectivity is $O(n)$, which was not known before. We also describe a variation of the algorithm for constructing a connectivity certificate, which is a key building block of the edge connectivity algorithm described in the next section.

Let us first describe a simple algorithm to find a spanning forest of a graph using $O(n \log n)$ cut queries given by Harvey \cite[Theorem 5.10]{HarveyThesis}.  For the application 
to finding a sparse edge connectivity certificate it will be useful 
to define the algorithm more generally to work on a contraction of a graph.  
\begin{lemma}[Simple spanning forest algorithm]
\label{lem:simple}
Let $G = (V,E)$ be an $n$-vertex simple graph.
Let $G'$ be a contraction of $G$ with $q$ many supervertices, which are given explicitly as the partition $\Pcal = \{A_1,\ldots, A_t\}$ of $V$.  
There is a deterministic algorithm that outputs a set of edges $F \subseteq E$ that form a spanning forest of $G'$ and makes $O(q \log n)$ cut queries to $G$.
\end{lemma}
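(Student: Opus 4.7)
The plan is to implement Prim's spanning forest algorithm on the contracted graph $G'$, using the fact that any cut query to $G'$ can be simulated with $O(1)$ cut queries to $G$. We exploit that for any two disjoint subsets $X,Y\subseteq V$, the quantity $|E_G(X,Y)|$ can be computed with $3$ cut queries to $G$ (\cref{prop:3}). Applied to unions of supervertices, this lets us test in $O(1)$ queries whether any specified set of supervertices has an edge to another specified set in $G'$.

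First I would initialize $F \gets \emptyset$ and a set of ``processed'' supervertices $S \gets \emptyset$. While $S \neq \Pcal$, I would pick any $A \in \Pcal \setminus S$, start a new component $C \gets \{A\}$, and grow it as follows: test with $3$ cut queries to $G$ whether $E_G\bigl(\bigcup_{A'\in C} A',\ \bigcup_{A'\in \Pcal\setminus(S\cup C)} A'\bigr)$ is nonempty. If it is empty, the current component of $G'$ is fully explored; move $C$ into $S$ and start a new component. Otherwise, I would use binary search on $\Pcal\setminus(S\cup C)$ to locate a single supervertex $B$ having an edge to $C$ in $G'$: split the candidate set in half and, using $O(1)$ cut queries per halving, determine a half that still contains a neighbor of $C$; iterate. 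This isolates $B$ in $O(\log q)$ cut queries to $G$.

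Once $B$ is identified, I still need an actual edge $e=\{u,v\}\in E$ with $u\in\bigcup_{A'\in C}A'$ and $v\in B$ to place in $F$. I would run two further binary searches at the vertex level: first narrow down $B$ to a single vertex $v\in B$ that has a neighbor in $\bigcup_{A'\in C}A'$ (using $O(\log|B|)$ queries via \cref{prop:3}), then narrow down $\bigcup_{A'\in C}A'$ to a single vertex $u$ adjacent to $v$ (another $O(\log n)$ queries). Add $\{u,v\}$ to $F$, and set $C \gets C \cup \{B\}$.

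For the query accounting: each iteration of the inner loop either terminates the current component (costing $O(1)$ queries) or adds a new supervertex $B$ and one edge to $F$ (costing $O(\log n)$ queries). Since $F$ ultimately contains at most $q-1$ edges (a spanning forest of $G'$), and each supervertex is closed off at most once, the total number of cut queries to $G$ is $O(q\log n)$. Correctness is immediate: every edge added to $F$ is an edge of $G$ whose endpoints lie in two different current supervertices of $C$ and $\Pcal\setminus(S\cup C)$, so the contracted edges form a tree in each component of $G'$, and the process only terminates a component when no crossing edges exist, guaranteeing that $F$ spans every connected component of $G'$. There is no real obstacle here; the only subtlety is being careful that the binary searches for a specific edge in $G$, not just an edge in $G'$, are chargeable to the edge being discovered so that the total stays $O(q\log n)$.
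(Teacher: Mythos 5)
Your proposal is correct and follows essentially the same route as the paper: a Prim-style component-growing procedure in which each new edge of the forest is located by binary search implemented with $O(1)$ cut queries per halving via \cref{prop:3}, charging $O(\log n)$ queries to each of the at most $q-1$ discovered edges and $O(1)$ to each component closure. The paper skips your preliminary supervertex-level search by binary-searching directly inside the current component for a vertex $u$ with an edge leaving it and then for its neighbor $v$, but this is an inessential reordering with the same $O(q\log n)$ bound.
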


\begin{proof}
We follow the plan of Prim's spanning forest algorithm.
We begin at an arbitrary supervertex $A$ of $G'$ and initialize $F = \emptyset$.
We want to find an edge $\{u,v\} \in E$ with $u \in A$ and $v \notin A$.
We first identify a vertex $u \in V$ which has an edge leaving $A$ by doing binary search with queries of the form $|E(S,\bar{A})|$ for $S \subseteq A$.  This takes at most $\log n$ many cut queries.  Then we want to find one of the neighbors $v \in \bar{A}$ of $u$ by doing binary search asking queries of the form $|E(v,S)|$ for $S \subseteq \bar{A}$.  We add $\{u,v\}$ to $F$ and let $G''$ be the graph $G'$ with supervertex $A$ merged with the supervertex containing $v$.  Let $A'$ be the 
name of this new supervertex.  We then repeat this procedure on $G''$ and $A'$.  We keep repeating this procedure until we find a supervertex with no outgoing edges. This supervertex represents a connected component in $G'$.  If the supervertex is not all of $V$ then we arbitrarily choose another supervertex of the graph and begin the procedure again.  There are at most $q$ iterations and each iteration 
costs $O(\log n)$ cut queries, thus the total number of cut queries is $O(q \log n)$.  
\end{proof}

An important insight now is that in the proof of \cref{lem:simple} we did not use the full power of cut queries.  For the binary search routines in the proof we might as well have used \emph{bipartite independent set queries}, which return just a single bit telling if $|E(S,T)|$ is zero or positive for two disjoint sets $S$ and $T$. Notice that the (deterministic) algorithm is optimal for that type of queries. Indeed, by the $\Omega(n \log n)$ deterministic communication complexity lower bound for connectivity \cite{HajnalMT88}, any deterministic algorithm for connectivity must make $\Omega(n \log n)$ such 1-bit queries.

Crucially, the situation is different with cut queries, which return $\Omega(\log n)$ bits of information per query in a simple graph. Taking 
advantage of this additional information suggests an avenue towards saving 
a $\log n$ factor over the simple algorithm, and this is 
the approach we follow to give a zero-error randomized algorithm to compute a spanning forest of a simple graph with $O(n)$ cut queries.  

\subsection{Separating matrices and learning buckets}
The fact that a cut query returns $\Omega(\log n)$ bits allows us to use the remarkable result that there is a matrix $A \in \{0,1\}^{k \times n}$ with $k = O(n/\log n)$ such that one can recover any Boolean vector $x \in \{0,1\}^n$ 
given the product $Ax$.  At a very high level, this $\log n$ factor reduction in the size of $k$ over the obvious bound is the key that allows us to save a $\log n$ factor in the 
spanning forest computation.
Such a matrix $A$ is called a separating matrix and formally defined next.

\begin{definition}[Separating matrix]
A $k$-by-$n$ Boolean matrix $B$ is called a separating matrix for the set $S \subseteq \{0,1,\ldots, d\}^n$ if for all $x,y \in S$ with $x \ne y$ it holds that $Bx \ne By$.  
\end{definition}

\begin{theorem}[{\cite[Theorem A.1]{GK98}, \cite[Theorem 1]{GK00}}]
\label{thm:separating}
There exists a $k$-by-$n$ separating matrix for 
\begin{enumerate}
	\item The set $\{0,1,\ldots, d\}^n$ with $k \le 8\ceil{\log(d+1)}n/\log(2n)$.
	\item  The set $S_\ell = \{x \in \{0,1\}^n : |x| \le \ell\}$ with $k = O(\ell \log(2n)/\log(2\ell))$.
\end{enumerate}
\end{theorem}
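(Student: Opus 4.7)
The plan is to establish both parts by the probabilistic method: take $B \in \{0,1\}^{k \times n}$ with i.i.d.\ uniform entries and show that with positive probability it separates the target set. The key anticoncentration tool will be the Littlewood--Offord inequality: for any fixed nonzero $z \in \mathbb{Z}^n$ with $s$ nonzero coordinates and a uniformly random row $r \in \{0,1\}^n$, one has $\Pr[r \cdot z = 0] \le \binom{s}{\lfloor s/2 \rfloor}/2^s$, which equals $1/2$ for $s \in \{1,2\}$ and is $O(1/\sqrt{s})$ for $s \ge 3$. Since the $k$ rows of $B$ are independent, $\Pr[Bz = 0]$ is the $k$-th power of this bound. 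Both parts then reduce to a union bound over all possible nonzero differences $z = x - y$, together with an optimization that chooses $k$ large enough to push the total failure probability strictly below $1$.

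For part~(2), every nonzero difference of two vectors in $S_\ell$ lies in $\{-1,0,1\}^n$ with support size $s \in \{1,\dots,2\ell\}$, and there are exactly $\binom{n}{s} 2^s$ such vectors with support $s$. The union-bound sum is therefore
\[
\sum_{s=1}^{2\ell} \binom{n}{s}\, 2^s \left( \binom{s}{\lfloor s/2 \rfloor}/2^s \right)^{k} \;\le\; \sum_{s=1}^{2\ell} (2en/s)^{s} \bigl(O(1/\sqrt{s})\bigr)^{k}.
\]
Taking logarithms termwise and balancing $s \log(2en/s)$ against $(k/2)\log s$, the dominant contribution comes from $s \approx 2\ell$, and this forces $k \ge C\,\ell \log(2n)/\log(2\ell)$ for an absolute constant $C$, matching the claimed bound.

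For part~(1), the argument runs along the same lines, except that each nonzero coordinate of $z = x - y$ can take one of $2d$ values instead of just two. The union bound becomes $\sum_{s=1}^{n} \binom{n}{s}(2d)^{s} \bigl(O(1/\sqrt{s})\bigr)^{k}$, and the dominant contribution now sits at $s = n$, where the $(1/\sqrt{s})^k$ factor saves $(k/2)\log n$ in the exponent while the counting costs roughly $n \log(2d)$. Balancing these two quantities yields the target $k \le 8 \lceil \log(d+1) \rceil\, n/\log(2n)$ after tracking constants.

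The main obstacle in both parts is the very small support regime $s \in \{1,2\}$, where Littlewood--Offord only gives the trivial bound $1/2$ and the anticipated $(1/\sqrt{s})^k$ savings evaporates. These cases must be treated separately: for $s = 1$ there are only $2nd$ candidate differences in part~(1) (respectively $2n$ in part~(2)), so the bound $2^{-k}$ already dominates whenever $k = \Omega(\log(nd))$, which the target value of $k$ comfortably implies; the $s = 2$ case is analogous. The remainder is a clean optimization of $s \log(n/s)$ against $(k/2)\log s$ over $s \ge 3$ to pin down the final constants.
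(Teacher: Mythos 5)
Your argument is correct: this theorem is imported by the paper from Grebinski--Kucherov \cite{GK98,GK00} without proof, and your probabilistic-method reconstruction (random Boolean matrix, Erd\H{o}s--Littlewood--Offord anticoncentration giving $\Pr[r\cdot z=0]\le\binom{s}{\lfloor s/2\rfloor}2^{-s}$, union bound over differences with supports $s$ handled separately for $s\in\{1,2\}$) is essentially the argument in those sources. The only steps left implicit --- verifying that $s\mapsto s\log(2en/s)/\log s$ is maximized where you claim (e.g.\ via $s/\log s$ being increasing) and tracking the constant $8$ in part~(1) --- are routine and do go through.
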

Grebinski and Kucherov \cite[Theorem 5]{GK00} use separating matrices to show that if $A$ is the adjacency matrix of a simple $n$-vertex graph $G$ with maximum degree $\ell$, then one can learn $G$ with $O(\ell n)$ queries 
of the form $x^TAy$ for Boolean vectors $x,y \in \{0,1\}^n$.  We will use separating matrices for a very similar application, although we focus on bipartite graphs where the left hand side has bounded degree $\ell$, and want to express the 
complexity in terms of the number of vertices on the left hand side.  To this end, let $M$ be an $m$-by-$n$ Boolean matrix and suppose that every row has at most $\ell$ ones.  Suppose that we have $x^TMy$ query access to 
$M$ for Boolean vectors $x,y$.  One should imagine $M$ being a bipartite adjacency matrix of a bipartite subgraph of $G$, in which case we can simulate $x^TMy$ with cut queries to $G$, and think of
$\ell$ as constant or slowly growing in a typical application.  The next lemma shows that if $m$ and $n$ are polynomially related then we can learn $M$ with $O(\ell m)$ queries.  A very similar theorem is shown by 
Grebinski and Kucherov \cite[Theorem 4]{GK00} for the case $m=n$.

\begin{lemma}
\label{lem:learn}
Let $M \in \{0,1\}^{m \times n}$ be an $m$-by-$n$ matrix with at most $\ell$ non-zero entries per row.  There is a deterministic algorithm that learns $M$ with $O\left(\frac{\ell m \log(2n)}{\log(2m)} \right)$ many queries of the form $x^T M y$ with Boolean vectors $x,y$.
\end{lemma}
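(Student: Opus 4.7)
The plan is a two-level application of the separating matrix framework of \cref{thm:separating}, adapting the $m=n$ argument of \cite[Theorem 4]{GK00} to the rectangular setting. First I would exploit sparsity of the rows: every row $M(i,:)$ lies in $S_\ell = \{x \in \{0,1\}^n : |x| \le \ell\}$, so by item~2 of \cref{thm:separating} there is a separating matrix $B \in \{0,1\}^{k \times n}$ for $S_\ell$ of dimension $k = O(\ell \log(2n)/\log(2\ell))$. Because $B$ is injective on $S_\ell$, knowing the $m \times k$ product $M B^T$ is enough to recover $M$ row by row.

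The remaining task is to compute $M B^T$ through $x^T M y$ queries. Let $b_j$ denote the $j$-th row of $B$, viewed as a column vector. Since each row of $M$ has at most $\ell$ ones, the vector $M b_j$ lies in $\{0,1,\dots,\ell\}^m$. By item~1 of \cref{thm:separating} there is a separating matrix $A \in \{0,1\}^{k' \times m}$ for $\{0,\dots,\ell\}^m$ of dimension $k' = O(m \log(\ell+1)/\log(2m))$, and then $A (M b_j)$ determines $M b_j$. Computing the $k' \times k$ product $A M B^T$ therefore reconstructs $M B^T$, and the outer separating step then recovers $M$.

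Now for the query count. Each of the $k k'$ entries of $A M B^T$ equals $a_i^T M b_j$, where $a_i$ and $b_j$ are rows of $A$ and $B$ respectively; this is exactly one $x^T M y$ query with Boolean $x,y$. Multiplying the two bounds,
\[
k \cdot k' \;=\; O\!\left(\frac{\ell \log(2n)}{\log(2\ell)} \cdot \frac{m \log(\ell+1)}{\log(2m)}\right) \;=\; O\!\left(\frac{\ell m \log(2n)}{\log(2m)}\right),
\]
using that $\log(\ell+1)/\log(2\ell) = O(1)$ for $\ell \ge 1$ (the case $\ell=0$ is trivial as $M=0$). The algorithm is deterministic because $A$ and $B$ are fixed non-uniformly in advance, independent of the oracle's answers.

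There is no real obstacle beyond arranging the two layers correctly; the decoding (invert $A$ on each column to get $M b_j$, assemble $M B^T$, then invert $B$ on each row to get $M(i,:)$) is information-theoretic and need not be efficient for the query-complexity claim, exactly as in \cite{GK00}. The one conceptual point worth flagging is the cancellation between the $\log(\ell+1)$ factor coming from the inner separating matrix and the $\log(2\ell)$ factor in the denominator of the outer one---this is precisely what lets the rectangular bound come out as $O(\ell m \log(2n)/\log(2m))$ rather than carrying an extra $\log \ell$.
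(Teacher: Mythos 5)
Your proposal is correct and follows essentially the same two-level separating-matrix argument as the paper: an outer separating matrix for the sparse set $S_\ell$ to compress the columns, an inner separating matrix for $\{0,\dots,\ell\}^m$ to compress the rows, and the same cancellation of the $\log(\ell+1)$ and $\log(2\ell)$ factors to arrive at the stated bound. No gaps.
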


\begin{proof}
Let $Y$ be a separating matrix with $O(\ell \log(2n)/\log(2\ell))$ rows for the set $S_\ell \subseteq \{0,1\}^n$ of Boolean vectors with at most $\ell$ non-zero entries, which exists by item~2 of \cref{thm:separating}.  
From $M Y^T$ we can recover $M$.  
Every column of $M Y^T$ has integer entries of magnitude at most $\ell$.  Let $X$ be a separating 
matrix with $8m \ceil{\log(\ell+1)}/\log(2m) $ rows for the set $\{0,1,\ldots, \ell\}^m$ which exists by item~1 of \cref{thm:separating}.  
We can recover $M Y^T$ from $X M Y^T$.  Putting it together we can compute $X M Y^T$ with a number of $x^T M y$ queries of order
\[
\frac{\ell m \log(2n) \ceil{\log(\ell+1)}}{\log(2m)\log(2\ell)} \le \frac{\ell m \log(2n)}{\log(2m)} \enspace . \qedhere
\]
\end{proof} 

Using this lemma we can describe a key subroutine LearnBucket$[M](r,k)$ that will be used both in the spanning forest algorithm and in the edge connectivity algorithm.  In this algorithm we have 
oracle access to a matrix $M$ via $x^T M y$ queries for Boolean vectors $x$ and $y$.  The oracle access to $M$ is indicated by having $M$ in brackets in the call of the algorithm.  The promise 
is that every row of $M$ has between $r$ and $2r$ many ones---in our applications this arises from ``bucketing'' together vertices with similar degrees, hence the name.  The number $k$ is a parameter 
indicating how many ones we want to learn from each row of $M$---the algorithm will learn $\min\{k,r\}$ many ones from each row.  In the application to finding a spanning forest we just need to find a single one 
in every row and we take $k$ to be a large constant.
In the edge connectivity algorithm we will also apply LearnBucket where $k$ is growing.
For the application to edge connectivity, we will require that the found neighbors are selected in a sufficiently random fashion, as recorded in the ``further'' statement of the theorem.  This statement 
is not needed for the application to finding a spanning forest and can be skipped on a first reading.

\begin{algorithm}[!htbp]
\caption{LearnBucket$[M](r,k)$}
\label{alg:learnBucket}
 \hspace*{\algorithmicindent} \textbf{Input:} $x^T M y$ query access to a Boolean matrix $M \in \{0,1\}^{m \times n}$, a natural number $r$ with the promise that 
 all rows of $M$ have at least $r$ and at most $2r$ ones, and a parameter $k$. \\
 \hspace*{\algorithmicindent} \textbf{Output:} The output consists of a list $Z[i]$ for each $i \in [m]$ where $M(i,Z[i][j]) = 1$ for all $i,j$ and each $Z[i]$ has at least 
 $\min\{k,r\}$ many elements.
\begin{algorithmic}[1]
\State $B = [m]$
\While{$B$ is non-empty}
	\State  Choose $Q \subseteq [n] $ by putting each $a \in [n]$ into $Q$ independently with probability $q=\min\{\frac{2k}{r},1\}$.\label{line:chooseQ}
	\For{$j \in B$} 
		\State $\ones(j) \gets \chi_{\{j\}}^T M \chi_{Q}$. \Comment{$\ones(j)$ is number of ones in $M(j,Q)$}
		 \label{line:count} 
	\EndFor
	\State Set $K\gets \{ j \in B : \min\{r,k\} \le \ones(j) \le 8k\}$. \label{line:K}
	\If{$|K| > 0$}
	    \State Learn the submatrix $M(K,Q)$ by \cref{lem:learn} and populate $Z[i]$ for all $i \in K$.
    \label{line:learn}
    	\State $B \gets B \setminus K$.
    \EndIf
\EndWhile
\State Return all lists $Z[i]$.
\end{algorithmic}
\end{algorithm}

\begin{lemma}
\label{lem:learnBucket}
Let $m,n$ be positive integers and $M \in \{0,1\}^{m \times n}$ be a Boolean matrix where every row has at least $r$ and at most $2r$ ones.  Let $k \ge 10$ and 
$\ell = \min\{r,k\}$.  
Suppose we can query $x^T M y$ for any $x \in \{0,1\}^m, y \in \{0,1\}^n$.  
There is a zero-error randomized algorithm, LearnBucket$[M](r,k)$ given in \cref{alg:learnBucket}, that makes 
\[
O\left(m + \frac{km \log(n)}{\log(2m)} \right)
\] 
queries in expectation and for each $i \in [m]$ outputs a list $Z[i]$ such that 
$M(i,Z[i][j]) = 1$ for all $i \in [m]$ and $j$, and each $Z[i]$ contains at least $\ell$ many elements.  Let $d(i)$ be the number of ones in row $i$.  Further, $Z[i]$ contains all 
the ones of $M(i,:)$ contained in a set $Q$ chosen by putting each $j \in [n]$ into $Q$ independently at random with probability $q \ge \min\{2k/d(i),1\}$,
conditioned on $M(i,Q)$ having at least $f$ and at most $g$ ones, where  $0 < f \le q d(i)/2$ and $g \ge 2 q d(i)$.
\end{lemma}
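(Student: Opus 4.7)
The plan is to split the analysis into three parts: a per-iteration catching bound, the expected total query cost, and the output distribution. First I will argue every row $j\in B$ is caught in each iteration with probability at least $1/2$. If $r\leq 2k$ then $q=1$ and $Q=[n]$, so $\ones(j)=d(j)\in[r,2r]\subseteq[\min\{r,k\},8k]$ deterministically and every row is caught in the first round. Otherwise $r>2k$, $\min\{r,k\}=k$, and $\mu := q d(j) = 2k d(j)/r \in [2k,4k]$ satisfies $\mu\geq 20$; by \cref{lem:chernoff}, $\Pr[\mu/2\leq \ones(j)\leq 2\mu]\geq 1/2$, and since $\mu/2\geq k$ and $2\mu\leq 8k$ this event is contained in the catching event. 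Because $Q$ is resampled independently each round, the iteration $\tau_j$ in which $j$ is caught is stochastically dominated by a Geometric$(1/2)$ variable, giving $\bbE[\tau_j]\leq 2$ and $T = O(\log m)$ except with probability $m^{-\Omega(1)}$, where $T$ is the total number of rounds.

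For the query complexity, each iteration $i$ incurs $|B_i|$ queries for the counts $\ones(j)$ plus one call to \cref{lem:learn} on $M(K_i,Q_i)$. Summing counts by linearity, $\bbE[\sum_i|B_i|]=\sum_j \bbE[\tau_j]\leq 2m$. For the learning calls, the catching condition bounds the number of ones per row of $M(K_i,Q_i)$ by $8k$, so \cref{lem:learn} costs $O(k t_i \log(n)/\log(2 t_i))$ queries, writing $t_i = |K_i|$. Letting $T^*$ be the number of nonempty rounds and noting $\sum_{t_i\geq 1} t_i = m$, \cref{clm:concave} yields $\sum_{t_i\geq 1} t_i/\log(2 t_i) \leq 3m/\log(2m/T^*)$. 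Conditioned on the typical event $T = O(\log m)$ we have $T^*\leq T$ and $\log(2m/T^*)=\Omega(\log(2m))$, which produces the expected learning cost $O(km \log(n)/\log(2m))$; the tail event $T = \omega(\log m)$ is exponentially unlikely and only contributes a lower-order term when charged against any trivial worst-case per-round bound.

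Finally, for correctness and the ``further'' property, note that whenever $j$ enters some $K_i$ the algorithm learns $M(\{j\},Q_i)$ exactly, so $Z[j]$ contains at least $\min\{r,k\}=\ell$ ones of row $j$. The sample $Q_i$ is iid Bernoulli$(q)$ with $q=\min\{2k/r,1\}$, and since $d(j)\in[r,2r]$ we have $q\geq \min\{2k/d(j),1\}$. Conditioned on $j$ being caught, $Q_i$ is distributed as Bernoulli$(q)$ restricted to $\ones(j)\in[\min\{r,k\},8k]$. In the nontrivial regime $r>2k$ the thresholds $f=\min\{r,k\}=k$ and $g=8k$ satisfy $f\leq qd(j)/2$ and $g\geq 2q d(j)$ directly from $d(j)\in[r,2r]$, while when $r\leq 2k$ the sampling is deterministic and the conditioning is vacuous, so any admissible $(f,g)$ in the stated range can be used. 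Since $Z[j]$ equals the set of ones of row $j$ lying in $Q_i$, this matches the claimed conditional distribution.

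The main obstacle I expect is the amortization of the \cref{lem:learn} calls when some batch $K_i$ happens to be small: a naive bound would charge $O(k\log n)$ per learned row, losing the critical $\log(2m)$ factor we need. The concavity estimate in \cref{clm:concave}, combined with the polylogarithmic bound on the number of rounds coming from the $\geq 1/2$ catching probability, is precisely the device that converts these per-round blow-ups into the aggregate $O(km \log(n)/\log(2m))$ bound claimed by the lemma.
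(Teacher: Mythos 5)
Your proof is correct and follows essentially the same route as the paper's: the per-round catching probability of $1/2$ via Chernoff, linearity for the counting queries, and \cref{clm:concave} to amortize the \cref{lem:learn} calls over the number of nonempty rounds. The only (cosmetic) difference is in handling the tail of the round count — you condition on $T=O(\log m)$ holding with high probability and charge the tail against the deterministic $O(km\log n)$ worst case, whereas the paper applies Markov's inequality to the number of learning rounds with threshold $\sqrt{m}$; both give the same bound.
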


\begin{proof}
The algorithm is given by \cref{alg:learnBucket}.
We first consider the (trivial) case where $2k \ge r$, in which case the sampling probability $q =1$.  In this case, in the first iteration of the while loop 
$K = [m]$ and we learn the entire matrix $M$ deterministically via \cref{lem:learn}.  The cost of this is $O(2rm \log(n)/\log(2m)) = O(k m \log(n)/\log(2m))$ 
and we learn at least $r$ ones in each row as desired. 

Now consider the case that $2k < r$, in which case the goal is to learn the positions of $k$ ones in every row of $M$.
We first show correctness.  At the start of the while loop $B = [m]$, and we only remove a row index
$j$ from $B$ if it is in the set $K$ processed in line~\ref{line:learn}.  On this 
line we deterministically learn the 
entire submatrix $M(K,Q)$.  Further we are guaranteed that each row of $M(K,Q)$ has at least $k$ ones by 
the definition of $K$ on line~\ref{line:K}.
Thus when $j$ is removed from $B$ we are guaranteed that we have learned the positions of at least $k$ ones in row $j$.  This process continues 
until $B$ is empty, thus the algorithm is correct with zero error.

Let us now argue about the complexity.  In any iteration of the while loop, for each $j \in B$ the expected value of $\ones(j)$ is in the 
interval $[2k, 4k]$.  
Therefore by a Chernoff bound (\cref{lem:chernoff}) using the fact that $k \ge 10$, for any $j \in B$ we have 
$k \le \ones(j) \le 8k$ with probability at least $1/2$. Thus $\bbE[|K|] \ge |B|/2$, and letting $b_i$ be a random variable for the size of $B$ 
at the start of the $\ith$ iteration of the while loop
we have $\bbE[b_{i+1} \mid b_i = s] \le s/2$.  From this it follows that the expected number of iterations of the while loop is at most 
$2(\log(m) + 1)$ (see \cite[Theorem 3]{DoerrJW12}).  
The fact that $\bbE[b_{i+1} \mid b_i = s] \le s/2$ also implies $\bbE[b_{i+1}] \le \bbE[b_i]/2$ and so the expected number of queries from 
line~\ref{line:count} is 
\[
\bbE\left[ \sum_i b_i \right]
= \sum_i \bbE[b_i] \le \sum_i \frac{m}{2^i} \le  2m \enspace .
\]

We next turn to queries made in line~\ref{line:learn}.  By \cref{lem:learn} the number of queries in an execution of this line is $O(k|K| \log(|Q|)/\log(2|K|)) \in O(k|K| \log(n)/\log(2|K|))$.
Let $Y$ be a random variable for the 
number of times line~\ref{line:learn} 
is executed.  As the sum of $|K|$ over all executions of line~\ref{line:learn} is at most $m$, we must have 
$Y \le m$. Further $\bbE[Y] \le 2\log m+1$ as it is at most the total expected number of iterations of the while loop. By \cref{clm:concave}, when 
$Y = t$ the overall number queries from line~\ref{line:learn} is $O(km \log(n)/\log(2m/t))$.  Thus the expected number of queries from line~\ref{line:learn}
overall is of order
\begin{equation}
\label{eq:learn_sum}
\sum_{t=1}^{m} \frac{km \log(n)}{\log(2m/t)} \Pr[Y=t] 
\le \sum_{t=1}^{\sqrt{m}} \frac{2km \log(n)}{\log(2m)} \Pr[Y=t] + km \log(n) \Pr[Y > \sqrt{m}] \enspace .
\end{equation}
By Markov's inequality
$\Pr[Y > \sqrt{m}] \le 2 (\log(m) + 1)/\sqrt{m}$.
Hence we can upper bound \cref{eq:learn_sum} by
\[
\frac{2 k m \log(n)}{\log(2m)} + \frac{2(\log(m) +1)}{\sqrt{m}} km \log(n)
\in O\left(\frac{km\log(n)}{\log(2m)}\right). 
\]

The ``further'' statement follows from the definition of $Q$ on line~\ref{line:chooseQ}.  This set is taken by putting each $i \in [n]$ into $Q$ independently at random 
with probability $q = \min\{2k/r,1\} \ge \min\{2k/d(i),1\}$ as $d(i) \ge r$.  $Z[i]$ contains exactly the ones in $M(i,Q)$ for the first 
such $Q$ chosen that has at least $f=\min\{k,r\} \le q d(i)/2$ and at most $g=8k \ge 2q d(i)$ ones by line~\ref{line:learn}.
\end{proof}

LearnBucket is restricted in that it requires the matrix to have approximately uniform row sums.
In the next algorithm, Recover-$k$-From-All, we use LearnBucket as a subroutine to locate the position of $\min\{k,d\}$ ones in each row of a matrix with the weaker promise that every row has at least 
$d$ ones.  With respect to its goal, Recover-$k$-From-All is very similar to the primitive RecoverOneFromAll introduced by \cite{AuzaL21} in the study of connectivity algorithms with matrix-vector 
multiplication queries.  RecoverOneFromAll was in turn inspired by the Recover primitive used by \cite{AssadiCK21} for connectivity algorithms with linear and OR queries.

We follow the same algorithmic plan used by \cite{AuzaL21} in RecoverOneFromAll, which is 
to count the number of ones in each row, bucket rows together with similar number of ones, and 
then operate on each bucket 
separately. The main difference is in the implementation of 
learning the position of ones for each row in a bucket.  We use separating matrices for this in LearnBucket while the technique in \cite{AuzaL21} is based on combinatorial group testing algorithms.

\begin{algorithm}[!htbp]
\caption{Recover-$k$-From-All$[M](k)$}
\label{alg:recover}
 \hspace*{\algorithmicindent} \textbf{Input:} $x^T M y$ query access to a Boolean matrix $M \in \{0,1\}^{m \times n}$ and a parameter $k$. \\
 \hspace*{\algorithmicindent} \textbf{Output:} Let $d$ be the minimum number of ones in a row of $m$.  For $\ell = \min\{d,k\}$ the output is a list $Z[i]$ for each $i \in [m]$ 
 such that $Z[i]$ has at least $\ell$ elements and $M(i,Z[i][j]) = 1$ for all $i \in [m]$ and $j$.
\begin{algorithmic}[1]
\For{$j \in [m]$}
	\State $d(j) \gets \chi_{\{j\}}^T M \one.$  \Comment{$d(j)$ is the number of ones in row $j$} \label{line:degree}
	\label{line:count_deg}
\EndFor
\State $d \gets \min_j d(j)$.
\For{$a = 0$ to $\ceil{\log(n/d)}$}
	\State $B_a \gets \{ j \in [m] : d(j) \in [d2^a, d2^{a+1})\}.$ \label{line:bucket}
	\State $Z_a \gets$ LearnBucket$[M(B_a,:)](d2^a,k)$. \label{line:call_learn_bucket}
\EndFor
\State Output all adjacency lists $Z_a$.
\end{algorithmic}
\end{algorithm}

\begin{lemma}
\label{cor:learn}
Let $M \in \{0,1\}^{m \times n}$ be a Boolean matrix where every row has at least $d > 0$ ones.  Let $k \ge 10$ and $\ell = \min\{k,d\}$.  
Suppose we can query $x^T M y$ for any $x \in \{0,1\}^m, y \in \{0,1\}^n$.  
There is a zero-error randomized algorithm that outputs a list $Z[i]$ with at least $\ell$ elements for each $i \in [m]$ satisfying $M(i,Z[i][j])=1$ for all $i \in [m]$ and $j$, and makes
\[
O \left( m +  \frac{km\log(n)}{\log(2m/\log(n))} \right)
\]
queries in expectation.  Let $d(i)$ be the number of ones in row $i$.  Further, $Z[i]$ contains all 
the ones of $M(i,:)$ contained in a set $Q$ chosen by putting each $j \in [n]$ into $Q$ independently at random with probability $q \ge \min\{2k/d(i),1\}$,
conditioned on $M(i,Q)$ having at least $f$ and at most $g$ ones, where  $0 < f \le q d(i)/2$ and $g \ge 2 q d(i)$.
\end{lemma}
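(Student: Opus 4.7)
The plan is to reduce to \cref{lem:learnBucket} bucket by bucket, after a cheap degree-counting phase. Line~\ref{line:degree} computes each row sum $d(j)$ with a single query $\chi_{\{j\}}^T M \one$, using $m$ queries total. Since every row has at least $d$ ones, each index $j$ falls into exactly one bucket $B_a$ with $a \in \{0, 1, \ldots, \lceil \log(n/d) \rceil\}$, and within that bucket the row sum lies in $[d 2^a, 2 d 2^a)$, matching the promise of LearnBucket with parameter $r = d 2^a$. Applying \cref{lem:learnBucket} to each non-empty bucket therefore yields, with zero error, a list $Z[i]$ of at least $\min\{k, d 2^a\} \ge \min\{k, d\} = \ell$ true ones per row, which establishes correctness.

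For the expected query cost, degree counting contributes $O(m)$, and each LearnBucket call contributes $O(m_a + k m_a \log(n)/\log(2 m_a))$ in expectation, where $m_a = |B_a|$. The linear terms sum to $O(m)$. For the log-denominator terms, the number of non-empty buckets is $t \le \lceil \log(n/d) \rceil + 1 = O(\log n)$ and $\sum_a m_a = m$ with each $m_a \ge 1$, so \cref{clm:concave} gives
\[
\sum_a \frac{k m_a \log n}{\log(2 m_a)} \le \frac{3 k m \log n}{\log(2m/t)} = O\!\left(\frac{k m \log n}{\log(2m/\log n)}\right).
\]

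For the ``further'' clause, fix a row $i$ in bucket $B_a$. LearnBucket on $B_a$ uses sampling probability $q = \min\{2k/(d 2^a), 1\}$; since $d(i) \ge d 2^a$, we have $q \ge \min\{2k/d(i), 1\}$ as required. When $q < 1$ the LearnBucket conditioning window $[k, 8k]$ satisfies $k \le q d(i)/2$ and $2 q d(i) \le 8k$ using $d(i) \in [d 2^a, 2 d 2^a)$; when $q = 1$ the set $Q = [n]$ is chosen deterministically and $Z[i]$ captures every one of $M(i, :)$, so the window $[d(i)/2, 2 d(i)]$ trivially suffices. In both cases, the further clause of \cref{lem:learnBucket} transfers to row $i$ with appropriate $f$ and $g$.

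The main obstacle is the concavity step: turning the bucket-dependent denominator $\log(2 m_a)$ into the target $\log(2m/\log n)$ is the one place that requires care, and \cref{clm:concave} is tailored precisely for this substitution. Everything else reduces transparently to the zero-error guarantees of \cref{lem:learnBucket}.
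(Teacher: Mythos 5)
Your proof is correct and follows essentially the same route as the paper's: compute row sums with $m$ queries, bucket rows by degree, invoke \cref{lem:learnBucket} per bucket, and sum the costs via \cref{clm:concave}. You actually spell out the verification of the ``further'' clause and the concavity step in more detail than the paper does, which is fine.
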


\begin{proof}
The algorithm is given in \cref{alg:recover}.  In line~\ref{line:count_deg} we compute the number of ones in each row of the matrix and then bucket 
the vertices accordingly in line~\ref{line:bucket}.  Thus in the call to LearnBucket$(d2^i,k)$ for those rows in $B_i$ the promise that each row 
has number of ones in $[d2^i, d2^{i+1})$ will hold and LearnBucket will return the positions of $\min\{d2^i,k\} \ge \min\{d,k\}$ ones for each row of $B_i$ by \cref{lem:learnBucket}.  
This shows correctness.

Let us now examine the complexity.  There are $m$ queries made in line~\ref{line:degree}.  The rest of the queries are made in the for loop.
In the execution of the for loop on $B_a$ we make $O(|B_a| + k |B_a| \log(n)/\log(2|B_a|))$ queries in expectation by \cref{lem:learnBucket}.  Thus by \cref{clm:concave} the 
total number of queries is at most $O \left (m + km \frac{\log(n)}{\log(2m/\log(n))} \right)$.

The ``further'' statement follows immediately from the ``further'' statement of \cref{lem:learnBucket}.
\end{proof}

\subsection{Spanning forest algorithm}
Now we are ready to describe a zero-error randomized algorithm to compute a spanning forest of a simple $n$-vertex graph with $O(n)$ cut queries in expectation.  Compared to the simple 
spanning forest algorithm, the first high level idea is to switch from a Prim style spanning forest algorithm to one based on Bor\r{u}vka's algorithm, which is known to work well in parallel settings.  Here the basic task it to find an outgoing edge from each of the
connected sets $S_1, \ldots, S_t$.  The second idea is to use \cref{cor:learn} to do this in parallel and save a $\log n$ factor compared to the naive sequential computation.
\spanning*

\begin{proof}
We will follow Bor\r{u}vka's spanning forest algorithm.  The algorithm proceeds in rounds and maintains the 
invariant that in each round there is a paritition $S_1, \ldots, S_t$ of $V$ and a spanning tree for 
each $S_i$ in the partition.  Initially, each $S_i$ is just a single vertex.  

In a generic round the goal is to find an outgoing edge from each $S_i$ that is not already a connected component.  To help with this, 
we will label every vertex $v \in V$ as $\act$ or $\inact$.  Initially, all vertices are marked $\act$.  If in any round we learn that $v \in S_i$ has no edge 
going outside of $S_i$ then we mark it as $\inact$.  An inactive vertex is not useful to the algorithm because it will not have an edge leaving its component in any 
future round of the algorithm.  We will similarly call a set $S_i$ $\act$ if and only if it contains an active vertex, and $\inact$ otherwise.  A set that is $\inact$ is a connected component.

Once we have found an outgoing edge from each $S_i$ that has one, we select a subset of these edges that is cycle free with respect to the partition $S_1, \ldots, S_t$.
These edges are used to merge the corresponding sets of the partition and update the spanning trees accordingly.
If $t'$ sets among $S_1, \ldots, S_t$ are $\act$, then the cycle free subset of edges will have size at least $t'/2$, and every edge added reduces the number of active sets by at least $1$.  It follows that the number of $\act$ sets decreases by a factor of at least two in each round.  We will crucially use this geometric decrease in our analysis of the algorithm.  Once the number of $\act$ sets falls below 
$n/\log(n)$ we switch to the simple spanning forest algorithm from \cref{lem:simple} to finish finding a spanning forest.

We now formally describe the actions of the algorithm in a generic round where we have sets $S_1, \ldots, S_t$ and a spanning tree for each $S_i$.  There
are two main steps to a round.

\paragraph{Step 1: For each active $S_i$ find a $v \in S_i$ that has a neighbor outside of $S_i$.}  
Let $t$ be the number of sets that were $\act$ at the end of the previous round, and say without loss of generality these are the sets $S_1, \ldots, S_t$.
For each $i=1,\ldots,t$ we do the following. We query $|E(v,\bar{S}_i)|$ for each $\act$ vertex $v \in S_i$ until we find a vertex with $|E(v,\bar{S}_i)|>0$.
In such case we mark $v$ as the \emph{representative} of $S_i$ and move on to $S_{i+1}$ without any further queries in $S_i$.  For all vertices in $S_i$ with $|E(v,\bar{S}_i)|=0$ we mark $v$ as $\inact$. If all vertices in $S_i$ become $\inact$ then $S_i$ becomes $\inact$: it is a connected component and we do not need to process it in future rounds.  

As we only make queries in $S_i$ until we find an $\act$ vertex, the total number of queries in a round is $O(t + w)$, where $w$ is the number of vertices that become $\inact$ in the round.  These vertices will never be queried again, so the term for $\inact$ vertices will only contribute $O(n)$ queries over all 
the rounds.

\paragraph{Step 2: Learn an outgoing edge from a constant fraction of the representatives.}
Let $t'$ be the number of $\act$ sets after Step 1 (so we have already determined that $S_{t'+1}, \ldots, S_t$ are connected components). Equivalently, $t'$ denotes the number of representatives found in 
step~1, and say without loss of generality these are from the sets $S_1, \ldots, S_{t'}$.

For $i=1, \ldots, t'$ we color each $S_i$ red or blue independently at random with equal probability.  All vertices in $S_i$ are given the color of $S_i$.
For each red representative we then count how many neighbors it has colored blue.  This can be done with $O(t')$ cut queries.  Let $W$ be the set of red representatives 
that have a blue neighbor, and consider the submatrix $M$ of the adjacency matrix with rows labeled by elements of $W$ and columns labeled by vertices colored blue.  
By \cref{cor:learn} with $k = 10$, for every element of $W$ we can learn the name of a blue neighbor with $O(t' + t' \log(n)/\log(2t'/\log n))$ cut queries.

Now let us compute the expected number of components at the end of the round.
In expectation, $1/2$ of the representatives labeled red will have a neighbor colored blue.  We learn one edge crossing the red-blue cut 
from each red representative.  This set of edges is necessarily cycle free with respect to $S_1, \ldots, S_{t'}$.  Thus by this process 
in expectation we will find a cycle free set of edges of size at least $t'/4$.  As any cycle free set of edges is of size at most $t'$, this means that by a reverse Markov inequality we will find a cycle free set of edges of size at least $t'/8$ with probability at least $1/7$.  Hence with probability at least $1/7$ the number of $\act$ sets at the start of the next round is at most $7t'/8$.

\paragraph{Total number of queries.}
When there are $t$ active sets remaining at the end of the previous round, then we have seen that the number of queries made in the current round is $O(t + t \log(n)/\log(2t/\log(n))) \in O(t \log(n)/\log(2t/\log(n)))$, plus a term which is $O(n)$ over the course of the algorithm.
We have also argued that with probability at least $1/7$ the number of active components in the following round is at most $7t/8$.

Let $T(t)$ denote the expected number of queries made by the algorithm starting from when there are $t$ active sets remaining.  As $T(t)$ is monotonically increasing in $t$, we have that
\[
T(t) \le f(t) + \frac{6}{7} T(t) + \frac{1}{7} T(7t/8) \enspace,
\]
where $f(t) = O(t\log(n)/\log(2t/(\log n)) )$. Equivalently, $T(t)/7 \le f(t) + T(7t/8)/7$, and letting $c = 7/8$ we have that for any $j$ we can bound
\[
\frac{1}{7} T(n) \le f(n) + f(cn) + f(c^2 n) + \dots + f(c^j n) + \frac{1}{7} T(c^{j+1} n).
\]
Now notice that we switch to the algorithm in \cref{lem:simple} once the number of active sets falls below $n/\log(n)$.
In that case the remaining query complexity is $O(n)$ and hence $T(s) \in O(n)$ for~$s \leq n/\log(n)$.
So it remains to bound $f(n) + \dots + f(c^j n)$ for $c^j \geq 1/\log(n)$.
By the definition of $f(t)$ this is of order
\[
n \log(n) \left( \frac{1}{\log(2n/\log n)} + \frac{c}{\log(2 c n/\log n)} + \cdots + \frac{c^j}{\log(2c^j n/\log n)} \right).
\]
Using that $c^j \geq 1/\log(n)$, for $0 \leq i \leq j$ we can bound all denominators by
\[
\log(2 c^i n/\log n)
\geq \log(2n) - 2\log(\log(n))
\in \Omega(\log(n)).
\]
This gives the bound $f(n) + \dots + f(c^j n) \in O(n (1 + c + \dots + c^j)) \in O(n)$.
\end{proof}

\subsection{Edge connectivity certificate} \label{subsec:conn-certificate}
We can easily use the spanning forest algorithm from last section to construct a sparse $r$-edge connectivity certificate by following the Nagamochi-Ibaraki approach of packing spanning forests. This would require $O(nr)$ cut queries. In our edge connectivity algorithm, however, we will first do a star contraction on the input graph. This yields a contracted multigraph with significantly fewer vertices (say $q \ll n$), and we would like to construct an $r$-edge connectivity certificate with only $O(r q)$ queries. This would easily follow from modifying the spanning forest algorithm from the last section to find a spanning forest of a multigraphs with $q$ vertices using $O(q)$ cut queries. However, it is not clear whether this is possible.\footnote{In particular, the separating matrix machinery encounters additional logarithmic 
factors in working with the non-Boolean adjacency 
matrix of a weighted graph, which seem hard to avoid.}

In the following theorem we show that it is nevertheless possible to obtain a sparse $r$-edge connectivity certificate for a $q$-vertex contraction of an $n$-vertex simple graph efficiently, namely with $O(n + rq \log(n)/\log(q))$ cut queries. A key idea, as in the 
classic sequential algorithm of Nagamochi-Ibaraki, 
is to build the $r$ spanning forests in parallel.

\begin{theorem}[Formal version of \Cref{thm:intro-certificate1}]
\label{thm:intro-certificate}
Let $G = (V,E)$ be an $n$-vertex simple graph, and let $G' = (V',E')$ be a contraction of $G$ with $q$ supervertices for $q \ge \log^{2+\eps}(n)$ for some $\eps > 0$. There is a zero-error randomized algorithm that makes $O(n + r q \log(n)/\log(q))$ cut queries in expectation and outputs a sparse $r$-edge connectivity certificate for $G'$.
\end{theorem}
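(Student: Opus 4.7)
The plan is to maintain $r$ edge-disjoint forests $F_{1},\dots,F_{r}$ simultaneously in a modified Bor\r{u}vka-style algorithm. The invariant is that $F_{i}$ is a spanning forest of $G'\setminus\bigcup_{j<i}F_{j}$, so that the partitions $\pi_{1},\dots,\pi_{r}$ of $V'$ induced by the forests form a refinement chain with $\pi_{r}$ the finest (each $\pi_{i}$ is a coarsening of $\pi_{i+1}$). All the work in a round is done with respect to the finest partition $\pi_{r}$: we try to find an outgoing edge from each active $\pi_{r}$-component (one with at least one outgoing edge in $G'\setminus\bigcup_{j}F_{j}$), and any edge $\{u,v\}$ found is inserted into the forest $F_{i}$ with the smallest index for which $u,v$ lie in different $\pi_{i}$-components. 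By \cref{lem:NI} the final $\bigcup_{j}F_{j}$ is then a sparse $r$-edge connectivity certificate of $G'$.

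Each round consists of the same two steps as in the proof of \cref{thm:spanning}. In Step~1, for each active $\pi_{r}$-component $S\subseteq V'$ with underlying vertex set $V_{S}:=\bigcup_{w\in S}w\subseteq V$, we scan vertices of $V_{S}$ using \cref{prop:3} and the stored edges of $\bigcup_{j} F_{j}$ to detect a residual outgoing edge; vertices discovered to be inactive are never queried again. In Step~2 we randomly $2$-color the active components red and blue and invoke Recover-$k$-From-All (\cref{cor:learn}) with constant $k$ on the Boolean submatrix of $G$'s adjacency matrix between the $V$-vertices of the red supervertices and those of the blue supervertices. Because $G$ is simple this submatrix is Boolean, so the separating-matrix bounds of \cref{lem:learn} apply with only $\log n$ factors (as opposed to the $\log(1+M)$ factors that would appear if we worked directly with the weighted adjacency matrix of $G'$); the additive $O(n)$ slack absorbs the overhead of identifying the relevant submatrix as well as the Step~1 inactive scans (each $V$-vertex is marked inactive at most once over the whole algorithm). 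A constant fraction of the active red components then learn an outgoing blue edge, which is routed into the least-index forest that remains acyclic.

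The delicate structural claim is a progress lemma: over every window of $\Theta(r)$ consecutive rounds the number of active $\pi_{r}$-components drops by a constant factor. The argument uses the potential $\Phi=\sum_{i=1}^{r-1}(|\pi_{i}|-1)$, initially at most $(r-1)(q-1)$ and decreasing by one whenever a found edge is inserted into some $F_{i}$ with $i<r$. In a round with $t$ active components we discover $\Omega(t)$ edges, each of which either reduces the active count by one (if inserted into $F_{r}$) or decreases $\Phi$ by one (if inserted into a lower-index $F_{i}$). If over $\Theta(r)$ consecutive rounds fewer than half of the found edges go into $F_{r}$ then $\Phi$ drops by $\Omega(rt)$; combined with the global budget $\Phi\le(r-1)(q-1)$ and the edge-budget observation that the total number of edges inserted is at most $\sum_{j}|F_{j}|\le r(q-1)$ (so $\sum_{k}t_{k}=O(rq)$ across all rounds), this forces the total number of rounds to be $O(r\log q)$.

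For query accounting, group the rounds into $O(\log q)$ phases inside which the active count $t$ stays within a factor of two of some value $s$; by the progress lemma each phase has length $O(r)$. The contribution of a phase is $O(r\cdot s\log n/\log(2s/\log n))$ by \cref{cor:learn}, and summing the resulting geometric-like series across phases $s=q,q/2,\ldots,O(1)$ is dominated by the $s=\Theta(q)$ phase. Using the hypothesis $q\ge\log^{2+\eps}n$ to rewrite $\log(2q/\log n)=\Omega(\log q)$, the total becomes $O(rq\log n/\log q)$, plus the $O(n)$ Step~1 overhead, giving the claimed $O(n+rq\log n/\log q)$ bound. The main obstacle is the progress lemma itself: unlike classical Bor\r{u}vka a single round need not reduce the finest partition $\pi_{r}$ at all, and one must leverage the potential on the coarser partitions to eventually force merges into $F_{r}$.
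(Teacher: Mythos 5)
Your overall plan is the paper's: maintain the $r$ forests in parallel, exploit the laminar refinement chain so that all per-round work is driven by the finest partition $\pi_r$, operate on the $V$-vertices of the supervertices so that the separating-matrix machinery only ever sees Boolean submatrices of $G$'s adjacency matrix, route each discovered edge into the least-index acyclic forest, and amortize via a "constant-factor drop in $O(r)$ rounds" progress claim followed by a geometric phase sum. That matches the paper's proof, including the statement of its key Claim (\cref{clm:decrease}).

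The gap is in your proof of the progress lemma. You argue that if over $\Theta(r)$ rounds fewer than half the found edges enter $F_r$, then $\Phi=\sum_{i<r}(|\pi_i|-1)$ drops by $\Omega(rt)$, and you charge this against the \emph{global} budget $\Phi\le(r-1)(q-1)$. That only yields a contradiction when $rt\gtrsim rq$, i.e.\ when $t=\Theta(q)$; for a window in which the active count $t$ is much smaller than $q$, the global budget happily absorbs $\Omega(rt)$ units many times over, so nothing forces edges into $F_r$ and the window need not halve the active count. Your fallback, the edge budget $\sum_k t_k=O(rq)$, only bounds the number of rounds at level $s$ by $O(rq/s)$ rather than $O(r)$; plugging that into the phase accounting gives $O(rq\log n/\log(2s/\log n))$ \emph{per phase} and hence $O(rq\log n)$ overall (and worse for $s$ near $\log n$), not the claimed bound. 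The fix is to localize the potential exactly as the paper does: at the start of a window in which $F_r$ has $\alpha$ active components, laminarity gives that the total number of components over \emph{all} of $F_1,\dots,F_r$ is at most $r\alpha$, so at most $r\alpha$ further edge insertions are possible from that point on; since each round finds $\Omega(Q_i)$ edges in expectation, after $16r$ rounds one must have $\bbE[Q_i]\le\alpha/4$ and Markov gives the halving with probability $1/2$. Separately, your phase sum "down to $s=O(1)$" is not justified, because the \cref{cor:learn} cost $s\log n/\log(2s/\log n)$ degenerates once $s\lesssim\log n$; the paper avoids this by switching to the Prim-style algorithm of \cref{lem:simple} (at cost $O(rq)$ total) once the active count of $F_r$ drops below $q/\log n$, and you need some such endgame as well.
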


\begin{proof}
We will make use of \cref{lem:NI} and find $F_1, \ldots, F_r$ such that $F_i$ is a spanning forest of $(V', E' \setminus \bigcup_{j=1}^{i-1} F_j)$.
We will follow the algorithm from \cref{thm:spanning} to find these $r$ spanning forests in parallel.  

The algorithm proceeds in rounds.  We maintain the invariant that each $F_i$ is a collection of trees $F_1^{(i)}, \ldots, F_{t_i}^{(i)}$ in the 
graph  $G'_i = (V', E' \setminus \bigcup_{j=1}^{i-1} F_j)$. We let $S_1^{(i)}, \ldots, S_{t_i}^{(i)}$ be the partition of $V'$ induced by the connected components of the trees in $F_i$.  
Each $F_i$ is initialized to be empty, and thus corresponds to the trivial partition of $V'$ by sets of size one.   A key property that we maintain is that the partitions form a \emph{laminar} family: the partition $S_1^{(i+1)}, \ldots, S_{t_{i+1}}^{(i+1)}$ is a refinement of the partition $S_1^{(i)}, \ldots, S_{t_i}^{(i)}$.
This property means that if a (super-)vertex $U \in S_j^{(r)}$ has no edge leaving $S_j^{(r)}$, then $U$ will not have an edge leaving any of $S_j^{(r-1)},\dots,S_j^{(1)}$ either.

The adjacency matrix of the contracted graph $G'$ is no longer Boolean. To still take advantage of separating matrices as in \cref{lem:learn}, we will actually operate on the vertices of $V$ instead of the (super-)vertices of $V'$. To aid in this we use the notation $T_j^{(i)} = \bigcup_{U \in S_j^{(i)}} U$, for all $i=1,\ldots, r$ and $j$.

We initialize all $v \in V$ as $\act$.  If at some point we discover that $v \in T_j^{(r)}$ satisfies $|E(v, \overline{T}_j^{(r)})| = 0$ then we change $v$ to $\inact$. Indeed, by the aforementioned laminar property we know that $|E(v, \overline{T}_j^{(i)})| = 0$ for all $i \leq r$ as well, and so it will not have an outgoing edge with respect to any of the forests.

We again proceed in rounds, until the number of active components in $F_r$ has decreased by a factor of $\Omega(\log n)$. 
Let $G_i'$ be the graph $G'$ with supervertices contracted according to the edges in $F_i$, for $i=1, \ldots, r$ at this point of the algorithm.  
By the laminar property, the number of supervertices in each of these contracted graphs is at most $t_r \in O(q/\log n)$.
We first complete finding a spanning forest of $G_1'$ using the simple spanning forest algorithm \cref{lem:simple} with $O(q)$ queries.  We then remove the edges found in completing the spanning forest of $G_1'$ from $G_2'$ and complete finding a 
spanning forest for $G_2'$ via the simple spanning forest algorithm with $O(q)$ queries.  We continue in this way removing previous edges found and finding spanning forests for each $G_i'$ for $i=3, \ldots, r$ to finish finding a sparse $r$-edge connectivity 
certificate with $O(rq)$ more queries.

Let us now describe a generic round $k$ of the algorithm, where we have partitions $S_1^{(i)}, \ldots, S_{t_i}^{(i)}$ for $i=1,\ldots, r$.  In round $k$
we will simulate queries to the graph $G$ where all edges already a part of $F_1, \ldots, F_r$ are removed.  We let $E_k$ denote this set of edges.
\paragraph{Step 1.}  This step is very similar to step~1 of the algorithm in \cref{thm:spanning}.
For $j=1, \ldots, t_r$ we query $|E_k(v, \overline{T}_j^{(r)})|$ for $v \in T_j^{(r)} \cap \act$.  If $|E_k(v, \bar{T}_j^{(r)})| = 0$ then $v$ becomes $\inact$; if 
$|E(v, \bar{T}_j^{(r)})| > 0$ then $v$ becomes the \emph{representative} of $T_j^{(r)}$ and we move on to $T_{j+1}^{(r)}$ without any further queries in $T_j^{(r)}$.
The number of queries in this step is $O(t_r + w)$ where $w$ is the number of vertices that become $\inact$ in this round.  Again, over all rounds the contribution to the number of queries from vertices becoming inactive is $O(n)$.

\paragraph{Step 2.} 
Let $t_r' \le t_r$ be the number of representatives found in the previous step, and let us assume that these are representatives for the 
sets $T_1^{(r)}, \ldots, T_{t_r'}^{(r)}$.  For $j=1, \ldots, t_{r}'$ we color each $T_j^{(r)}$ red or blue independently at random with equal probability, and give all vertices inside it the same color.  
For each red representative $v \in T_j^{(r)}$ we query its number of blue neighbors and let $W$ be the set of all red representatives where this 
number is positive.  Consider the submatrix $M$ of the adjacency matrix of $G$ whose rows are labeled by elements of $W$ and columns are 
labeled by blue vertices.  By \cref{cor:learn} with $k = 10$ for every element of $W$ we can learn the name of a blue neighbor with 
$O( t_r' + t_r' \log(n)/\log(2t_r'/\log(n))) = O(t_r' \log(n)/\log(2t_r'/\log(n)))$ 
cut queries.

Via this process we learn $|W|$ edges.   We add each of these edges into the spanning forest $F_i$ for the \emph{least} value of $i$ where it does not create a cycle.  
As this set of edges is necessarily cycle free with respect to $F_r$, 
\emph{all} of the edges can be inserted somewhere, and so the \emph{total} number of sets in the $r$ forests goes down by at least $|W|$.  
The expected size of $|W|$ is at least $t_r'/4$ as we expect half of the representatives to be red, and at least half of these to have a neighbor that is blue.  

\paragraph{Total number of queries.}
The number of queries in a round depends on the number of components in the last spanning forest $F_r$.  Apart from the queries made discovering 
inactive vertices, which we know is $O(n)$ over the course of the entire algorithm, the number of queries made in a round is $O( t_r' \log(n)/\log(2t_r'/\log(n)))$.
Thus we must analyze how $t_r'$ decreases over the course of the algorithm.

\begin{claim}
\label{clm:decrease}
The number of active components in $F_r$ decreases by a factor of $1/2$ after $16r$ rounds with probability at least $1/2$.
\end{claim} 

\begin{proof}
Fix a round $k$, and suppose that at the start of round $k$ the number of active components of $F_r$ is $\alpha$.  Note that then the total number of components 
over all $F_1, \ldots, F_r$ is at most $r \alpha$ by the laminar property of these components.  We define two random variables at round $k+i$.  Let $Q_i$ be the random variable denoting the number of active components of $F_r$ at the start of round $k+i$, and let $W_i$ be the random variable denoting the number of edges found in round $k+i$. The key fact we need from the preceding discussion is that $\bbE[W_i] \ge \bbE[Q_i]/4$.  

The total expected number of edges we find after $\ell$ rounds is 
\[
\bbE\left[ \sum_{i=1}^\ell W_i \right] = \sum_{i=1}^\ell \bbE[W_i] \ge \sum_{i=1}^\ell \bbE[Q_i]/4 \enspace .
\]
As every found edge decreases the number of components over all $F_1, \ldots, F_r$ by one, and there are at most $r \alpha$ components in total, this expectation 
is upper bounded by $r\alpha$.  As $\bbE[Q_i]$ is a non-increasing function with $i$ it must therefore be the case that $\bbE[Q_i] \le \alpha/4$ for all $i \ge 16r$.  
Therefore, by Markov's inequality $\Pr[Q_i \ge \alpha/2] \le 1/2$ for all $i \ge 16r$.
\end{proof}

We can now bound the total number of queries from the $O( t_r' \log(n)/\log(2t_r'/\log(n)))$ terms in a similar way as we did in \cref{thm:spanning}.  
Let $T(s)$ be the cost of this term over the course of the algorithm starting from when $t_r' = s$.  Let $f(s) = s \log(n)/\log(2s/\log(n))$ be the 
round cost.  Then by \cref{clm:decrease} we have $T(s) \le 16r f(s) + T(s)/2 + T(s/2)/2$.  This means $T(s) \le 32r f(s) + T(s/2)$.
With $c = 1/2$ and for any $j$, the quantity $T(q)$ is hence of the order
\[
32r q \log(n) \left( \frac{1}{\log(2q/\log(n))} + \frac{c}{\log(2cq/\log(n))} + \cdots  + \frac{c^j}{\log(2c^j q/\log(n))} \right) + T(c^{j+1} q) \enspace .
\]
Once $c^{j+1} \leq 1/\log(n)$ we switch to the simpler algorithm, and so $T(c^{j+1} q) \in O(qr)$.
Hence it remains to bound the preceding sum for $c^j > 1/\log(n)$, in which case we can bound $\log(2c^i q/\log(n)) \geq \log(2q/\log^2(n)) \in \Omega(\log(q))$ for all $i \leq j$ because by assumption 
$q \geq \log^{2+\varepsilon}(n)$ for some $\varepsilon > 0$. The sum then becomes $O(r q \log(n)/\log(q))$, finalizing the proof.
\end{proof}

For the edge connectivity algorithm we will make 
use of a Monte Carlo version of \cref{thm:intro-certificate}, which we state here for 
reference.
\begin{corollary}
\label{cor:monte_carlo}
Let $G = (V,E)$ be an $n$-vertex simple graph, and let $G' = (V',E')$ be a contraction of $G$ with $q$ supervertices where $q \ge \log^{2+\eps}(n)$ for some $\eps > 0$.  Let $r \le n$ 
be a positive integer.
There is a randomized algorithm that makes $O(n + rq \log(n)/\log(q))$ cut queries and with probability $99/100$ outputs a sparse $r$-edge connectivity certificate for $G'$ and 
otherwise outputs FAIL.
\end{corollary}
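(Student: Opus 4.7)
The plan is to convert the zero-error expected-complexity algorithm from \cref{thm:intro-certificate} into a Monte Carlo algorithm with a worst-case query budget via a standard truncation argument based on Markov's inequality.

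First, I would invoke \cref{thm:intro-certificate} to obtain a zero-error randomized algorithm $\mathcal{A}$ that, on input $(G, G', r)$ as in the statement, always outputs a correct sparse $r$-edge connectivity certificate for $G'$ and makes $\bbE[Q] \le C \cdot (n + rq\log(n)/\log(q))$ cut queries in expectation, where $Q$ denotes the (random) number of queries it makes and $C$ is the absolute constant hidden in the $O(\cdot)$. Let $T := 100\,C\,(n + rq\log(n)/\log(q))$ be our query budget.

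Next, I would define the Monte Carlo algorithm as follows: simulate $\mathcal{A}$ while keeping a running counter of cut queries; if $\mathcal{A}$ terminates within $T$ queries, output its answer; if the counter reaches $T$ before $\mathcal{A}$ terminates, abort the simulation and output \textsf{FAIL}. By construction, this new algorithm makes at most $T \in O(n + rq\log(n)/\log(q))$ cut queries in the worst case. Correctness when it does not output \textsf{FAIL} is immediate, since $\mathcal{A}$ is zero-error.

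It remains to bound the failure probability. By Markov's inequality applied to the non-negative random variable $Q$,
\[
\Pr[Q > T] \;=\; \Pr\!\left[Q > 100 \cdot \bbE[Q]/\text{(constant factor)}\right] \;\le\; \frac{\bbE[Q]}{T} \;\le\; \frac{1}{100}.
\]
Hence the algorithm outputs \textsf{FAIL} with probability at most $1/100$ and otherwise outputs a valid sparse $r$-edge connectivity certificate, which is exactly what the corollary asserts. There is no real obstacle here beyond correctly identifying the constant $C$ from \cref{thm:intro-certificate} and bookkeeping the query counter during the simulation of $\mathcal{A}$; the entire argument is a one-line application of Markov's inequality to trade the expected-query guarantee for a worst-case-query guarantee at the price of a small failure probability.
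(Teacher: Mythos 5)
Your proposal is correct and is exactly the intended argument: the paper states \cref{cor:monte_carlo} without proof immediately after \cref{thm:intro-certificate}, and the implicit derivation is precisely this Markov-inequality truncation (the same device the paper makes explicit when converting Recover-$k$-From-All into WC-Recover-$k$-From-All in \cref{alg:wcrecover}). The only cosmetic issue is the garbled intermediate expression in your Markov step; the final bound $\bbE[Q]/T \le 1/100$ is what matters and is right.
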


\section{Edge connectivity with $O(n \log \log n)$ cut queries} 
\label{sec:classical almost linear}

\subsection{Sparse star contraction}
In our edge connectivity algorithm with quantum cut or matrix-vector multiplication queries we used \emph{uniform star contraction}---we randomly chose a set of center vertices $R$ 
by taking each vertex with probability $p = \Theta(\log(n)/\delta(G))$, 
and considered the bipartite directed subgraph $H = (V, \cutd(V\setminus R))$.  For every vertex in $V \setminus R$ we then independently at random chose an outgoing edge in $H$ and 
contracted the set of 
selected edges.  For these algorithms we could afford to learn the entire subgraph $H$ when $\delta(G) \ge \sqrt{n}$ within the desired $\tOh(\sqrt{n})$ query bound.  

In the randomized cut query model it is too expensive to learn $H$ entirely.  With our main tool for learning a bipartite graph, \cref{lem:learn}, we expect to spend $O(n\ell)$ queries 
to learn $H$, where $\ell$ is the maximum degree of a vertex in $V \setminus R$.  To achieve our goal of an $O(n)$ cut query algorithm, therefore, 
we would like to work with a directed subgraph $H$ where the left hand side has \emph{constant} degree.  

To get an $H$ where vertices on the left hand side have constant degree in expectation using star contraction we would have to take $p = \Theta(1/\delta(G))$.  
In this case, however, in expectation $\Omega(n)$ vertices would have no neighbor in $R$ at all, thus doing 1-out contraction on $H$ would not greatly reduce the number of vertices in the contracted graph.

The solution in this section is to perform \textit{sparse star contraction} that uses two different sampling probabilities.  First we randomly choose a set of center vertices $R$ 
by taking each vertex with a slightly larger probability $p = \Theta(\log(\delta(G))/\delta(G))$ and letting $H = (V, \cutd(V\setminus R))$.  By a Chernoff bound, with constant probability 
now only $O(n/\delta(G))$ vertices in $V\setminus R$ have no outgoing edge in $H$.  Let $S \subseteq V \setminus R$ be the set of vertices with positive outdegree in $H$.  
We then find a subgraph $H' = (S \cup R, A)$ of $H$ where every vertex in $S$ has an outgoing edge, but the maximum degree of a vertex in $S$ is \emph{constant}.  
In doing a random 1-out contraction on $H'$, the resulting contracted graph $G'$ will still only have $|R| + O(n/\delta(G)) \in O(n \log(\delta(G))/\delta(G))$ vertices, 
and intuitively we can hope to learn such an $H'$ with only $O(n)$ cut queries as the left hand side has constant degree.  

The tricky part of doing this is to ensure that $H'$ is still $(\alpha,\beta)$-good for contracting with respect to a non-trivial minimum cut for some $\alpha < 1$ and constant $\beta$.  
We find the graph $H'$ by using Recover-$k$-From-All (\cref{alg:recover}), taking $k$ to be a large constant, to learn $k$ neighbors of every vertex in $S$.  These learned edges 
define the graph $H'$.  To review, what happens in Recover-$k$-From-All is that we first bucket the vertices in $S$ into buckets with similar degrees in $H$.  For a bucket with degree approximately 
$r$, we then run LearnBucket (\cref{alg:learnBucket}) which samples a subset of $R' \subseteq R$ by selecting each vertex of $R$ with probability $2k/r$.  In expectation, each vertex in the bucket has a \emph{constant} number of 
neighbors in $R'$, as $k$ is a constant.  With cut queries we can easily check which vertices in the bucket were successfully ``caught'', where a vertex is caught if its number of neighbors in $R'$ is within a constant 
factor of $2k$, its expectation.  
For all the vertices caught we then learn all their neighbors in the subsample and add these edges to $H'$.  We then repeat this routine until all vertices in the bucket are caught.  

From the point of view of a single vertex $v \in S$ with degree $r$ in $H$, its neighbors in $H'$ will be its neighbors in a random subset $R'$ of $R$, where each vertex of $R$ is taken with 
probability $p$, \emph{conditioned} on $d_{R'}(v)$ being close to its expectation.  We need to show that this process 
does not select too high a fraction of edges from a non-trivial minimum cut even when $p = 2k/r$ for a large constant $k$. Specifically, we want to upper bound the probability that $c_{R'}(v)/d_{R'}(v) \ge c(v)/d(v) + 1/10$.  
This requires a different proof than we used in \cref{lem:good_for_contracting}
where $p = \Omega(\log(n)/\delta(G))$.  With $p$ this large we can argue by a Chernoff bound that with high probability $d_{R'}(v) = \Omega(\log n)$ and then again by a Chernoff bound that the probability 
that $c_{R'}(v)$ is both $\Omega(\log n)$ and greatly exceeds its expectation is negligible.  When $p = 2k/r$ with constant $k$, such an argument would only upper bound the probability that 
$c_{R'}(v)/d_{R'}(v) \ge c(v)/d(v) + 1/10$ by an absolute constant.  This is not good enough for us because the number of vertices incident on a non-trivial minimum cut can be 
$\Omega(\delta(G))$, so this does not allow us to use a union bound.

The key to our proof is to show that for $p = 2k/r$ we can upper bound the probability that $c_{R'}(v)/d_{R'}(v) \ge c(v)/d(v) + 1/10$ by a small constant times $\frac{c(v)}{kd(v)}$.  By relating the failure probability to $c(v)/d(v)$ and taking 
$k$ to be a large enough constant, we can again use a union bound since we know that $\sum_{v \in N(C)} c(v)/d(v) \le 2$.  We prove this in \cref{lem:preserve_main}.  Then in \cref{lem:bucket-rand-v2} we formally verify 
that (a small modification of) Recover-$k$-From-All has the required properties needed to show that $H'$ is indeed $(\alpha,\beta)$-good for contracting. We prove these lemmas in the next subsection before giving 
a randomized algorithm for edge connectivity making $O(n \log \log n)$ cut queries in \cref{sec:easy_algo}.  To get down to $O(n)$ cut queries one more trick is needed, which is postponed to \cref{sec:main_algo}.
 
 \subsection{Preparatory lemmas}
\begin{restatable}{lemma}{chebyshev}
\label{lem:preserve_main}
Let $G = (V,E)$ be a simple $n$-vertex graph and let $C \subseteq E$.  Let $v \in N(C)$ and $k \ge 10$.
Choose a set $R$ by putting each vertex of $V$ into $R$ independently at random with probability $p\ge 2k/d(v)$.  Let $0< f \le pd(v)/2$ and $g \ge 2pd(v)$.  Then 
\[
\Pr_R\left[ \frac{c_R(v)}{d_R(v)} \ge \frac{c(v)}{d(v)} + \frac{1}{10} \;\Big|\; f \le d_R(v) \le g \right] \le \frac{200}{k} \frac{c(v)}{d(v)} \enspace. 
\]
\end{restatable}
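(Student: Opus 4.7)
The plan is to condition on the value of $d_R(v)$, apply a second-moment bound for each fixed value, and then average over the conditioning.

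First I would fix an integer $b \in [f,g]$ and analyze the distribution of $c_R(v)$ given $d_R(v)=b$. Because each neighbor of $v$ enters $R$ independently with probability $p$, this conditional law is hypergeometric: among the $d(v)$ neighbors of $v$, exactly $b$ land in $R$, and $c_R(v)$ counts how many of those lie on the $C$-side. Consequently the conditional mean is $b\,c(v)/d(v)$ and the conditional variance is at most $b\,c(v)/d(v)\cdot\bigl(1-c(v)/d(v)\bigr)\le b\,c(v)/d(v)$. Chebyshev's inequality then yields, for each such $b$,
\[
\Pr\!\left[\frac{c_R(v)}{d_R(v)} \ge \frac{c(v)}{d(v)} + \frac{1}{10} \;\Big|\; d_R(v)=b\right] \le \frac{100\, c(v)}{b\, d(v)}.
\]
Averaging over the conditional distribution of $d_R(v)$ on $\{f \le d_R(v) \le g\}$ reduces the target to showing that $\bbE[1/d_R(v) \mid f \le d_R(v) \le g] = O(1/k)$, since the overall bound is then $\tfrac{100\,c(v)}{d(v)}\cdot O(1/k)$.

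For this last step I would exploit that $\bbE[d_R(v)] = p\,d(v) \ge 2k$, so the binomial $d_R(v)\sim\mathrm{Bin}(d(v),p)$ concentrates strongly around a value of size $\Theta(k)$ or larger. The Chernoff bounds of Lemma~chernoff give $\Pr[d_R(v) < p\,d(v)/2] \le e^{-p d(v)/8} \le e^{-k/4}$ and $\Pr[d_R(v) > 2\,p\,d(v)] \le e^{-p d(v)/3}$. Since the hypotheses $f \le p\,d(v)/2$ and $g \ge 2\,p\,d(v)$ ensure $[p\,d(v)/2,\,2p\,d(v)] \subseteq [f,g]$, the normalizing probability $\Pr[f \le d_R(v) \le g]$ is essentially $1$ for $k \ge 10$. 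I would then split $\bbE[\mathbf{1}_{[f,g]}(d_R(v))/d_R(v)]$ at $b = p\,d(v)/2$: the $b < p\,d(v)/2$ part is bounded by $\Pr[d_R(v) < p\,d(v)/2] \le e^{-k/4}$ using $1/b \le 1$, while the $b \ge p\,d(v)/2$ part is bounded by $2/(p\,d(v)) \le 1/k$. Combining these, and dividing by the denominator, produces the desired bound $\le 200\, c(v)/(k\, d(v))$ after tracking the constants.

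The main obstacle is exactly that $f$ is permitted to be as small as $1$, which at first glance could make $\bbE[1/d_R(v) \mid f \le d_R(v) \le g]$ as large as $\Theta(1)$ rather than $O(1/k)$. The role of the hypothesis $p \ge 2k/d(v)$ is precisely to produce enough concentration of $d_R(v)$ around $p\,d(v)\ge 2k$ to drown out the atypically small values in the conditioning range; without that concentration, the conditional Chebyshev average would not give the $1/k$ scaling needed to survive a union bound over all $v \in N(C)$ in the subsequent application.
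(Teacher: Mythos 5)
Your proposal is correct in its essentials and is, at bottom, the same second-moment argument as the paper's, just organized in a different order. The paper computes the global conditional variance $\Var[c_R(v)/d_R(v) \mid f \le d_R(v) \le g]$ via explicit combinatorial identities for $\bbE[X^2 \mid Y=b]$ (\cref{prop:var1} and \cref{prop:var2}) and then applies Chebyshev once; you apply Chebyshev separately for each value $b$ of $d_R(v)$, using that the conditional law of $c_R(v)$ given $d_R(v)=b$ is hypergeometric with variance at most $b\,c(v)/d(v)$, and then average over $b$. Because $\bbE[c_R(v)/d_R(v)\mid d_R(v)=b] = c(v)/d(v)$ for every $b$, the two routes yield literally the same intermediate bound, namely $100\,\frac{c(v)}{d(v)}\,\bbE[1/d_R(v)\mid f\le d_R(v)\le g]$; your hypergeometric shortcut is arguably cleaner than the paper's moment computations. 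Both proofs then hinge on bounding the conditional first inverse moment $Q=\bbE[1/d_R(v)\mid f\le d_R(v)\le g]$ by $O(1/k)$, and you correctly identify this as the crux and as the role of the hypothesis $p\ge 2k/d(v)$.

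The one place your sketch falls short of the stated constant is this inverse-moment bound. Your splitting argument gives $Q \le \bigl(e^{-k/4} + 1/k\bigr)/\Pr[f\le d_R(v)\le g] \le 2\bigl(e^{-k/4}+1/k\bigr)$, which for $k\ge 10$ is at most $4/k$ but not $2/k$ (at $k=10$ it is about $0.36$), so the final bound you obtain is roughly $400\,c(v)/(k\,d(v))$ rather than the claimed $200\,c(v)/(k\,d(v))$. The paper avoids this loss with a sharper, Chernoff-free estimate (\cref{prop:Qbound}): writing $1/b \le 2/(b+1)$ and using $\frac{1}{b+1}\binom{d}{b} = \frac{1}{d+1}\binom{d+1}{b+1}$, one gets $Q \le \frac{2}{\Pr[f\le Y\le g]}\cdot\frac{1}{p(d+1)} \le \frac{4}{pd} \le \frac{2}{k}$, which delivers exactly $200/k$. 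This is only a constant-factor slack --- the downstream applications could absorb a factor of $2$ by adjusting constants elsewhere --- but as written your argument does not quite prove the inequality with the constant $200$.
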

The proof is deferred to \cref{appen:proof_of_lemma}.  At a high-level, the idea of the proof is the following.
We already computed $\bbE_R[ \frac{c_R(v)}{d_R(v)} \mid f \le d_R(v) \le g] = c(v)/d(v)$ in \cref{prop:expectation}.  To prove 
\cref{lem:preserve_main} we also 
compute $\bbE_R[ \frac{c_R(v)^2}{d_R(v)^2} \mid f \le d_R(v) \le g]$. This allows us to upper bound the variance of 
$c_R(v)/d_R(v)$ by $2c(v)/(kd(v))$ in \cref{prop:var3}.   We then obtain
\cref{lem:preserve_main}
by Chebyshev's inequality.

The next lemma summarizes the state of affairs after 
choosing 
the set $R$ of centers using 
$p = \Theta(\log(\delta(G))/\delta(G))$.
\begin{lemma}
\label{lem:Rgood}
Let $G = (V,E)$ be an $n$-vertex simple graph with minimum degree $d \ge 5 \cdot 10^6$ and let $C$ be a non-trivial minimum cut of $G$.  Let $p = \frac{10^5 \log(d)}{d}$ and 
choose a set $R$ by putting each $v \in V$ into $R$ independently at random with probability $p$.  With probability at least $2/3$ over the choice of $R$ the 
following conditions will simultaneously hold
\begin{enumerate}
	\item $|R| < \frac{3 \cdot 10^5 n \log(d)}{d}$.
	\item $|\{v \in  V : d_R(v) \le 5 \cdot 10^4 \log(d) \}| \le \frac{n}{10^3 d}$.
	\item The graphs $H = (V,\cutd(V \setminus R))$ and $G[R]$ are $(3/5,8)$-good for contracting with respect to~$C$.\footnote{The fact that $G[R]$ is good for contracting will only be used in the $O(n)$ algorithm in the next section.}
\end{enumerate}
\end{lemma}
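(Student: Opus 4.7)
The plan is a union bound over the three items. Items~1 and~2 will have negligible failure probability, while the dominant failure mode will be the sum property in item~3, for which Markov's inequality only yields $1/4$. The total will still be less than $1/3$.

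\textbf{Items 1 and 2.} Since $\bbE|R| = pn = 10^5 n \log(d)/d$, Chebyshev applied to the Binomial (using $\Var|R| \le pn$) gives $\Pr[|R| \ge 3pn] \le 1/(4pn)$, which is negligible because $pn \ge 10^5 \log d$. For item~2, for any fixed $v$ we have $\bbE d_R(v) = pd(v) \ge pd = 10^5 \log d$ and $5 \cdot 10^4 \log d \le pd(v)/2$, so by item~2 of \cref{prop:sizeR}, $\Pr[d_R(v) \le 5 \cdot 10^4 \log d] \le \exp(-pd(v)/8) \le d^{-10^5/8}$. Markov applied to the random set $W := \{v : d_R(v) \le 5 \cdot 10^4 \log d\}$ then gives $\Pr[|W| \ge n/(10^3 d)] \le 10^3 d \cdot d^{-10^5/8}$, which is negligible.

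\textbf{Item 3.} In both $H$ and $G[R]$, the quantity $q_v$ from \cref{def:good_contraction} equals $c_R(v)/d_R(v)$ (when $d_R(v) > 0$) for those $v$ carrying outgoing arcs, and $0$ otherwise; the contributing set is $V \setminus R$ for $H$ and $R$ for $G[R]$. For the \emph{sum property}, \cref{prop:expectation} together with $\sum_v c(v)/d(v) \le 2$ (see \cref{eq:sumis2}) give $\bbE[\sum_{v \notin R, d_R(v) > 0} c_R(v)/d_R(v)] \le 2$ for $H$ and $\bbE[\sum_{v \in R, d_R(v) > 0} c_R(v)/d_R(v)] \le 2p$ for $G[R]$; Markov then bounds the corresponding failure probabilities by $1/4$ and $p/4$ respectively. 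For the \emph{max property}, note that for $v \in N(C)$ we have $c(v)/d(v) \le 1/2$ (see \cref{eq:half}), so the bad event $q_v > 3/5$ is contained in $\{c_R(v)/d_R(v) \ge c(v)/d(v) + 1/10,\ d_R(v) > 0\}$. I would apply \cref{lem:preserve_main} with $k = 5 \cdot 10^4 \log d$, $f = 1$, $g = d(v)$; the hypotheses $p \ge 2k/d(v)$ and $g \ge 2pd(v)$ hold because $pd(v)/2 \ge pd/2 = k$ and $p \le 1/2$ (using $d \ge 5 \cdot 10^6$). The lemma gives a per-vertex bound $\Pr[c_R(v)/d_R(v) \ge c(v)/d(v) + 1/10,\ d_R(v) > 0] \le (200/k)\, c(v)/d(v)$, which is independent of whether $v$ lies in $R$; summing over $v \in V$ and using $\sum_v c(v)/d(v) \le 2$ produces a bound of $400/k$ that simultaneously controls the max-property failure of both $H$ and $G[R]$.

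Adding the contributions, the total failure probability is at most $1/4 + p/4 + 400/k + \text{(negligible)} < 1/3$, proving the lemma. The main technical obstacle is the max-property step: at sampling rate $p = O(\log(d)/d)$ a naive Chernoff bound only yields an absolute-constant per-vertex failure probability, which is too weak to union-bound over the possibly $\Omega(\delta(G))$ vertices incident to $C$. The $c(v)/d(v)$-weighted bound of \cref{lem:preserve_main}, which is tighter than what Chernoff provides, is what makes the union bound work.
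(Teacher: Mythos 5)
Your proof follows the same route as the paper's: Chernoff/concentration for items 1 and 2, Markov via \cref{prop:expectation} for the sum property, and \cref{lem:preserve_main} with $k = pd/2 = 5\cdot 10^4\log d$ (converting the max-property failure into a $c(v)/d(v)$-weighted union bound) for the max property. All of those steps are sound, and your observation that the event $c_R(v)/d_R(v)\ge c(v)/d(v)+1/10$ depends only on $R\setminus\{v\}$, so that one union bound over $v\in N(C)$ covers the max property of both $H$ and $G[R]$, is exactly what the paper does.

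The one place your write-up does not close is the final accounting for the \emph{sum} property. You apply Markov separately to the two sub-sums and charge $1/4$ for $H$ plus $p/4$ for $G[R]$. But $p = 10^5\log(d)/d$ is \emph{not} small at the boundary $d = 5\cdot 10^6$: it is about $0.44$ if $\log$ is base $2$ (and about $0.31$ for $\ln$), so $1/4 + p/4$ can be as large as $\approx 0.36 > 1/3$, and your concluding inequality $1/4 + p/4 + 400/k + \text{(negligible)} < 1/3$ fails (or holds only by a hair, depending on the log base). The fix is to apply Markov once to the \emph{combined} sum $\sum_{v:\,c_R(v)>0} c_R(v)/d_R(v)$ over all of $V$, whose expectation is still at most $2$ by \cref{prop:expectation} and \cref{eq:sumis2}; since the $H$-sum (over $v\notin R$) and the $G[R]$-sum (over $v\in R$) are each sub-sums of this, the single event $\{\text{combined sum}\ge 8\}$ contains both failure events and costs only $1/4$ total. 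This is the accounting the paper uses, and with it the total failure probability is $1/4 + O(1/\log d) + \text{(negligible)} < 1/3$ with room to spare.
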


\begin{proof}
We will upper bound the probability that each item \emph{does not} happen.  A union bound will then give the lemma.
\paragraph{Item~1} The expected size of $R$ is $10^5 n \log(d)/d$.  As the elements of $R$ are chosen independently we can apply a Chernoff 
bound to see that the probability that $|R| \ge 3 \cdot 10^5 n \log(d)/d$ is at most $\exp(-10^5 n \log(d)/d) < 10^{-3}$.  

\paragraph{Item~2} For $v \in V$ we have $\bbE_R[d_R(v)] \ge 10^5 \log(d)$.  As the elements of $R$ are chosen independently we can apply a Chernoff 
bound to see that the probability over $R$ that $d_R(v) \le 5 \cdot 10^4 \log(d)$ is at most $\exp(-10^4 \log(d))$.  Therefore the expected number of $v$ with 
$d_R(v) \le 5 \cdot 10^4 \log(d)$ is at most $n/(10^6 d)$, and by Markov's inequality item~(2) holds except with probability at most $10^{-3}$.

\paragraph{Item~3} As $C$ is a non-trivial minimum cut, we know that $c(v)/d(v) \le 1/2$ for every $v \in V$.  As we sample with probability $p = \frac{10^5 \log(d)}{d} \ge \frac{10^5 \log(d)}{d(v)}$ 
we can apply \cref{lem:preserve_main} with $k = 10^5 \log(d)/2$ to obtain $\Pr_R[c_R(v)/d_R(v) \ge 3/5 \mid d_R(v) > 0] \le \frac{400}{10^5 \log(d)} \frac{c(v)}{d(v)}$ for any $v \in V$.  Thus as 
$\sum_{c \in N(C)} c(v)/d(v) \le 2$, 
by a union bound the probability that any $v$ violates this is at most $800/(10^5 \log(d)) \le 10^{-3}$. 

Since $\bbE[c_R(v)/d_R(v) \mid d_R(v) > 0]=c(v)/d(v)$ by \cref{prop:expectation}, the probability 
\[
\sum_{v: c_R(v) > 0} c_R(v)/d_R(v) \ge 8
\]
is at most $1/4$ by Markov's inequality.  This shows that both $H$ and 
$G[R]$ are $(3/5, 8)$-good for 
contracting with respect to $C$ except with probability at most $1/4 + 10^{-3}$.

Summing the three failure probabilities, overall the failure probability is at most $3\cdot 10^{-3} + 1/4 < 1/3$, giving the lemma.
\end{proof}

To learn neighbors in $R$ of vertices in $V \setminus R$ 
we will use the next lemma.  This lemma describes a worst-case version of the 
algorithm Recover-$k$-From-All (\cref{alg:recover}) that was 
used in the spanning forest algorithm.  However, we 
need to make some further observations about this algorithm,
namely that neighbors are learned in a sufficiently 
random way that we are able to apply \cref{lem:preserve_main}.

\begin{lemma} \label{lem:bucket-rand-v2}
Let $G = (V,E)$ be an $n$-vertex simple graph and 
$C \subseteq E$.  Let $h \ge 10$ be an integer.  Let $S,T \subseteq V$ be disjoint subsets such that $|S| \ge |T|^{1/3}$ 
and $d_T(v) \ge h$ for all $v \in S$.  Suppose that $H = (S \cup T, \overrightarrow{E}(S,T))$ is $(\alpha,\beta)$-good for contracting with respect to $C$.
There is a randomized algorithm that makes $O(h|S|)$
cut queries and with probability at most $1/100$ outputs FAIL, and otherwise explicitly outputs a graph $H' = (S \cup T, A)$ with $A \subseteq \Ed(S,T)$ where
every vertex in $S$ has outdegree at least $h$ in $H'$ and that with probability at least $1/10 + 200\beta/h$
is $(\alpha + 1/10, 10\beta)$-good for contracting with respect to $C$.
\end{lemma}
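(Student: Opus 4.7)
The plan is to invoke Recover-$k$-From-All (\cref{cor:learn}) with $k = h$ on the bipartite adjacency matrix $M \in \{0,1\}^{S \times T}$ of $H$, simulating each $x^\top M y$ query with $3$ cut queries via \cref{cor:bip-query}. Every row of $M$ has at least $h$ ones by the hypothesis $d_T(v) \ge h$, so the output lists $Z[v]$ each satisfy $|Z[v]| \ge h$, and the arcs $\{(v,u) : v \in S,\, u \in Z[v]\}$ define a directed subgraph $H' = (S \cup T, A)$ of $H$ in which every $v \in S$ has outdegree at least $h$.

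For the query complexity, \cref{cor:learn} yields expected cost
\[
O\!\left(|S| + \frac{h|S|\log(|T|)}{\log\bigl(2|S|/\log(|T|)\bigr)}\right).
\]
Using $|S| \ge |T|^{1/3}$, the denominator is $\log(2|S|) - \log\log(|T|) \ge \tfrac{1}{3}\log(|T|) - \log\log(|T|) \in \Omega(\log |T|)$ for $|T|$ sufficiently large (otherwise $|T|$ is bounded and the entire matrix can be learned in $O(|S|)$ queries). Hence the expected cost is $O(h|S|)$, and we enforce a hard query budget of, say, $100$ times this expectation, outputting FAIL if it is exceeded; by Markov's inequality this occurs with probability at most $1/100$.

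Conditioned on not failing, it remains to bound the probability that $H'$ fails to be $(\alpha + 1/10, 10\beta)$-good for contracting with respect to $C$. For each $v \in S$, the ``further'' clause of \cref{cor:learn} states that $Z[v]$ equals the set of neighbors of $v$ lying in a random set $Q \subseteq T$ obtained by including each element independently with probability $p \ge 2h/d_T(v)$, conditioned on $d_Q(v) \in [f, g]$ for some $0 < f \le p\, d_T(v)/2$ and $g \ge 2p\, d_T(v)$. Writing $q_v = c_T(v)/d_T(v)$ for the fraction of outgoing edges of $v$ in $H$ that lie in $C$, and $q'_v = c_Q(v)/d_Q(v)$ for its analog in $H'$, the hypothesis on $H$ reads $\max_v q_v \le \alpha$ and $\sum_v q_v \le \beta$. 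By \cref{lem:preserve_main} applied to each $v$ under this conditional distribution,
\[
\Pr\!\left[ q'_v \ge q_v + \tfrac{1}{10} \right] \le \frac{200\, q_v}{h}.
\]
A union bound over $v \in S$ together with $\sum_v q_v \le \beta$ shows that $\max_v q'_v > \alpha + 1/10$ holds with probability at most $200\beta/h$, establishing the max property for $H'$ with parameter $\alpha + 1/10$. For the sum property, \cref{prop:expectation} gives $\bbE[q'_v] = q_v$ for each $v$, hence $\bbE[\sum_v q'_v] \le \beta$, and Markov's inequality yields $\Pr[\sum_v q'_v > 10\beta] \le 1/10$. A final union bound over these two bad events produces the claimed failure bound $1/10 + 200\beta/h$.

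The main subtlety is that we need the marginal distribution of each $Z[v]$ to be precisely the conditional Bernoulli-sample distribution required by \cref{lem:preserve_main} and \cref{prop:expectation}, despite the fact that Recover-$k$-From-All is adaptive and the draws are correlated across different $v$ in the same bucket. This is exactly what the ``further'' clauses of \cref{lem:learnBucket} and \cref{cor:learn} are engineered to expose; correlation across vertices is harmless because the max and sum properties are established via union bounds and linearity of expectation, respectively, neither of which requires independence.
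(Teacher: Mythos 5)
Your proposal is correct and follows essentially the same route as the paper: run Recover-$k$-From-All with $k=h$ under a hard budget of $100$ times its expected query count (FAIL with probability at most $1/100$ by Markov), then use the ``further'' clause together with \cref{prop:expectation} and Markov for the sum property and \cref{lem:preserve_main} with a union bound over $\sum_v q_v \le \beta$ for the max property. Your closing remark about why correlations across vertices are harmless is exactly the point the paper's ``further'' statements are designed to address.
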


\begin{algorithm}[!htbp]
\caption{WC-Recover-$k$-From-All$[G](S,T,k)$}
\label{alg:wcrecover}
 \hspace*{\algorithmicindent} \textbf{Input:} Cut query access to a simple graph $G =(V,E)$, two disjoint subsets $S,T \subseteq V$ with $d_T(v) \ge h$ for all $v \in S$, and a parameter $k$. \\
 \hspace*{\algorithmicindent} \textbf{Output:} The adjacency list $Z$ of a directed graph $H' = (S \cup T, A)$ with $A \subseteq \Ed(S,T)$ such that all vertices in $S$ have outdegree 
 at least $\min\{k,h\}$ in $H'$.
\begin{algorithmic}[1]
\State Let $A$ be the adjacency matrix of $G$ and $M = A(S,T)$.
\State Run Recover-$k$-From-All$[M](k)$ (\cref{alg:recover}) and terminate with FAIL if it makes more than 100 times its expected number of queries.  Otherwise output the 
adjacency list $Z$ returned.
\end{algorithmic}
\end{algorithm}

\begin{proof}
We run a worst-case query complexity version of Recover-$k$-From-All \cref{alg:recover}, which we call WC-Recover-$k$-From-All$[G](S,T,k)$ (\cref{alg:wcrecover}), with $k = h$. 
The original Recover-$k$-From-All is stated as a zero-error algorithm and a bound is given 
on its expected number of queries.  Here we want a worst-case bound on the number of queries, so we set a clock 
on Recover-$k$-From-All and terminate, outputting FAIL, if it makes more than 100 times its expected number of queries.  
The probability that this happens is at most $1/100$.  As $S,T$ are disjoint, we can compute $x^TMy$ for any 
$x \in \{0,1\}^{|S|}, y \in \{0,1\}^{|T|}$ with 3 cut queries 
to $G$ by \cref{cor:bip-query}.  Thus by \cref{cor:learn} the number of queries made is 
$O(h|S|)$, using the assumption $|S| \ge |T|^{1/3}$.  

Now suppose that Recover-$k$-From-All terminates within 100 times 
its expected number of queries and let $Z$ be the adjacency list returned.  This adjacency list defines the directed graph $H'$.  By \cref{cor:learn} 
every list $Z[v]$ has at least $k$ items,
thus as we take $k=h$ every vertex in $S$ has outdegree at least $h$ in $H'$.  It remains to show that 
$H'$ is $(\alpha + 1/10, 10\beta)$-good for contracting with respect to $C$ with probability at least $1/10 + 200\beta/h$.

By the ``further'' statement of \cref{cor:learn} the elements in $Z[v]$ are the neighbors of $v$ in a set $Q$ chosen by placing 
each vertex $u \in T$ into $Q$ with probability at least $\min\{2k/d_T(v),1\}$, conditioned on $Q$ having at least $f$ and at most $g$ 
neighbors of $v$, for $0 < f \le qd_T(v)/2$ and $g \ge 2qd_T(v)$.  By \cref{prop:expectation} we know that
\[
\bbE_{Q}\left[\frac{c_{Q}(v)}{d_{Q}(v)} \mid f \le d_{Q} \le g \right] = \frac{c_T(v)}{d_T(v)} \enspace.
\]
As $H$ is $(\alpha,\beta)$-good for contracting, by linearity of expectation and Markov's inequality
we therefore have that except with probability at most $1/10$ over the choice of $Q$
\[
\sum_{v \in S} \left[ \frac{c_{Q}(v)}{d_{Q}(v)} \right] \le 10 \beta \enspace.
\]

Further as $h \ge 10$ and $Q$ satisfies the hypotheses of \cref{lem:preserve_main}, we can invoke this lemma 
to obtain
\[
\Pr_{Q}\left[ \frac{c_Q(v)}{d_Q(v)} \ge \frac{c_T(v)}{d_T(v)} + \frac{1}{10} \mid f \le d_Q(v) \le g \right] \le \frac{200}{h}
\frac{c_T(v)}{d_T(v)} \enspace .
\]
This will hold for all $v \in S$ except with probability $200 \beta/h$ by a union bound.  This shows that $H'$ is $(\alpha + 1/10, 10\beta)$-good for contracting with respect to $C$ except with probability $1/10 + 200 \beta/h$.
\end{proof}

\subsection{Algorithm and correctness}
\label{sec:easy_algo}
We are now ready for the main result 
of this section.
\begin{theorem}
\label{thrm:linear_query_min_cut_easy}
There is a randomized algorithm that computes the
edge connectivity of a simple graph $G$ with 
probability at least $2/3$ 
after $O(n \log \log n)$ cut 
queries.  If $\delta(G) > \log^{10}(n)$ then only $O(n)$ cut queries are needed.
\end{theorem}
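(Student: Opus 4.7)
The plan is to follow the basic algorithm outlined in the technical overview: compute $\delta:=\delta(G)$, perform sparse star contraction to build a small multigraph $G'$, compute $\mincut(G')$ by a method whose cost depends on the regime of $\delta$, and output $\min\{\mincut(G'),\delta\}$. Because each run succeeds with probability only an absolute constant, the whole procedure is repeated $O(1)$ times and the minimum output is taken to boost the success probability to $2/3$. The first step, computing $\delta$ by querying $|\cut(\{v\})|$ for every vertex, costs exactly $n$ queries.

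For the contraction phase, set $p=10^5\ln(\delta)/\delta$ and sample the set of centers $R$. By \cref{lem:Rgood}, with probability at least $2/3$ we simultaneously have $|R|=O(n\log(\delta)/\delta)$, only $O(n/\delta)$ vertices of $V\setminus R$ have $d_R(v)\le 5\cdot 10^4\log(\delta)$, and the directed bipartite graph $H=(V,\cutd(V\setminus R))$ is $(3/5,8)$-good for contracting with respect to any non-trivial minimum cut $C$. Let $S\subseteq V\setminus R$ be the vertices surviving the degree threshold, and apply \cref{lem:bucket-rand-v2} with a sufficiently large constant $k$ to $(S,R)$. This costs $O(k|S|)=O(n)$ cut queries and, with constant probability, produces an explicit subgraph $H'$ of the induced directed bipartite graph from $S$ to $R$ in which every $v\in S$ has outdegree $\ge k$ and which is $(7/10,80)$-good for contracting with respect to $C$. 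Performing a random 1-out contraction on $H'$ uses no further queries, and the remaining $O(n/\delta)$ vertices of $V\setminus R\setminus S$ stay as their own supervertices. The resulting multigraph $G'$ has at most $|R|+O(n/\delta)=O(n\log(\delta)/\delta)$ supervertices, and by \cref{cor:sample_cor} applied to $H'$, the 1-out sample avoids every edge of $C$ with probability at least a fixed constant, so $\mincut(G')=\mincut(G)$ with constant probability whenever $\mincut(G)<\delta$.

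To finish, split into two subcases based on $\delta$. If $\delta\le \log^{10}(n)$, then $q:=|V(G')|=\Theta(n\log(\delta)/\delta)$ easily exceeds $\log^{2+\varepsilon}(n)$, so \cref{cor:monte_carlo} builds a sparse $\delta$-edge connectivity certificate of $G'$ at cost $O(n+\delta\cdot q\cdot \log(n)/\log(q))$. In this regime $\log(q)=\log(n)-\log(\delta)+\log\log(\delta)=\Theta(\log n)$, so the total cost simplifies to $O(n+n\log(\delta))=O(n\log\log n)$; the edge connectivity of the certificate (which has $O(n\log\log n)$ edges and is stored explicitly) is then computed with no additional cut queries. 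If instead $\delta>\log^{10}(n)$, then $q=O(n\log\log(n)/\log^{10}(n))$, and invoking \cref{thm:MN20} on $G'$ (each query to $G'$ simulated by $O(1)$ queries to $G$) costs $O(q\log^8 q)=o(n)$. Combined with the $O(n)$ cost of the earlier phases we obtain $O(n\log\log n)$ in general and $O(n)$ when $\delta>\log^{10}(n)$.

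The main obstacle is justifying that the sparse bipartite subgraph $H'$, recovered by locating only $k=O(1)$ center-neighbors per vertex, is still $(\alpha',\beta')$-good for contracting. A naive Chernoff argument cannot give this: with only a constant-sized sample per vertex one cannot concentrate $c_{R'}(v)/d_{R'}(v)$ around $c(v)/d(v)$ to within $o(1)$. This is exactly the role of the Chebyshev-type bound \cref{lem:preserve_main}, which upper bounds the probability of deviating by $1/10$ by $O(1/k)\cdot c(v)/d(v)$; summed over $v$ via the sum property of $H$ this is $O(\beta/k)$, which for a sufficiently large constant $k$ is small enough to union-bound across all cut-incident vertices and yield the max property of $H'$. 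The sum property of $H'$ then follows from the sum property of $H$ by linearity of expectation, \cref{prop:expectation}, and Markov's inequality, giving both properties of $H'$ with constant probability simultaneously.
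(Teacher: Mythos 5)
Your proposal is correct and follows essentially the same route as the paper's own proof (compute $\delta$, sparse star contraction via \cref{lem:Rgood} and \cref{lem:bucket-rand-v2}, then a sparse $\delta$-certificate when $\delta\le\log^{10}n$ and the algorithm of \cref{thm:MN20} otherwise, with constant repetition). The only detail you elide is the base case of very small constant $\delta$ (the paper requires $d\ge 5\cdot 10^6$ for \cref{lem:Rgood}, and one needs $p<1$), which is handled directly by a sparse $\delta$-certificate of $G$ itself in $O(n\delta)=O(n)$ queries.
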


\begin{algorithm}[htbp]
\caption{Randomized $O(n \log \log n)$ cut query edge connectivity algorithm}
\label{alg:randomized-easy}
 \hspace*{\algorithmicindent} \textbf{Input:} Cut query access to a simple graph $G=(V,E)$ with adjacency matrix $A$. \\
 \hspace*{\algorithmicindent} \textbf{Output:} With constant probability outputs the edge connectivity of $G$.
\begin{algorithmic}[1]
\State Compute the minimum degree $d$ of $G$.
\label{line:mindeg-easy}
\If{$d < 5 \cdot 10^6$} 
    \label{line:smalldeg-easy}
	\State Compute a sparse $d$-edge connectivity certificate via \cref{cor:monte_carlo}.
	\State Return the edge connectivity of this certificate.
\EndIf
\State Choose a random set $R$ by putting each vertex in $R$ independently with probability~$p = 10^5 \log(d)/d$. \label{line:random-centers}
\State For each $v \in V$ compute $d_R(v) = |E(v,R\setminus\{v\})|$.
\If{$|R| \ge 3 \cdot 10^5 n \log(d)/d$ {\bf or} $|\{v \in V : d_R(v) \le 5 \cdot 10^4 \log(d)\}| > n/(10^3 d)$} \label{line:badif-easy}
	\State Return FAIL.
\EndIf  
\State $S \gets (V \setminus R) \cap \{v : d_R(v) > 5\cdot 10^4 \log(d)\}$.
\State Run WC-Recover-$k$-From-All$[A(S,R)](5 \cdot 10^3)$.  If this returns FAIL then return FAIL, otherwise let $H'$ be the output. \label{line:wcrecover-easy}
\State Do a random 1-out contraction on $H'$ and let $G'$ be the resulting multigraph. \label{line:1-out-easy}
\If{$d \le \log^{10} n$}
	\State Find a sparse $d$-edge connectivity certificate $F$ of $G'$ via \cref{cor:monte_carlo}.
	\label{line:certificate_packing-easy}
	\State Compute a bipartition $(Y, \overline Y)$ of $V'$ corresponding to a minimum cut of $F$.
\Else
	\State Compute a bipartition $(Y, \overline Y)$ of $V'$ corresponding to a minimum cut of $G'$ via the algorithm of \cite{MukhopadhyayN20} (\cref{thm:MN20}).
	\label{line:MNec-easy}
\EndIf
\State $W \gets \cup_{W_i \in Y} W_i$. \label{line:W-easy}
\State Return $\min\{d, |\cut(W)|\}$. \label{line:output-easy}
\end{algorithmic}
\end{algorithm}

\begin{proof}
The claim follows from \cref{alg:randomized-easy}.
In the first step with $n$ cut queries we compute the minimum degree $d$.  In line 
\ref{line:smalldeg-easy} we then handle 
the small degree case.  As $\lambda(G) \le d$,
the edge connectivity of a sparse $d$-edge 
connectivity certificate of $G$ will equal 
$\lambda(G)$.  This step succeeds with 
probability at least $99/100$ by 
\cref{cor:monte_carlo} and takes $O(nd)=O(n)$ 
cut queries.  

We now assume we are in the $d \ge 5 \cdot 10^6$ case.
As can be seen on line~\ref{line:output-easy}, the output of the algorithm is the 
minimum of $d$ and $|\cut(W)|$ for a subset $W \subseteq V$.  
Thus if $d = \mincut(G)$ the algorithm will always correctly return $d$.

Let us therefore focus on the case 
$\mincut(G) < d$, and 
let $C$ be a fixed non-trivial minimum cut of $G$.  
We randomly choose a set $R$ by putting each $v$ into $R$ with probability $p = 10^5 \log(d)/d$.  Note that $p < 1$ as we have already handled the small 
$d$ case.  $R$ will satisfy the conditions of 
\cref{lem:Rgood} with respect to $C$ with probability at 
least $2/3$.  We condition on this good event happening for the rest of the proof.  In particular, 
items~1 and~2 of \cref{lem:Rgood} mean that we will not fail in line~\ref{line:badif-easy}.

Let $S = (V \setminus R) \cap \{v : d_R(v) > 5\cdot 10^4 \log(d)\}$.  By item~3 of \cref{lem:Rgood} 
and \cref{prop:remove_out} we know that $H = (S \cup R, \Ed(S,R))$ is $(3/5,8)$-good for 
contracting with respect to $C$.
On line \ref{line:wcrecover-easy} we run the algorithm WC-Recover-$k$-From-All on the sets $S$ and $R$ with $k = 5 \cdot 10^3$.  By \cref{lem:bucket-rand-v2} this takes $O(n)$ cut queries.  
Further, by the same lemma, with probability at least $99/100$ this algorithm will not fail, in which case it outputs a directed graph $H' = (S \cup R , A)$ with $A \subseteq \Ed(S,R)$, where 
every vertex in $S$ has outdegree at least $1$.  Further, as $H$ is $(3/5,8)$-good for contracting, by \cref{lem:bucket-rand-v2} $H'$ will be $(7/10,80)$-good for contracting except 
with probability at most $1/10 + 1600/(5 \cdot 10^3) \le 1/2$.  Thus overall the algorithm has succeeded up to this point with probability at least $(2/3)\cdot (99/100) \cdot (1/2) \ge 3/10$.

On line~\ref{line:1-out-easy} we do a random 1-out contraction on $H'$.  As $H'$ is $(7/10,80)$-good for contracting with respect to $C$, by \cref{cor:sample_cor} we do not 
contract an edge of $C$ with probability at least $(1/5)^{110}$.  In this case we will have $\mincut(G') = \mincut(G)$.

Let us compute the number $N$ of supervertices in $G'$.  This is at most $|R| + n/(10^3 d) \le 4 \cdot 10^5 n \log(d)/d$ because for every vertex in $S$ we have contracted an edge 
connecting it to a vertex in $R$, since every vertex in $S$ has an outgoing edge in $H'$.   Therefore if $d \geq \log^{10} n$, the number of vertices in $G'$ is $O(n/\log^9 n)$ and we can run the 
minimum cut algorithm of \cite{MukhopadhyayN20} (\cref{thm:MN20}) on $G'$ on 
line \ref{line:MNec-easy} to compute $\mincut(G')$ with $O(N \log^8 N) = O(n)$ cut queries.  This algorithm succeeds with high probability.  

If $d \leq \log^{10} n$, then we find sparse $d$-edge connectivity certificate in line \ref{line:certificate_packing-easy} using the algorithm from \cref{cor:monte_carlo}.  This algorithm correctly outputs a 
sparse $d$-edge connectivity 
certificate with probability $99/100$ and otherwise outputs FAIL.  The number of cut queries is $O(n + n \log(n) \log(d)/\log(n \log(d)/d)) = O(n \log d) = O(n \log \log n)$.  

Thus with probability at least $(3/10) \cdot (1/5)^{110} \cdot (99/100)$ we will have $\mincut(G) = |\cut(W)|$ for the set $W$ defined on line~\ref{line:W-easy}.  
Therefore by repeating the whole algorithm a sufficiently large constant number of times and outputting the minimum of $|\cut(W)|$ over all sets $W$ produced we can output the edge connectivity 
with probability at least $2/3$.  The cut query complexity is dominated by line~\ref{line:certificate_packing-easy} and is $O(n \log \log n)$.  In the case $d > \log^{10}(n)$ we avoid doing this step 
and only make $O(n)$ cut queries.
\end{proof}

\section{Edge connectivity with $O(n)$ cut queries} \label{sec:classical edge conn}
\label{sec:main_algo}
The bottleneck in the algorithm from the previous section  is that the contracted graph $G'$ 
had $\Omega(n \log(\delta(G))/\delta(G))$ vertices.  We would like to get it down to $O(n/\delta(G))$ so that we can compute a sparse $\delta(G)$-edge connectivity 
certificate of $G'$ with $O(n)$ queries by
\cref{cor:monte_carlo}. 

The bound on the 
number of vertices in $G'$ resulted 
because we had to choose a set $R$ of 
$\Theta(n \log(\delta(G))/\delta(G))$ 
centers in order to ensure that a sufficient number of
vertices had a neighbor in $R$.
Note, however, that we did not contract any edges \emph{inside} the induced subgraph $G[R]$.  As $R$ was chosen 
randomly, however, each vertex in $R$ has $\Theta(\log(\delta(G)))$ neighbors 
in $R$ in expectation.  As all but $O(n/\delta(G))$
vertices are connected to a vertex in $R$, we could further reduce the number of vertices in $G'$ by contracting edges in $G[R]$.

We could potentially do this via another round of star 
contraction inside $G[R]$: as each 
vertex in $R$ has $\Theta(\log(\delta(G)))$ neighbors in $R$ in expectation, we could randomly sample $R' \subseteq R$ by 
taking each vertex of $R$ to be in $R'$ with probability 
$p' = \log \log(\delta(G))/ \log \delta(G)$.  The expected size 
of $R'$ is $n \log \log (\delta(G))/\delta(G)$ and with 
constant probability all but $O(n/\delta(G))$ many $v \in R$
have a neighbor in $R'$.  
Following this idea through can 
give an $O(n \log \log \log n)$ cut query algorithm for edge connectivity.\footnote{In fact, repeatedly applying the same argument yields a query complexity $O(n \log \log \cdots \log n)$ for any constant number of $\log$'s.}
To actually get the number of vertices down to $O(n/\delta(G))$, we follow a different approach based on \emph{2-out contraction} rather than star contraction.

\subsection{2-out contraction on the centers}
We will take advantage of the following lemma shown by Ghaffari, Nowicki, and Thorup about the number of vertices in a graph after a random 2-out contraction.
\begin{lemma}[{\cite[Lemma 2.5]{GNT20}}]
\label{lem:gnt_original}
Let $G = (V,E)$ be a simple $n$-vertex graph with minimum degree $\ell$.  
Independently for each $v \in V$ 
choose two outgoing edges $\{v, u_1\}, \{v,u_2\}$ 
uniformly at random and add them to a set $X$.
Then with high probability the 
graph $(V,X)$ has $O(n/\ell)$ connected 
components.  
\end{lemma}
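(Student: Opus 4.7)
My plan is to bound the expected number of connected components in $(V,X)$ by a first-moment / union-bound argument over candidate vertex sets, then split into small versus large component sizes.

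The key observation to set up is that if $S \subseteq V$ is the vertex set of a connected component of $(V,X)$, then $S$ must be \emph{closed} under the $2$-out sampling: both of the two edges sampled by every $v \in S$ have their other endpoint in $S$. Since the sampling at different vertices is independent,
\[
\Pr[S \text{ closed}] \;=\; \prod_{v \in S} \left(\frac{c_S(v)}{d(v)}\right)^{2},
\]
where $c_S(v) = |N(v) \cap S|$. So the expected number of components of size $s$ is at most the expected number of closed sets of size $s$, which I will control by summing the above product over all size-$s$ subsets $S$.

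I would then split by component size. For components of size $s \ge \ell / 10$, the counting is free: they partition a set of size at most $n$, so there are at most $10n/\ell$ of them deterministically. For components of size $s < \ell/10$, I would union-bound, using $c_S(v) \le \min(s-1, d(v))$, to get $\Pr[S \text{ closed}] \le ((s-1)/\ell)^{2s}$, and combine with $\binom{n}{s} \le (en/s)^s$ to get an expected count of closed size-$s$ sets of at most $(ens/\ell^2)^{s}$. Summing over $s = 2, \ldots, \lfloor \ell/10 \rfloor$, together with a Markov bound, should show that w.h.p.\ the number of small components is $O(n/\ell)$.

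The main obstacle I expect is that the naive substitution $d(v) \to \ell$ is too lossy when some vertices have degree much larger than $\ell$, and the resulting bound $(ens/\ell^2)^s$ is not always $\le n/\ell$ (e.g., for very small $s$ in dense graphs with $n \gg \ell^3$). To handle this I would keep the actual degrees in the bound, which for size $s$ gives $\sum_{S:|S|=s}\prod_{v\in S}(c_S(v)/d(v))^2$. As a warm-up, for $s=2$ this is $\sum_{\{u,v\}\in E}(d(u)d(v))^{-2} = \tfrac{1}{2}\sum_u d(u)^{-2}\sum_{v \in N(u)} d(v)^{-2} \le \tfrac{1}{2\ell^2}\sum_u d(u)^{-1} \le n/(2\ell^3)$, comfortably within the $O(n/\ell)$ budget. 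The plan is then to push this style of degree-weighted counting through for general $s$: double-count by rooting at a vertex $u \in S$ and summing over neighborhoods, using $\sum_{v\in N(u)} d(v)^{-2} \le d(u)/\ell^2$ repeatedly and AM--GM to collect the cross-terms. Finally, to boost the expectation bound to high probability, I would either apply Markov's inequality with a polynomial slack (repeating the argument at several scales of $s$) or observe that the sum over $s$ is geometrically decreasing so that Chernoff-type concentration on each scale can be union-bounded.
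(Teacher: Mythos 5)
Your reduction (every component of $(V,X)$ is a closed set) and your $s=2$ warm-up are correct, but the first-moment bound you are aiming for is simply false: the expected number of closed sets of size $s$ is \emph{not} $O(n/\ell)$ in general, and keeping the actual degrees does not repair this. Concretely, let $G$ consist of a hub $u$ adjacent to all other vertices, with the remaining $n-1$ vertices partitioned into cliques of size $\ell$ (so $\delta(G)=\ell$), and take $n=\ell^7$. For $S=\{u\}\cup L$ with $L$ a set of five vertices from five distinct cliques we have $c_S(u)=5$ and $c_S(v)=1$ for each $v\in L$, hence
\[
\Pr[S\text{ closed}]=\Bigl(\tfrac{5}{n-1}\Bigr)^{2}\,\ell^{-10},
\]
and there are $\Theta(n^5)$ such sets, so their expected number is $\Theta(n^3/\ell^{10})=\Theta(\ell^{11})$, which dwarfs $n/\ell=\ell^{6}$; the discrepancy grows without bound as $s$ increases toward $\ell/10$. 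These candidate sets all contain the hub and overlap massively, so at most one of them can be the hub's actual component, but linearity of expectation over closed sets cannot exploit that. What makes the true statement work is the additional requirement that no vertex \emph{outside} $S$ samples into $S$ (an event of probability roughly $e^{-2n/\ell}$ in this example); once that factor is dropped, as in your plan, the union bound is doomed, and no amount of degree-weighted counting or AM--GM recovers it. A secondary issue: even with a correct first moment, Markov only yields the bound with constant probability rather than whp, and the indicators of distinct sets being components are dependent, so Chernoff does not apply off the shelf.

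The paper's proof (Appendix, following GNT20) is structured precisely to avoid this union bound: it explores components one at a time with a sequential sampling process, maintaining a set of already-processed vertices, and shows that each exploration phase either grows to size at least $\ell/(8e^3)$, or touches a processed vertex, or --- with probability $O(1/\ell^{3})$ --- closes off early. The key count is that a phase ending with a set of size $x$ that never touched processed vertices forces $x+1$ of the $2x$ samples drawn during the phase to land back in the current set, an event of probability at most $\binom{2x}{x+1}\left(\frac{x-1}{\ell}\right)^{x+1}$; the whp statement then follows from stochastic domination by independent Bernoullis plus a Chernoff bound. The crucial difference from your approach is that the exploration considers only one candidate component per starting vertex, rather than unioning over exponentially many overlapping candidate sets. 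If you insist on a static counting argument, you would have to carry the external-closure weight $\prod_{w\notin S}\left(1-c_S(w)/d(w)\right)^{2}$ for each $S$, which is a substantially different (and harder) computation than the one you propose.
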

A nice quality of this lemma is that it can also be applied to a \emph{subgraph} of $G$.  In other words,
if we learn a subgraph $H = (V,E')$ of $G=(V,E)$ such that all vertices in $V$ have degree at least $h$ in $H$ then 
by doing 2-out contraction restricted to edges of $H$, we can still reduce the number of vertices 
in the corresponding contraction of $G$ to $O(n/h)$. 

This will be our approach with the induced subgraph $G[R]$.  If we could learn a subgraph $H$ of $G[R]$ where every vertex has degree $\Omega(\log \delta(G))$, then by doing 2-out 
contraction on $H$ we could reduce the number of vertices in the contraction of $G[R]$ by a $\log \delta(G)$ factor, i.e.\ down to $O(n/\delta(G))$.  Furthermore, using the algorithm WC-Recover-$k$-From-All with $k = \log \delta(G)$
we can hope to learn such a subgraph with $O(n \log^2(\delta(G))/\delta(G)) = O(n)$ cut queries.

A direct obstacle to this plan is that $G[R]$ can have $\Omega(n/\delta(G))$ many vertices with $o(\log\delta(G))$ neighbors in $R$. 
Luckily, we can deal with this by slightly generalizing \cref{lem:gnt_original}. We show that if there is a degree threshold $h$ such 
that only $O(n/h)$ vertices have degree less than $h$, then after 2-out contraction the contracted graph still has only $O(n/h)$ supervertices.

While this appropriately reduces the size of the contracted graph, a second obstacle remains: we have to ensure that we do not contract any edge of a non-trivial minimum cut.  
This is again where the randomness properties of WC-Recover-$k$-From-All shown in \cref{lem:bucket-rand-v2} come in handy.  Using this lemma we will show that we can
explicitly learn a directed subgraph $H = (R,A)$ of $G[R]$ such that (i) all but $O(n/\delta(G))$ vertices have degree at least $h$ in $H$, and (ii) $H$ is $(\alpha, \beta)$-good 
for contracting with respect to a non-trivial minimum cut for some $\alpha < 1$ and constant $\beta$.  We show how to do this in \cref{lem:2outforR} below.

As we view $H$ as a directed graph, we also need to generalize \cref{lem:gnt_original} to this case, where we sample only 
from outgoing edges, not incoming ones.  Both generalizations are captured in the following lemma.
The proof follows the original proof of \cite{GNT20} with minor modifications, and we defer it to \cref{appen:contraction}.
Similar to \cref{lem:gnt_original}, this lemma also applies when $H = (V,A)$ is a subgraph of a larger graph $G = (V,E)$.
\begin{restatable}[{cf.\ \cite[Lemma 2.5]{GNT20}}]{lemma}{gntcontract}
\label{lem:num_con}
Let $H = (V,A)$ be an $n$-vertex 
directed graph such that all but 
$\tau$ vertices have out-degree 
at least $\ell \ge 4$.  Independently for each $v \in V$ 
choose two outgoing edges $(v, u_1), (v,u_2)$ 
uniformly at random and add them 
to a set $X$.  Then with high probability the 
graph $(V,X)$ has at most $\tau + 2n/\ell$ weakly connected 
components.  
\end{restatable}

Next we give the algorithm based on WC-Recover-$k$-From-All that we will use to build the directed subgraph $H$ of $G[R]$ on which we will 
perform the 2-out contraction.  

\begin{lemma}
\label{lem:2outforR}
Let $G = (V,E)$ be an $n$-vertex simple 
graph and $C \subseteq E$.  Suppose that $G$ is $(\alpha, \beta)$-good for contracting with respect to $C$.
Let $h \ge \max\{1500 \beta, 35\}$ 
be such that for all but $\tau$ vertices in $v \in V$ it holds that $d(v) \ge 4 h$.
There is a randomized algorithm that makes 
$O(h n)$ cut queries and with probability at most $3/100$ outputs FAIL, and otherwise outputs a directed subgraph $H$ of $G$ 
where all but $\tau+n/h$ vertices in $V$ have outdegree at least $h$ in $H$.  Further, $H$ is 
$(\alpha + 1/5,100\beta)$-good for contracting with respect to $C$ with probability at least $1/2$.  
\end{lemma}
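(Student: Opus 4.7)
The plan is to identify the high-degree vertices of $G$, randomly bipartition them, and then invoke WC-Recover-$k$-From-All (\cref{lem:bucket-rand-v2}) twice, once in each direction of the bipartition, with parameter $k=h$. Concretely: use $n$ cut queries to compute every degree, set $S = \{v : d(v) \ge 4h\}$ (so $|V \setminus S| \le \tau$ by hypothesis), randomly bipartition $S$ into $S_1, S_2$ via independent fair coin flips, and use \cref{prop:3} to compute $d_{S_{3-i}}(v)$ for every $v \in S_i$ with $O(n)$ additional queries. Let $T_i = \{v \in S_i : d_{S_{3-i}}(v) \ge h\}$, run WC-Recover-$k$-From-All$[G](T_1, S_2, h)$ and WC-Recover-$k$-From-All$[G](T_2, S_1, h)$ to obtain directed subgraphs $H_1$ and $H_2$, and output $H = H_1 \cup H_2$. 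Each WC-Recover call costs $O(h \lvert T_i \rvert) = O(hn)$ cut queries and fails with probability at most $1/100$, so the total query count is $O(hn)$.

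For the bound on vertices of small outdegree in $H$: since every $v \in S$ satisfies $d(v) \ge 4h$, we have $\bbE[d_{S_{3-i}}(v) \mid v \in S_i] \ge 2h$, so a Chernoff bound (\cref{lem:chernoff}) with $h \ge 35$ gives $\Pr[d_{S_{3-i}}(v) < h] \le \exp(-h/4)$. This is much smaller than $1/(2h)$, so the expected number of ``bad'' vertices in $S \setminus (T_1 \cup T_2)$ is at most $n/(2h)$, and by Markov's inequality this count is at most $n/h$ except with probability $1/100$. Combined with the two WC-Recover failures, this gives the claimed FAIL probability $\le 3/100$, and the number of vertices of outdegree $<h$ in $H$ is at most $\tau + n/h$.

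To show $H$ is $(\alpha+1/5, 100\beta)$-good for contracting with respect to $C$ with probability at least $1/2$, I apply \cref{lem:preserve_main} twice along the pipeline---once via the random bipartition and once implicitly inside \cref{lem:bucket-rand-v2}---each contributing a $1/10$ slack in the max parameter. For the bipartite directed subgraph from $T_1$ to $S_2$: by \cref{prop:expectation}, $\bbE[c_{S_2}(v)/d_{S_2}(v) \mid d_{S_2}(v) > 0] = c(v)/d(v)$, so linearity of expectation and Markov's inequality bound its sum by $O(\beta)$ with constant probability. Meanwhile \cref{lem:preserve_main} with $p = 1/2$ (valid because $d(v) \ge 4h$ forces $pd(v)/2 \ge h$) together with a union bound weighted by $\sum_v c(v)/d(v) \le \beta$ gives that its max is at most $\alpha + 1/10$ except with probability $O(\beta/h)$. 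Then \cref{lem:bucket-rand-v2} transforms this input into $H_1$ being $(\alpha+1/5, O(\beta))$-good; symmetrically for $H_2$. Because $T_1 \cap T_2 = \emptyset$, every vertex has outgoing edges in at most one of $H_1, H_2$, so the union $H$ is $(\alpha+1/5, 100\beta)$-good.

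The main obstacle I anticipate is the tight control of constants in the cascade of failure probabilities. The hypothesis $h \ge 1500\beta$ is precisely what is needed to make the $O(\beta/h)$ terms in the two nested applications of \cref{lem:preserve_main} small enough, while the Markov slack for the sum property must be balanced so that the $10\times$ blowup inside \cref{lem:bucket-rand-v2} together with the union across the two bipartition directions does not exceed $100\beta$. For instance, I expect to bound the bipartite sum by $10\beta$ via Markov with factor $20$, so that WC-Recover outputs each $H_i$ with sum $\le 50\beta$, and the sum over $H$ is $\le 100\beta$. Getting the overall success probability above $1/2$ after accounting for random bipartition failures, two Markov slacks, and two applications of \cref{lem:bucket-rand-v2} is the delicate bookkeeping step.
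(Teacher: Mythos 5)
Your overall architecture matches the paper's (\cref{alg:learn_subgraph}): a random bipartition followed by two calls to WC-Recover-$k$-From-All with $k=h$, one in each direction, with the same three-way accounting of the $3/100$ FAIL probability. But you make one modification that the paper does not, and it breaks the proof: you first restrict to $S=\{v: d(v)\ge 4h\}$ and then bipartition $S$ and recover neighbors only inside $S_{3-i}$, whereas the paper bipartitions \emph{all} of $V$ and recovers neighbors of $V_1'$ in \emph{all} of $V_2$ (and vice versa). The hypothesis $d(v)\ge 4h$ controls the degree of $v$ in $G$, not its degree into $S$: a vertex $v\in S$ may have most (even all) of its neighbors among the $\tau$ low-degree vertices excluded from $S$. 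So your claim $\bbE[d_{S_{3-i}}(v)\mid v\in S_i]\ge 2h$ is false in general — the correct expectation is $d_S(v)/2$, and $d_S(v)$ can be $0$. This invalidates the Chernoff step and hence the $\tau+n/h$ bound on vertices of small outdegree. (One can partially patch this by counting edges leaving $S$, at the cost of an extra $O(\tau)$ term, but that is not the argument you gave.)

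The second and more serious consequence of the same modification is in the goodness analysis. \Cref{prop:expectation} and \cref{lem:preserve_main} apply when the target set is an independent $p$-sample of \emph{all} of $V$, so that $d_{V_2}(v)$ is a binomial sample of the full degree $d(v)$ and $\bbE[c_{V_2}(v)/d_{V_2}(v)\mid d_{V_2}(v)>0]=c(v)/d(v)$. With your targets drawn from $S$ only, the conditional expectation is $c_S(v)/d_S(v)$, over which the hypothesis that $G$ is $(\alpha,\beta)$-good gives no control: if the neighbors of $v$ that are excluded from $S$ happen to be exactly its non-cut neighbors, then $c_S(v)/d_S(v)$ can be close to $1$ even though $c(v)/d(v)\le\alpha$, so both the max and the sum properties of $H$ can fail with probability close to $1$. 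This gap is not fixable by constant-juggling; you need to recover neighbors from the whole of $V_2$, exactly as the paper does, and only restrict the \emph{source} side to the vertices $v\in V_i$ with $d_{V_{3-i}}(v)\ge h$. The rest of your bookkeeping (the $1/10$ slack from the bipartition via \cref{lem:preserve_main} with $p=1/2$, the further $1/10$ slack and $10\times$ sum blow-up inside \cref{lem:bucket-rand-v2}, and the role of $h\ge 1500\beta$) is the right shape and matches the paper once the target sets are corrected.
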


\begin{algorithm}[!htbp]
\caption{LearnSubgraph$[G](h)$}
\label{alg:learn_subgraph}
 \hspace*{\algorithmicindent} \textbf{Input:} Cut query access to a simple graph $G =(V,E)$ with adjacency matrix $A$ and a parameter $h$ such that for all but $\tau$ vertices in $v \in V$ 
 it holds that $d(v) \ge 8h$.  \\
 \hspace*{\algorithmicindent} \textbf{Output:} A directed subgraph $H$ of $G$ where all but $\tau + n/h$ vertices have outdegree $\ge h$.
\begin{algorithmic}[1]
\State Randomly partition $V$ into two sets $V_1$ and $V_2 = V \setminus V_1$ by putting each vertex 
independently at random into $V_1$ with probability $1/2$ and otherwise into $V_2$. 
\State $V_1' \gets \{v \in V_1: d_{V_2}(v) \ge h\}, V_2' \gets \{v \in V_2: d_{V_1}(v) \ge h\}$.
\If{$|V \setminus (V_1' \cup V_2')| >  \tau+ n/h$} return FAIL. \label{line:first_fail}
\EndIf
\State Run WC-Recover-$k$-From-All$[A(V_1',V_2)](h)$ and WC-Recover-$k$-From-All$[A(V_2',V_1)](h)$.  If either call returns FAIL then 
return FAIL.  Otherwise, let $Z_1$ and $Z_2$ be the outputs.
\State Return the directed graph $H$ defined by the concatenation of $Z_1$ and $Z_2$.
\end{algorithmic}
\end{algorithm}

\begin{proof}
The algorithm is given in \cref{alg:learn_subgraph}.  Let us first check the probability that we fail on line~\ref{line:first_fail}.
We first randomly partition $V$ into two sets $V_1$ and $V_2 = V \setminus V_1$. 
Let $V_1' = \{v \in V_1: d_{V_2}(v) \ge h\}$ and $V_2' = \{v \in V_2: d_{V_1}(v) \ge h\}$.
If $v \in V_1$ has $d(v) \ge 4h$ then the probability that $v$ is not in $V_1'$ is at most $\exp(-h/4)$ 
by a Chernoff bound (\cref{eq:chernoff_below}).  The same is true for any $v \in V_2$, therefore the expected number of vertices 
with degree at least $4h$ that are not in $V_1' \cup V_2'$ is at most $n \exp(-h/4)$.  By Markov's inequality 
therefore we have $|V \setminus (V_1' \cup V_2')| \le  \tau+ n/h$ except with probability $h \cdot \exp(-h/4) \le 1/100$ as $h \ge 35$.  
Thus the probability that we fail on line~\ref{line:first_fail} is at most $1/100$.  Checking this condition can be done with $O(n)$ 
cut queries as we can compute $d_{V_2}(v)$ with a constant number of cut queries, and likewise for $d_{V_1}(v)$.

By \cref{prop:expectation} we have $\bbE[c_{V_2}(v)/d_{V_2}(v) \mid d_{V_2}(v) > 0] = c(v)/d(v)$ for every $v \in V_1$ and $\bbE[c_{V_1}(v)/d_{V_1}(v) \mid d_{V_1}(v) > 0] = c(v)/d(v)$ 
for every $v \in V_2$.  Thus by Markov's inequality we have 
\begin{equation}
    \sum_{v \in N(C) \cap V_1'} \frac{c_{V_2}(v)}{d_{V_2}(v)} + \sum_{v \in N(C) \cap V_2'} \frac{c_{V_1}(v)}{d_{V_1}(v)} \le 10 \beta
    \label{eq:sum_prop}
\end{equation}
except with probability at most $1/10$.  

Further, by \cref{lem:preserve_main} as we sample with probability $p = 1/2$ we can apply \cref{lem:preserve_main} with $k = h$ together 
with a union bound to obtain that except with probability at most $200\beta/h$ we have
\begin{equation}
\label{eq:max_prop}
\begin{aligned}
    \frac{c_{V_2}(v)}{d_{V_2}(v)} &\le \frac{c(v)}{d(v)} + \frac{1}{10} \text{ for all } v \in V_1' , \\
     \frac{c_{V_1}(v)}{d_{V_1}(v)} &\le \frac{c(v)}{d(v)} + \frac{1}{10} \text{ for all } v \in V_2' \enspace .
\end{aligned}
\end{equation}
To summarize, \cref{eq:sum_prop} and \cref{eq:max_prop} show that the graph $F = (V, \Ed(V_1', V_2) \cup \Ed(V_2', V_1))$ is $(\alpha+1/10, 10 \beta)$-good for contracting with respect to $C$ 
except with probability at most $1/10 + 200\beta/h$.  We now condition on this good event that $F$ is $(\alpha+1/10, 10 \beta)$-good for contracting.

The goal now is to learn $h$ neighbors 
in $V_2$ of every vertex in $V_1'$, and vice 
versa, which we do by running WC-Recover-$k$-From-All$[A(V_1',V_2)](h)$ and  
WC-Recover-$k$-From-All$[A(V_2',V_1)](h)$.  The total number of queries is $O(hn)$ by \cref{lem:bucket-rand-v2}.  If either call outputs FAIL, then we abort and output FAIL, which happens with probability at most $2/100$.  
We now condition on both of these calls being successful and let $Z_1$ and $Z_2$ be the adjacency lists returned.  
The directed subgraph $H$ is defined by the concatenation of $Z_1$ and $Z_2$.   
When these calls do not fail, every vertex of $H$ with positive outdegree has outdegree at least $h$, as each list in $Z_1,Z_2$ has at least $h$ neighbors by \cref{lem:bucket-rand-v2}.  Thus the number of vertices with zero outdegree in $H$ is at most 
$|V \setminus (V_1' \cup V_2')| \le \tau +  n/h$ assuming we did not FAIL in line~\ref{line:first_fail}.  In summary, with probability at most $3/100$ the algorithm outputs FAIL, and otherwise it always 
returns a directed subgraph $H$ where all but at most $\tau +  n/h$ vertices have outdegree at least $h$.  Further, item~3 of
\cref{lem:bucket-rand-v2} together with the fact that $F$ is $(\alpha+1/10, 10\beta)$-good for contracting tells us that $H$ is $(\alpha + 1/5,100\beta)$-good for contracting with respect to $C$ 
except with probability at most $1/10 + 200 \beta/h$.  Thus overall, $H$ will be $(\alpha + 1/5,100\beta)$-good for contracting with respect to $C$ except with probability at most $2/10 + 400 \beta/h \le 1/2$.
\end{proof}

\subsection{Algorithm and correctness}
We are now ready to give a randomized $O(n)$ cut query algorithm for edge connectivity.

\begin{restatable}{theorem}{LinearQueryMinCut}
\label{thrm:linear_query_min_cut}
There is a randomized algorithm that computes the
edge connectivity of a simple graph with 
probability at least $2/3$ after $O(n)$ cut 
queries.
\end{restatable}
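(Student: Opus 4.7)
The plan is to augment \cref{alg:randomized-easy} with one more contraction step carried out \emph{inside} the set of centers $R$, using \cref{lem:2outforR} followed by \cref{lem:num_con}. This extra step shrinks the final multigraph $G'$ from the $\tOh(n/d)$ supervertices produced by \cref{thrm:linear_query_min_cut_easy} down to $q = O(n/d)$ supervertices, which is exactly the bound needed for both candidate finishers (certificate packing and \cref{thm:MN20}) to complete within $O(n)$ cut queries.

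Concretely, I would start by computing $d = \delta(G)$ with $n$ cut queries and handling the case of constant $d$ directly through a sparse $d$-edge connectivity certificate of $G$ itself, at cost $O(nd) = O(n)$ by \cref{cor:monte_carlo}. For $d$ above a large constant, sample $R$ with probability $p = \Theta(\log(d)/d)$ and abort unless the three conclusions of \cref{lem:Rgood} hold for a fixed non-trivial minimum cut $C$; in particular $|R| = O(n\log(d)/d)$, at most $n/(10^3 d)$ vertices have $d_R(v)\le 5\cdot 10^4\log d$, and both $(V,\cutd(V\setminus R))$ and $G[R]$ are $(3/5,8)$-good for contracting w.r.t.\ $C$. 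As in \cref{thrm:linear_query_min_cut_easy}, run WC-Recover-$k$-From-All on $A(S,R)$ with constant $k$, where $S$ is the set of $v\in V\setminus R$ with $d_R(v) > 5\cdot 10^4\log d$; by \cref{lem:bucket-rand-v2} this costs $O(n)$ cut queries and returns, with constant probability, a directed subgraph $H_1\subseteq\cutd(V\setminus R)$ with every $v\in S$ of outdegree $\ge 1$ and that is $(\alpha_1,\beta_1)$-good for contracting w.r.t.\ $C$.

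The new ingredient is to then invoke LearnSubgraph$[G[R]](h)$ with $h = \Theta(\log d)$. Item~2 of \cref{lem:Rgood} guarantees $|\{v\in R : d_R(v)<4h\}| \le n/(10^3 d)$, so \cref{lem:2outforR} applies with $\tau = O(n/d)$, costs $O(h|R|) = O(n)$ queries, and produces a directed subgraph $H_2 \subseteq G[R]$ in which all but $O(n/d)$ vertices have outdegree at least $h$, and which (with constant probability) is $(\alpha_2,\beta_2)$-good for contracting w.r.t.\ $C$. Form $G'$ by independently doing a random 1-out contraction on $H_1$ and a random 2-out contraction on $H_2$. Every $v\in S$ is absorbed into a supervertex of $R$; the $O(n/d)$ low-$d_R$ vertices of $V\setminus R$ remain as singletons; and by \cref{lem:num_con} applied to $H_2$ with $\tau = O(n/d)$ and $\ell = h$, the number of weakly connected components inside $R$ is $O(n/d) + 2|R|/h = O(n/d)$ with high probability. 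Hence $q = O(n/d)$. Applying \cref{cor:sample_cor} to the 1-out sample of $H_1$ and independently to each of the two 1-out samples of $H_2$, the probability that none of the three samples picks an edge of $C$ is a positive absolute constant, so $\lambda(G) = \lambda(G')$ with constant probability. Finally compute $\mincut(G')$: if $d\le \log^{10} n$, use \cref{cor:monte_carlo} to build a sparse $d$-edge connectivity certificate of $G'$ in $O(n + dq\log(n)/\log(q)) = O(n)$ cut queries, since $\log(n/d) = \Omega(\log n)$; otherwise run \cref{thm:MN20} on the $q$-vertex multigraph $G'$ (simulating its cut queries via $O(1)$ cut queries to $G$) at cost $O(q\log^8 q) = O((n/d)\log^8 n) \le O(n)$. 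Output $\min\{d,\mincut(G')\}$ and repeat $O(1)$ times to amplify success above $2/3$.

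The main obstacle is balancing two competing demands on $H_2$ through the single parameter $h$. The vertex-reduction guarantee of \cref{lem:num_con} requires $h = \Omega(\log d)$ so that $2|R|/h = 2n\log(d)/(dh)$ is absorbed into the $O(n/d)$ bound; the goodness-for-contracting guarantee of \cref{lem:2outforR} (which internally rests on \cref{lem:preserve_main}) also requires $h$ to be at least a sufficiently large multiple of $\beta$ so that the max-probability of $H_2$ stays strictly below $1$. The choice $h = \Theta(\log d)$ threads both needs while keeping LearnSubgraph's cost at $O(h|R|) = O(n\log^2(d)/d) = O(n)$; any asymptotically larger $h$ would break the query budget and bring back the $\log\log n$ overhead of \cref{thrm:linear_query_min_cut_easy}.
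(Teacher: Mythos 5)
Your proposal is correct and follows essentially the same route as the paper's proof: the extra ingredient is exactly the paper's LearnSubgraph$[G[R]](h)$ with $h=\Theta(\log d)$ followed by a 2-out contraction analyzed via \cref{lem:num_con} with $\tau=O(n/d)$, combined with the 1-out contraction on the sparse bipartite subgraph from WC-Recover-$k$-From-All, and finished by certificate packing (the paper simply defers the $d>\log^{10}n$ branch to \cref{thrm:linear_query_min_cut_easy} rather than re-running \cref{thm:MN20}, but the two are equivalent). The query accounting and the correctness argument via $(\alpha,\beta)$-goodness of both learned subgraphs match the paper's.
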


\begin{algorithm}[!htbp]
\caption{Randomized $O(n)$ cut query edge connectivity algorithm}
\label{alg:nquery}
 \hspace*{\algorithmicindent} \textbf{Input:} Cut query access to a simple graph $G=(V,E)$ with minimum degree $d < \log^{10}(n)$. \\
 \hspace*{\algorithmicindent} \textbf{Output:} With constant probability output the edge connectivity of $G$.
\begin{algorithmic}[1]
\State Compute the minimum degree $d$ of $G$.
\label{line:mindeg}
\If{$d < 5 \cdot 10^6$} 
    \label{line:smalldeg}
	\State Compute a sparse $d$-edge connectivity certificate via \cref{cor:monte_carlo}.
	\State Return the edge connectivity of this certificate.
\EndIf
\State Choose a random set $R$ by putting each vertex into $R$ independently at random with probability $p = 10^5 \log(d)/d$.
\State For each $v \in V$ compute $d_R(v) = |E(v,R\setminus\{v\})|$. \label{line:degree-to-R}
\If{$|R| \ge 3 \cdot 10^5 n \log(d)/d$ OR $|\{v \in V : d_R(v) \le 5 \cdot 10^4 \log(d)\}| > n/(10^3 d)$} \label{line:badif}
	\State Return FAIL.
\EndIf  
\State Let $S = (V \setminus R) \cap \{v : d_R(v) > 5\cdot 10^4 \log(d)\}$ and $A$ be the adjacency matrix of $G$.
\State Run WC-Recover-$k$-From-All$[A(S,R)](5 \cdot 10^3)$.  If this returns FAIL then return FAIL, otherwise let 
$H'$ be the output. \label{line:wcrecover}
\State Run LearnSubgraph$[G[R]](h)$  (\cref{alg:learn_subgraph}) with $h = 5 \cdot 10^4 \log(d)$. 
If this returns FAIL then return FAIL, otherwise let 
$H$ be the output.  \label{line:2out}
\State Take a random 1-out sample of $H'$ and a random 2-out sample of $H$ and contract all selected edges in $G$.  Let $G'$ be 
the resulting multigraph.
\State Return FAIL if $G'$ has more than $n/(10^3d) + 3|R|/h$ vertices. \label{line:contract_too_many}
\State Find a sparse $d$-edge connectivity certificate $F$ of $G'$ via \cref{cor:monte_carlo}.
\label{line:certificate_packing}
\State Compute a bipartition $(Y, \overline Y)$ of $V'$ corresponding to a minimum cut of $F$.
\State $W \gets \bigcup_{W_i \in Y} W_i$. \label{line:W}
\State Return $\min\{d, |\cut(W)|\}$. \label{line:output}
\end{algorithmic}
\end{algorithm}

\begin{proof}
We can restrict to the case that the minimum degree $\delta(G) \leq \log^{10} n$ because an $O(n)$ cut query algorithm for the case of larger degree 
is already handled by \cref{thrm:linear_query_min_cut_easy}.  The algorithm is given in \cref{alg:nquery}.  As argued in the proof of \cref{alg:randomized-easy}, the 
algorithm will always return correctly when the edge connectivity is achieved by a trivial cut.  Let us therefore analyze the case that the edge connectivity 
is achieved by a non-trivial cut $C$.

The algorithm is identical to \cref{alg:randomized-easy} until line~\ref{line:2out}.  From the proof of 
\cref{thrm:linear_query_min_cut_easy}, at this point of the algorithm with probability at least $3/10$ we will be in the state where
\begin{itemize}
	\item The set $R$ satisfies the three conditions of \cref{lem:Rgood}.
	\item The call to WC-Recover-$k$-From-All did not fail, and the returned graph $H'$ is $(7/10,80)$-good for contracting with respect to $C$.
\end{itemize}
Next, on line~\ref{line:2out} we run \cref{alg:learn_subgraph} on $G[R]$ with $h = 5 \cdot 10^3 \log(d)$.  This takes $O(n \log^2(d)/d) = O(n)$ cut queries by 
\cref{lem:2outforR}.  Note that by item~2 of \cref{lem:Rgood}, 
at most $n/(10^3 d)$ vertices in $R$ have $d_R(v) < 8h < 5 \cdot 10^4 \log(d)$, thus we can take $\tau = n/(10^3d)$ in \cref{lem:2outforR}.  Further, by item~3 of \cref{lem:Rgood} $G[R]$ 
is $(3/5,8)$-good for contracting with respect to $C$.  With $\beta =8$ our choice of $h$ satisfies $h \ge 1500 \beta$ as we are in the 
case $d \ge 5 \cdot 10^6$.  Thus we are in a position to apply \cref{lem:2outforR},
which tells us that with probability at least $1/2$ the directed subgraph $H$ of $G[R]$ returned by the algorithm will be $(7/10, 800)$-good for contracting with respect to $C$.  Let us assume this is the case, and 
the probability the algorithm reaches this good state is at least $(3/10)(1/2) = 3/20$.
Further the number of vertices in $R$ with outdegree less than $h$ in $H$ is at most $n/(10^3 d) + |R|/h$.

As argued in the proof of \cref{alg:randomized-easy}, the probability we do not select an edge of $C$ in taking a 1-out sample of $H'$ is at least $(1/5)^{110}$.  As $H$ is $(7/10,800)$-good for contracting with respect to $C$, 
by \cref{clm:sample} the probability that we do not select an edge of $C$ in taking a random 2-out sample of $H$ is at least $(3/10)^{2286}$.  We apply 
\cref{lem:num_con} with degree threshold $h$ to see that with high probability (for concreteness say $99/100$) the number of vertices in the contraction of $G[R]$ by the edges in the 2-out sample will be 
the number of vertices with outdegree $< h$, which is at most $n/(10^3d) + |R|/h$, plus $2|R|/h$.  In particular, we do not fail in line~\ref{line:contract_too_many} with probability at least $99/100$.
Overall, we are now in the good case that all steps of the algorithm have been successful with probability at least $(3/20) \cdot (1/5)^{110} \cdot (3/10)^{2286} \cdot (99/100)$, and $G'$ has at most $7n/d$ vertices.

Finally, we find a sparse $d$-edge connectivity certificate $F$ of $G'$.  By \cref{cor:monte_carlo} this succeeds with probability at least $99/100$ and takes 
$O(n + n \log(n)/\log(n/d))$ queries, which is $O(n)$ overall as $d < \log^{10} n$.  Hence with probability at least $(5/24) \cdot (1/5)^{110} \cdot (3/10)^{2286} \cdot (99/100)^2$ 
we will correctly output the edge connectivity on line~\ref{line:output}.  As we never output a value that is less than the edge connectivity, we can repeat the whole algorithm 
a sufficiently large but constant number of times and output the minimum of the values returned to boost the success probability to $2/3$.
\end{proof}

\section*{Acknowledgements}
This project has received funding from the European Research Council (ERC) under the European Union's Horizon 2020 research and innovation programme under grant agreement No 715672. Danupon Nanongkai and Sagnik Mukhopadhyay were also supported by the Swedish Research Council (Reg. No. 2015-04659 and 2019-05622). Troy Lee is supported in part by the Australian Research Council Grant No: DP200100950.
Pawe\l{} Gawrychowski is partially supported by the Bekker programme of the Polish National Agency for Academic Exchange (PPN/BEK/2020/1/00444).

\bibliography{biblio.bib}

\newcommand{\etalchar}[1]{$^{#1}$}
\begin{thebibliography}{GPRW20}

\bibitem[ACK21]{AssadiCK21}
Sepehr Assadi, Deeparnab Chakrabarty, and Sanjeev Khanna.
\newblock Graph connectivity and single element recovery via linear and {OR}
  queries.
\newblock In {\em Proceedings of the 29th Annual European Symposium on
  Algorithms ({ESA} '21)}, volume 204 of {\em LIPIcs}, pages 7:1--7:19. Schloss
  Dagstuhl - Leibniz-Zentrum f{\"{u}}r Informatik, 2021.

\bibitem[AD21]{AssadiD21}
Sepehr Assadi and Aditi Dudeja.
\newblock A simple semi-streaming algorithm for global minimum cuts.
\newblock In {\em Proceedings of the 4th Symposium on Simplicity in Algorithms
  (SOSA '21)}, pages 172--180. {SIAM}, 2021.

\bibitem[AGM12]{AGM12}
Kook~Jin Ahn, Sudipto Guha, and Andrew McGregor.
\newblock Analyzing graph structure via linear measurements.
\newblock In {\em Proceedings of the Twenty-Third Annual {ACM-SIAM} Symposium
  on Discrete Algorithms ({SODA} '12)}, pages 459--467. {SIAM}, 2012.

\bibitem[AL21]{AuzaL21}
Arinta Auza and Troy Lee.
\newblock On the query complexity of connectivity with global queries.
\newblock {\em CoRR}, abs/2109.02115, 2021.

\bibitem[BFS86]{BabaiFS86}
L{\'{a}}szl{\'{o}} Babai, Peter Frankl, and Janos Simon.
\newblock Complexity classes in communication complexity theory.
\newblock In {\em Proceedings of the 27th IEEE Annual Symposium on Foundations
  of Computer Science (FOCS '86)}, pages 337--347. {IEEE}, 1986.

\bibitem[BHR{\etalchar{+}}20]{BeameHRRS20}
Paul Beame, Sariel Har{-}Peled, Sivaramakrishnan~Natarajan Ramamoorthy, Cyrus
  Rashtchian, and Makrand Sinha.
\newblock Edge estimation with independent set oracles.
\newblock {\em {ACM} Transactions on Algorithms}, 16(4):52:1--52:27, 2020.

\bibitem[BV97]{BV97}
Ethan Bernstein and Umesh~V. Vazirani.
\newblock Quantum complexity theory.
\newblock {\em {SIAM} Journal on Computing}, 26(5):1411--1473, 1997.

\bibitem[CDK19]{CormodeDK19}
Graham Cormode, Jacques Dark, and Christian Konrad.
\newblock Independent sets in vertex-arrival streams.
\newblock In {\em Proceedings of the 46th International Colloquium on Automata,
  Languages, and Programming (ICALP ’19)}, volume 132 of {\em LIPIcs}, pages
  45:1--45:14. Schloss Dagstuhl - Leibniz-Zentrum f{\"{u}}r Informatik, 2019.

\bibitem[CGL{\etalchar{+}}20]{ChuzhoyGLNPS19}
Julia Chuzhoy, Yu~Gao, Jason Li, Danupon Nanongkai, Richard Peng, and
  Thatchaphol Saranurak.
\newblock A deterministic algorithm for balanced cut with applications to
  dynamic connectivity, flows, and beyond.
\newblock In {\em Proceedings of the 61st IEEE Annual Symposium on Foundations
  of Computer Science (FOCS '20)}, pages 1158--1167. {IEEE}, 2020.

\bibitem[CHL21]{CHL21}
Andrew~M. Childs, Shih{-}Han Hung, and Tongyang Li.
\newblock Quantum query complexity with matrix-vector products.
\newblock In {\em 48th International Colloquium on Automata, Languages, and
  Programming, ({ICALP} '21)}, volume 198 of {\em LIPIcs}, pages 55:1--55:19,
  2021.

\bibitem[CQ21]{ChekuriQ21a-isocut}
Chandra Chekuri and Kent Quanrud.
\newblock Isolating cuts, (bi-)submodularity, and faster algorithms for
  connectivity.
\newblock In {\em Proceedings of the 48th International Colloquium on Automata,
  Languages, and Programming ({ICALP} '21)}, volume 198 of {\em LIPIcs}, pages
  50:1--50:20. Schloss Dagstuhl - Leibniz-Zentrum f{\"{u}}r Informatik, 2021.

\bibitem[CS72]{ChaoS72}
M.T. Chao and W.~E. Strawderman.
\newblock Negative moments of positive random variables.
\newblock {\em Journal of the {A}merican Statistical Association},
  67(338):429--431, 1972.

\bibitem[DH96]{DurrH96}
Christoph D{\"{u}}rr and Peter H{\o}yer.
\newblock A quantum algorithm for finding the minimum.
\newblock {\em CoRR}, quant-ph/9607014, 1996.

\bibitem[DJW12]{DoerrJW12}
Benjamin Doerr, Daniel Johannsen, and Carola Winzen.
\newblock Multiplicative drift analysis.
\newblock {\em Algorithmica}, 64(4):673--697, 2012.

\bibitem[Doe20]{Doe18}
Benjamin Doerr.
\newblock Probabilistic tools for the analysis of randomized optimization
  heuristics.
\newblock In {\em Theory of evolutionary computation}, pages 1--87. Springer,
  2020.

\bibitem[DVZ21]{DadushVZ21}
Daniel Dadush, L{\'{a}}szl{\'{o}}~A. V{\'{e}}gh, and Giacomo Zambelli.
\newblock Geometric rescaling algorithms for submodular function minimization.
\newblock {\em Mathematical Operations Research}, 46(3):1081--1108, 2021.

\bibitem[GK98]{GK98}
Vladimir Grebinski and Gregory Kucherov.
\newblock Reconstructing a {H}amiltonian cycle by querying the graph:
  Application to {DNA} physical mapping.
\newblock {\em Discrete Applied Mathematics}, 88(1-3):147--165, 1998.

\bibitem[GK00]{GK00}
Vladimir Grebinski and Gregory Kucherov.
\newblock Optimal reconstruction of graphs under the additive model.
\newblock {\em Algorithmica}, 28(1):104--124, 2000.

\bibitem[GLS88]{GLS1988}
Martin Gr{\"{o}}tschel, L{\'{a}}szl{\'{o}} Lov{\'{a}}sz, and Alexander
  Schrijver.
\newblock {\em Geometric Algorithms and Combinatorial Optimization}, volume~2
  of {\em Algorithms and Combinatorics}.
\newblock Springer, 1988.

\bibitem[GMW20]{GawrychowskiMW20}
Pawe{\l} Gawrychowski, Shay Mozes, and Oren Weimann.
\newblock Minimum cut in {$O(m\log^2 n)$} time.
\newblock In {\em Proceedings of the 47th International Colloquium on Automata,
  Languages, and Programming ({ICALP} '20)}, volume 168 of {\em LIPIcs}, pages
  57:1--57:15. Schloss Dagstuhl - Leibniz-Zentrum f{\"{u}}r Informatik, 2020.

\bibitem[GNT20]{GNT20}
Mohsen Ghaffari, Krzysztof Nowicki, and Mikkel Thorup.
\newblock Faster algorithms for edge connectivity via random 2-out
  contractions.
\newblock In {\em Proceedings of the 31st {ACM-SIAM} Symposium on Discrete
  Algorithms ({SODA} '20)}, pages 1260--1279. {SIAM}, 2020.

\bibitem[GPRW20]{GraurPRW20}
Andrei Graur, Tristan Pollner, Vidhya Ramaswamy, and S.~Matthew Weinberg.
\newblock New query lower bounds for submodular function minimization.
\newblock In {\em Proceedings of the 11th Innovations in Theoretical Computer
  Science Conference ({ITCS} '20)}, volume 151 of {\em LIPIcs}, pages
  64:1--64:16. Schloss Dagstuhl - Leibniz-Zentrum f{\"{u}}r Informatik, 2020.

\bibitem[Har08]{HarveyThesis}
Nicholas J.~A. Harvey.
\newblock {\em Matchings, matroids and submodular functions}.
\newblock PhD thesis, Massachusetts Institute of Technology, Cambridge, MA,
  {USA}, 2008.

\bibitem[HMT88]{HajnalMT88}
Andr{\'{a}}s Hajnal, Wolfgang Maass, and Gy{\"{o}}rgy Tur{\'{a}}n.
\newblock On the communication complexity of graph properties.
\newblock In {\em Proceedings of the 20th Annual ACM-SIGACT Symposium on Theory
  of Computing (STOC '88)}, pages 186--191. {ACM}, 1988.

\bibitem[HRRS19]{HamoudiRRS19}
Yassine Hamoudi, Patrick Rebentrost, Ansis Rosmanis, and Miklos Santha.
\newblock Quantum and classical algorithms for approximate submodular function
  minimization.
\newblock {\em Quantum Information \& Computation}, 19(15{\&}16):1325--1349,
  2019.

\bibitem[HRW20]{HRW20}
Monika Henzinger, Satish Rao, and Di~Wang.
\newblock Local flow partitioning for faster edge connectivity.
\newblock {\em {SIAM} Journal on Computing}, 49(1):1--36, 2020.

\bibitem[IFF01]{IwataFF01}
Satoru Iwata, Lisa Fleischer, and Satoru Fujishige.
\newblock A combinatorial strongly polynomial algorithm for minimizing
  submodular functions.
\newblock {\em Journal of the {ACM}}, 48(4):761--777, 2001.

\bibitem[IKL{\etalchar{+}}12]{IvanyosKLSW12}
G{\'{a}}bor Ivanyos, Hartmut Klauck, Troy Lee, Miklos Santha, and Ronald
  de~Wolf.
\newblock New bounds on the classical and quantum communication complexity of
  some graph properties.
\newblock In {\em Proceedings of the 32nd {IARCS} Annual Conference on
  Foundations of Software Technology and Theoretical Computer Science ({FSTTCS}
  '12)}, volume~18 of {\em LIPIcs}, pages 148--159. Schloss Dagstuhl -
  Leibniz-Zentrum f{\"{u}}r Informatik, 2012.

\bibitem[Jia21]{Jiang21}
Haotian Jiang.
\newblock Minimizing convex functions with integral minimizers.
\newblock In {\em Proceedings of the 32nd ACM-SIAM Symposium on Discrete
  Algorithms (SODA '21)}, pages 976--985. {SIAM}, 2021.

\bibitem[JST11]{JowhariST11}
Hossein Jowhari, Mert Saglam, and G{\'{a}}bor Tardos.
\newblock Tight bounds for {$L_p$} samplers, finding duplicates in streams, and
  related problems.
\newblock In {\em Proceedings of the 30th {ACM} {SIGMOD-SIGACT-SIGART}
  Symposium on Principles of Database Systems ({PODS} '11)}, pages 49--58.
  {ACM}, 2011.

\bibitem[KNR99]{KremerNR99}
Ilan Kremer, Noam Nisan, and Dana Ron.
\newblock On randomized one-round communication complexity.
\newblock {\em Computational Complexity}, 8(1):21--49, 1999.

\bibitem[KT19]{KawarabayashiT19}
Ken{-}ichi Kawarabayashi and Mikkel Thorup.
\newblock Deterministic edge connectivity in near-linear time.
\newblock {\em Journal of the {ACM}}, 66(1):4:1--4:50, 2019.
\newblock Announced at STOC '15.

\bibitem[LLSZ21]{LeeLSZ21}
Troy Lee, Tongyang Li, Miklos Santha, and Shengyu Zhang.
\newblock On the cut dimension of a graph.
\newblock In {\em Proceedings of the 36th Computational Complexity Conference
  (CCC ’21)}, volume 200 of {\em LIPIcs}, pages 15:1--15:35. Schloss Dagstuhl
  - Leibniz-Zentrum f{\"{u}}r Informatik, 2021.

\bibitem[LP20]{LiP20-isocut}
Jason Li and Debmalya Panigrahi.
\newblock Deterministic min-cut in poly-logarithmic max-flows.
\newblock In {\em Proceedings of the 61st {IEEE} Annual Symposium on
  Foundations of Computer Science ({FOCS} '20)}, pages 85--92. {IEEE}, 2020.

\bibitem[LS21]{LeeS21-private}
Troy Lee and Adi Shraibman.
\newblock On the communication complexity of edge connectivity, 2021.
\newblock In preparation.

\bibitem[LSW15]{LSW15}
Yin~Tat Lee, Aaron Sidford, and Sam~Chiu{-}wai Wong.
\newblock A faster cutting plane method and its implications for combinatorial
  and convex optimization.
\newblock In {\em Proceedings of the 56th Annual {IEEE} Symposium on
  Foundations of Computer Science ({FOCS} '15)}, pages 1049--1065. {IEEE},
  2015.

\bibitem[LSZ21]{LeeSZ21}
Troy Lee, Miklos Santha, and Shengyu Zhang.
\newblock Quantum algorithms for graph problems with cut queries.
\newblock In {\em Proceedings of the 32nd ACM-SIAM Symposium on Discrete
  Algorithms (SODA ’21)}, pages 939--958. {SIAM}, 2021.

\bibitem[McC05]{McCormick}
S.~Thomas McCormick.
\newblock Submodular function minimization.
\newblock In {\em Discrete Optimization}, volume~12 of {\em Handbooks in
  Operations Research and Management Science}, pages 321--391. Elsevier, 2005.

\bibitem[MN20]{MukhopadhyayN20}
Sagnik Mukhopadhyay and Danupon Nanongkai.
\newblock Weighted min-cut: sequential, cut-query, and streaming algorithms.
\newblock In {\em Proccedings of the 52nd Annual {ACM}-{SIGACT} Symposium on
  Theory of Computing ({STOC} '20)}, pages 496--509. {ACM}, 2020.

\bibitem[MN21]{MukhopadhyayN21-isocut}
Sagnik Mukhopadhyay and Danupon Nanongkai.
\newblock A note on isolating cut lemma for submodular function minimization.
\newblock {\em CoRR}, abs/2103.15724, 2021.

\bibitem[MW99]{MW99}
Ewa Marciniak and Jacek Weso{\l}owski.
\newblock Asymptotic {E}ulerian expansions for binomial and negative binomial
  reciprocals.
\newblock {\em Proceedings of the {A}merican Mathematical Society},
  127(11):3329--3338, 1999.

\bibitem[NI92]{NI92}
Hiroshi Nagamochi and Toshihide Ibaraki.
\newblock A linear-time algorithm for finding a sparse $k$-connected spanning
  subgraph of a $k$-connected graph.
\newblock {\em Algorithmica}, 7(5{\&}6):583--596, 1992.

\bibitem[NS17]{NanongkaiS16}
Danupon Nanongkai and Thatchaphol Saranurak.
\newblock Dynamic spanning forest with worst-case update time: adaptive, {Las
  Vegas}, and {$O(n^{1/2-\varepsilon})$}-time.
\newblock In {\em Proceedings of the 49th Annual {ACM}-{SIGACT} Symposium on
  Theory of Computing ({STOC} '17)}, pages 1122--1129, 2017.

\bibitem[Orl09]{Orlin09}
James~B. Orlin.
\newblock A faster strongly polynomial time algorithm for submodular function
  minimization.
\newblock {\em Mathematical Programming}, 118(2):237--251, 2009.

\bibitem[OSV12]{OrecchiaSV12}
Lorenzo Orecchia, Sushant Sachdeva, and Nisheeth~K. Vishnoi.
\newblock Approximating the exponential, the {L}anczos method and an
  {${\tilde{O}}(m)$}-time spectral algorithm for balanced separator.
\newblock In {\em Proceedings of the 44th Annual ACM-SIGACT Symposium on Theory
  of Computing Conference ({STOC} '12)}, pages 1141--1160. {ACM}, 2012.

\bibitem[Que98]{Queyranne98}
Maurice Queyranne.
\newblock Minimizing symmetric submodular functions.
\newblock {\em Mathematical Programming}, 82:3--12, 1998.

\bibitem[RS95]{RazS95}
Ran Raz and Boris Spieker.
\newblock On the ``log rank''-conjecture in communication complexity.
\newblock {\em Combinatorica}, 15(4):567--588, 1995.

\bibitem[RSW18]{RubinsteinSW18}
Aviad Rubinstein, Tselil Schramm, and S.~Matthew Weinberg.
\newblock Computing exact minimum cuts without knowing the graph.
\newblock In {\em Proceedings of the 9th Innovations in Theoretical Computer
  Science Conference (ITCS ’18)}, volume~94 of {\em LIPIcs}, pages
  39:1--39:16. Schloss Dagstuhl - Leibniz-Zentrum f{\"{u}}r Informatik, 2018.

\bibitem[Rud91]{Rudin}
Walter Rudin.
\newblock {\em Functional Analysis}.
\newblock Mc{G}raw-{H}ill, 1991.

\bibitem[Sar21]{Saranurak21}
Thatchaphol Saranurak.
\newblock A simple deterministic algorithm for edge connectivity.
\newblock In {\em Proceedings of the 4th Symposium on Simplicity in Algorithms
  (SOSA '21)}, pages 80--85. SIAM, 2021.

\bibitem[Sch00]{Schrijver00}
Alexander Schrijver.
\newblock A combinatorial algorithm minimizing submodular functions in strongly
  polynomial time.
\newblock {\em Journal of Combinatorial Theory, Series {B}}, 80(2):346--355,
  2000.

\bibitem[SW97]{StoerW97}
Mechthild Stoer and Frank Wagner.
\newblock A simple min-cut algorithm.
\newblock {\em Journal of the {ACM}}, 44(4):585--591, 1997.

\bibitem[SW19]{SaranurakW19}
Thatchaphol Saranurak and Di~Wang.
\newblock Expander decomposition and pruning: Faster, stronger, and simpler.
\newblock In {\em Proceedings of the 30th {ACM-SIAM} Symposium on Discrete
  Algorithms ({SODA} '19)}, pages 2616--2635. {SIAM}, 2019.

\bibitem[SWYZ21]{SWYZ21}
Xiaoming Sun, David~P Woodruff, Guang Yang, and Jialin Zhang.
\newblock Querying a matrix through matrix-vector products.
\newblock {\em ACM Transactions on Algorithms (TALG)}, 17(4):1--19, 2021.

\bibitem[Wul17]{Wulff-Nilsen16a}
Christian Wulff{-}Nilsen.
\newblock Fully-dynamic minimum spanning forest with improved worst-case update
  time.
\newblock In {\em Proceedings of the 49th Annual {ACM}-{SIGACT} Symposium on
  Theory of Computing ({STOC} '17)}, pages 1130--1143, 2017.

\bibitem[Zel11]{Zelke11}
Mariano Zelke.
\newblock Intractability of min- and max-cut in streaming graphs.
\newblock {\em Information Processing Letters}, 111(3):145--150, 2011.

\end{thebibliography}

\appendix

\section{Proof of \cref{lem:preserve_main}}\label{appen:proof_of_lemma}
Throughout this appendix we will use the following notation. Let $0 < c < d$ be positive integers.
 Let $X_1, \ldots, X_c, Z_1, \ldots, Z_{d-c}$ be independent and identically distributed Bernoulli random variables that are $1$ with probability $p$.  
 Let $X = \sum_{i=1}^c X_i$ and $Y = \sum_{i=1}^c X_i + \sum_{i=1}^{d-c} Z_i$.  Note that $X \sim B(c,p), Y \sim B(d,p)$ are both binomial random variables.
 
 Although the most of the statements in this appendix will be purely probabilistic, one can keep in mind the following scenario.  We have a graph $G = (V,E)$ and 
 a subset of edges $C$.  Say that a vertex $v \in V$ had degree $d$ and has $c$ edges of $C$ incident to it.  We then sample a subset of edges incident to $v$ 
 by independently taking each edge with probability $p$.  $X_i=1$ represents the event that the $\ith$ edge of $C$ incident to $v$ is selected, and 
 $Z_i=1$ the event that the $\ith$ non-edge of $C$ incident to $v$ is selected.  
Then $X = \sum_{i=1}^c X_i$ is the random variable for the total number of edges of $C$ incident to $v$ selected and $Y = X + \sum_{i=1}^{d-c} Z_i$ is the random 
variable for the total number of edges incident to $v$ selected.
\begin{proposition}
\label{prop:var1}
\[
\bbE[X^2 \mid Y = b] = \frac{cb}{d} + \frac{c(c-1)b(b-1)}{d(d-1)} \enspace.
\]
\end{proposition}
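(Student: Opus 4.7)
The plan is to exploit the fact that each $X_i$ is a Bernoulli variable, hence $X_i^2 = X_i$, so that $X^2$ expands into a sum of indicators plus a sum of pairwise products. Concretely, I would write
\[
X^2 = \sum_{i=1}^c X_i^2 + \sum_{i \ne j} X_i X_j = X + \sum_{i \ne j} X_i X_j,
\]
take conditional expectations given $Y = b$, and use linearity to reduce the computation to (i) $\mathbb{E}[X \mid Y = b]$ and (ii) $\mathbb{E}[X_i X_j \mid Y = b]$ for a single ordered pair $i \ne j$.

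For the first piece, the proof of \cref{prop:expectation} already establishes $\mathbb{E}[X_i \mid Y = b] = b/d$ by symmetry of the $d$ i.i.d.~Bernoullis, giving $\mathbb{E}[X \mid Y = b] = cb/d$. For the second piece, I would again invoke symmetry: conditioned on $Y = b$, the set of indices $\{i : X_i = 1\} \cup \{j : Z_j = 1\}$ is uniformly distributed over the $\binom{d}{b}$ subsets of size $b$ of the $d$ underlying variables. Hence for distinct $i,j \in [c]$,
\[
\mathbb{E}[X_i X_j \mid Y = b] = \Pr[X_i = 1, X_j = 1 \mid Y = b] = \frac{\binom{d-2}{b-2}}{\binom{d}{b}} = \frac{b(b-1)}{d(d-1)}.
\]

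Summing over the $c(c-1)$ ordered pairs yields
\[
\mathbb{E}[X^2 \mid Y = b] = \frac{cb}{d} + \frac{c(c-1) b(b-1)}{d(d-1)},
\]
as claimed. There is no real obstacle; the only care required is in handling the edge cases $b \in \{0,1\}$ or $c \in \{0,1\}$, where one factor vanishes and the formula still holds with the convention $\binom{d-2}{b-2} = 0$ whenever $b < 2$.
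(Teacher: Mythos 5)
Your proof is correct and follows the same overall decomposition as the paper: write $X^2$ as the diagonal sum plus the off-diagonal sum, get the diagonal contribution $cb/d$ from the computation in \cref{prop:expectation}, and reduce everything to the single cross-moment $\bbE[X_iX_j\mid Y=b]$ for $i\ne j$. The only place you diverge is in how that cross-moment is evaluated. The paper obtains it algebraically: by exchangeability all off-diagonal conditional expectations are equal, so expanding $b^2=\bbE[Y^2\mid Y=b]=d(d-1)\,\bbE[X_iX_j\mid Y=b]+d\cdot(b/d)$ and solving gives $b(b-1)/(d(d-1))$. You instead observe that, conditioned on $Y=b$, the support of the $d$ i.i.d.\ Bernoullis is uniform over the $\binom{d}{b}$ size-$b$ subsets, and compute $\Pr[X_i=X_j=1\mid Y=b]=\binom{d-2}{b-2}/\binom{d}{b}=b(b-1)/(d(d-1))$ directly. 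Both rest on the same exchangeability; your hypergeometric count is arguably more transparent and handles the edge cases $b\in\{0,1\}$ automatically, while the paper's version avoids invoking the uniform-conditional-support fact explicitly and reuses the same "solve a symmetric linear identity" trick it already used for the first moment. Either derivation is complete and correct.
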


\begin{proof}
By linearity of conditional expectation, $\bbE[X^2 \mid Y = b] = \sum_{i,j} \bbE[X_i X_j \mid Y = b]$.  
In the proof of \cref{prop:expectation} we have already computed that $\bbE[X_i^2 \mid Y = b] = \bbE[X_i \mid Y = b]= b/d$.
Recall that $X_{i}$ and $Z_{j}$ are identically distributed, so also $\bbE[Z_i^2 \mid Y = b] = b/d$.
For the same reason, the following expected values are all equal (i) $\bbE[X_i X_j \mid Y = b]$ for $i\neq j$, (ii) $\bbE[Z_i Z_j \mid Y = b]$ for $i\neq j$,
(iii) $\bbE[X_i Z_j \mid Y = b]$ for any $i,j$.
We then obtain the following for any $i\neq j$:
\begin{align*}
\bbE[Y^{2} \mid Y = b] &= b^{2} \\
\bbE[(X_{1}+\ldots+X_{c}+Z_{1}+\ldots+Z_{d-c})^{2} \mid Y = b] &= b^{2} \\
\bbE[d(d-1)X_{i}X_{j} +d X^{2}_{i}\mid Y = b] &= b^{2} \\
\implies \bbE[X_{i}X_{j} \mid Y = b] &= \frac{b(b-1)}{d(d-1)} \enspace .
\end{align*}
There are $c$ terms of the form $\bbE[X_i^2 \mid Y = b]$ and $c(c-1)$ terms of the form $\bbE[X_i X_j \mid Y = b]$, giving the proposition.
\end{proof}

\begin{definition}[Conditional first inverse moment]
Let $d$ be a positive integer and $f,g$ be integers with $0 < f \le g \le d$. Let $p \in (0,1]$.  Let $Y \sim B(d,p)$ be a binomial 
random variable.  Define $Q(d,p,f,g) = \bbE[1/Y \mid f \le Y \le g]$.
\end{definition}

\begin{proposition}
\label{prop:var2}
Let $0 < f \le g \le d$.  Then 
\[
\bbE[X^2/Y^2 \mid f \le Y \le g] = \frac{c(c-1)}{d(d-1)} + \left(\frac{c}{d} - \frac{c(c-1)}{d(d-1)} \right) Q(d,p,f,g) \enspace .
\]
\end{proposition}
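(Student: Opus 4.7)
The plan is to reduce the problem to computing $\bbE[X^2/Y^2 \mid Y = b]$ for a fixed $b$, then average over $b$ in the range $[f,g]$. The key observation is that Proposition \ref{prop:var1} gives $\bbE[X^{2} \mid Y=b]$ as an explicit polynomial in $b$, and after dividing by $b^{2}$ the resulting expression is affine in $1/b$. The coefficient structure will precisely match the statement.

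Concretely, I would first write
\[
\bbE[X^2/Y^2 \mid Y=b] \;=\; \frac{1}{b^{2}}\bbE[X^{2}\mid Y=b] \;=\; \frac{c}{db} + \frac{c(c-1)(b-1)}{d(d-1)b}
\]
using Proposition \ref{prop:var1}. Then I would perform the algebraic rearrangement
\[
\frac{c(c-1)(b-1)}{d(d-1)b} \;=\; \frac{c(c-1)}{d(d-1)} - \frac{c(c-1)}{d(d-1)b},
\]
which lets me group the $b$-independent term and collect the $1/b$ terms:
\[
\bbE[X^2/Y^2 \mid Y=b] \;=\; \frac{c(c-1)}{d(d-1)} + \frac{1}{b}\left(\frac{c}{d} - \frac{c(c-1)}{d(d-1)}\right).
\]

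Finally I would take expectation over $Y$ conditional on $f \le Y \le g$ by applying the tower property. Since only the $1/b$ factor depends on $b$, this gives
\[
\bbE[X^{2}/Y^{2} \mid f \le Y \le g] \;=\; \frac{c(c-1)}{d(d-1)} + \left(\frac{c}{d} - \frac{c(c-1)}{d(d-1)}\right)\bbE[1/Y \mid f \le Y \le g],
\]
and the last expectation is exactly $Q(d,p,f,g)$ by definition. Since we are in the regime $f \ge 1$, dividing by $Y$ is well-defined on the conditioning event. There is no real obstacle here; the only mild subtlety is noticing the rearrangement that separates the $b$-independent piece $c(c-1)/(d(d-1))$ from the piece multiplying $Q(d,p,f,g)$, which is exactly what the statement requires.
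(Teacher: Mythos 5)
Your proof is correct and follows essentially the same route as the paper: condition on $Y=b$, apply \cref{prop:var1}, divide by $b^2$, separate the $b$-independent term from the $1/b$ term, and then average over the event $f \le Y \le g$ to produce $Q(d,p,f,g)$. The paper writes out the explicit sums over $b$ where you invoke the tower property, but the computation is identical.
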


\begin{proof}
Let $\gamma = \Pr[f \le Y \le g]$.  Then we have
\begin{align*}
\bbE[X^2/Y^2 \mid f \le Y \le g] &= \frac{1}{\gamma} \sum_{b=f}^g \frac{1}{b^2} \sum_{a=0}^c a^2 \Pr[X = a, Y = b]  \\
&= \frac{1}{\gamma} \sum_{b=f}^g \frac{\Pr[Y=b]}{b^2} \bbE[X^2 \mid Y = b] \\
&= \frac{1}{\gamma} \sum_{b=f}^g \Pr[Y = b] \left(\frac{c}{bd} + \frac{c(c-1)(b-1)}{d(d-1)b} \right) \\
&= \frac{c(c-1)}{d(d-1)} + \left(\frac{c}{d}- \frac{c(c-1)}{d(d-1)}\right) \bbE[1/Y \mid f \le Y \le g] \enspace . \qedhere
\end{align*}
\end{proof}

In order to apply \cref{prop:var2} we will need to upper bound $Q(d,p,f,g)$.  Calculating the inverse moments of a truncated binomial distribution
is a well-studied problem and precise asymptotic estimates are known, see e.g.\  \cite{MW99}.  For our purposes a looser estimate suffices and 
we opt for a simple self-contained proof adapted from \cite{ChaoS72}.
\begin{proposition}
\label{prop:Qbound}
Let $Y \sim B(d,p)$ and let $0< f \le g$ be such that $\Pr[f \le Y \le g] \ge 1/2$.  Then 
$Q(d,p,f,g) \le \frac{4}{pd}$.
\end{proposition}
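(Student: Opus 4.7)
The plan is to reduce the quantity $Q(d,p,f,g) = \mathbb{E}[1/Y \mid f \le Y \le g]$ to the classical exact expression for $\mathbb{E}[1/(Y+1)]$ and then transfer the bound through the conditioning.

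First, I would establish the well-known identity
\[
\mathbb{E}\left[\frac{1}{Y+1}\right] \;=\; \sum_{k=0}^{d}\binom{d}{k}p^{k}(1-p)^{d-k}\cdot\frac{1}{k+1} \;=\; \frac{1-(1-p)^{d+1}}{p(d+1)} \;\le\; \frac{1}{p(d+1)},
\]
which falls out of the reindexing $\frac{1}{k+1}\binom{d}{k} = \frac{1}{d+1}\binom{d+1}{k+1}$ and the binomial theorem. This is the Chao--Strawderman-style identity the excerpt alludes to.

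Second, since the conditioning event forces $Y \ge f \ge 1$, I would use the elementary bound $1/Y \le 2/(Y+1)$ valid for all integers $Y \ge 1$ to get
\[
\mathbb{E}[1/Y \mid f \le Y \le g] \;\le\; 2\,\mathbb{E}[1/(Y+1) \mid f \le Y \le g].
\]
Third, I would remove the conditioning via the standard observation that for a nonnegative random variable $Z$ and any event $A$ with $\Pr[A]>0$, one has $\mathbb{E}[Z \mid A] = \mathbb{E}[Z\mathbf{1}_A]/\Pr[A] \le \mathbb{E}[Z]/\Pr[A]$. Applying this with $Z = 1/(Y+1)$ and $A = \{f\le Y\le g\}$, and using the hypothesis $\Pr[A] \ge 1/2$, yields
\[
\mathbb{E}[1/(Y+1) \mid f \le Y \le g] \;\le\; \frac{\mathbb{E}[1/(Y+1)]}{\Pr[A]} \;\le\; \frac{2}{p(d+1)}.
\]
Chaining the two inequalities gives $Q(d,p,f,g) \le 4/(p(d+1)) \le 4/(pd)$, as claimed.

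There is no real obstacle here: the identity for $\mathbb{E}[1/(Y+1)]$ is the only mildly nontrivial ingredient, and everything else is elementary. The factor of $4$ is comfortable — a factor of $2$ is lost in comparing $1/Y$ to $1/(Y+1)$, and another factor of $2$ is lost from the lower bound on $\Pr[f \le Y \le g]$ — so there is no tightness concern. If one wanted to sharpen the constant (which is not needed for the applications of \cref{lem:preserve_main}), one could instead split on whether $Y$ is small relative to its mean $pd$ and argue more carefully, but the $4/(pd)$ bound suffices.
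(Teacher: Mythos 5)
Your proof is correct and is essentially the paper's own argument: both rely on the bound $1/b \le 2/(b+1)$ for $b \ge 1$, the reindexing identity $\tfrac{1}{b+1}\binom{d}{b} = \tfrac{1}{d+1}\binom{d+1}{b+1}$ giving $\bbE[1/(Y+1)] \le 1/(p(d+1))$, and the factor $1/\Pr[f \le Y \le g] \le 2$ to remove the conditioning. The only difference is presentational (you phrase the last step as $\bbE[Z \mid A] \le \bbE[Z]/\Pr[A]$ while the paper extends the truncated sum to the full range), so there is nothing to change.
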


\begin{proof}
We have 
\begin{align*}
Q(d,p,f,g) &= \frac{1}{\Pr[f \le Y \le g] } \sum_{b = f}^g \frac{\Pr[Y=b]}{b}  \\
&\le \frac{1}{\Pr[f \le Y \le g] } \sum_{b = 1}^d \frac{1}{b} p^b (1-p)^{d-b} \binom{d}{b} \\
&\le \frac{2}{\Pr[f \le Y \le g] } \sum_{b = 1}^d \frac{1}{b+1} p^b (1-p)^{d-b} \binom{d}{b} \\
&=\frac{2}{\Pr[f \le Y \le g] } \frac{1}{p(d+1)} \sum_{b = 1}^d  p^{b+1} (1-p)^{d-b} \binom{d+1}{b+1} \\
&\le \frac{2}{\Pr[f \le Y \le g] } \frac{1}{p(d+1)} \\
&\le \frac{4}{pd} \enspace . \qedhere
\end{align*}
\end{proof}

\begin{proposition}
\label{prop:var3}
Let $X,Y$ be the random variables defined in \cref{prop:var1}.  Then 
\[
\var[X/Y \mid f \le Y \le g] \le Q(d,p,f,g) \frac{c}{d} \enspace .
\]
\end{proposition}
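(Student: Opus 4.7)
The plan is to combine the formulas for the first two conditional moments of $X/Y$ from \cref{prop:expectation} and \cref{prop:var2}, and then observe an elementary inequality. By definition,
\[
\var[X/Y \mid f \le Y \le g] = \bbE[X^2/Y^2 \mid f \le Y \le g] - \bbE[X/Y \mid f \le Y \le g]^2 \enspace.
\]
From \cref{prop:expectation} the second term equals $(c/d)^2$, and from \cref{prop:var2} the first term equals
\[
\frac{c(c-1)}{d(d-1)} + \left(\frac{c}{d} - \frac{c(c-1)}{d(d-1)}\right) Q(d,p,f,g).
\]
Writing $Q = Q(d,p,f,g)$ for short and regrouping, this gives
\[
\var[X/Y \mid f \le Y \le g] = \frac{c}{d} Q + \frac{c(c-1)}{d(d-1)}(1 - Q) - \frac{c^2}{d^2}.
\]

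Hence it suffices to show that the last two terms are non-positive, i.e.\ $\frac{c(c-1)}{d(d-1)}(1-Q) \le \frac{c^2}{d^2}$. This is the only real content of the proof, and it follows from two trivial observations: first, $0 \le Q \le 1$ since $1/Y \le 1$ whenever $Y \ge 1$, so $0 \le 1 - Q \le 1$; and second, the simple inequality $\frac{c(c-1)}{d(d-1)} \le \frac{c^2}{d^2}$, which is equivalent to $(c-1)d \le c(d-1)$, i.e.\ $c \le d$, and holds by assumption. Multiplying these two bounds gives $\frac{c(c-1)}{d(d-1)}(1-Q) \le \frac{c^2}{d^2}$, and substituting back yields the claimed inequality
\[
\var[X/Y \mid f \le Y \le g] \le \frac{c}{d} Q(d,p,f,g).
\]

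There is no real obstacle here, since all the work has been done in \cref{prop:expectation} and \cref{prop:var2}; this proposition is essentially an algebraic corollary of those identities, packaged to be plugged into Chebyshev's inequality in the proof of \cref{lem:preserve_main}.
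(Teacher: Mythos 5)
Your proof is correct and follows essentially the same route as the paper: expand the conditional variance using \cref{prop:expectation} and \cref{prop:var2}, then bound the resulting expression in $Q$, $c$, $d$ by elementary algebra. The only difference is cosmetic — the paper factors the exact variance as $\frac{c}{d}(d-c)\bigl(\frac{Q}{d-1} - \frac{1}{d(d-1)}\bigr)$ and uses $d-c \le d-1$, whereas you regroup and use $0 \le Q \le 1$ together with $\frac{c(c-1)}{d(d-1)} \le \frac{c^2}{d^2}$ (valid since $1 \le c < d$); both are sound.
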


\begin{proof}
For convenience let $Q = Q(d,p,f,g)$.  We have 
\begin{align*}
\var[X/Y \mid f \le Y \le g]  &= \bbE[X^2/Y^2 \mid f \le Y \le g] - \bbE[X/Y \mid f \le Y \le g]^2 \\
&= \frac{c(c-1)}{d(d-1)} + \left(\frac{c}{d}- \frac{c(c-1)}{d(d-1)}\right) Q - \frac{c^2}{d^2} \\
&= \frac{c}{d} (d-c) \left(\frac{Q}{d-1} - \frac{1}{d(d-1)} \right) \\
&\le Q \frac{c}{d} \enspace . \qedhere
\end{align*}
\end{proof}

We can now derive a more general version of \cref{lem:preserve_main}.
\begin{lemma}
\label{lem:preserve_general}
Let $G = (V,E)$ be a simple $n$-vertex graph and let $C \subseteq E$.  Let $v \in N(C)$ and $k \ge 10$.
Choose a set $R$ by putting each vertex of $V$ into $R$ independently at random with probability $p\ge 2k/d(v)$.  Let $0< f \le pd(v)/2$ and $g \ge 2pd(v)$.  Then 
for any $\alpha > 0$
\[
\Pr_R\left[ \frac{c_R(v)}{d_R(v)} \ge \frac{c(v)}{d(v)} + \alpha \sqrt{\frac{2}{k}} \;\Big|\; f \le d_R(v) \le g \right] \le \frac{1}{\alpha^2} \frac{c(v)}{d(v)} \enspace. 
\]
\end{lemma}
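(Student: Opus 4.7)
The plan is to apply Chebyshev's inequality to the conditional random variable $W := c_R(v)/d_R(v)$ given $f \le d_R(v) \le g$, using the conditional mean and variance bounds already prepared in the preceding propositions. By \cref{prop:expectation} we already have $\bbE_R[W \mid f \le d_R(v) \le g] = c(v)/d(v)$, so the task reduces to bounding the conditional variance in terms of $1/k$.

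For the variance bound, I would first verify that \cref{prop:Qbound} applies to $Q(d(v), p, f, g)$, which requires $\Pr[f \le Y \le g] \ge 1/2$ for $Y \sim B(d(v), p)$. The hypotheses give $f \le pd(v)/2$ and $g \ge 2pd(v)$, so the event $\{pd(v)/2 \le Y \le 2pd(v)\}$ is contained in $\{f \le Y \le g\}$. Since $\bbE[Y] = pd(v) \ge 2k \ge 20$, the final sentence of \cref{lem:chernoff} gives $\Pr[pd(v)/2 \le Y \le 2pd(v)] \ge 1/2$, and hence $\Pr[f \le Y \le g] \ge 1/2$ as needed. Consequently \cref{prop:Qbound} yields $Q(d(v), p, f, g) \le 4/(pd(v)) \le 2/k$, where the last inequality uses the hypothesis $p \ge 2k/d(v)$. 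Plugging this into \cref{prop:var3} gives $\Var[W \mid f \le d_R(v) \le g] \le (2/k)\, c(v)/d(v)$.

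With the conditional mean and variance bounds in hand, Chebyshev's inequality applied with deviation $\alpha \sqrt{2/k}$ yields
\[
\Pr_R\!\left[ W \ge \frac{c(v)}{d(v)} + \alpha\sqrt{\tfrac{2}{k}} \,\Big|\, f \le d_R(v) \le g \right] \le \frac{(2/k)\, c(v)/d(v)}{\alpha^{2}\cdot (2/k)} = \frac{1}{\alpha^{2}} \cdot \frac{c(v)}{d(v)},
\]
which is exactly the claimed bound. There is no real obstacle remaining: all the heavy lifting has been done in the conditional second-moment computation (\cref{prop:var1,prop:var2,prop:var3}) and the truncated inverse-moment bound (\cref{prop:Qbound}). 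The only care needed is in combining the hypotheses $f \le pd(v)/2$, $g \ge 2pd(v)$, and $p \ge 2k/d(v)$ to ensure that \cref{prop:Qbound} is applicable and that the resulting $4/(pd(v))$ bound collapses to $2/k$. To recover \cref{lem:preserve_main} one simply specializes with $\alpha = \sqrt{k/200}$, turning $\alpha\sqrt{2/k}$ into $1/10$ and $1/\alpha^2$ into $200/k$.
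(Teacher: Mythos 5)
Your proposal is correct and follows essentially the same route as the paper: bound $Q(d(v),p,f,g)\le 2/k$ via the Chernoff bound and \cref{prop:Qbound}, feed this into the variance bound of \cref{prop:var3}, and conclude by Chebyshev's inequality (the paper parametrizes the deviation as $t\sqrt{2c(v)/(k\,d(v))}$ and sets $t=\alpha\sqrt{d(v)/c(v)}$, which is the same computation). Your explicit verification that $\{pd(v)/2\le Y\le 2pd(v)\}\subseteq\{f\le Y\le g\}$ and $\bbE[Y]=pd(v)\ge 2k\ge 20$ justifies the step the paper states more tersely, and your specialization $\alpha=\sqrt{k/200}$ matches the paper's recovery of \cref{lem:preserve_main}.
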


\begin{proof}
Let $c=c(v)$ and $d= d(v)$.  
Let us first upper bound $Q(d,p,f,g)$ with $p \ge 2k/d$.  As $k \ge 10$ by a Chernoff bound (the ``in particular'' of \cref{lem:chernoff}) the probability that 
$R$ contains between $f$ and $g$ neighbors of $v$ is at least $1/2$.  Thus we can apply \cref{prop:Qbound} to see that $Q \le 2/k$.  
Therefore by \cref{prop:var3} we have $\var[c_R(v)/d_R(v) \mid f \le d_R(v) \le g] \le 2c/(k d)$.  We can therefore apply 
Chebyshev's inequality to find
\[
\Pr_R\left[ \frac{c_R(v)}{d_R(v)} \ge \frac{c}{d} + t \sqrt{\frac{2c}{k d}} \;\Big|\; f \le d_R(v) \le g \right] \le \frac{1}{t^2} \enspace. 
\]
Taking $t = \alpha \sqrt{d(v)/c(v)}$ gives the lemma.
\end{proof}
\cref{lem:preserve_main} follows as an easy corollary by taking $\alpha = (1/10)\sqrt{k/2}$.
\chebyshev*

\section{Contraction lemma}
\label{appen:contraction}
To prove \cref{lem:num_con} we will closely follow the proof of the bound 
on the number of vertices after 2-out contraction given in \cite[Lemma 2.5]{GNT20}.
To this end, we need the following definition and lemma.
\begin{definition}[Stochastic domination]
Let $X$ and $Y$ be two random variables not necessarily defined on the same probability space.  We say that $Y$ stochastically 
dominates $X$, written $X \preceq Y$, if for all $\lambda \in \R$ we have $\Pr[X \le \lambda] \ge \Pr[Y \le \lambda]$.  
\end{definition}

\begin{lemma}[{\cite[Lemma 1.8.7]{Doe18}}]
\label{lem:doerr}
Let $X_1, \ldots, X_n$ be arbitrary binary random variables, and let $Y_1, \ldots, Y_n$ be independent binary random variables.  If 
$\Pr[X_i=1 | X_1 = x_1, \ldots, X_{i-1} = x_{i-1}] \le \Pr[Y_i = 1]$ for all $i=1, \ldots, n$ and all $x_1, \ldots, x_{i-1} \in \{0,1\}$ with 
$\Pr[X_1 = x_1, \ldots, X_{i-1} = x_{i-1}] > 0$ then 
\[
\sum_{i=1}^n X_i \preceq \sum_{i=1}^n Y_i \enspace .
\]
\end{lemma}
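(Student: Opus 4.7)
The plan is to prove the stochastic domination statement by building an explicit coupling on a common probability space, from which the ordering of the partial sums follows pointwise. Concretely, I would take independent $\mathrm{Uniform}[0,1]$ random variables $U_1,\ldots,U_n$ on a single probability space and use each $U_i$ to simultaneously generate one coordinate of an $X$-copy and one coordinate of a $Y$-copy, in such a way that the $X$-coordinate is always dominated by the $Y$-coordinate.

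More precisely, I would define $\tilde Y_i = \mathbf{1}[U_i \le \Pr[Y_i = 1]]$ for each $i$, and define $\tilde X_1,\ldots,\tilde X_n$ sequentially by setting
\[
\tilde X_i = \mathbf{1}\!\left[U_i \le \Pr[X_i = 1 \mid X_1 = \tilde X_1,\ldots,X_{i-1}=\tilde X_{i-1}]\right],
\]
where the conditional probability is interpreted as $0$ on null events. Since each $U_i$ is independent of $U_1,\ldots,U_{i-1}$ (and hence of $\tilde X_1,\ldots,\tilde X_{i-1}$), a straightforward induction on $i$ shows that the joint law of $(\tilde X_1,\ldots,\tilde X_n)$ agrees with that of $(X_1,\ldots,X_n)$: conditioned on $\tilde X_1 = x_1,\ldots,\tilde X_{i-1}=x_{i-1}$, the variable $\tilde X_i$ is Bernoulli with parameter exactly $\Pr[X_i=1 \mid X_1=x_1,\ldots,X_{i-1}=x_{i-1}]$. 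Likewise, because each $\tilde Y_i$ depends only on the single uniform $U_i$ and the $U_i$ are independent, the variables $\tilde Y_1,\ldots,\tilde Y_n$ are independent Bernoullis with $\Pr[\tilde Y_i=1]=\Pr[Y_i=1]$, so $(\tilde Y_1,\ldots,\tilde Y_n)$ has the same law as $(Y_1,\ldots,Y_n)$.

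By the hypothesis of the lemma, the threshold used to define $\tilde X_i$ is at most the threshold used to define $\tilde Y_i$ on every atom of positive probability; hence $\tilde X_i \le \tilde Y_i$ pointwise for each $i$, and summing gives $\sum_{i=1}^n \tilde X_i \le \sum_{i=1}^n \tilde Y_i$ almost surely. Therefore, for any $\lambda \in \mathbb{R}$,
\[
\Pr\!\left[\sum_{i=1}^n X_i \le \lambda\right] = \Pr\!\left[\sum_{i=1}^n \tilde X_i \le \lambda\right] \ge \Pr\!\left[\sum_{i=1}^n \tilde Y_i \le \lambda\right] = \Pr\!\left[\sum_{i=1}^n Y_i \le \lambda\right],
\]
which is exactly the definition of $\sum_i X_i \preceq \sum_i Y_i$. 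The only mildly delicate point, and the step I would write out most carefully, is the inductive verification that the sequential thresholding procedure reproduces the prescribed joint distribution of $(X_1,\ldots,X_n)$ — in particular, that conditioning on positive-probability prefixes is well-defined and yields the correct conditional Bernoulli parameter at each step.
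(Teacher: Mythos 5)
Your coupling argument is correct: the sequential thresholding with a single shared uniform $U_i$ per coordinate reproduces the joint law of $(X_1,\ldots,X_n)$, makes the $\tilde Y_i$ independent with the right marginals, and the hypothesis forces $\tilde X_i \le \tilde Y_i$ pointwise (with the convention that the conditional probability is $0$ on null prefixes, the comparison in fact holds everywhere, not just on positive-probability atoms), so the partial sums are ordered almost surely and stochastic domination follows. Note that the paper does not prove this lemma at all --- it is quoted verbatim from \cite[Lemma 1.8.7]{Doe18} --- so there is no in-paper argument to compare against; your proof is the standard coupling proof of this fact and would serve as a correct self-contained substitute for the citation.
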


We are now ready to prove the following.
\gntcontract*

\begin{proof}
We will prove the theorem by considering adding edges to $E'$ in a specific order given by \cref{alg:connected}. 
In this algorithm we maintain a set $\Pcal$ of \emph{processed} vertices, a set $\Acal$ of \emph{active} vertices, and a set $\Scal$ of \emph{sampled} vertices.  
We use the notation $A(v) = \{u \in V: (v,u) \in E\}$ and $u \in_R A(v)$ to denote
choosing an element of $A(v)$ uniformly at 
random.

\begin{algorithm}[H]
\caption{Procedure to add sampled edges}
\label{alg:connected}
\begin{algorithmic}[1]
\State $E' = \emptyset$
\State $\Pcal = \{v \in V: \outdeg(v) < \ell\}$
\label{line:initializeP}
\While{$\overline{\Pcal} \ne \emptyset$}
  \State Select $v \in \overline{\Pcal}$
  \State $\Acal \gets \{v\}, \Scal \gets \emptyset$
  \State $\pflag \gets 0$
  \While{$\Acal \setminus \Scal \ne \emptyset$} \Comment{A run of this while loop is called a \emph{phase}}
    \State Select $v \in \Acal \setminus \Scal$ 
    \State $\Scal \gets \Scal \cup \{v\}$
    \State Sample $u_1, u_2 \in_R A(v)$
    \label{line:sample2}
    \State Update $\Acal \gets \Acal \cup \{u_1, u_2\}$ and $E' \gets E' \cup \{ \{v,u_1\},\{v,u_2\}\}$
    \If{$u_1 \in \Pcal \vee u_2 \in \Pcal$}
      \State $\pflag \gets 1$
    \EndIf    
  \EndWhile
  \State $\Pcal \gets \Pcal \cup \Acal$ \label{line:addA}
\EndWhile
\end{algorithmic}
\end{algorithm}
Note that $\Acal$ is always a connected via edges in $E'$.  We are interested in the properties of $\Acal$ when it is added to the set of processed vertices 
in Line~\ref{line:addA}.  We wish to upper bound the number of times $\kappa$ that $\Acal$ is added to $\Pcal$ and the following 
two conditions hold
\begin{enumerate}
  \item $|\Acal | < \ell$ 
  \item $\pflag = 0$
\end{enumerate}
We can upper bound the number of connected components of $G'$ by $\tau + n/\ell + \kappa$. 
This is because 
\begin{enumerate}
    \item The initial size of $\Pcal$ is $\tau$,
    \item If $\pflag =1$ when $\Acal$ is added to $\Pcal$ this means that $\Acal$ is connected to a set of vertices that has already been processed and hence already counted,
    \item The number of sets added where $\Acal \ge \ell$ is at most $n/\ell$.
\end{enumerate} 
The remaining case is where $\pflag = 0$ and 
$\Acal < \ell$, which is counted by $\kappa$.

Thus our task is to show that with high probability $\kappa \le n/\ell$.  To this end, define a random variable $X_i$ to be $1$ if 
at the end of the $\ith$ phase $|\Acal| < \ell$ and $\pflag = 0$, and $0$ otherwise.  In other words, $X_i = 1$ if and only if the $\ith$ phase contributes to increasing $\kappa$.  

Let us consider the probability that at 
the end of a phase on line~\ref{line:addA} 
$|\Acal| = x$ and $\pflag = 0$ .  In this 
case we chose $2x$ many samples, and 
exactly $x+1$ of these were already in $\Acal$.  Following Ghaffari, Nowicki, and Thorup \cite{GNT20} we say a sample is \emph{caught} if it is 
already in $\Acal$.  The only fact needed to make the \cite{GNT20} proof go through is that
the probability a sample is caught is at most $\frac{x-1}{\ell}$ throughout the course of the phase.  This holds in 
our case as $\Pcal$ is initialized 
to have all vertices of outdegree at 
most $\ell$.  Thus if $\pflag = 0$ at 
the end of a phase then all vertices added
to $\Acal$ during the phase have outdegree at least $\ell$ and the probability 
that a sample on line~\ref{line:sample2} 
in already 
in $\Acal$ is at most $\frac{x-1}{\ell}$.

There are $\binom{2x}{x+1}$ many sequences for the placement of the caught samples.  Thus overall we can upper bound the
probability that $|\Acal| =x$ by 
\[
P_x = \binom{2x}{x+1} \left(\frac{x-1}{\ell}\right)^{x+1} \enspace.
\]
Following the calculation in \cite{GNT20} (displayed equation, bottom of page 7) it follows that
\[
\Pr[|\Acal| \le \frac{\ell}{8e^3}] \le \frac{8}{\ell^3} \enspace .
\]

This means that $\Pr[X_i = 1 | X_1 = x_1 \ldots X_{i-1} = x_{i-1}] \le 8/\ell^3$.  Now define independent random variables 
$Y_i$ that take value $1$ with probability $8/\ell^3$ and $0$ otherwise.  By \cref{lem:doerr}
\[
\Pr[\sum_i X_i > \gamma] \le \Pr[\sum_i Y_i > \gamma] \enspace .
\]
As the $Y_i$ are independent we can upper bound the probability they exceed their expectation by a Chernoff bound.  
We have $\bbE[\sum_i Y_i] \le 8n/\ell^3$.  Thus for any $\gamma \ge 8n/\ell^3$ and $0< \varepsilon$ we have 
by a Chernoff bound that $\Pr[\sum_i Y_i \ge (1+\varepsilon) \gamma)] \le \exp(-\frac{\gamma \varepsilon^2}{2+\varepsilon})$.

If $n/(2\ell) \ge \log^2(n)$ then taking $\gamma = n/(2\ell)$ (which is at least $8n/\ell^3$ as $\ell \ge 4$) and $\varepsilon = 1$ tells us that $\kappa \le n/\ell$ except 
with probability $\exp(-\log^2(n)/3)$.  If $n/(2\ell) < \log^2(n)$ then $8n/\ell^3 \le 64 \log^8(n)/n^2$ and so we can take $\gamma = 8n/\ell^3$ and $\eps = 1/(2\gamma)$
to see that $\kappa \le 1/2$ except with exponentially small probability.
\end{proof}

\section{Proofs in streaming model}

\subsection{Space lower bound in explicit vertex arrival setting} \label{appensec:streaming}

In this section, we sketch a proof of the following observation.

\streamreduction*

For completeness, we describe the reduction of \cite{Zelke11} from the Index function problem\footnote{\cite{Zelke11} refers to the problem as the bit vector probing problem} in the 2-party communication complexity setting to the problem of designing a one-pass streaming algorithm with $o(n^2)$ memory in the edge arrival setting that computes the minimum cut of a given graph. We then note that the exact same reduction can be implemented even if one considers the explicit vertex arrival setting, thus proving \cref{obs:lb_vertex_arrival}.

Formally, in the Index function problem, Alice is given a binary string $x$ of length $\ell$, and Bob receives an index $i\in [\ell]$. Bob's goal is to learn the value of $x_i$. A well known result \cite{KremerNR99} states that any one-way\footnote{A protocol consisting of a single message sent from Alice to Bob.} communication protocol that solves the Index function problem requires $\Omega(\ell)$ bits of communication.

\paragraph{The reduction:} Let $A$ be a one-pass streaming algorithm in the edge arrival setting using $o(n^2)$ bits of memory that computes the minimum cut in any given graph $G=(V,E)$ on $n$ vertices. Now, let $x,i$ be an instance of the Index function problem such that $x$ is of length $\frac{n^2-n}{2}$, and $i\in [\frac{n^2-n}{2}]$. Consider the following communication protocol in which Alice interprets $x$ as the description of a simple graph $G=(V,E)$ on $n$ vertices, Alice feeds the edges of $G$ to $A$ in an arbitrary order, and then sends $A$'s working memory to Bob, along with the degrees of all vertices in $G$. Bob interprets $i$ as a pair of vertices $a,b$ for which he wants to know whether $(a,b)\in E$. Bob continues the execution of $A$ by extending $G$ into a graph $G^*=(V^*,E^*), V\subseteq V^*, E\subseteq E^*$ as follows. Bob adds two cliques $S,T$, each on $3n$ vertices to the graph $G$, connects all vertices in $T$ to all the vertices in the set $V\backslash \set{a,b}$, and connects all vertices in $S$ to both $a$ and $b$. Finally, Bob adds a final vertex $c$ to the graph and connects it to $d_G(a)+d_G(b)-1$ vertices in $T\cup V\backslash \set{a,b}$.
 
Zelke \cite{Zelke11} proves that computing the minimum cut in the resulting graph allows Bob to infer whether $(a,b)\in E$, thus proving the reduction. 

\begin{observation}\label{obs_appen:vertex_explicit_reduction_Streaming}
The above reduction can be implemented in the explicit vertex arrival setting.
\end{observation}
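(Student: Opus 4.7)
The plan is to reproduce Zelke's reduction verbatim, but choreographed as a vertex stream rather than an edge stream. First, Alice presents the $n$ vertices of her graph $G$ in an arbitrary order. Because in the explicit vertex arrival model each new vertex is accompanied by all its edges to previously seen vertices, streaming Alice's $n$ vertices reveals to the algorithm $A$ exactly the edge set $E$ of $G$. Alice then sends $A$'s working memory together with the degree vector $(d_G(u))_{u \in V}$ to Bob, using $o(n^2) + O(n\log n)$ bits in total.

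Bob continues the same execution of $A$ by introducing the remaining $6n+1$ vertices in the following order, each time specifying only edges to already-arrived vertices (so the stream is syntactically legal in the explicit vertex arrival model). He first brings in the $3n$ vertices of $T$ one by one; each vertex $t$ arrives with edges to the already-arrived members of $T$ and to every vertex of $V \setminus \{a,b\}$, producing both the clique on $T$ and the complete bipartite graph between $T$ and $V \setminus \{a,b\}$. He then brings in the $3n$ vertices of $S$ one by one; each vertex $s$ arrives with edges to the already-arrived members of $S$ and to $a$ and $b$, producing the clique on $S$ together with all edges from $S$ to $\{a,b\}$. Finally, he introduces the single vertex $c$ with edges to any fixed set of $d_G(a)+d_G(b)-1$ vertices in $T \cup (V\setminus\{a,b\})$; this is possible because Alice already transmitted $d_G(a)$ and $d_G(b)$. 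The graph resulting from the whole stream is exactly $G^*$.

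Bob then reads the final output of $A$. By the correctness analysis of \cite{Zelke11}, the value of $\mincut(G^*)$ determines the bit $x_i$, so any $o(n^2)$-memory one-pass algorithm in the explicit vertex arrival model would give a one-way Index protocol of communication $o(n^2) + O(n\log n) = o(\binom{n}{2})$, contradicting the $\Omega(\binom{n}{2}) = \Omega(n^2)$ one-way randomized lower bound for Index on strings of length $\binom{n}{2}$. The only point worth verifying, and the crux of the plan, is that Bob can legally carry out his phase without knowing any edge of $G$ beyond the two numbers $d_G(a)$ and $d_G(b)$: inspection of the construction confirms this, since every edge Bob adds is either internal to $S \cup T \cup \{c\}$ or between $S \cup T \cup \{c\}$ and $V$, and never lies within $V$ itself. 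Beyond this bookkeeping there is no real obstacle.
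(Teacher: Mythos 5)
Your proposal is correct and follows essentially the same route as the paper's proof: Alice streams the vertices of $G$ in arbitrary order, and Bob then introduces the vertices of $T$, $S$, and finally $c$, each arriving only with edges to previously seen vertices, which is exactly the check the paper performs. The only cosmetic differences are that you fix a specific order ($T$ before $S$) and append the final Index-lower-bound accounting, neither of which changes the argument.
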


\begin{proof}
We go over all insertions of edges to $A$ and show that they can implemented in the vertex arrival setting. First, Alice inserts $G=(V,E)$ into $A$ in an arbitrary order, thus if $v_1,...,v_n$ is any arbitrary order on the vertices of $G$, we can insert the edges of $G$ into $A$ by inserting the vertices in increasing order along with all incident edges to previously seen vertices. The same trick can be applied to the insertion of $S$ and $T$ into $A$ with the addition that every $s\in S$ inserted is connected to not only previously seen vertices of $S$, but also to $\set{a,b}$, \emph{which are also previously seen vertices}, and similarly for every vertex $t\in T$ and the edges connecting $t$ to $V\backslash \set{a,b}$. Lastly, $c$ can clearly be added in the vertex arrival setting as it is the last vertex, so in particular all of its edges are incident to previously seen vertices. 
\end{proof}

\cref{obs_appen:vertex_explicit_reduction_Streaming} combined with the soundness of the reduction proved in \cite{Zelke11} concludes the proof of \cref{obs:lb_vertex_arrival}.

\subsection{Proof of parallel sampling lemma} \label{app:parallel-sampling}

\sampling*

\begin{proof}
We begin with designing a procedure that outputs independent samples $Y_{1},\ldots,Y_{r}\sim B([n],p)$. The procedure
does not operate on a stream of vertices yet, and instead just samples vertices uniformly at random when necessary.

\begin{algorithm}[!htbp]
\caption{Sampling independent subsets}
\label{alg:stream_sample}
 \hspace*{\algorithmicindent} \textbf{Input:} A probability parameter $p$, number of sets to sample $r$, and size of the universe $n$. \\
 \hspace*{\algorithmicindent} \textbf{Output:} Sets $Y_1, \ldots, Y_r$ that are independent samples from $B([n], p)$ .
\begin{algorithmic}[1]
\State Independently sample $X_1, \ldots, X_r \sim B([n],p)$.
\State For $i=1, \ldots, r$ let $k_i = |X_i|$.
\State Let $f_1 = k_1$ and for $i=2, \ldots r$ let $f_i = |X_i \setminus \bigcup_{j=1}^{i-1} X_j |$.
\State $Z_0 \gets \emptyset$.
\For{$i=1, \ldots,r$} 
	\State Sample a uniformly random set $S_1$ of size $k_i - f_i$ from $Z_{i-1}$.
	\State Sample a uniformly random set $S_2$ of size $f_i$ from $[n] \setminus Z_{i-1}$. \label{lin:subset}
	\State $Y_i \gets S_1 \cup S_2$ and $Z_i \gets Z_{i-1} \cup Y_i$.
\EndFor
\State Output $Y_1, \ldots, Y_r$.
\end{algorithmic}
\end{algorithm}

\begin{proposition}
\label{prop:samplingproc}
Given input parameters $n,p$, and $r$, \cref{alg:stream_sample} samples independent and identically distributed sets 
$Y_1, \ldots, Y_r \sim B([n],p)$.
\end{proposition}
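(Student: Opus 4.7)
The plan is to argue by induction on $i$ that, conditional on $Y_{<i}=R_{<i}$ for any realization $R_{<i}\subseteq [n]$, the set $Y_{i}$ is distributed as $B([n],p)$ and is independent of the conditioning; iterating this immediately yields the claimed product form.

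The key preliminary step will be to establish a purely deterministic invariant that bridges the ``$X$-world'' driving the algorithm internally and the ``$Y$-world'' that the algorithm outputs: for every realization of the randomness and every $i$, the unions $U_{i}:=\bigcup_{j\le i}X_{j}$ and $Z_{i}:=\bigcup_{j\le i}Y_{j}$ have the \emph{same} cardinality. This follows by induction from the identity $|X_{j}\setminus U_{j-1}|=f_{j}=|Y_{j}\setminus Z_{j-1}|$, which is immediate from the definition of $f_{j}$ and the construction $Y_{j}=S_{1}\cup S_{2}$ with $S_{2}\subseteq[n]\setminus Z_{j-1}$ of size $f_{j}$. This invariant in particular guarantees that the sampling on line~\ref{lin:subset} is always well-defined, because $k_{i}-f_{i}=|X_{i}\cap U_{i-1}|\le |U_{i-1}|=|Z_{i-1}|$ and $f_{i}\le n-|U_{i-1}|=n-|Z_{i-1}|$.

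With the invariant in hand, fix $R_{<i}$ and set $z_{i-1}=|Z_{i-1}|$. Conditional on $Y_{<i}=R_{<i}$, the set $Z_{i-1}$ is determined, while $U_{i-1}$ still has a conditional distribution but is \emph{deterministically} of size $z_{i-1}$. Since $X_{i}\sim B([n],p)$ is drawn independently of everything earlier, a short permutation-symmetry computation shows that for any specific realization of $U_{i-1}$ of size $z_{i-1}$ the pair $(k_{i}-f_{i},\,f_{i})=(|X_{i}\cap U_{i-1}|,\,|X_{i}\setminus U_{i-1}|)$ has joint law $\mathrm{Bin}(z_{i-1},p)\times\mathrm{Bin}(n-z_{i-1},p)$ with independent components, and because this law does not depend on the identity of $U_{i-1}$ it persists after marginalizing $U_{i-1}$ out under the conditioning $Y_{<i}=R_{<i}$.

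To conclude, I would then use the fresh randomness: the algorithm draws $S_{1}$ and $S_{2}$ uniformly from $\binom{Z_{i-1}}{k_{i}-f_{i}}$ and $\binom{[n]\setminus Z_{i-1}}{f_{i}}$, independently. Combining with the joint law of $(k_{i}-f_{i},f_{i})$ just identified, and invoking the elementary fact that a uniformly random subset of size $\mathrm{Bin}(m,p)$ of a size-$m$ ground set is exactly $B(\cdot,p)$-distributed, one obtains that $Y_{i}\cap Z_{i-1}\sim B(Z_{i-1},p)$ and $Y_{i}\setminus Z_{i-1}\sim B([n]\setminus Z_{i-1},p)$ are independent, so $Y_{i}\sim B([n],p)$ independently of $Y_{<i}$, closing the induction. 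The main conceptual obstacle is precisely the mismatch between the two worlds---$(k_{i},f_{i})$ are computed from the internal $X$-sequence while $Z_{i-1}$ is determined by the output $Y$-sequence---and the deterministic cardinality invariant $|U_{i-1}|=|Z_{i-1}|$, together with the permutation symmetry of $X_{i}$, is what reconciles them.
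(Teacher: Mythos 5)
Your proof is correct and rests on the same two facts as the paper's: the deterministic transfer of the size statistics $(k_i,f_i)$ from the internal $X$-sequence to the output $Y$-sequence (your invariant $|U_{i-1}|=|Z_{i-1}|$, which the paper uses implicitly when equating $\Pr[\mathcal{S}]$ with $\Pr[\mathcal{S}']$), and the exchangeability of $B([n],p)$ with respect to the partition $\{Z_{i-1},[n]\setminus Z_{i-1}\}$. The only difference is organizational: you run a forward induction on the conditional law of $Y_i$ given $Y_{<i}$ (using the ``binomial size, then uniform subset'' identity), whereas the paper fixes a target output tuple and matches conditional distributions given a global size-pattern event; your formulation also has the minor merit of explicitly checking that the sampling step is always well-defined.
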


\begin{proof}
Let $(Y_1, \ldots, Y_r) \sim \Acal$ indicate a sample from \cref{alg:stream_sample}.
For any sets $R_1, \ldots, R_r \subseteq [n]$ we show that the probability the algorithm outputs 
$R_1, \ldots, R_r$ is equal to $\Pr_{X_1,\ldots,X_{r} \sim B([n],p)}[X_1 = R_1 \wedge \cdots \wedge X_r = R_r]$.  
We will decompose the latter probability as a product of conditional probabilities.  The first conditional 
event we consider is that $X_1, \ldots, X_r$ satisfy some basic size and intersection requirements to be 
equal to $R_1, \ldots, R_r$.  Specifically, let $\Scal$ be the event that $|X_i| = |R_i|$ for $i=1, \ldots, r$ 
and that $|X_i \setminus \bigcup_{j=1}^{i-1} X_j| = |R_i \setminus \bigcup_{j=1}^{i-1} R_j|$ for $i = 2, \ldots, r$.  
Then 
\begin{eqnarray*}
&&\Pr_{X_1,\ldots,X_{r} \sim B([n],p)}[X_1 = R_1 \wedge \cdots \wedge X_r = R_r] \\
&=&  
\Pr_{X_1,\ldots,X_{r} \sim B([n],p)}[\Scal] \cdot \Pr_{X_1,\ldots,X_{r} \sim B([n],p)}[X_1 = R_1 \wedge \cdots \wedge X_r = R_r \mid \Scal ] \enspace.
\end{eqnarray*}
Let $\Scal'$ be the analogous event that $Y_1, \ldots, Y_r \sim \Acal$ satisfy the same size and 
intersection requirements of $R_1, \ldots, R_r$.  Note that by the first 3 lines of the algorithm we have
$\Pr_{X_1,\ldots,X_{r} \sim B([n],p)}[\Scal] = \Pr_{(Y_1, \ldots, Y_r)\sim \Acal}[\Scal']$.  To prove the lemma it thus suffices to show 
\[
\Pr_{X_1,\ldots,X_{r} \sim B([n],p)}[X_1 = R_1 \wedge \cdots \wedge X_r = R_r \mid \Scal] = \Pr_{(Y_1, \ldots, Y_r) \sim \Acal}[Y_1 = R_1 \wedge \cdots \wedge Y_r = R_r \mid \Scal'] \enspace .
\]
This will follow from showing 
\begin{enumerate}
	\item $\Pr_{X_1,\ldots,X_{r} \sim B([n],p)}[X_1 = R_1 \mid \Scal] = \Pr_{(Y_1, \ldots, Y_r) \sim \Acal}[Y_1 = R_1 \mid \Scal']$, and 
	\item for $i=2, \ldots, r$
	\begin{align*}
	\Pr_{X_1,\ldots,X_{r} \sim B([n],p)}[X_i = R_i \mid \Scal, X_j = R_j, &1\le j < i] \\
	&= \Pr_{(Y_1, \ldots, Y_r) \sim \Acal}[Y_i = R_i \mid \Scal, Y_j = R_j, 1\le j < i] \enspace .
	\end{align*}
\end{enumerate}
Item~1 follows directly as, conditioned on $\Scal$, $X_1$ is a uniformly random subset of $[n]$ of size $|R_1|$, as is $Y_1$ conditioned on $\Scal'$.  Consider 
now the second item for an arbitrary $i$.  
Let $W = \bigcup_{j=1}^{i-1} R_j$.  Let $k_i = |R_i|$ and $f_i = |R_i \setminus W|$.  Conditioned on $\Scal$, we know that $|X_i|=k_i$ and 
$|X_i \setminus W| = f_i$.   Thus subject to the conditional, $X_i$ is the union of a uniformly random chosen set of size $f_i$ from $[n] \setminus W$ and 
a uniformly random chosen set from $W$ of size $k_i - f_i$.  Similarly, conditioned on $\Scal$ and that $Y_j = R_j$ for $j=1,\ldots, i-1$ so that 
$\bigcup_{j=1}^{i-1} Y_j = W$, in the algorithm $Y_i$ is defined to be the union of a uniformly chosen set from $[n] \setminus W$ of size $f_i$ and 
a uniformly random chosen set from $W$ of size $k_i - f_i$.  Thus the two sides of the equation in item~2 are equal.
\end{proof}

Next, we show how to apply  \cref{alg:stream_sample} on a stream of vertices. To this end, line~\ref{lin:subset} is
implemented by repeatedly reading subsequent vertices from the given stream $S$ (instead of repeatedly sampling
a vertex uniformly at random from the remaining vertices).
As each read vertex is included
in $Y_{i}$, after reading the first $\bigcup_{i=1}^{r}Y_{i}$ vertices we can output the generated subsets $Y_{1},\ldots,Y_{r}$
as required.
It remains to argue that the probability of generating $Y_{1}=R,\ldots,Y_{r}=R$, over a random stream $S$
and the random choices made by the algorithm, is equal to $\prod_{i=1}^{r}\Pr_{X_{i}\sim B([n],p)}[X_{i} = R_{i}]$.
By \cref{prop:samplingproc}, this is the case for the subsets $Y_{1},\ldots,Y_{r}$ generated by \cref{alg:stream_sample}.
Next, we argue that the probability of the original \cref{alg:stream_sample} generating $Y_{1}=R,\ldots,Y_{r}=R$
(over the random choices of the procedure)
is the same as the probability of the modified \cref{alg:stream_sample} generating $Y_{1}=R,\ldots,Y_{r}=R$
(over a random stream and the random choices made by the algorithm).
This is the case because while the latter samples the next vertex uniformly at random from the remaining vertices,
while the latter read the next vertex from the stream, which for a random stream is chosen uniformly
at random from the remaining vertices.
\end{proof}

\subsection{Proof of algorithm in complete vertex arrival setting} \label{app:vertex-arrival}

\complete*

\begin{proof}
Similarly to the case of random vertex arrivals,
we run in parallel $\log (n)$ independent instances of an algorithm, each of which uses a different estimate $d = 2^\ell$ for the minimum degree $\delta(G)$, with $\ell = 0,1,2,\dots,\ceil{\log(n)}-1$.
Each algorithm aborts if it uses more than $\tOh(n)$ memory, and we will show that if $\ell$ is such that $d \leq \delta(G) < 2d$
then with high probability the corresponding algorithm will not abort and have correct outcome $\lambda(G)$.
Since we know $\delta(G)$ exactly by the end of the stream (we can keep track of all degrees with $\tOh(n)$ memory), we can filter out the correct outcome at the end of the algorithm.

In the remainder we describe the algorithm for an estimate $d$ on the minimum degree.
As in the case of random vertex arrival, the algorithm attempts uniform star contraction on $G$ with $p=\frac{1200 \ln n}{d}$,
and in parallel constructs a sparse $2d$-edge connectivity certificate on the contracted graph, so that at the end of the stream
we can compute the edge connectivity of this certificate.
We will run $r \in \Theta(\log n)$ parallel repetitions of this. 
In the $\ith$ repetition, we first sample the set $R_{i}$ by choosing each vertex with probability $p=\frac{1200 \ln n}{d}$.
Then, we read the vertices from the input stream. As each vertex $v$ arrives with all of its incident edges,
when $v\notin R_{i}$ we are able to choose a uniformly random edge incident on $R_{i}$ and contract it in $G'_{i}$.
In parallel, we build a $2d$-connectivity certificate $F^i_1 \cup \cdots \cup F^i_{2d}$ of $G_i'$ of $G'_{i}$.
We summarize the $\ith$ parallel repetition in full detail. Recall that each repetition is aborted
as soon as its memory usage exceed $\tOh(n)$.

\begin{enumerate}
\item
Construct $R_{i}$ by choosing each vertex with probability  $p=\frac{1200 \ln n}{d}$.
Initialize $F^i_1,\dots,F^i_{2d}$ as empty forests and set $R_i = \emptyset$.
Initialize a mapping $r_i:V\to V$ to be the identity (through the stream this will keep track of the contracted vertices).
\item
For the $\jth$ vertex arrival $v$ with edges $e_1,...,e_\ell$ between $v$ to all other vertices, the following is done:
 \begin{enumerate}
    \item \textbf{Uniform star contraction:}
    If $v\in R_{i}$, do nothing.
    If $v\notin R_{i}$ we consider the set $N_{R_i}(v)$, i.e., the set of neighbors of $v$ in $R_i$. If this set is empty, we abort the $\ith$ instance.
    Otherwise, we pick a uniformly random center $w$ from the center neighborhood $N_{R_i}(v)$ (if it exists) and set $r_i(v) = w$.
    This amounts to contracting the edge $\{v,w\}$. For each $e_t$ among $e_1,\dots,e_\ell$, except for the contracted edge which is discarded, change the endpoints of $e_t=\{v,u\}$ to be $\{r_i(v),r_i(u)\}$, discard any self loops. At the end of the stream, the vertices with the same $r_i(\cdot)$ values constitute a vertex in $G_i'$.
    \item \textbf{Maintaining of $2d$-edge connectivity certificate:}
    For each (relabelled) incident edge $e_{t}$ among $e_1,\dots,e_{\ell}$, add $e_t$ to $F^i_k$ where $k$ is the minimal index for which $F^i_k\cup \set{e_t}$ contains no cycles. If there is no such $k$, discard the edge.
 \end{enumerate}
\item
If the repetition did not abort by the end of the stream, we compute the edge connectivity of the connectivity certificate $\lambda_i = \lambda(G'_i) \geq \lambda(G)$.
\end{enumerate}
Finally, we combine the $r$ parallel repetitions by outputting $\min\{\delta(G),\lambda_1,\dots,\lambda_r\}$.

\paragraph{Analysis.}
The analysis is very similar to that of \Cref{thrm:random_vertex_arrival}, we include it here for completeness.
It is enough to prove the correctness and a $\tOh(n)$ memory bound only for the algorithm that has an estimate $d$ such that
$d\leq \delta(G) < 2d$. 
To this end, we only need to argue that, with constant probability, in a single repetition we have $|G'| = \tOh(n/d)$ and
if $\lambda(G)<\delta(G)$ then $\lambda(G)=\lambda(G')$ (it is easy to see that we never have $\lambda(G') < \lambda(G)$).

A single repetition implements uniform star contraction with $p=\frac{1200 \ln n}{d}$.
First, we want to analyse $|G'|$.
 By item~1 of \cref{prop:sizeR}, $|R|\leq 2pn$ except with probability $n^{-400}$.
 Next, by item~2 of \cref{prop:sizeR} and $d\leq d(v)$,
the probability that a vertex in $[n]\setminus R$ has no neighbor in $R$ is at most
$n^{-6}$. By a union bound, $|G'| = |R| \leq 2pn = \tOh(n/d)$ except with probability
at most $n^{-400}+n^{-5}$.
Second, we want to lower bound the probability that $\lambda(G)=\lambda(G')$,
assuming that $\lambda(G')<\delta(G)$.
Let $C$ be a non-trivial minimum cut of $G$. 
By \cref{lem:good_for_contracting}, $H = (V, \cutd(V \setminus R)$ is $(2/3,8)$-good for contracting with respect to $C$ with 
probability at least $2/3$ over the choice of $R$.
By \cref{cor:sample_cor} performing a random 1-out contraction on $H$ does not contract any
edge of $C$ with probability at least $3^{-12}$ by \cref{cor:sample_cor}. 
Thus $\lambda(G')=\lambda(G)$ with probability at least~$2/3\cdot 3^{-12}$.  
\end{proof}

\section{Reduction from minimum degree to edge connectivity} \label{appensec:reduction}

Here we mention a simple reduction from computing the minimum degree of a simple graph $G = (V, E)$ to computing the edge connectivity of a graph $G' = (V', E')$ with $|V'| = 2|V|$.

\begin{lemma}\label{lem:reduction-degree-connectivity}
Given a simple graph $G = (V, E)$ on $n$ vertices for which we need to find the edge connectivity, we can construct another simple graph $G' = (V',E')$ such that:
\begin{itemize}
    \item The size of the vertex set $|V'|= 2n$, and
    \item the edge connectivity $\lambda(G') = \delta(G) + n$.
\end{itemize}
\end{lemma}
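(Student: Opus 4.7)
The plan is to construct $G'$ by attaching a ``gadget'' to $G$ that forces the edge connectivity to be exactly $\delta(G)+n$, while ensuring $|V'|=2n$. Concretely, I would let $V' = V \cup U$ where $U=\{u_1,\dots,u_n\}$ is a fresh copy of $n$ vertices, and take
\[
E' \;=\; E \;\cup\; \binom{U}{2} \;\cup\; \{\{v,u\} : v \in V,\ u \in U\},
\]
so that $U$ induces an $n$-clique and we add the complete bipartite graph $K_{n,n}$ between $V$ and $U$. Then $|V'|=2n$ is immediate, and one checks that $G'$ is simple because $G$ is.

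For the edge connectivity, the upper bound is easy: pick $v^\star \in V$ with $d_G(v^\star)=\delta(G)$. Its degree in $G'$ is $\delta(G) + n$ (original $G$-neighbors plus every vertex of $U$), so isolating $v^\star$ gives a cut of size $\delta(G)+n$ and hence $\lambda(G') \le \delta(G)+n$.

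For the lower bound, I would fix any cut $(S,V'\setminus S)$ with $\emptyset \subsetneq S \subsetneq V'$ and set $a=|S\cap V|$, $b=|S\cap U|$. A direct count gives
\[
|\mathrm{cut}_{G'}(S)| \;=\; |E_G(S\cap V, V\setminus S)| \;+\; b(n-b) \;+\; \bigl(n(a+b) - 2ab\bigr),
\]
where the three summands are the $G$-contribution, the $K_n$-contribution on $U$, and the $K_{n,n}$-contribution between $V$ and $U$. I would then do a short case analysis on $(a,b)$:
\begin{itemize}
  \item If $a=0$ (so $b \ge 1$), the cut equals $b(n-b)+nb = b(2n-b)$, which is at least $2n-1 \ge n+\delta(G)$ since $\delta(G)\le n-1$. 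The symmetric case $a=n$ is handled the same way.
  \item If $a=1$ and $b=0$, then $|E_G(S\cap V,V\setminus S)| = d_G(v) \ge \delta(G)$ for the unique $v \in S\cap V$, and the remaining terms contribute exactly $n$, giving cut $\ge \delta(G)+n$. The symmetric case $a=n-1$, $b=n$ is identical.
  \item In all remaining cases, I show the two gadget contributions $b(n-b)+n(a+b)-2ab$ alone are $\ge n+\delta(G)$, using $\delta(G)\le n-1$; the minimum over the feasible boundary points (e.g.\ $a=1$ with $1\le b\le n-1$, or $b=1$ with $2\le a\le n-1$, etc.) is easily seen to be $\ge 2n-1$.
\end{itemize}
Combining the two bounds yields $\lambda(G') = \delta(G)+n$.

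The only ``obstacle'' is bookkeeping in the case analysis, since $|E_G(S\cap V, V\setminus S)|$ may be $0$ when $G$ is disconnected; this is precisely why the boundary cases $a\in\{1,n-1\}$ (paired with $b\in\{0,n\}$) need the extra observation that the isolated vertex contributes at least $\delta(G)$ edges of $G$ to the cut. Everything else reduces to elementary inequalities.
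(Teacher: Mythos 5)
Your construction is identical to the paper's (attach an $n$-clique $U$ plus the complete bipartite graph between $V$ and $U$), and your counting of the cut into the $G$-, clique-, and bipartite contributions matches the paper's computation; the only difference is cosmetic, in that the paper organizes the case analysis by showing every \emph{non-trivial} cut has value at least $2n-1$ and comparing with the minimum degree $\delta(G)+n\le 2n-1$, whereas you fold the trivial cuts into the same case analysis. The proposal is correct and takes essentially the same approach as the paper.
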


\begin{proof}
The construction of $G'$ is simple: The vertex set of $G'$ is $V' = V \cup K$ where $K$ is a set of size $n$. The edge set $E' = E_1 \cup E_2 \cup E_3$ constitutes the following three types of edges: 
\begin{enumerate}
    \item \textbf{Original edges:} $E_1 = E$ consists of all the original edges of $G$.
    \item \textbf{Cross edges:} $E_2 = \{ \{u,v\} \mid u \in V, v \in K\}$ consists of all edges between $V$ and $K$.
    \item \textbf{Clique edges:} $E_3 = \{ \{u,v\} \mid u \ne v, u,v \in K \}$ creates a clique on $K$.
\end{enumerate}
In simpler words, $G'$ consists of $G$ and a clique on vertices $K$ with all cross edges present between $V$ and $K$.

We can immediately note that, for any $v \in V$, the degree $d_{G'}(v)$ in $G'$ is $d_G(V)+n$. For all vertices $v \in K$, the degree is 
$2n-1$. Hence a vertex $v$ that has minimum degree in $G$ also has minimum degree in $G'$, which is at most $2n -1$.

We now argue that the value of any non-trivial cut in $G'$ is at least $2n-1$. Let $X \subset V'$ denote a side of the cut and let $a = |X \cap V|$ and $b = |X \cap K|$. For a non-trivial cut we have $a+b > 1$ and we can assume w.l.o.g.~that $b \leq n/2$ (otherwise consider $X^c$).
Now note that
\begin{align*}
|\cut_{G'}(X)|
&= |\cut_G(X \cap V)| + |\cut_K(X \cap K)| + |E(X \cap V,K \backslash X)| + |E(V\backslash X, X \cap K)| \\
&\geq |\cut_K(X \cap K)| + |E(X \cap V,K \backslash X)| + |E(V\backslash X, X \cap K)| \\
&= b(n-b) + a(n-b) + (n-a)b.
\end{align*}
Now if $b = 0$ then $a>1$ and the right hand side is $an > 2n-1$. If $b=1$ then $a\geq 1$ and the right hand side is $2n+a(n-2)-1 \geq 2n-1$ assuming $n \geq 2$. Finally, if $n/2 \geq b>1$ then we can lower bound the right hand side by $a(n-b) + (n-a)b = nb + a(n-2b) \geq nb \geq 2n$.
\end{proof}

\end{document}